\newcommand{\norm}[1]{\left\lVert#1\right\rVert}
\newcommand{\B}{\mathcal{B}}
\newcommand{\R}{\mathbb R}
\newcommand{\Q}{\mathbb Q}
\newcommand{\cQ}{\mathcal Q}
\newcommand{\eps}{\epsilon}
\newcommand{\bx}{\mathbf x}
\newcommand{\bu}{\mathbf u}
\newcommand{\bv}{\mathbf v}
\newcommand{\bw}{\mathbf w}
\newcommand{\ba}{\mathbf a}
\newcommand{\bb}{\mathbf b}
\newcommand{\bt}{\mathbf t}
\newcommand{\cN}{\mathcal N}
\newcommand{\cF}{\mathcal{F}}
\newcommand{\cG}{\mathcal{G}}
\newcommand{\N}{\mathbb N}
\newcommand{\cM}{\mathcal{M}}
\renewcommand{\S}{\mathbb S}
\newcommand{\F}{\mathbb F}
\newcommand{\poly}{\operatorname{poly}}
\newcommand{\SDP}{\operatorname{SDP}}
\newcommand{\MAJ}{\operatorname{MAJ}}
\newcommand{\AT}{\operatorname{AT}}
\newcommand{\PAR}{\operatorname{Parity}}
\newcommand{\NAE}{\operatorname{NAE}}
\newcommand{\OR}{\operatorname{OR}}
\newcommand{\Pol}{\operatorname{Pol}}
\newcommand{\Ham}{\operatorname{Ham}}
\newcommand{\Spread}{\operatorname{Spread}}
\newcommand{\SpreadSingle}{\operatorname{Spread-s}}
\newcommand{\PCSP}{\operatorname{PCSP}}
\newcommand{\CSP}{\operatorname{CSP}}
\newcommand{\id}{\operatorname{id}}
\newcommand{\fiPCSP}{\operatorname{fiPCSP}}
\newcommand{\conv}{\operatorname{conv}}
\newtheorem{theorem}{Theorem}[section]
\newtheorem{claim}[theorem]{Claim}
\newtheorem{lemma}[theorem]{Lemma}
\newtheorem{definition}[theorem]{Definition}
\newtheorem{corollary}[theorem]{Corollary}
\newtheorem{conjecture}[theorem]{Conjecture}
\newtheorem{proposition}[theorem]{Proposition}
\newtheorem{remark}[theorem]{Remark}
\newtheorem{question}[theorem]{Question}
\begin{document}

\begin{frontmatter}[classification=text]
\title{SDPs and Robust Satisfiability of \\ Promise CSP\thanks{This paper is a significant expansion and revision of a preliminary version of this paper which appeared in the proceedings of the 2023 Symposium on the Theory of Computing (STOC 23).}} %

\author[josh]{Joshua Brakensiek\thanks{Supported in part by an NSF Graduate Research Fellowship and a Microsoft Research PhD Fellowship.}}
\author[venkat]{Venkatesan Guruswami \thanks{Supported in part by NSF grants CCF-2228287 and CCF-2211972 and a Simons Investigator award.}}
\author[sandeep]{Sai Sandeep\thanks{Supported in part by NSF grant CCF-2228287.}}

\begin{abstract}
For a constraint satisfaction problem (CSP), a robust satisfaction algorithm is one that outputs an assignment satisfying most of the constraints on instances that are near-satisfiable. It is known that the CSPs that admit efficient robust satisfaction algorithms are precisely those of bounded width, i.e., CSPs whose satisfiability can be checked by a simple local consistency algorithm (eg., 2-SAT or Horn-SAT in the Boolean case). While the exact satisfiability of a bounded width CSP can be checked by combinatorial algorithms, the robust algorithm is based on rounding a canonical Semidefinite Programming (SDP) relaxation.

In this work, we initiate the study of robust satisfaction algorithms for \emph{promise} CSPs, which are a vast generalization of CSPs that have received much attention recently. The motivation is to extend the theory beyond CSPs, as well as to better understand the power of SDPs. We present robust SDP rounding algorithms under some general conditions, namely the existence of particular high-dimensional Boolean symmetries known as majority or alternating threshold polymorphisms. 
On the hardness front, we prove that the lack of such polymorphisms makes the PCSP hard for all pairs of symmetric Boolean predicates. 
Our approach relies on SDP integrality gaps argued via the absence of certain colorings of the sphere, with connections to sphere Ramsey theory. 

We conjecture that PCSPs with robust satisfaction algorithms are precisely those for which the feasibility of the canonical SDP implies (exact) satisfiability. We also give a precise algebraic condition, known as a minion characterization, of which PCSPs have the latter property.
\end{abstract}
\end{frontmatter}

\section{Introduction}
Horn-SAT and 2-SAT are Boolean constraint satisfaction problems (CSPs) that admit simple combinatorial algorithms for satisfiability.
They are both examples of \emph{bounded width} CSPs: the existence of {a} locally consistent {set of} assignments (which satisfy all local constraints involving some bounded number of variables, and which are consistent on the intersections) implies the existence of a global satisfying assignment.

While the simple local propagation algorithms for Horn-SAT and  2-SAT work when the instance is perfectly satisfiable, they are not robust to errors---if the given instance is almost satisfiable, the local consistency based algorithms do not guarantee solutions that satisfy almost all the constraints. 
In a beautiful work, Zwick~\cite{Zwick98} initiated the study of finding ``robust" algorithms for CSPs, namely algorithms that output solutions satisfying $1-f(\epsilon)$ fraction of the constraints when the instance is promised to be $1-\epsilon$ satisfiable, where $f(\epsilon) \rightarrow 0$ as $\epsilon \rightarrow 0$. Zwick obtained robust algorithms for $2$-SAT using {Semidefinite Programming (SDP)} rounding and for Horn-SAT based on {rounding a Linear Program (LP)}. 
The PCP theorem together with Schaefer's reductions~\cite{Schaefer78} shows that Boolean CSPs that are NP-Hard are also APX-hard with perfect completeness, which in particular means that they do not admit robust satisfiability algorithms.
The only other interesting Boolean CSP besides Horn-SAT and 2-SAT for which satisfiability is polynomial-time decidable is Linear Equations modulo 2. H\aa stad~\cite{Has01} in his seminal work showed that even for $3$-LIN (when all equations involve just three variables), for every $\epsilon, \delta >0$, it is NP-Hard to output a solution satisfying $\frac{1}{2}+\delta$ fraction of the constraints even when the instance is guaranteed to have a solution satisfying $(1-\epsilon)$ fraction of the constraints.   

Unlike Horn-SAT or 2-SAT, the satisfiability algorithm for $3$-LIN is not local, and $3$-LIN does not have bounded width. Together with Schaefer's dichotomy theorem~\cite{Schaefer78}, this yields that for Boolean CSPs, bounded width characterizes robust satisfiability. For CSPs over general domains, a landmark result in the algebraic approach to CSP due to Barto and Kozik~\cite{BartoK14} showed that CSPs that are not bounded width can express linear equations. A reduction from H\aa stad's result then shows that CSPs that are not bounded width do not admit robust algorithms. Guruswami and Zhou~\cite{GuruswamiZ12} conjectured the converse---namely that all bounded width CSPs, over any domain, admit robust algorithms. Another work by Barto and Kozik~\cite{BartoK16} resolved this conjecture in the affirmative, thus giving a \emph{full characterization} of CSPs that have robust algorithms. 

In this work, we study robust algorithms for the class of \emph{Promise} Constraint Satisfaction Problems (PCSPs). PCSPs are a generalization of CSPs where each constraint has a strong form and a weak form. Given an instance that is promised to have a solution satisfying the stronger form of the constraints, the objective is to find a solution satisfying the weaker form of the constraints. A classic example of PCSPs is $(1$-in-$3$-SAT, NAE-$3$-SAT). While both the underlying CSPs are NP-Hard, the resulting PCSP does have a polynomial-time algorithm: given an instance of $1$-in-$3$-SAT that is promised to be satisfiable, we can find an assignment to the variables in polynomial time that satisfies each constraint as an NAE-$3$-SAT instance. PCSPs are a vast generalization of CSPs and capture key problems such as approximate graph and hypergraph coloring. 

Since their formal introduction in \cite{AGH17} and subsequent detailed study in \cite{BrakensiekG21} and \cite{BBKO21}, there has been a flurry of recent works on PCSPs~\cite{AsimiB21,AD21,AustrinBP20,BartoBB21,barto2022combinatorial,BrakensiekG19,BrakensiekG20,BrakensiekGWZ20,BGS21,BrandtsWZ21,ciardo2022clap}. These have led to a rich and still developing theory aimed at classifying the complexity of PCSPs, by tying their (in)tractability to the symmetries associated with their defining relations, and understanding the power and limitations of various algorithmic approaches influential for CSPs in the context of PCSPs.

Against this backdrop, we initiate the study of robust algorithms for PCSPs. The motivation is two-fold. First, as algorithms resilient to a small noise in the input, robust algorithms are important in their own right. Second, in the CSP world, the existence of efficient robust algorithms is equivalent to having bounded width and being decided by $O(1)$ levels of Sherali Adams for CSPs~\cite{ThapperZ17}. It is also proven~\cite{ThapperZ18} to be equivalent to being decided by the basic semidefinite programming (SDP) relaxation~\cite{Raghavendra08}. Here, we say that the basic SDP decides a CSP if for every instance $\Phi$ of the CSP, $\Phi$ has an assignment satisfying all the constraints if and only if there is a vector solution satisfying all the constraints in the SDP relaxation. 
For CSPs, therefore, the study of robust algorithms sheds light on, and in fact, precisely captures, the power of the most popular algorithmic approaches. Robust algorithms for PCSPs provide a rich context to understand how well these algorithmic tools generalize beyond CSPs. 

The main question that we are interested in this work is the following. 
\begin{question}
\label{ques:robust}
Which PCSPs admit polynomial time robust algorithms? 
\end{question}

As is the case with CSPs, a natural approach to characterize which PCSPs have robust algorithms is via the bounded width of PCSPs. However, it turns out that bounded width for PCSPs is weaker than having robust algorithms. 
Concretely, Atserias and Dalmau~\cite{AD21} have proved recently that the PCSP $(1$-in-$3$-SAT, NAE-$3$-SAT) does not have bounded width. Our work implies 
there is a robust algorithm for this PCSP. Atserias and Dalmau also proved that this PCSP is decided by $O(1)$ levels of Sherali-Adams, and as we shall prove later, it is also decided by the basic SDP.\footnote{We say that the basic SDP decides a PCSP (formally defined in~\Cref{sec:prelims}) if for every instance $\Phi$ of the PCSP, if there is a vector solution satisfying all the strong constraints in $\Phi$, then, $\Phi$ has an assignment satisfying all the weak constraints.} On the other hand, Raghavendra's framework of converting integrality gaps of CSPs to the hardness of approximation applies to PCSPs as well~\cite{Raghavendra08}. That is, his result implies that every PCSP that is not decided by the basic SDP does not admit a polynomial-time robust algorithm, assuming the Unique Games Conjecture~\cite{Khot02}. This gives a powerful tool to show the absence of a polynomial time robust algorithm for a {PCSP} (albeit under the Unique Games Conjecture): showing an integrality gap for {its} basic SDP relaxation. With this connection,~\Cref{ques:robust} naturally leads to the following question.  
\begin{question}
\label{ques:sdp}
Which PCSPs are decided by the basic SDP relaxation?
\end{question}

We make progress on~\Cref{ques:robust} and~\Cref{ques:sdp} by studying the \emph{polymorphisms} of PCSPs. Polymorphisms are closure properties of satisfying solutions to (Promise) CSPs. As a concrete example, consider the $2$-SAT CSP: given an instance $\Phi$ of $2$-SAT over $n$ variables $x_1, x_2, \ldots, x_n$, suppose that $\textbf{u}, \textbf{v}, \textbf{w} $ are three assignments to these variables satisfying all the constraints in $\Phi$, then the assignment $\textbf{z}$ that is coordinatewise Majority operation on three bits, i.e., $z_i = \MAJ (u_i, v_i, w_i)$ for every $i \in [n]$, also satisfies all the constraints in $\Phi$. This shows that the Majority function on three variables is a polymorphism of the $2$-SAT CSP. More generally, the Majority function on any odd number of variables is a polymorphism of the $2$-SAT CSP. Similarly, the Parity function on any odd number of variables is a polymorphism of $3$-LIN. On the other hand, there are no non-trivial polymorphisms for $3$-SAT.
Polymorphisms are the central objects in the \textit{{u}niversal algebraic approach} to CSPs~\cite{JeavonsCG97, Jeavons98, BulatovJK05,BartoKW17,Bul17,Zhuk20}, which has then been extended to PCSPs~\cite{BrakensiekG21, BBKO21}. 

At a high level, the existence of non-trivial polymorphisms implies algorithms, and vice-versa. The key challenge is to precisely characterize which polymorphisms lead to algorithms. It is known that the polymorphism family of a PCSP fully captures its computational complexity, i.e., if there are {promise templates }
$\Gamma, \Gamma'$ {(that is, a set of pairs of relations)} such that the polymorphism family of $\Gamma$, $\Pol(\Gamma)$ is contained in $\Pol(\Gamma')$, then ${\PCSP(\Gamma')}$ {(the decision problem for PCSP instances using relations from $\Gamma'$)} is formally easier than ${\PCSP(\Gamma)}$, i.e., there is a \textit{gadget reduction} from ${\PCSP(\Gamma')}$ to ${\PCSP(\Gamma)}$. It turns out that this gadget reduction preserves the existence of robust algorithms as well.  {This was used by Dalmau, Krokhin~\cite{DalmauK13} in the CSP setting, which we extend to PCSPs in \Cref{prop:ppp-defn})}. Thus,~\Cref{ques:robust} and~\Cref{ques:sdp} can be rephrased as \textit{Which polymorphisms lead to robust algorithms for PCSPs? Which polymorphisms lead to being decided by the basic SDP relaxation?}  

We make progress on these questions on two fronts: first, for a large class of Boolean symmetric\footnote{A predicate $P$ is symmetric if for every satisfying assignment $(x_1, \hdots, x_n)$ to $P$, any permutation of that assignment also satisfies $P$. For a Boolean predicate whether an assignment satisfies a predicate depends only on the Hamming weight. A PCSP is said to be symmetric if all the predicates in the template are symmetric.} 
PCSPs where we allow negation of variables, we characterize the polymorphisms that lead to robust algorithms. Our algorithms are based on novel rounding schemes for the basic SDP relaxation, and our hardness results are proved using integrality gaps for the basic SDP relaxation.  
Second, towards understanding the power of basic SDP for promise CSPs, we introduce a minion $\mathcal{M}$ and show that {$\PCSP(\Gamma)$} can be decided by basic SDP if and only if there is a minion homomorphism from $\mathcal{M}$ to {$\Pol(\Gamma)$}.

\subsection{Our results: Robust algorithms and hardness}

As is the case with CSPs, if a PCSP is NP-Hard, then it does not admit polynomial time robust algorithms, assuming $\text{P} \neq \text{NP}$. Thus, Question~\ref{ques:robust} is only relevant for PCSPs that can be solved in polynomial time. A large class of PCSPs for which polynomial time solvability has been fully characterized {is the problems based on Boolean symmetric promise templates}.
In~\cite{BrakensiekG21}, the authors showed that a Boolean symmetric PCSP with folding (i.e., we allow negating the variables) can be solved in polynomial time if and only if it contains at least one of Alternate-Threshold $(\AT)$, Majority ($\MAJ$) or $\PAR$ polymorphisms of all odd arities. {Later an analogous result was shown without the folding restriction in~\cite{FicakKOS19} where in addition one needs to consider all families of threshold polymorphisms.}

\smallskip \noindent \textbf{Robust algorithms.} Our main algorithmic result shows that in two of these cases when the PCSP has $\MAJ$ or $\AT$ polymorphisms of all odd arities, the PCSP admits a robust algorithm. {More strongly, we allow our instances to have negated variables and constant literals, which we call $\fiPCSP$, short for \emph{folded, idempotent} PCSP.}

\begin{theorem}
\label{thm:main-algorithm}
{Let $\Gamma$ be a Boolean promise template. If $\Pol(\Gamma)$} contains $\AT$ or $\MAJ$ polymorphisms of all odd arities{. Then, the robust version of $\fiPCSP(\Gamma)$} admits a polynomial time algorithm. More precisely, 
\begin{enumerate}
    \item If ${\Pol(\Gamma)}$ contains $\MAJ$ polymorphisms of all odd arities {then} there exists a polynomial time algorithm that given an instance of ${\fiPCSP}(\Gamma)$ {and $\eps > 0$} that is promised to have a solution satisfying $1-\epsilon$ fraction of the constraints, outputs a solution satisfying $1-\tilde O(\epsilon^{\frac{1}{3}})$ fraction of the constraints.\footnote{Here, $\tilde O$ hides multiplicative poly logarithmic factors.}
    \item If ${\Pol(\Gamma)}$ contains $\AT$ polymorphisms of all odd arities {then} there exists a polynomial time algorithm that outputs a solution satisfying $1-O\left(\frac{\log \log \frac{1}{\epsilon}}{\log \frac{1}{\epsilon}}\right)$ fraction of the constraints on an instance of ${\fiPCSP}(\Gamma)$ promised to have a solution satisfying $1-\epsilon$ fraction of the constraints. 
\end{enumerate}
\end{theorem}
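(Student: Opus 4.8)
The plan is to derive the robust algorithm for each case from an SDP rounding argument that exploits the specific structure of $\MAJ$ and $\AT$ polymorphisms. In both cases I start from the basic SDP relaxation of the PCSP instance $\Phi$: solving it yields, for each variable $x_i$, a unit vector $\bv_i$ (together with a ``global'' vector $\bv_0$), and for each constraint a local distribution over assignments consistent with the pairwise inner products. Since $\Phi$ is $(1-\eps)$-satisfiable, the SDP value is at least $1-\eps$, so on all but an $O(\eps)$-fraction of constraints (in a weighted sense) the local SDP configurations are ``almost integral.'' The task is to round the vectors $\{\bv_i\}$ to a $\bits$-assignment so that the weak predicate is satisfied on all but $f(\eps)$ of the constraints, and the polymorphism is what certifies that a suitable rounding exists.

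\textbf{The $\MAJ$ case.} Here I would use a threshold/hyperplane rounding reminiscent of Zwick's $2$-SAT algorithm and its extension by Charikar--Makarychev--Makarychev to general bounded-width CSPs. Concretely, pick a random Gaussian direction $\bg$, and for a carefully chosen threshold function assign $x_i \mapsto 1$ iff $\la \bg, \bv_i \ra$ exceeds a threshold determined by $\la \bv_0, \bv_i\ra$; alternatively, round each $\bv_i$ independently with a bias read off from $\la \bv_0,\bv_i\ra$ and then ``repair'' using the SDP-induced local distributions. The point of having $\MAJ$ of all odd arities is that the weak predicate $Q$ is preserved under coordinatewise majority, so if we think of a rounding as a randomized assignment, taking the majority of many independent roundings still (approximately) satisfies $Q$ on a constraint as long as each individual rounding satisfies it with probability noticeably above $1/2$ on that constraint; the SDP solution being locally near-integral is what guarantees this margin. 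Quantitatively, relating the angle defect $\eps$ to the failure probability via the Gaussian tail gives the $1 - \tilde O(\eps^{1/3})$ bound, where the cube root and log factors come from optimizing the threshold scaling against the union bound over the (constantly many) positions in a constraint; this mirrors the loss in the classical $2$-SAT SDP analysis.

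\textbf{The $\AT$ case.} Alternating threshold is the operation $\AT_{2k+1}(x_1,\dots,x_{2k+1}) = \mathbf{1}[x_1 - x_2 + x_3 - \cdots + x_{2k+1} \ge 1]$, which is the key polymorphism behind the solvability of systems like $(1$-in-$3$-SAT$)$ and is closely tied to linear-programming/affine structure. So here I would not use hyperplane rounding but rather an LP-style or ``iterative'' rounding of the SDP vectors: since $\AT$ witnesses that satisfying assignments are closed under a signed-majority, the natural rounding is to take a weighted/alternating combination of the local SDP distributions and round coordinate by coordinate, propagating constraints in the style of the Horn-SAT robust algorithm. The worse, only-$\log\log/\log$, error rate strongly suggests the analysis goes through a potential/entropy argument where each rounding step can corrupt a $1/\mathrm{polylog}$ fraction of constraints and one iterates $\Theta(\log(1/\eps)/\log\log(1/\eps))$ times — exactly the shape of bound one gets from the Barto--Kozik / Zwick-type robust analysis for width-$1$-like structure. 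I would set up a measure of ``locally inconsistent'' constraints, show each round either certifies infeasibility beyond $O(\eps)$ or strictly reduces this measure while charging only $O(1/\log(1/\eps))$ new violations, and sum the geometric-ish series.

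\textbf{The main obstacle.} The hard part will be showing that the $\AT$ (resp.\ $\MAJ$) polymorphism of \emph{all} odd arities can actually be ``applied'' to near-satisfying SDP data rather than to exactly-satisfying integral assignments: a polymorphism a priori only tells you how to combine fully satisfying assignments, so one must first extract from the basic SDP a family of (fractional or randomized) partial assignments on which the polymorphism still acts usefully, and control the error introduced when the inputs to the polymorphism are only approximately consistent. Concretely, I expect the crux to be a lemma of the form ``if the SDP is $(1-\eps)$-feasible then there is a distribution over $\bits$-assignments whose marginals match the SDP vectors and which satisfies each constraint's strong form except with probability $g(\eps)$,'' after which applying the polymorphism of arity $r(\eps) \to \infty$ and a Chernoff/union bound finishes the job; getting $g$ and $r$ to interact so as to yield the claimed $\eps^{1/3}$ and $\log\log/\log$ rates, and not something weaker, is where the real work lies.
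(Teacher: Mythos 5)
Your high-level frame (solve the basic SDP, round the vectors, let the polymorphism certify that rounding works) matches the paper, but the two mechanisms you propose for actually using the polymorphisms do not work, and they are not what the paper does. For $\MAJ$, you suggest taking the coordinatewise majority of many independent roundings and arguing that $Q$ is preserved as long as each rounding succeeds with probability above $1/2$. But $\MAJ$ being a polymorphism of $(P,Q)$ only says that the majority of assignments that \emph{strongly} satisfy $P$ lands in $Q$; it says nothing about combining assignments that merely satisfy $Q$, and a single hyperplane rounding gives you no supply of strongly satisfying assignments to feed into it. The paper instead uses $\MAJ$ in a completely different, non-obvious way: after reducing (Lemma~\ref{lem:maj-transform}) to templates whose weak predicate is $\{-1,+1\}^k\setminus\{(-1,\dots,-1)\}$, it proves via LP duality (Lemma~\ref{lem:separating-hyperplane}) that the absence of $(-1,\dots,-1)$ from $O_{\MAJ}(P)$ yields a nonnegative unit-$\ell_1$ weight vector $\bw$ with $\langle\bw,\ba\rangle\ge 0$ for all $\ba\in P$; the polymorphism appears only inside that duality argument. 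A single CMM-style Gaussian threshold rounding is then analyzed through the first and second moments of $\bw^{T}\mu$ and $\bw^{T}\Sigma\bw$ together with Gaussian (anti-)concentration, giving $1-\tilde O(\eps^{1/3})$. Your proposed crux lemma --- a global distribution over assignments whose marginals match the SDP vectors and which strongly satisfies most constraints --- is exactly what a basic SDP solution does \emph{not} provide (it only gives per-constraint local distributions), so that route would assume away the problem.

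For $\AT$, your plan of an iterative, Horn-SAT-style propagation with a potential argument summed over $\Theta(\log(1/\eps)/\log\log(1/\eps))$ rounds is also not how the bound arises, and you give no concrete rounding rule that $\AT$ would justify. The paper first uses the structure of $\AT$ polymorphisms (Lemma~\ref{lem:AT}, via affine hulls and a separating-hyperplane claim) to ppp-reduce any such $\Gamma$ to the weighted hyperplane templates $(P_{\textbf{w},b},Q_{\textbf{w},b})$ plus constants (Lemma~\ref{thm:at-general}), and then runs a \emph{one-shot} rounding: decompose $\bv_i=\alpha_i\bv_0+\bv_i'$, draw one Gaussian $\zeta$, draw a single random threshold $\delta$ from a geometric progression of length $\Theta(\log(1/\eps)/\log\log(1/\eps))$, and set $\sigma(u_i)$ by comparing $\langle\zeta,\bv_i'\rangle$ with $\delta\alpha_i|\langle\zeta,\bv_0\rangle|$. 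The $O(\log\log(1/\eps)/\log(1/\eps))$ loss is simply the probability $1/|T|$ that the random threshold falls in the bad window for some coordinate, not the sum of per-iteration corruptions. So the gap is concrete: you are missing the reduction lemmas that convert the existence of $\MAJ$/$\AT$ polymorphisms into explicit geometric structure of the predicates (a separating weight vector, respectively a defining weighted hyperplane), and without that structure neither of your rounding analyses can be carried out.
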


Similar to the robust algorithms for CSPs~\cite{Zwick98,CharikarMM09,BartoK16}, our robust algorithms for {promise templates} with $\MAJ$ and $\AT$ polymorphisms are based on rounding the basic SDP relaxation. 
The main challenge here is to obtain robust algorithms for a large class of {promise templates} without access to the predicates and just using the properties of their polymorphisms. We achieve this using a combination of polymorphic tools where we use the fact that the {promise templates} contain $\AT$ or $\MAJ$ polymorphisms to deduce structural properties of the underlying predicate pairs, and SDP rounding tools where we then use these structural properties to get a robust algorithm. 

For $\MAJ$ polymorphisms, we first reduce the problem to the case when every weak predicate is of the form $k$-SAT. We then show that the robust algorithm of Charikar, Makarychev, and Makarychev~\cite{CharikarMM09} for the $2$-SAT {predicate} generalizes to these classes of {promise templates}.
While the analysis of~\cite{CharikarMM09} is tailored towards $2$-SAT, we give a completely different analysis that is not based on predicates and instead uses the existence of $\MAJ$ polymorphisms as a black box. Similarly, for the $\AT$ polymorphisms, we first use the properties of $\AT$ polymorphisms to reduce to the \emph{weighted hyperplane} {promise template} that generalizes the $(1$-in-$3$-SAT, NAE-$3$-SAT) {promise template}. We then give a robust algorithm for the weighted hyperplane {promise template} based on a random threshold rounding technique. A detailed overview of our algorithmic ideas appears in Sections~\ref{sec:maj-overview} and~\ref{sec:at-overview}.

\smallskip\noindent\textbf{Hardness results.}
Unlike our robust algorithms, which work for general Boolean {promise template} with the said polymorphisms, in our hardness results, we rely on the symmetry of the predicates of {the promise template}. 
Furthermore, we assume that the {promise template} contains a single predicate pair $\Gamma =(P,Q)$ that does not admit $\AT$ or $\MAJ$ polymorphisms of all odd arities. We show that for such Boolean symmetric folded PCSPs, the basic SDP relaxation has an integrality gap with perfect completeness, i.e., there is a finite instance on which the SDP relaxation satisfies all the strong constraints with zero error but the instance is not satisfiable even using the weak constraints. 
By Raghavendra's framework connecting SDP gaps and Unique-Games hardness~\cite{Raghavendra08}, the integrality gap rules out robust satisfaction algorithms (under the Unique Games conjecture (UGC)~\cite{Khot02}). 

\begin{theorem}
\label{thm:main-hardness}
Let $\Gamma={\{}(P,Q){\}}$ {be a promise template consisting of} a pair of symmetric Boolean predicates such that  $\AT_{L_1},\MAJ_{L_2} \notin \Pol(\Gamma)$ for some odd integers $L_1, L_2$. Then, under the UGC, unless $\text{P}=\text{NP}$, there is no polynomial time robust algorithm for the PCSP associated with {$\fiPCSP(\Gamma)$.}
\end{theorem}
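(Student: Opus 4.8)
The plan is to reduce the statement to the construction of an integrality gap for the basic $\SDP$ relaxation of $\Gamma$, and then to build that gap from configurations on a high-dimensional sphere. By the consequence of Raghavendra's framework recalled above (cf.~\cite{Raghavendra08,Khot02}), it suffices to exhibit, for the PCSP associated with $\Gamma$ (with negations and constants allowed), a family of instances on which the basic $\SDP$ has value tending to $1$ while every assignment satisfies at most a $1-\delta$ fraction of the weak constraints for some absolute $\delta>0$; in fact we aim for perfect completeness, i.e.\ $\SDP$ value exactly $1$, equivalently: the basic $\SDP$ does not decide $\Gamma$. One case is immediate. By the classification of symmetric Boolean folded PCSPs~\cite{BrakensiekG21}, every tractable such PCSP has $\AT$, $\MAJ$ or $\PAR$ polymorphisms of all odd arities; since $\AT_{L_1},\MAJ_{L_2}\notin\Pol(\Gamma)$, the PCSP is either NP-hard --- in which case a robust algorithm run on a perfectly satisfiable instance would solve it and there is nothing more to prove (this subcase needs only $\text{P}\neq\text{NP}$) --- or it has $\PAR$ polymorphisms of all odd arities, i.e.\ is ``affine''. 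When the strong predicate $P$ is literally a linear subspace over $\F_2$ on at least three variables, one can reduce from H\aa{}stad's hardness of Max-$k$-$\LIN$~\cite{Has01}; the point of the sphere method is to handle the remaining affine predicate pairs (e.g.\ ``$w$-in-$k$ versus parity'') uniformly, and, if one wants the stronger ``not decided by $\SDP$'' conclusion, the NP-hard pairs as well.

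For the construction, let the variables of the gap instance be a suitably rich finite subset of the unit sphere $S^{d-1}\subseteq\R^d$ with $d$ large, together with the designated truth vector $v_0$; folding and negations are realized by pairing each point with its antipode in the complementary role. The candidate $\SDP$ solution is the identity embedding. Whether a $k$-tuple of these vectors $\SDP$-satisfies its $P$-constraint depends only on the Gram data (pairwise inner products and biases) of the tuple, so one singles out a single geometric type $\tau$ of Gram data that (i) is realized by many $k$-tuples of sphere points when $d$ is large, and (ii) admits a probability distribution supported on (negated copies of) satisfying assignments of $P$ with exactly those first and second moments --- the natural candidate being derived from the uniform distribution over $P$, whose moments are pinned down by symmetry and folding. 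Placing a $P$-constraint on every $k$-tuple of type $\tau$ then makes the basic $\SDP$ value equal to $1$.

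For soundness we must show that no assignment satisfies all weak constraints. An assignment is a folded $2$-coloring $c$ of the sphere points, and satisfying all weak constraints forces $c$ to color every type-$\tau$ $k$-tuple with a pattern lying in $Q$. The plan is to rule this out via a Ramsey-type phenomenon for the sphere: a high-dimensional sphere under any $2$-coloring contains monochromatic congruent copies of every fixed embeddable finite configuration (simplices and spheres being Ramsey configurations), and, more to the point, any folded $2$-coloring of $S^{d-1}$ must realize, on some type-$\tau$ sub-configuration, a color pattern of a prescribed ``majority-like'' or ``alternating-threshold-like'' shape --- the short list of exceptional colorings that could evade this (halfspace cuts, or colorings genuinely mimicking $\MAJ$ or $\AT$) being precisely what the hypothesis excludes. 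Concretely, $\MAJ_{L_2}\notin\Pol(\Gamma)$ supplies an $L_2\times k$ matrix with rows in $P$ whose column-wise majority escapes $Q$, and $\AT_{L_1}\notin\Pol(\Gamma)$ the analogous $L_1\times k$ witness for alternating threshold; choosing the configuration and $\tau$ compatibly with these witnesses, the forced pattern lands outside $Q$, so no valid $c$ exists and the integral optimum is at most $1-\delta$ for an absolute $\delta>0$. Feeding this gap into Raghavendra's framework finishes the proof.

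The crux, I expect, is the soundness step --- the non-existence of a good folded coloring of the sphere. Two requirements have to be met simultaneously: $\tau$ must be rich enough that a sphere-Ramsey / topological argument forces a $Q$-violating pattern --- and this is not literally classical sphere Ramsey, since when $Q$ itself already contains the all-$0$ and all-$1$ patterns one must instead force non-monochromatic ``majority'' or ``alternating'' patterns --- while being structured enough that the required local distribution over $P$ exists so that the $\SDP$ stays feasible; it is exactly the absence of $\MAJ$ and $\AT$ from $\Pol(\Gamma)$ that reconciles the two. Obtaining quantitative control (an absolute constant $\delta$, and a workable dependence of $d$ on the configuration and on $L_1,L_2,k$) and carrying out the bias and folding bookkeeping are the remaining technical hurdles.
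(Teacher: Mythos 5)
Your overall architecture matches the paper's: a perfect-completeness integrality gap for the basic SDP, obtained by taking the sphere itself as the instance (variables = unit vectors, constraints = all tuples whose Gram data admits a local distribution over $P$), a compactness step to get a finite instance, and Raghavendra's framework to convert the gap into UG-hardness; disposing of the NP-hard subcase via $\text{P}\neq\text{NP}$ is also legitimate. The genuine gap is that the soundness step --- showing that \emph{no} folded coloring of the sphere respects the template --- is exactly where all the work lies, and your proposal only gestures at it. Saying that the witnesses for $\MAJ_{L_2},\AT_{L_1}\notin\Pol(\Gamma)$ can be "chosen compatibly" with a single Gram type $\tau$ so that sphere Ramsey "forces" a $Q$-violating pattern is not an argument: for a general symmetric pair $(P,Q)$ one cannot work with one configuration type directly. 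The paper first performs a nontrivial ppp-reduction (using the structure of $O_{\MAJ}(P)$ and $O_{\AT}(P)$ from~\cite{BrakensiekG21}) to five concrete templates $\Gamma_1$--$\Gamma_5$, and only then designs, for each, a specific local distribution $\lambda$ --- typically \emph{not} uniform over $P$, but a reweighted mixture of two Hamming levels chosen to kill the first moments (and for $\Gamma_4$ an asymmetric distribution conditioned on the first coordinate) --- so that the resulting configuration is simultaneously SDP-feasible and amenable to a coloring argument.

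Moreover, the coloring arguments themselves are not "classical sphere Ramsey plus bookkeeping," and you correctly sense but do not resolve this. Monochromatic Matoušek--Rödl configurations suffice only for $\Gamma_2,\Gamma_3$; for $\Gamma_1$ one needs a $k$-regular set with exactly $k-1$ vectors of one color, which the paper gets via the $\Spread$ construction combined with folding (the simplex there has circumradius $1$, so the theorem of~\cite{matouvsek1995ramsey} does not apply directly); for $\Gamma_4$ one must apply the Ramsey statement to a product coloring $f'(\textbf{u})=(f(\alpha\textbf{v}_0+\sqrt{1-\alpha^2}\,\textbf{u}),f(\alpha\textbf{v}_0-\sqrt{1-\alpha^2}\,\textbf{u}))$; and for $\Gamma_5$ (where $Q$ forbids an \emph{intermediate} Hamming level $b$) no monochromaticity statement can help at all --- the paper introduces a new connectivity lemma showing any two $\alpha$-configurations are joined by a bounded-length path of configurations differing in one vector, so that some configuration along the path has exactly $b$ vectors colored $+1$. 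None of these mechanisms (nor substitutes for them) appear in your plan, and your appeal to a "short list of exceptional colorings mimicking $\MAJ$ or $\AT$" is a heuristic that the paper neither states nor proves. As written, the proposal is a correct road map whose central claim is unsubstantiated.
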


Similar to our algorithmic result, we first use the properties of the polymorphisms to reduce to a small set of {promise templates}. We obtain integrality gaps for these {fiPCSP}s for the basic SDP relaxation, which then implies robust hardness under the UGC.  For CSPs, strong integrality gaps~\cite{Schoenebeck08,Tulsiani09,SchoenebeckTT07} are known for the basic SDP relaxation and its strengthenings such as the Lasserre hierarchy, almost all of them being random constructions. For the case of PCSPs, analyzing the random constructions is trickier since we need to sample the constraints with a precise density such that there is a vector solution to the strong constraints, but the weak constraints are not satisfied. Instead, we take the opposite approach where we first construct the vector solution and then add all the constraints that the vector solution satisfies. This is similar in spirit to Feige and Schechtman's integrality gap~\cite{FeigeS01} for MAX-CUT where they first sampled $n$ uniformly random points on a $d$-dimensional sphere and then added edges between every pair of points whose distance falls within a preset range. 

Toward this approach, we first construct an infinite integrality gap instance where the vertex set corresponds to the $n$-dimensional sphere $\S^n$ for a large integer $n$ and there are constraints for every tuple of vertices whose corresponding vectors satisfy the SDP constraints. For the set of {promise templates} that we study, we show that this instance is not satisfiable, even using weak constraints. A compactness argument then implies the existence of a finite integrality gap instance. 
As we shall see later, by using our minion characterization result, showing that the infinite instance has no satisfiable assignment is a necessary step to obtain a finite integrality gap instance. 
Toward showing that the infinite instance does not have an assignment satisfying all the weak constraints, we study colorings of the sphere $f:\S^n \rightarrow \{-1,+1\}$ and use a result of Matoušek and Rödl~\cite{matouvsek1995ramsey} from \emph{sphere Ramsey theory} where the existence of monochromatic configurations in colorings of the sphere are studied. While their result directly applies to some {promise templates}, for others, we combine their result with new techniques to prove the existence of structured configurations in sphere colorings. 
A more detailed overview appears in Section~\ref{sec:sdpgap-overview}. 

\smallskip\noindent\textbf{The power of SDPs and robust PCSP algorithms.}
Both our algorithmic and hardness results crucially use the basic SDP relaxation.
As our algorithms for the $\AT$ and $\MAJ$ polymorphisms are based on rounding the basic SDP, we get that every Boolean {promise template} that contains $\AT$ or $\MAJ$ polymorphisms is decided by the basic SDP. On the hardness front,~\Cref{thm:main-hardness}, shows that {many} Boolean symmetric folded PCSPs without $\AT$ or $\MAJ$ polymorphisms cannot be decided by the basic SDP. This suggests a more general relation between the basic SDP and robust algorithms for PCSPs. At an intuitive level, for both the existence of robust algorithms and being decided by the basic SDP, the underlying requirement seems to be the existence of polymorphism families that are robust to noise.  While our results show that this is true for the PCSPs that we study in this paper (noise stability is one crucial aspect that distinguishes $\MAJ$ and $\AT$ from $\PAR$.), we believe this is a more general phenomenon and motivates us to make the following conjecture. 
\begin{conjecture}
\label{conj:sdp-robust}
A {promise template} $\Gamma$ has a polynomial time robust algorithm if {$\PCSP(\Gamma)$} is decided by the basic SDP relaxation. Else, there is no polynomial time robust algorithm for {$\PCSP(\Gamma)$}, unless $\text{P}=\text{NP}$.
\end{conjecture}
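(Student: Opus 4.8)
The plan is to prove the two implications of \Cref{conj:sdp-robust} separately, mirroring the picture for ordinary CSPs, where ``bounded width $=$ robust algorithm $=$ decided by basic SDP'' is obtained by combining Barto--Kozik~\cite{BartoK14,BartoK16} with Thapper--\v{Z}ivn\'y~\cite{ThapperZ18}. For the \textbf{algorithmic direction} (SDP-decidability $\Rightarrow$ robust algorithm) the goal is to promote the case analysis behind \Cref{thm:main-algorithm} into a single rounding scheme that is driven only by a minion homomorphism $\cM \to \Pol(\Gamma)$, where $\cM$ is the minion introduced in this paper that captures the power of the basic SDP. Concretely I would: (i) solve the basic SDP for the given near-satisfiable instance, obtaining vectors $\bv_{x,a}$ and the local distributions; (ii) use the homomorphism $\cM\to\Pol(\Gamma)$ to pull the SDP vectors back to ``formal polymorphism applications'', i.e.\ interpret a random Gaussian projection of the $\bv_{x,a}$ as a sample of inputs to a high-arity polymorphism of $\Gamma$; (iii) round by evaluating that polymorphism, which should recover the Charikar--Makarychev--Makarychev-style rounding in the $\MAJ$ regime and the random-threshold rounding in the $\AT$ regime as special cases. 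The analysis would then bound the expected fraction of violated weak constraints by some $f(\eps)\to 0$ using a quantitative noise-stability property that the existence of $\cM\to\Pol(\Gamma)$ must entail (the same property that distinguishes $\MAJ$ and $\AT$ from $\PAR$, since $\PAR$-closed templates are not SDP-decidable in the first place).

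For the \textbf{hardness direction} (not SDP-decidable $\Rightarrow$ no robust algorithm, assuming $\text{P}\neq\text{NP}$), Raghavendra's framework~\cite{Raghavendra08} already converts any basic-SDP integrality gap with perfect completeness for $\Gamma$ into a proof that $\Gamma$ admits no robust algorithm \emph{under the UGC}; hence the real content is removing the UGC assumption. The template to follow is Barto--Kozik's theorem that a CSP of unbounded width ``contains'' linear equations, after which H\aa stad's \emph{NP}-hardness of robust $3$-LIN~\cite{Has01} finishes the job. I would therefore: (i) show that the algebraic failure ``no minion homomorphism $\cM\to\Pol(\Gamma)$'' forces a uniform loss of noise stability --- every polymorphism of $\Gamma$ of large arity must, after a small amount of noise, depend non-trivially and in a sufficiently spread-out way on many coordinates; and (ii) turn this deficiency into a dictatorship/long-code test and compose it with a hard PCP (Label Cover, or directly one of the known \emph{NP}-hard promise templates such as approximate hypergraph colouring) to get an \emph{NP}-hardness reduction for robust satisfaction of $\Gamma$ with perfect completeness.

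The crux --- and the reason this is phrased as a conjecture --- is step (i) of the hardness direction: translating ``$\cM\not\to\Pol(\Gamma)$'' into the analytic statement that the polymorphisms of $\Gamma$ are uniformly noise-\emph{un}stable in a form strong enough to drive a PCP reduction. For CSPs this is precisely the difficult part of Barto--Kozik, and for PCSPs no analogue is known even at the level of bounded width; the sphere-Ramsey arguments behind \Cref{thm:main-hardness} handle only the symmetric Boolean case and do not obviously scale to arbitrary templates or larger domains. A secondary obstacle lies on the algorithmic side: establishing that a \emph{single} rounding of the basic SDP works for \emph{all} SDP-decidable PCSPs --- including non-Boolean ones, or ones whose polymorphism families blend $\MAJ$-like and $\AT$-like behaviour --- requires unifying two analyses that currently look quite different, and doing so through the minion $\cM$ rather than through the predicates themselves.
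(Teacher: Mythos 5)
The statement you are addressing is \Cref{conj:sdp-robust}, which the paper itself leaves as an open conjecture: the paper proves neither direction in general, and only observes that the hardness direction holds \emph{conditionally} --- an SDP integrality gap plus Raghavendra's framework~\cite{Raghavendra08} rules out robust algorithms under the UGC --- while the algorithmic direction (SDP-decidability implies a robust algorithm) is posed as the interesting open problem, known so far only for CSPs via~\cite{BartoK16,ThapperZ18} and, within this paper, only for Boolean folded templates with $\MAJ$ or $\AT$ polymorphisms (\Cref{thm:main-algorithm}). Your proposal is therefore not a proof and cannot be checked against one; it is a research program, and to your credit you say so explicitly. But the two pivotal steps you isolate are exactly the open content, and you give no argument for either. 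On the algorithmic side, the claim that a minion homomorphism $\cM_{\SDP}\to\Pol(\Gamma)$ ``must entail'' a quantitative noise-stability property strong enough to drive a single unified rounding is unsubstantiated: \Cref{thm:minion} only characterizes \emph{exact} decidability by the basic SDP, and nothing in the paper (or in your sketch) shows how to extract from the homomorphism a rounding whose error degrades gracefully with the SDP error; indeed the paper's two analyses ($\MAJ$ via CMM-style rounding, $\AT$ via random thresholds) use the predicates' structure after polymorphism-driven reductions, not the minion homomorphism as a black box.

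On the hardness side, the gap is even more severe: the conjecture as stated demands NP-hardness (not UG-hardness), and your plan to remove the UGC by mimicking Barto--Kozik~\cite{BartoK14} plus H\aa stad~\cite{Has01} presupposes a PCSP analogue of ``unbounded width implies expressing linear equations,'' which does not exist --- in fact the paper points out that bounded width and robust satisfiability already diverge for PCSPs ($1$-in-$3$ vs.\ NAE-$3$-SAT has no bounded width yet admits a robust algorithm), so the CSP template you are following breaks at its first step. Your step (i), deriving uniform noise-\emph{in}stability of all polymorphisms from $\cM_{\SDP}\not\to\Pol(\Gamma)$, and step (ii), composing that with a Label-Cover-style PCP with perfect completeness, are both asserted rather than argued, and neither is known even in the symmetric Boolean case, where the paper's own hardness (\Cref{thm:main-hardness}) is UGC-conditional and proceeds via sphere-Ramsey integrality gaps rather than any dictatorship test. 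So the proposal correctly maps the landscape and the obstacles, but as a proof of \Cref{conj:sdp-robust} it has no content beyond what the paper already states: both implications remain unproven, and the specific lemmas your plan hinges on are precisely the missing ingredients.
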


As mentioned earlier, if there is an integrality gap for $\Gamma$ with respect to the basic SDP relaxation, then by Raghavendra's~\cite{Raghavendra08} result, we get that $\Gamma$ does not have a polynomial time robust algorithm, assuming the Unique Games Conjecture. This already proves one direction of~\Cref{conj:sdp-robust}. The other direction is more interesting: can we obtain robust algorithms for PCSPs just using the fact that basic SDP decides them? 
We remind the reader that the conjecture is already proven for CSPs, where the existence of robust algorithms~\cite{BartoK16} and decidability by basic SDP~\cite{ThapperZ18} are both shown to be equivalent to having bounded width. 

\subsection{Minion characterization of basic SDP}

In addition to our concrete characterization of robust algorithms for a subfamily of PCSPs, we also present a novel algebraic characterization of which PCSPs can be decided via basic SDP. Originally, in the study of CSPs, such algebraic characterizations were structured as follows (e.g., \cite{Bul17,Zhuk20}).

\begin{itemize}
\vspace{-1ex}
\item \emph{``Algorithm $\mathcal A$ solves $\operatorname{CSP}(\Gamma)$, if and only if there is a polymorphism $f \in \Pol(\Gamma)$ with specific properties.''}
\end{itemize}

Since the early days of PCSPs,  it has been known that a single polymorphism cannot dictate hardness  (c.f., \cite{BrakensiekG21}), and thus one must instead consider a \emph{sequence} of polymorphisms (e.g., \cite{BrakensiekGWZ20}):
\begin{itemize}
\vspace{-1ex}
\item \emph{``Algorithm $\mathcal A$ solves $\PCSP(\Gamma)$, if and only if there is an infinite sequence of polymorphism $f_1, f_2, \hdots  \in \Pol(\Gamma)$ with specific properties.''}
\end{itemize}
However, in many cases, such a characterization is unfeasible or unwieldy. Instead, a more general approach, pioneered by \cite{BBKO21}, captures the structure of polymorphism via a \emph{minion} (formally defined in Section~\ref{sec:minion}). 
A key property of the polymorphisms of {promise template} $\Gamma$ is that the function family $\Pol(\Gamma)$ is closed under taking \emph{minors}\footnote{A function $f:D_1^n \rightarrow D_2$ of arity $n$ is said to be a minor of another function $g:D_1^m \rightarrow D_2$ of arity $m$ with respect to a mapping $\pi:[m]\rightarrow [n]$ such that $f(x_1,x_2,\ldots,x_n)=g(x_{\pi(1)},x_{\pi(2)},\ldots,x_{\pi(m)})$ for every $\textbf{x}\in D_1^n$.}
A minion is an abstraction based on this: it is a collection of objects each with an arity, and for every object $a$ of arity $m$, and a mapping $\pi : [m]\rightarrow [n]$, there is a unique object $b$ of arity $n$ that is said to be a minor of $a$ {with respect to} $\pi$. A \emph{minion homomorphism} is a mapping between minions that preserves the minor operation. 
A powerful way to capture the limits of algorithms for PCSPs is via minion homomorphisms: 
\begin{itemize}
\vspace{-1ex}
\item \emph{``Algorithm $\mathcal A$ solves $\PCSP(\Gamma)$, if and only if there is minion homomorphism from $\mathcal M_{\mathcal A}$ to $\Pol(\Gamma)$.''}
\end{itemize}

Many recent papers \cite{BrakensiekGWZ20,ciardo2022clap,ciardo2022sherali} have proven such characterizations in various contexts. Our contribution to this line of work is showing that the basic SDP can be captured by a minion, which we call $\mathcal M_{\SDP}$.

\begin{theorem}\label{thm:minion}
The basic SDP decides {$\PCSP(\Gamma)$} if and only if there is a minion homomorphism from $\mathcal M_{\SDP}$ to $\Pol(\Gamma)$.
\end{theorem}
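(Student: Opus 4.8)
\textbf{Proof proposal for \Cref{thm:minion}.}

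The plan is to define the minion $\cM_{\SDP}$ explicitly from the structure of the basic SDP relaxation and then establish the two implications of the characterization by the now-standard ``minion homomorphism $\Leftrightarrow$ algorithm'' paradigm. The natural candidate for $\cM_{\SDP}$ is the following: an element of arity $n$ is a tuple $(\bu_1, \dots, \bu_n)$ of vectors in some (finite-dimensional, or Hilbert) space together with a distinguished vector $\bu_0$ (the ``homogenizing'' unit vector), subject to exactly the constraints that the basic SDP imposes locally on a single constraint scope — concretely, $\sum_i \bu_i = \bu_0$ (or the appropriate normalization depending on the SDP formulation used in \Cref{sec:prelims}), $\langle \bu_i, \bu_j\rangle \ge 0$, and $\langle \bu_i, \bu_0 \rangle = \|\bu_i\|^2$, so that the Gram matrix behaves like a collection of ``pseudo-indicator'' vectors. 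The minor operation with respect to $\pi : [m] \to [n]$ sends $(\bu_1,\dots,\bu_m)$ to $(\bv_1, \dots, \bv_n)$ where $\bv_j = \sum_{i \in \pi^{-1}(j)} \bu_i$; one checks this preserves membership, so $\cM_{\SDP}$ is indeed a minion. I would first verify these closure properties, which is routine linear algebra.

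Next, for the backward direction (``$\Rightarrow$'': homomorphism implies the SDP decides $\Gamma$), I would use the general principle that a minion homomorphism $\xi : \cM_{\SDP} \to \Pol(\Gamma)$ certifies that whenever the SDP relaxation of an instance $\si$ of $\PCSP(\Gamma)$ is feasible, then $\si$ is satisfiable via the weak constraints. Given a feasible SDP solution, its local vector data at each variable and constraint assembles into a ``map'' from the free structure / the instance into $\cM_{\SDP}$; composing with $\xi$ yields polymorphisms of $\Gamma$ that can be ``glued'' — this is exactly the abstract machinery of \cite{BBKO21} relating minion homomorphisms to the existence of solutions, and I would invoke it, checking only that the SDP feasibility condition translates correctly into a $\cM_{\SDP}$-valued assignment. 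The key point to get right is that the global consistency of the SDP solution (a single PSD Gram matrix over all variables) gives enough compatibility for the gluing; here one typically passes to the instance's own polymorphism-like structure and applies the homomorphism coordinatewise.

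For the forward direction (``$\Leftarrow$'': the SDP decides $\Gamma$ implies a homomorphism exists), I would run the standard ``universal instance'' argument: build a (possibly infinite, then compactified) instance $\si^\star$ of $\PCSP(\Gamma)$ whose variables are the elements of $\cM_{\SDP}$ of each arity and whose strong constraints encode the minor relations, arranged so that $\si^\star$ has a feasible basic SDP solution \emph{by construction} (the vectors defining each element of $\cM_{\SDP}$ literally provide the SDP vectors). Since the SDP decides $\Gamma$, $\si^\star$ has a solution satisfying all weak constraints; reading this solution off gives, for each element of $\cM_{\SDP}$, an element of $\Pol(\Gamma)$ of the same arity, and the fact that the solution satisfies the constraints encoding minors shows this assignment commutes with taking minors — i.e., it is a minion homomorphism. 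Handling infiniteness requires a compactness/König-type argument (or working directly with the finitely-generated sub-minions and taking an inverse limit), as is alluded to in the hardness discussion of the excerpt.

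The main obstacle I anticipate is pinning down the \emph{exact} definition of $\cM_{\SDP}$ so that both directions go through simultaneously: the minion must be loose enough that feasibility of the basic SDP on any instance produces a $\cM_{\SDP}$-assignment (so its defining inequalities cannot be stronger than what the basic SDP enforces on a single scope), yet tight enough that a $\cM_{\SDP}$-valued solution to the universal instance genuinely reconstructs a feasible SDP solution and hence forces satisfiability. In particular one has to be careful about whether the inner products are required to be rational or real, whether the ambient dimension is bounded, and how the homogenizing vector $\bu_0$ interacts with folding and with setting constants; getting the normalization conventions to match the basic SDP of \Cref{sec:prelims} is where the real work lies, and it is also where the compactness argument for the infinite universal instance must be threaded through carefully so that the limiting object is still a bona fide minion homomorphism.
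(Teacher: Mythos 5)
Your overall plan coincides with the paper's: after unwinding your conditions (a unit homogenizing vector, nonnegative pairwise inner products, and first-moment consistency together force the vectors to be pairwise orthogonal with squared norms summing to one), your minion is the same $\cM_{\SDP}$ as in \Cref{sec:minion}, with minors given by summing vectors over preimages; and your ``decides $\Rightarrow$ homomorphism'' direction---a universal instance whose variables are minion elements, whose basic SDP feasibility holds by construction, followed by a compactness argument and the free-structure theorem of \cite{BBKO21}---is exactly the paper's argument.

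The gap is in the other direction, ``homomorphism $\Rightarrow$ the basic SDP decides $\Gamma$.'' You assert that a feasible SDP solution's ``local vector data at each variable and constraint assembles into a map into $\cM_{\SDP}$,'' but the basic SDP supplies vectors only for the variables; for each constraint it supplies only the scalar weights $\lambda_{\bx}(\ba)$. To prove that the rounded assignment $x \mapsto \psi(U_x)(a : a \in A)$ weakly satisfies a constraint $\bx$, you must exhibit an element of $\cM_{\SDP}$ of arity $|s^R(\bx)|$ (one vector $\bv_{\bx,\ba}$ per satisfying local assignment, with $\|\bv_{\bx,\ba}\|^2 = \lambda_{\bx}(\ba)$) whose minors under the coordinate projections are exactly the variable-level elements $U_{x_i}$; only then does minor-commutation of $\psi$ force $(f(x_1),\dots,f(x_k))$ into the weak relation. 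The existence of such constraint-level vectors compatible with the \emph{given} variable vectors is not automatic and is precisely the technical hurdle the paper highlights: it is proved by placing provisional vectors $\sqrt{\lambda_{\bx}(\ba)}\,e_i$ in fresh orthogonal coordinates, observing that the second-moment equations make their induced Gram matrix agree with that of the actual variable vectors, and then applying a rigid rotation to align the two families (the equivalence of the ``traditional'' and ``alternative'' basic SDP in \Cref{sec:minion}). Your appeal to ``global consistency of the PSD Gram matrix'' does not by itself supply this step, and without it the gluing argument in this direction does not go through as sketched.
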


We note that a similar minion was concurrently and independently discovered by Ciardo-Zivny \cite{cz22-minion}. The theorem applies equally to Boolean and non-Boolean PCSPs.

The construction of the $\cM_{\SDP}$ minion is inspired by the vector interpretation of solutions to the basic SDP. Each object in the minion is a collection of orthogonal vectors which sum to a reference vector $\bv_0$. The minors involve adding groups of vectors together. Having a minion homomorphism from $\cM_{\SDP}$ to $\Pol(\Gamma)$ implies that there are polymorphisms of $\Gamma$ whose minors behave exactly like combining orthogonal vectors.

Proving Theorem~\ref{thm:minion} has a few technical hurdles. One challenge is that SDP solutions may require vectors of an arbitrarily large dimension. For these arbitrarily-large dimensional relationships to be captured in our minion, we have that the families of vectors making up $\cM_{\SDP}$ reside in a (countably) infinite-dimensional vector space. Similar techniques have been used in other minion constructions \cite{ciardo2022clap,ciardo2022sherali}.

Another challenge that appears specifically unique to this paper is that a Basic SDP solution gives a vector corresponding to each variable, but for the proof to go through additional vectors are needed which correspond to the constraints. (The variable vectors are "projections" of the constraint vectors.) Obtaining such constraints would typically be done via Sum-of-Squares or a related routine, but we prove that including such vector constraints are without loss of generality. That is, any basic SDP solution can be extended to a solution that includes constraint vectors without modifying the original variable vectors. This gives us enough vector structure to prove that the minion homomorphism corresponds to the basic SDP solution. 

 \medskip\noindent\textbf{Relation with sphere colorings.} 
 By a result of \cite{BBKO21}, there is a minion homomorphism $\cM_{\SDP} \to \Pol(\Gamma)$ if and only if there is an assignment satisfying all the constraints in a ``universal'' instance of $\PCSP(\Gamma)$ known as a \emph{free structure}. In the case that $\Gamma$ is a Boolean folded PCSP, this free structure for $\cM_{\SDP}$ turns out to be an instance where every possible unit vector is a variable. The constraints correspond to collections of vectors that satisfy the corresponding basic SDP constraints. This is precisely the same infinite instance that we use to show integrality gaps. 
 Thus, the result of~\cite{BBKO21} translates to {Boolean promise templates} as stating that a Boolean folded ${\PCSP(\Gamma)}$ is decided by the basic SDP if and only if there is an assignment satisfying all the constraints in the infinite integrality gap instance. 
 For the general theory of approximation of basic SDPs, similar constructs with sphere coloring being a `universal' gap have appeared in the literature (eg. in~\cite{brakensiek2021mysteries}). 

\subsection{Connections to Discrete Analysis} 
Before concluding the introduction, we want to outline some viewpoints related to Discrete Analysis, and specifically the analysis of Boolean functions, that underlie the polymorphic approach to CSPs in general, and the study of robust satisfiability in particular. 

First, polymorphisms embody a discrete analogue of convexity, as they are operations under which the space of satisfying assignments to a CSP are closed. The existence of certain symmetric Boolean functions (such as OR, Majority, Parity) as polymorphisms drives the tractable cases of Boolean CSP~\cite{Schaefer78,Chen09}. The study of Boolean promise CSPs is much richer, and includes polymorphisms such as the Alternating Threshold polymorphism which govern efficient satisfiability algorithms~\cite{BrakensiekG21,BrakensiekGWZ20,BBKO21}. 

Among variants of promise CSPs, robust PCSPs (and more broadly approximate PCSPs) are particularly suited to the tools of discrete analysis. 
Satisfiability algorithms can be very brittle to noise, with Gaussian elimination to solve linear equations being the quintessential example. In fact this is inherent as solving linear systems becomes highly intractable under noise~\cite{Has01}, and this can be attributed to the fact that the underlying polymorphism (Parity function) is highly sensitive to noise.
The robustness criterion seems to require that the associated polymorphisms are essentially ``smooth'' objects (i.e., concentrate on low-degree Fourier coefficients and demonstrate some noise stability). Understanding such phenomenon has been pivotal in developments to the theory of approximation algorithms over the last few decades, including the formulation of the Unique Games Conjecture \cite{Khot02a,KKMO07,MOO,Mossel,analysisOdonnell,KMS18}. The more brittle algebraic methods used throughout the study of (P)CSPs therefore seem less useful. This is in part why the breakthrough of Raghavendra~\cite{Raghavendra08} for approximate CSP satisfaibility algorithms relied heavily on analytical methods whereas the proof of the CSP Dichotomy Theorem needed a number of breakthroughs in universal algebra~\cite{JeavonsCG97, Jeavons98, BulatovJK05,BartoKW17,Bul17,Zhuk20}. 
Even with this ``smoothness'' condition simplifying the landscape, proving a dichotomy for all promise CSPs will be much trickier than the corresponding result for CSPs due to a much richer family of smooth polymorphisms such as Alternating Threshold.
Thus, robust algorithms for promise CSPs provides an excellent avenue to further develop the analytical machinery that has shaped the approximation algorithms field in CSPs, and potentially uncover deeper connections with universal algebra tools.

As mentioned earlier, our hardness results are obtained by connections to \emph{sphere Ramsey theory}, and specifically the existence of certain monochromatic vector configurations in colorings of high-dimensional spheres. Such continuous analogs of questions normally posed in discrete/combinatorial settings seem interesting in their own right, and hold many exciting challenges. For instance, are there density analogs of such sphere Ramsey statements which bound the measure of subsets that avoid monochromatic configurations (see Section~\ref{sec:conclusion} for a discussion)?

\bigskip\noindent\textbf{Organization of the paper.}
We first start by introducing formal definitions and some general observations in~\Cref{sec:prelims} (experts familiar with SDPs and CSPs can skip or just skim this section). We then give a detailed technical overview of our results in~\Cref{sec:overview}. We provide our algorithmic results (\Cref{thm:main-algorithm}) in~\Cref{sec:alg} and prove the hardness results (\Cref{thm:main-hardness}) in~\Cref{sec:ug-hardness}. We propose and establish properties of the basic SDP minion in ~\Cref{sec:minion}.
Finally, we conclude in~\Cref{sec:conclusion} with several intriguing challenges and open problems raised by our work.

\section{Preliminaries}
\label{sec:prelims}

\noindent \textbf{Notations.} We use $[n]$ to denote the set $\{1,2,\ldots, n\}$. {We let $\B := \{-1,+1\}$} A predicate or a relation over a domain $D$ of arity $k {\ge 0}$ is a subset of $D^k$.
For a relation $P \subseteq {D}^k$ of arity $k$, we abuse the notation and use $P$ both as a subset of ${D}^k$, and also as a function $P:{D}^k \rightarrow \{0,1\}$. We use boldface letters to denote vectors and roman letters to denote their elements, e.g., $\textbf{x}=(x_1, x_2,\ldots,x_k)$. We have $\S^n := \{ \textbf{v} \in \R^{n+1}: \norm{\textbf{v}}_2=1\}$.
For a vector $\textbf{v} \in D_1^k$ and a function $f:D_1\rightarrow D_2$, we use $f(\textbf{v})\in D_2^k$ to denote $(f(v_1),f(v_2),\ldots,f(v_k))$.

For a vector $\textbf{x} \in {\B}^k$, we use $\textsf{hw}(\textbf{x})$ to denote the number of $+1$s in $\textbf{x}$, i.e., $\textsf{hw}(\textbf{x})=\frac{k+\sum_{i=1}^nx_i}{2}$. For $S \subseteq \{0,1,\ldots,k\}$, we use $\Ham_k S$ to denote $\{\textbf{x} \in \{-1,+1\}^k : \textsf{hw}(\textbf{x})\in S\}$. We use $\NAE_k$ to denote the set $\Ham_k \{1,2,\ldots,k-1\}$, and $k$-SAT to denote the set $\Ham_k \{1,2,\ldots,k\}$.
For vectors $\textbf{x}, \textbf{y} \in \R^n$, { we use  $\langle \textbf{x}, \textbf{y}\rangle$ to denote $\sum_i x_i y_i$}. 

\subsection{{(P)CSP instances, templates,} and polymorphisms}

{We now work toward defining the main computational problem we seek to study: symmetric idempotent Boolean folded promise constraint satisfaction problems (PCSP) as well as the main tool we shall use for analyzing the complexity of such problems: \emph{polymorphisms}. To build toward these definitions, we begin by defining constraint satisfaction problems (CSP). The main ``parameter'' defining a (P)CSP is its set of allowable constraint types, which we call a \emph{template}. We can then use the template to build \emph{instances} of the (P)CSP which have associated \emph{assignments} for which one can define multiple computational \emph{problems}.}

\begin{definition}[CSP {template/instance/assignment}] {We define a \emph{CSP template} to be a finite set of predicates $\Gamma =\{P_1, P_2, \ldots, P_l\}$ over a finite domain $D$, where for all $i \in [\ell]$, $P_i$ has arity $k_i \ge 0$.}
{An \emph{instance} $\Phi=(V,\mathcal{C})$ of the constraint satisfaction problem (CSP) associated with the template $\Gamma$ (i.e., ``$\Phi$ is an instance of $\CSP(\Gamma)$''), consists of a variable set $V$ and a constraint set $\mathcal C$. The variable set consists of} $n$ variables $V =\{u_1, u_2, \ldots, u_n\}.$ {The} $m$ constraints $\mathcal{C}=\{C_1, C_2, \ldots, C_m\}$ 
each {consist} of a tuple of variables $C_j = (u_{j,1}, u_{j,2}, \ldots, u_{j,{l_j}}) \in V^{l_j}$ and an associated predicate $P^{(j)} \in \Gamma$ of the same arity $l_j$. 
An assignment $\sigma : V \rightarrow D$ is said to satisfy the constraint $C_j$ if $\sigma(C_j)=(\sigma(u_{j,1}), \sigma(u_{j,2}), \ldots, \sigma(u_{j,{l_j}})) \in P^{(j)}$.
\end{definition}

{We now define two computational problems associated with $\CSP(\Gamma)$ where an instance $\Phi$ of $\CSP(\Gamma)$ is given as input.

\begin{definition}[{CSP decision/search problems}] In the \emph{decision version} of $\CSP(\Gamma)$, the objective is to decide if there is an assignment to a given instance of $\CSP(\Gamma)$ that satisfies all the constraints. In the \emph{search version} of $\CSP(\Gamma)$, the objective is to output an assignment to a given instance of $\CSP(\Gamma)$ that satisfies all the constraints.
\end{definition}}

We next define {promise constraint satisfaction problems} (PCSP).  {We begin by defining what a promise template is.}\footnote{{Other works, such as \cite{BBKO21}, parameterize the promise CSP as a pair of a relational structures with a homomorphism between them.}}

{
\begin{definition}[promise template]
    Let $D_1, D_2$ be finite domains. A promise template consists of a finite collection $\Gamma = \{ (P_1,Q_1), (P_2,Q_2),\ldots, (P_l,Q_l)\}$ of pairs of predicates, where for all $i \in [\ell]$, $P_i$ has arity $k_i\ge 0$ over domain $D_1$ and each $Q_i$ has arity $k_i$ over domain $D_2$. Furthermore, we assert there exists a map $h:D_1\rightarrow D_2$ such that for all $i\in[l]$ and $\textbf{x} \in D_1^{k_i}$, $\textbf{x} \in P_i$ implies $h(\textbf{x}) \in Q_i$. In other words, $h$ is a homomorphism from $P_i$ to $Q_i$ for all $i \in [l]$.
\end{definition}
}

We say that $(D_1,D_2)$ is the domain of the promise template $\Gamma$.

\begin{definition}[{PCSP instance/assignments}]
	In an instance $\Phi=(V,\mathcal{C})$ of $\PCSP(\Gamma)$, we have a set of $n$ variables $V =\{u_1, u_2, \ldots, u_n\}$ and $m$ constraints $\mathcal{C}=\{C_1, C_2, \ldots, C_m\}$ 
each consisting of a tuple of variables $C_j = (u_{j,1}, u_{j,2}, \ldots, u_{j,{l_j}}) \in V^{l_j}$ and an associated predicate pair $(P^{(j)},Q^{(j)}) \in \Gamma$ of the same arity $l_j$.

{A $P$-assignment is a map $\sigma_1 : V \rightarrow D_1$ and a $Q$-assignment is a map $\sigma_2 : V \rightarrow D_2$.} A {$P$}-assignment {$\sigma_1$} is said to strongly satisfy the constraint $C_j$ if $\sigma(C_j)\in P^{(j)}$, and {a $Q$-assignment $\sigma_2$} is said to weakly satisfy the constraint $C_j$ if $\sigma(C_j) \in Q^{(j)}$.
\end{definition}

{Crucially, due to the existence of the homomorphism $h$ connecting each $P_i$ with its corresponding $Q_i$, any $P$-assignment strongly satisfying a collection of constraint can be converted into a $Q$-assignment weakly satisfying those constraints by composing $h$ with the $P$-assignment.} {If $\Gamma = \{(P,Q)\}$ consists a single pair of predicates, we often state $\Gamma = (P,Q)$ and write $\PCSP(P, Q)$ for $\PCSP(\Gamma)$.} The following are computational problems associated with {instances of} $\PCSP(\Gamma)$. 

\begin{definition}[{PCSP problems}]\label{def:pcsp-prob} In the decision version of $\PCSP(\Gamma)$, given an input instance $\Phi=(V,\mathcal{C})$ of $\PCSP(\Gamma)$, the objective is to distinguish between the two cases.
  \begin{enumerate}
\item  There is a $P$-assignment $\sigma_1 : V \rightarrow D_1$ that strongly satisfies all the constraints.
\item There is no $Q$-assignment $\sigma_2 : V \rightarrow D_2$ that weakly satisfies all the constraints.  
\end{enumerate}

In the search version of $\PCSP(\Gamma)$, given an input instance $\Phi=(V,\mathcal{C})$ of $\PCSP(\Gamma)$ with the promise that there is an assignment $\sigma_1: V\rightarrow D_1$ that strongly satisfies all the constraints, the objective is to find an assignment $\sigma_2: V \rightarrow D_2$ that weakly satisfies all the constraints.\footnote{{This explains the terminology \emph{promise} CSPs to refer to such problems. For the decision version, a promise problem consists of disjoint sets of Yes and No instances, which together need not exhaust all instances.}}
 
{An algorithm for $\PCSP(\Gamma)$ is said to be $f$-robust algorithm for a function $f:\R \rightarrow \R$ if the following holds. The inputs to the algorithm are an instance $\Phi=(V,\mathcal{C})$ of $\PCSP(\Gamma)$ and a parameter $\epsilon$} with the promise that there is an assignment $\sigma_1: V\rightarrow D_1$ that strongly satisfies $1-\epsilon$ fraction of the constraints. {The objective of the algorithm is to output an assignment} $\sigma_2: V \rightarrow D_2$ that weakly satisfies at least $1-f(\epsilon)$ fraction of the constraints. {We call an algorithm a robust algorithm for $\PCSP(\Gamma)$ if it is $f$-robust for some monotone and non-negative function $f$ with $f(\epsilon)\rightarrow 0$ as $\epsilon \rightarrow 0$.}
\end{definition}

In this paper, we restrict ourselves to PCSP \emph{templates} where both the domains are equal to the set ${\B :=} \{-1,+1\}$. {We call such templates \emph{Boolean} templates.} {For technical reasons, to more closely align with the conventions in the robust CSP literature, we consider a different notion of an PCSP instance which incorporates \emph{variable folding}, i.e., the negation of variables. We also assume that the PCSP instance allows for the setting of constants, which we call \emph{idempotence}.}

\begin{definition}[Boolean folded idempotent {instance/assignment}] {Given a Boolean template $\Gamma=\{(P_1,Q_1),$ $(P_2,Q_2),\ldots,$ $(P_l,Q_l)\}$ where $P_i \subseteq Q_i \subseteq \B^{k_i}$ for every $i \in [l]$, an instance $\Phi=(V,\mathcal{C})$ of $\fiPCSP(\Gamma)$ consists of a set of $n$ variables $V = \{u_1, \hdots, u_n\}$ and a set of $m$ constraints $\mathcal{C} = \{C_1, \hdots, C_m\}$. For $i \in [n]$, each variable $u_i$ has two corresponding \emph{literals}: $u_i$ and $\overline{u_i}$, the positive and negative literals, respectively. {Observe that we identify the positive literal with the variable itself.} We also have two constant literals corresponding to the elements of $\B$. For each $j \in [m]$, $C_j$ consists} of a tuple of literals $C_j = (x_{j,1}, x_{j,2}, \ldots, x_{j,{l_j}}) \in {(}V\cup \overline{V} \cup \B{)}^{l_j}$ and an associated predicate pair $(P^{(j)},Q^{(j)}) \in \Gamma$ of the same arity $l_j$. 

Given an assignment\footnote{{Since the families $P_i$ and $Q_i$ of predicates have a common domain $\B$, we need not distinguish between ``$P$-assignments'' and ``$Q$-assignments'' in the Boolean setting.}} $\sigma : V \rightarrow \B,$ {define its extended assignment $\sigma' : V \cup \overline{V} \cup \B \rightarrow \B$ to be} $\sigma'(u_i)=\sigma(u_i)$ and $\sigma'(\overline{u_i})=-\sigma(u_i)$ for every $i \in [n]$ {and $\sigma'(b) = b$ for $b \in \B$}. The assignment $\sigma$ is said to strongly (and {respectively} weakly) satisfy the constraint $C_j$ in $\Phi$ if $\sigma'(C_j) \in P^{(j)}$ (and {respectively} $\sigma'(C_j) \in Q^{(j)}$). 
\end{definition}

{We now restate the portion of Definition~\ref{def:pcsp-prob} for robust PCSP in the fiPCSP setting.
{
\begin{definition}
    
An algorithm for $\fiPCSP(\Gamma)$ is said to be $f$-robust algorithm for a function $f:\R \rightarrow \R$ if the following holds. The inputs to the algorithm are an instance $\Phi=(V,\mathcal{C})$ of $\fiPCSP(\Gamma)$ and a parameter $\epsilon$ with the promise that there is an assignment $\sigma_1: V\rightarrow D_1$ that strongly satisfies $1-\epsilon$ fraction of the constraints. The objective of the algorithm is to output an assignment $\sigma_2: V \rightarrow D_2$ that weakly satisfies at least $1-f(\epsilon)$ fraction of the constraints. We call an algorithm a robust algorithm for $\fiPCSP(\Gamma)$ if it is $f$-robust for some monotone and non-negative function $f$ with $f(\epsilon)\rightarrow 0$ as $\epsilon \rightarrow 0$.
\end{definition}
}

Associated with every PCSP {template, the notion of polymorphism captures the various} closure properties of the satisfying solutions {to any instance of the corresponding PCSP.} More formally, we can define the polymorphisms of a PCSP {template} as follows. 

\begin{definition}[Polymorphisms of PCSPs]
	{Given a promise template} $\Gamma = \{ ((P_1,Q_1), (P_2, Q_2), \ldots,$ $(P_l,Q_l))\}$ such that for every $i \in [l]$, $P_i\subseteq D_1^{k_i}, Q_i\subseteq D_2^{k_i}$, a polymorphism of arity ${L}$ is a function $f:D_1^{{L}} \rightarrow D_2$ that satisfies the below property for every $i \in [l]$.
	For all $\textbf{v}_1,\textbf{v}_2,\ldots,\textbf{v}_{k_i}\in D_1^{L}$ satisfying $( (\textbf{v}_1)_j, (\textbf{v}_2)_j, \ldots, (\textbf{v}_{k_i})_j ) \in P_i$ for each $j \in [{L}]$, we have 
	\[
	(f(\textbf{v}_1), f(\textbf{v}_2), \ldots , f(\textbf{v}_{k_i})) \in Q_i
	\]
    {If $\Gamma$ is Boolean (i.e., $D_1 = D_2 =\B$), we say that $f:\B^n\rightarrow \B$ is \emph{folded} if $f(-\textbf{v})=-f(\textbf{v})$ for all $\textbf{v}\in \B^n$.} {Likewise, we say that that $f$ is idempotent if $f(1,1,\ldots,1)=1$ and ${f(-1,-1,\ldots,-1)=-1}$.}
	We use $\emph{Pol}(\Gamma)$ to denote the family of all the polymorphisms of {$\Gamma$}.
\end{definition}

{Given a predicate $P \subseteq D_1^k$ and a function $f : D_1^L \to D_2$, we let $f(P) \subseteq D_2^k$ denote the set
\[
    f(P) := \bigcup_{\bx_1, \hdots, \bx_L \in P} f(\bx_1, \hdots, \bx_L),
\]
where $f(\bx_1,\hdots, \bx_L) := (f(x_{1,1},\hdots, x_{L,1}), \hdots, f(x_{1,k}, \hdots, x_{L,k}))$ is the component-wise application of $f$.}
Given a family $\cF$ of functions $f_i : D_1^{L_i} \to D_2$ 
we let $O_{\cF}(P) := \bigcup_{f \in \cF} f(P).$

{For our algorithmic results, we focus on studying Boolean promise templates $\Gamma$ whose corresponding polymorphisms contain particular families of functions.} {Specifically, we} extensively study the {a}lternate-{t}hreshold $(\AT)$ and {m}ajority $(\MAJ)$ polymorphisms in this paper:
\begin{enumerate}
    \item For an odd integer $L\geq 1$ and $\textbf{x}\in{\B}^L$, we have $
    \AT_L(\textbf{x})=+1$, if $x_1-x_2+x_3-\ldots+x_L >0$, and $-1$, otherwise.
    \item For an odd integer $L\geq 1$ and $\textbf{x}\in{\B}^L$, we have $
    \MAJ_L(\textbf{x})=+1$, if $x_1+x_2+x_3+\ldots+x_L >0$, and $-1$, otherwise.
\end{enumerate}
{We let $\AT$ denote the set of $\AT_L$ for $L \ge 1$ odd and $\MAJ$ the set of $\MAJ_L$ for $L \ge 1$ odd.}

\smallskip \noindent \textbf{Relaxations of PCSPs.} We say that a {promise template} $\Gamma'$ is a relaxation of another {promise template }$\Gamma$ if $\Pol(\Gamma)\subseteq \Pol(\Gamma')$. If $\Gamma'$ is a relaxation of $\Gamma$, then there is a \textit{gadget reduction} from $\Gamma' \cup \{(=, =)\}$ to $\Gamma \cup \{(=, =)\}$, where $=$ denotes the equality predicate in the relevant domain. More formally, it is referred to as $\Gamma' \cup \{(=, =)\}$ is \textit{positive primitive promise (ppp)-definable} from $\Gamma \cup \{(=, =)\}$. 

\begin{definition}[ppp-definability of PCSPs~\cite{Chen09, BrakensiekG21}]
\label{def:ppp}
We say that a {pair of predicates $(P',Q')$ of the same arity} is ppp-definable from a {promise template} $\Gamma$ over the same domain pair if there exists a fixed constant $l$ and an instance $\Phi$ of $\PCSP(\Gamma)$ over $k+l$ variables $u_1, u_2, \ldots, u_k, v_1, v_2, \ldots, v_l$ such that 
\begin{enumerate}
    \item If $(x_1, x_2, \ldots, x_k) \in P'$, then there exist $y_1, y_2, \ldots, y_l$ such that the assignment $(x_1, \ldots, x_k,y_1, \ldots, y_l)$ strongly satisfies all the constraints in $\Phi$. 
    \item If there is an assignment $(z_1, \ldots, z_{k+l})$ weakly satisfying all the constraints in $\Phi$, then $(z_1, z_2, \ldots, z_k)\in Q'$. 
\end{enumerate}
More generally, we say that {a promise template} $\Gamma'$ is ppp-definable from {another promise template} $\Gamma$ if every predicate pair in $\Gamma'$ is ppp-definable from $\Gamma$. 
\end{definition}
Note that if $\Gamma'$ is ppp-definable from $\Gamma$, then the decision version of $\PCSP(\Gamma')$ can be reduced to $\PCSP(\Gamma)$ in polynomial time. 
We now observe that the same holds for the robust version as well. More formally, we have the following proposition.\footnote{We previously claimed falsely that the reduction holds more generally when we are given that $\Gamma'$ is a relaxation of $\Gamma$. We are grateful to Andrei Krokhin for pointing this mistake to us. Under the general assumption that $\Gamma'$ is a relaxation of $\Gamma$, we can reduce the robust version of $\Gamma'$ to the robust version of $\Gamma$ together with binary equality constraints. For the sake of exact satisfiability, the equality constraints do not matter~\cite{BartoKW17,BrakensiekG21} but we do not know a way to obtain a robust algorithm for {$\PCSP(\Gamma)$} together with binary equality constraints just using the fact that $\Gamma$ has a robust algorithm. Even for CSPs, {a} direct way to show this is not known~\cite{DalmauK13}, without using the full robust dichotomy result of~\cite{BartoK16}.}

\begin{proposition}
\label{prop:ppp-defn}
Suppose that the {promise template} $\Gamma'$ over {$(D_1, D_2)$} is ppp-definable from $\Gamma$ over the same domain{.} {If there exists an $f$-robust algorithm for $\PCSP(\Gamma)$, then there exists an $O_{\Gamma,\Gamma'}(1)\cdot f$-robust algorithm for $\PCSP(\Gamma')$.}
\end{proposition}

\begin{proof}[{Proof}]
Given an instance $\Phi'=(V',\mathcal{C}')$ of ${\PCSP(\Gamma')}$ over a set of variables $V'$ and containing $m$ constraints $\mathcal{C}'=\{C'_1,C'_2,\ldots,C'_m\}$, we output an instance $\Phi$ of ${\PCSP(\Gamma)}$ containing $|V'|$ original variables and a set of dummy variables. 
For every constraint $C'_j$ using $(P', Q')$ of arity $k_j$ involving the {tuple of literals} $(u_{j,1}, u_{j,2}, \ldots, u_{j,k_j}) $ in $\Phi'$, we have a set of dummy variables $w_{j,1}, w_{j,2}, \ldots, w_{j,l_j}$ and a set of constraints ${\mathcal C}_j$ {whose variable scope is} ${S_j \cup \{}w_{j,1}, w_{j,2}, \ldots, w_{j,l_j}\}$ as in~\Cref{def:ppp}. Let $W=\cup_j \{ w_{j,1}, w_{j,2}, \ldots, w_{j,l_j}\}$ and let $V=V'\cup W$ be the set of variables of $\Phi$, and $\mathcal{C}=\cup_{j\in[m]}{\mathcal{C}}_j$ {is} the set of constraints in $\Phi$.
We claim that this reduction preserves robust algorithms. \begin{enumerate}
    \item (Completeness). Suppose that there exists an assignment $\sigma': V' \rightarrow D_1$  strongly satisfying $1-\epsilon$ fraction of the constraints in $\Phi'$. For every $j \in [m]$ such that $\sigma'$ strongly satisfies the constraint $C'_j$, there is an assignment $\sigma_j:\{ w_{j,1}, w_{j,2}, \ldots, w_{j,l_j}\}\rightarrow D_1$ such that $\sigma \cup \sigma_j : C_j\rightarrow D_1$ strongly satisfies all the constraints {of $\mathcal{C}_j$}. Consider an assignment $\sigma: V \rightarrow D_1$ where we set $\sigma(u_i)=\sigma'(u_i)$ for $u_i \in V'$, $\sigma(w_{j,i})=\sigma_j(w_{j,i})$ for all $w_{j,i}\in W$ such that $\sigma$ strongly satisfies the constraint $C'_j$. If $\sigma$ does not strongly satisfy the constraint $C'_j$, we set $\sigma(w_{j,i})$ arbitrarily. The assignment $\sigma$ strongly satisfies at least $1-O_{\Gamma,\Gamma'}(\epsilon)$ fraction of the constraints in $\Phi$. 
    \item (Soundness). Suppose that there is an assignment $\sigma : V \rightarrow D_2$
    weakly satisfying $1-\epsilon$ fraction of the constraints in $\Phi$. {By Markov's inequality, f}or at least $1-O_{\Gamma,\Gamma'}(\epsilon)$ values of $j \in [m]$, all the constraints in ${\mathcal C}_j$ are weakly satisfied by $\sigma$. This shows that the assignment $\sigma$ restricted to $V'$ weakly satisfies $1-O_{\Gamma,\Gamma'}(\epsilon)$ fraction of the constraints in $\Phi'$. \qedhere
\end{enumerate}
\end{proof}

{In the folded, idempotent setting, we also allow for negations of variables and the setting of constants, we represent these additional operations via what we coin as ``fippp''-definitions.}

{
\begin{definition}[fippp-definability of fiPCSPs]
\label{def:fippp}
We say that a {pair of predicates $(P',Q')$ of the same arity} is fippp-definable from a {promise template} $\Gamma$ over the same domain pair if there exists a fixed constant $l$ and an instance $\Phi$ of $\fiPCSP(\Gamma)$ over $k+l$ variables $u_1, u_2, \ldots, u_k, v_1, v_2, \ldots, v_l$ such that 
\begin{enumerate}
    \item If $(x_1, x_2, \ldots, x_k) \in P'$, then there exist $y_1, y_2, \ldots, y_l$ such that the assignment $(x_1, \ldots, x_k,y_1, \ldots, y_l)$ strongly satisfies all the constraints in $\Phi$. 
    \item If there is an assignment $(z_1, \ldots, z_{k+l})$ weakly satisfying all the constraints in $\Phi$, then $(z_1, z_2, \ldots, z_k)\in Q'$. 
\end{enumerate}
More generally, we say that {a promise template} $\Gamma'$ is fippp-definable from {another promise template} $\Gamma$ if every predicate pair in $\Gamma'$ is fippp-definable from $\Gamma$. 
\end{definition}
}

{
\begin{proposition}
\label{prop:fippp-defn}
Suppose that the {promise template} $\Gamma'$ over {$(D_1, D_2)$} is fippp-definable from $\Gamma$ over the same domain{.} {If there exists an $f$-robust algorithm for $\fiPCSP(\Gamma)$, then there exists an $O_{\Gamma,\Gamma'}(1)\cdot f$-robust algorithm for $\fiPCSP(\Gamma')$.}
\end{proposition}

The proof of Proposition~\ref{prop:fippp-defn} is essentially identical to that of Proposition~\ref{prop:ppp-defn}. 

\begin{proof}[{Proof}] 
Given an instance $\Phi'=(V',\mathcal{C}')$ of ${\fiPCSP(\Gamma')}$ over a set of variables $V'$ and containing $m$ constraints $\mathcal{C}'=\{C'_1,C'_2,\ldots,C'_m\}$, we output an instance $\Phi$ of ${\fiPCSP(\Gamma)}$ containing $|V'|$ original variables and a set of dummy variables. 
For every constraint $C'_j$ using $(P', Q')$ of arity $k_j$ involving the {tuple of literals} $(u_{j,1}, u_{j,2}, \ldots, u_{j,k_j}) $ in $\Phi'$, we have a set of dummy variables $w_{j,1}, w_{j,2}, \ldots, w_{j,l_j}$ and a set of constraints ${\mathcal C}_j$ {whose variable scope is} $S_j \cup \{w_{j,1}, w_{j,2}, \ldots, w_{j,l_j}\}$ as in~\Cref{def:fippp}. Let $W=\cup_j \{ w_{j,1}, w_{j,2}, \ldots, w_{j,l_j}\}$ and let $V=V'\cup W$ be the set of variables of $\Phi$, and $\mathcal{C}=\cup_{j\in[m]}{\mathcal{C}}_j$ {is} the set of constraints in $\Phi$.
We claim that this reduction preserves robust algorithms. \begin{enumerate}
    \item (Completeness). Suppose that there exists an assignment $\sigma': V' \rightarrow D_1$  strongly satisfying $1-\epsilon$ fraction of the constraints in $\Phi'$. For every $j \in [m]$ such that $\sigma'$ strongly satisfies the constraint $C'_j$, there is an assignment $\sigma_j:\{ w_{j,1}, w_{j,2}, \ldots, w_{j,l_j}\}\rightarrow D_1$ such that $\sigma \cup \sigma_j : C_j\rightarrow D_1$ strongly satisfies all the constraints {of $\mathcal{C}_j$}. Consider an assignment $\sigma: V \rightarrow D_1$ where we set $\sigma(u_i)=\sigma'(u_i)$ for $u_i \in V'$, $\sigma(w_{j,i})=\sigma_j(w_{j,i})$ for all $w_{j,i}\in W$ such that $\sigma$ strongly satisfies the constraint $C'_j$. If $\sigma$ does not strongly satisfy the constraint $C'_j$, we set $\sigma(w_{j,i})$ arbitrarily. The assignment $\sigma$ strongly satisfies at least $1-O_{\Gamma,\Gamma'}(\epsilon)$ fraction of the constraints in $\Phi$. 
    \item (Soundness). Suppose that there is an assignment $\sigma : V \rightarrow D_2$
    weakly satisfying $1-\epsilon$ fraction of the constraints in $\Phi$. {By Markov's inequality, f}or at least $1-O_{\Gamma,\Gamma'}(\epsilon)$ values of $j \in [m]$, all the constraints in ${\mathcal C}_j$ are weakly satisfied by $\sigma$. This shows that the assignment $\sigma$ restricted to $V'$ weakly satisfies $1-O_{\Gamma,\Gamma'}(\epsilon)$ fraction of the constraints in $\Phi'$. \qedhere
\end{enumerate}
\end{proof}
}

\subsection{The basic SDP}

We now describe the Basic SDP relaxation of an instance of a PCSP, similar to how it is presented in \cite{Raghavendra08}. Let {$\Gamma$ be a promise template over domain $(D,D')$, and let} $\Phi$ be an instance of {$\PCSP(\Gamma)$} over $n$ variables $V=\{u_1, u_2, \ldots, u_n\}$ and $m$ constraints $C_1, C_2, \ldots, C_m$. Suppose that the constraint $C_j$ {subjects the variables} $( u_{j,1}, u_{j,2}, \ldots, u_{j,l_j})$ {to the predicate pair} $(P^{(j)}, Q^{(j)})$.
In the basic SDP relaxation of ${\PCSP(\Gamma)}$ corresponding to $\Phi$, we have a vector $\textbf{v}_{i,a}$ corresponding to each variable $u_i$, $i \in [n]$, along with a label $a \in D$. We also have a unit vector $\textbf{v}_0$. 
For each constraint $C_j$, $j \in [m]$, there is a probability distribution (referred to as the \emph{local distribution} of the constraint $C_j$) {over the set of possible local assignments to that constraint.} 
{Since $C_j$ may contain repeated variables, we let $S_j := \{u_{j,1}, \hdots, u_{j,l_j}\}$ denote the set of variables appearing in $C_j$. Our probability distribution is then over the set of functions $\{f : S_j \to D\}$.} We represent this using a variable $\lambda_j(f)$ for every $j\in [m]$ and {local} assignment $f: {S_j} \rightarrow D$. {Note that we use $f(C_j)$ as shorthand for $( f(u_{j,1}), f(u_{j,2}), \ldots, f(u_{j,l_j}))$} Finally, we have an error parameter $\eps_j$ corresponding to the constraint $j, j \in [m]$, equal to the probability that $\lambda_j$ is supported outside $P^{(j)}$. We refer to $\epsilon_j$ as the error of the basic SDP relaxation on the constraint $j$, and $\sum_{j=1}^m \eps_j$ as the error of the basic SDP relaxation of the instance $\Phi$.
\begin{align*}
    \textbf{minimize: } & \sum_{j=1}^m \eps_j & \\
\textbf{subject to: } \eps_j &\geq 0  & \forall j \in [m]\\
\lambda_j(f) &\geq 0  & \forall j\in [m], ~ f:{S}_j \rightarrow {D} \\ 
\sum_{f:{S}_j \rightarrow D}\lambda_j(f)&=1  & \forall j \in [m] \\ 
 \sum_{f:{S}_j \rightarrow D, f(C_j)\in P^{(j)}} \lambda_j(f) &= 1-\eps_j \quad & \forall j \in [m]\\
{ \langle \bv_0, \bv_0\rangle} &{= 1}\\
\text{ (First moments.)  } \qquad { \langle \textbf{v}_{i,a},   \textbf{v}_0 \rangle } &= \sum_{\substack{f:{S}_j \rightarrow D\\f({u_i}) = a}}\lambda_j(f)&  \forall j\in [m],~{u_i} \in {S}_j,~ a \in D \\ 
\text{(Second moments.)  } \quad { \langle \textbf{v}_{i,a} , \textbf{v}_{i',a'} \rangle } &= \sum_{\substack{f:{S}_j \rightarrow D\\f({u_i}) = a, f({u_{i'}}) = a'}}\lambda_j(f)  & \forall j\in [m],~{u_i,u_{i'}} \in {S}_j,~ a,a' \in D 
\end{align*}
{We implicitly assume that every variable appears in at least one clause, so the second moment condition also implies that $\langle v_{i,a}, v_{i,a'} \rangle = 0$ for $a \neq a'$ for all $i \in [n]$.}

\begin{proposition}\label{prop:basic-SDP-completeness}
Let $\Gamma$ be a promise template over domain $(D_1, D_2)$. Suppose that $\Phi=(V,\mathcal{C})$ is an instance of a $\PCSP(\Gamma)$ such that there is an assignment $\sigma : V \rightarrow D_1$ that strongly satisfies all the constraints. Then, the basic SDP {relaxation of $\Phi$} is feasible {with objective value $0$}.
\end{proposition}
\begin{proof}
    {We embed our vectors in the real line.} We set $\textbf{v}_0=1$, and for every $i \in [n]$ and $a \in D_1$, we set
    \[
        \textbf{v}_{i,a} := \begin{cases}
        1 & a = \sigma(u_i),\\
        0 & \text{otherwise.}
        \end{cases}
    \] {For any} $j \in [m]$ and $f:{S}_j \to D_1$, we set $\lambda_j(f)=1$ if $f(x)=\sigma(x)$ for every $x \in {S}_j$, and we set $\lambda_j(f)=0$ otherwise. These variables satisfy all the constraints in the basic SDP relaxation with $\eps_j=0$ for all $j \in [m]$.
\end{proof}

\noindent \textbf{Boolean folded {idempotent} variant.}
In Sections~\ref{sec:alg} and \ref{sec:ug-hardness} of the paper, we prefer to consider an alternative formulation of the Basic SDP for the {Boolean folded idempotent PCSP.} {We let $\Gamma$ be a promise template and c}onsider an instance $\Phi=(V,\mathcal{C})$ of {$\fiPCSP(\Gamma)$} where $V=\{u_1,u_2,\ldots,u_n\}$ and $\mathcal{C}=\{C_1,C_2,\ldots,C_m\}$ with the constraint $C_j$ using the predicate pair $(P^{(j)},Q^{(j)})$ for $j\in[m]$. {Let $\mathcal L = V \cup \overline{V} \cup \B$ be the set of literals. Each clause is a tuple of literals $C_j = (x_{j,1}, x_{j,2}, \ldots, x_{j,{l_j}}) \in \mathcal L^{l_j}$ bound by the predicate pair $(P^{(j)}, Q^{(j)})$.}

{For each variable $u_i \in V$, we have an associated vector $\textbf{v}_i$ along with a truth vector $\bv_0$}
{We extend this association to all literals by assigning a map $\bv : \mathcal L \to \{\bv_1, \hdots, \bv_n\}$ where
\begin{align}
    \bv(x) = \begin{cases}
        \bv_i & x = u_i \in V\\
        -\bv_i & x = \overline{u_i} \in \overline{V}\\
        x\cdot \bv_0 & x \in \B.
        \end{cases}\label{eq:vl}
\end{align}}

{For each clause $C_j \in \mathcal C$, we let $S_j$ denote the set of variables appearing in $C_j$. Namely, $S_j = \{u_i \in V : \exists k \in [l_j], x_{j,l_j} = u_i \vee x_{j,l_j} = \overline{u_i}\}$. Given a local assignment of variables $f : S_j \to \B$, we let $f(C_j) = (f(x_{j,1}), f(x_{j,2}), \ldots, f(x_{j,{l_j}}))$, where we define $f(\overline{u_i}) = -f(u_i)$ and $f(\pm 1) = \pm 1$.}

{The relaxation is as follows.}
\begin{align*}
    \textbf{minimize: } & \sum_{j=1}^m \eps_j & \\
\textbf{subject to: } \eps_j &\geq 0 & \forall j \in [m]\\
\lambda_j(f) &\geq 0 & \forall j\in [m], ~f:{S}_j \rightarrow {\B} \\ 
\sum_{f:{S}_j \rightarrow {\B}}\lambda_j(f)&=1 & \forall j \in [m] \\ 
 \sum_{f:{S}_j \rightarrow {\B}, f(C_j)\in P^{(j)}} \lambda_j(f) &= 1-\eps_j & \forall j \in [m]\\
\norm{\textbf{v}_i }_2^2 &= 1 & \forall i \in \{0,1,\ldots,n\} \\ 
\text{ (First moments.)  } \qquad { \langle \textbf{v}(x) , \textbf{v}_0 \rangle }&= \sum_{f:{S}_j \rightarrow {\B}}\lambda_j(f) f(x) & \forall j\in [m],~x \in C_j \\
\text{(Second moments.)  } \quad { \langle \textbf{v}(x) , \textbf{v}(x') \rangle } &= \sum_{f:{S}_j \rightarrow {\B}}\lambda_j(f) f(x)f(x') & \forall j\in [m],~x,x' \in C_j
\end{align*}

We say that basic SDP is feasible on $\Phi$ if the above objective function is zero on $\Phi$. We show that the SDP is feasible if there is an assignment that strongly satisfies all the constraints of $\Phi$. 
\begin{proposition}
    Suppose that $\Phi=(V,\mathcal{C})$ is an instance of a Boolean folded {idempotent} PCSP such that there is an assignment $\sigma : V \rightarrow {\B}$ that strongly satisfies all the constraints. Then, the basic SDP {relaxation of $\Phi$} is feasible {with objective value $0$}.
\end{proposition}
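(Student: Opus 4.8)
The plan is to exhibit an explicit feasible SDP solution built directly from the satisfying assignment $\sigma$. Given $\sigma : V \to \{-1,+1\}$ strongly satisfying all constraints, I would first fix a reference unit vector $\textbf{v}_0$ in $\R^1$ (say $\textbf{v}_0 = 1$), and then set $\textbf{v}_i := \sigma(u_i)\,\textbf{v}_0$ for each $i \in [n]$. Each such $\textbf{v}_i$ is a unit vector, so the norm constraints $\norm{\textbf{v}_i}_2^2 = 1$ hold for all $i \in \{0,1,\ldots,n\}$. For the local distributions, I would make $\lambda_j$ the point mass on the single assignment $f_j : C_j \to \{-1,+1\}$ induced by $\sigma$ (i.e. $f_j(x) = \sigma'(x)$ where $\sigma'$ extends $\sigma$ to literals by $\sigma'(\overline{u_i}) = -\sigma(u_i)$). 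Then set $\eps_j := 0$ for every $j$.

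Next I would verify each SDP constraint against this candidate solution. Nonnegativity of $\eps_j$ and $\lambda_j(f)$ is immediate; $\sum_f \lambda_j(f) = 1$ holds since $\lambda_j$ is a probability point mass. Because $\sigma$ strongly satisfies $C_j$, the induced tuple $f_j(C_j) = \sigma'(C_j) \in P^{(j)}$, so $\sum_{f : f(C_j) \in P^{(j)}} \lambda_j(f) = \lambda_j(f_j) = 1 = 1 - \eps_j$, matching the constraint with $\eps_j = 0$. For the first-moment equations, note that for a literal $x \in C_j$ we have $\textbf{v}(x) \cdot \textbf{v}_0 = \sigma'(x)$ by construction (since $\textbf{v}(x) = \sigma'(x)\textbf{v}_0$ and $\textbf{v}_0$ is a unit vector), while the right-hand side $\sum_f \lambda_j(f) f(x) = f_j(x) = \sigma'(x)$; these agree. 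Similarly, for the second-moment equations, $\textbf{v}(x) \cdot \textbf{v}(x') = \sigma'(x)\sigma'(x')$ and the right-hand side $\sum_f \lambda_j(f) f(x) f(x') = f_j(x) f_j(x') = \sigma'(x)\sigma'(x')$, so these agree too. Here I should be slightly careful that $\textbf{v}(x)$ is consistently defined: if $x = u_i$ then $\textbf{v}(x) = \textbf{v}_i = \sigma(u_i)\textbf{v}_0 = \sigma'(u_i)\textbf{v}_0$, and if $x = \overline{u_i}$ then $\textbf{v}(x) = -\textbf{v}_i = -\sigma(u_i)\textbf{v}_0 = \sigma'(\overline{u_i})\textbf{v}_0$, so in all cases $\textbf{v}(x) = \sigma'(x)\textbf{v}_0$.

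Since all constraints are satisfied and $\sum_{j=1}^m \eps_j = 0$, the SDP objective is zero, which is exactly the definition of the basic SDP being feasible on $\Phi$. There is essentially no obstacle here — the proof is a routine verification. The only mild subtlety worth stating explicitly is the bookkeeping with literals versus variables and the extension $\sigma'$, to make sure the one-dimensional "vectors" (which are just $\pm 1$ scalars) satisfy the moment equations on the nose; everything else is immediate from the point-mass choice of $\lambda_j$.
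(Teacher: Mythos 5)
Your proposal is correct and is essentially the same construction as the paper's proof: one-dimensional vectors $\textbf{v}_0=1$, $\textbf{v}_i=\sigma(u_i)$, point-mass local distributions induced by $\sigma'$, and $\eps_j=0$, with the moment equations then holding by direct verification. The extra bookkeeping you do with literals and $\sigma'$ is exactly the detail the paper leaves implicit.
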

\begin{proof}
    {We embed our vectors in the real line.} We set $\textbf{v}_0=1$, and $\textbf{v}_i := \sigma(u_i) \in \R$ for every $i \in [n]$. {For any} $j \in [m]$ and $f:{S}_j \rightarrow {\B}$, we set $\lambda_j(f)=1$ if $f(x)=\sigma(x)$ for every $x \in C_j$, and we set $\lambda_j(f)=0$ otherwise. These variables satisfy all the constraints in the basic SDP relaxation with $\eps_j=0$ for all $j \in [m]$.
\end{proof}

More generally, we get that if there is an assignment that strongly satisfies $1-\epsilon$ fraction of the constraints in $\Phi$, the objective value of the above relaxation is at most $\epsilon m$, for every $\epsilon \geq 0$.
On the other hand, if the basic SDP is feasible for an instance $\Phi$ of a PCSP, it doesn't necessarily imply that $\Phi$ has an assignment weakly satisfying all the constraints. For some PCSPs however, this is indeed the case, and we say that such PCSPs are decided by the basic SDP.

\begin{definition}
{For a promise template $\Gamma$, w}e say that the basic SDP \textit{decides} {$\PCSP(\Gamma)$} if for every instance $\Phi$ such that the basic SDP is feasible on $\Phi$, there is an assignment to $\Phi$ that weakly satisfies all the constraints. 
\end{definition}

{\paragraph{Efficient solvability.} Polynomial-time SDP solving algorithms can only solve the problem ``approximately.'' In the SDP literature, this notion of approximation can vary widely. For our purposes, we are satisfied if the objective is approximate (computed to within $1/\poly(n)$ precision), but the solution returned is a ``feasible'' solution to the SDP.

\begin{theorem}
\label{thm:sdp}
    Suppose an instance $\Phi$ of $\fiPCSP(\Gamma)$ has a solution strongly satisfying $1-\epsilon$ fraction of the constraints. Then, for any input $\delta >0$, we can find a vector solution to the basic SDP whose coefficients are square roots of rational numbers and all inner products are rational numbers. The objective value of the SDP solution is at most $(\eps + \delta)m$ and it can be computed in time $\poly(n, m, 1/\delta)$.
\end{theorem}
This precise formulation does not seem to appear in the literature~\cite{grotschel1993Geometric,freund2004introduction,gartner2012approximation,odonnell2016SOS}.
As such, we show how this result can be proved using known methods. See Appendix~\ref{app:efficient-sdp} for details.
}
{\subsection{Gram and orthogonal matrices}\label{subsec:gram-ortho}
Recall that a real symmetric matrix $M \in \R^{n \times n}$ is positive semidefinite if $\bv^{T} M \bv \ge 0$ for all $\bv \in \R^n$. Since $M$ is real and symmetric, we know that $M$ has a full eigenspace: there exists $\lambda_1, \hdots, \lambda_n \in \R$ and unit vectors $\bv_1, \hdots, \bv_n \in \R^n$ such that 
\[
M = \sum_{i=1}^n \lambda_i \bv \bv_i^{T}.
\]
Since $M$ is positive semidefinite, we know that each $\lambda_i \ge 0$. That is, if we let $V \in \R^{n \times n}$ be the matrix such that the $i$th row is $\sqrt{\lambda_i} \bv_i$, then we have that $M = V^{T} V$. In other words, $M$ is also a \emph{Gram matrix}. It is also straightforward that any Gram matrix is positive semidefinite since if $M = V^{T} V$, then for any $\bv \in \R^n$, we have that $\bv^{T} M \bv = \|V \bv\|_2^2 \ge 0$. We call $V$ a \emph{realization} of the Gram matrix, and we call the columns of $V$ the vectors of the realization.

A crucial fact about realizations is that they generate the same Gram matrix if and only if they are related by an \emph{orthogonal matrix}, a matrix $Q \in \R^{n \times n}$ for which $Q^{T} Q = Q Q^{T} = I_n$. We prove a generalization of this equivalence where the number of vectors differs from the ambient dimension.

\begin{proposition}\label{prop:gram-ortho}
Let $n, m$ be positive integers and let $\bu_1, \hdots, \bu_n, \bv_1, \hdots, \bv_n \in \R^m$ be vectors. The following are equivalent.
\begin{itemize}
\item [(1)] For all $i,j \in [n]$, we have that $\langle \bu_i, \bu_j\rangle = \langle \bv_i, \bv_j \rangle$.
\item [(2)] There exists an orthogonal matrix $Q \in \R^{m \times m}$ such that for all $i \in [n]$, $\bv_i = Q \bu_i$
\end{itemize}
\end{proposition}
\begin{proof}
Proving that (2) implies (1) is straightforward for if $\bv_i = Q\bu_i$ for all $i \in [n]$, we have that
\[
    \langle \bv_i, \bv_j\rangle = \langle Q\bu_i, Q\bu_j\rangle = (Q\bu_i)^{T} Q\bu_j = \bu_i^{T} Q^T Q \bu_j = \langle \bu_i, \bu_j\rangle,
\]
where we use that $Q$ is an orthogonal matrix.

We now prove that (1) implies (2). Let $U := \operatorname{span}\{\bu_i : i \in [n]\} \subset \R^{m}$. Using the Gram-Schmidt algorithm, we can construct an orthonormal basis $\ba_1, \hdots, \ba_m$ of $\R^n$ such that $\ba_1, \hdots, \ba_{\dim U}$ is a basis of $U$. (Assume $\dim U \ge 1$ or else every vector equals $0$ for which we can pick $Q$ to be the identity matrix.)

Also pick $S \subseteq [n]$ such that $\{\bu_{s} : s \in S\}$ is a basis of $U$. There exists a unique linear map $L : U \to \R^m$ for which $L(\bu_{s}) = \bv_s$ for all $s \in S$. We claim the following properties about $L$,
\begin{itemize}
\item[(1)] $L(\bu_i) = \bv_i$ for all $i \in [n]$, so the image of $L$ is $V:= \operatorname{span}\{\bv_i : i \in [n]\}$.
\item[(2)] $L(\ba_1), \hdots, L(\ba_{\dim U})$ is an orthonormal basis of $V$.
\end{itemize}
To verify (1), for any $i \in [n]$, let $\lambda : S \to \R$ be such that $\bu_i = \sum_{s \in S} \lambda(s)\bu_s$. Then, $L(\bu_i) = \sum_{s \in S} \lambda(s)\bv_s$. Note then that
\begin{align*}
\langle L(\bu_i), \bv_i\rangle &= \sum_{s \in S} \lambda(s) \langle \bv_s, \bv_i\rangle = \sum_{s\in S} \lambda(s) \langle \bu_s, \bu_i\rangle = \langle \bu_i, \bu_i\rangle = \langle \bv_i, \bv_i\rangle, \text{ and}\\
\langle L(\bu_i), L(\bu_i)\rangle &= \sum_{s,s' \in S} \lambda(s)\lambda(s') \langle \bv_s, \bv_{s'}\rangle = \lambda(s)\lambda(s') \langle \bu_s, \bu_{s'}\rangle = \langle \bu_i, \bu_i\rangle = \langle \bv_i, \bv_i\rangle.
\end{align*}
This can only happen if $L(\bu_i) = \bv_i$. To verify (2), for each $i \in [\dim U]$, let $\kappa_i : S \to \R$ be such that $\ba_i = \sum_{s\in S} \kappa_i(s) \bu_s$. Then, note that for any $i,j \in [\dim U]$ we have that
\[
\langle \ba_i, \ba_j \rangle = \sum_{s,s'\in S}\kappa_i(s)\kappa_i(s') \langle \bu_s, \bu_{s'}\rangle = \sum_{s,s'\in S}\kappa_i(s)\kappa_i(s') \langle L(\bu_s), L(\bu_{s'})\rangle = \langle L(\ba_i), L(\ba_j), \rangle,
\]
so $L(\ba_1), \hdots, L(\ba_{\dim U})$ is an orthonormal basis of $V$.

To finish, we can extend $L(\ba_1), \hdots, L(\ba_{\dim U})$ into an orthonormal basis $\bb_1, \hdots, \bb_m$ of $\R^m$ such that $\bb_i = L(\ba_i)$ for all $i \in [\dim U]$. We can thus construct a linear map $Q : \R^m \to \R^m$ such that $Q(\ba_i) = \bb_i$ for all $i \in [m]$. The matrix representation of $Q$ is an orthogonal matrix and it sends $\bu_i$ to $\bv_i$ for all $i \in [n]$ because $Q$ extends $L$.
\end{proof}

As an immediate corollary, we recover the equivalence of realizations.
\begin{corollary}\label{cor:gram-ortho}
Let $M = U^{T} U$ and $N = V^{T} V$ be $n \times n$ real matrices. We have that $M= N$ if and only if there exists an orthogonal matrix $Q$ such that $V = QU$. 
\end{corollary}
\begin{proof}
Apply Proposition~\ref{prop:gram-ortho} with $m = n$, $\bu_1, \hdots, \bu_n$ as the columns of $U$, and $\bv_1, \hdots, \bv_n$ as the columns of $V$.
\end{proof}

}

\subsection{Elementary properties of Gaussians}
We {state} a couple of elementary properties of Gaussian distribution that we use later. 
First, we prove the following anti-concentration inequality for the standard Gaussian random variable. 
{
\begin{proposition}
\label{prop:gaussian-anticoncentration-general}
Suppose that \( X \sim \mathcal{N}(0, \sigma^2) \) is a Gaussian random variable with zero mean and variance \( \sigma^2 \). Then, for every \( c \in \mathbb{R} \) and \( \epsilon \geq 0 \),
\[
\Pr\left( X \in [c-\epsilon, c+\epsilon] \right) \leq \frac{\epsilon}{\sigma}.
\]
\end{proposition}

\begin{proof}
By the definition of the Gaussian density, we have:
\[
\Pr\left( X \in [c-\epsilon, c+\epsilon] \right) = \int_{c-\epsilon}^{c+\epsilon} \frac{1}{\sqrt{2\pi \sigma^2}} e^{-\frac{(x-\mu)^2}{2\sigma^2}} dx.
\]

By substituting $ Z = \frac{X}{\sigma}$, we get
\begin{align*}
\Pr\left( X \in [c-\epsilon, c+\epsilon] \right) &=    \Pr\left( Z \in \left[\frac{c-\epsilon}{\sigma}, \frac{c+\epsilon}{\sigma}\right] \right) \\ 
&= \int_{\frac{c-\epsilon}{\sigma}}^ {\frac{c+\epsilon}{\sigma}} \frac{1}{\sqrt{2\pi}} e^{-\frac{z^2}{2}} dz \\
&\leq \frac{2\epsilon}{\sqrt{2\pi}\sigma} \\ 
&\leq \frac{\epsilon}{\sigma}. \qedhere
\end{align*}
\end{proof}
}

\begin{proposition}
\label{prop:gaussian-anticoncentration}
Suppose that $X\sim \mathcal{N}(0,1)$ has the standard Gaussian distribution. Then, for every { $c$ and} $\epsilon \geq 0$, 
\[
{\text{Pr}\left( X \in [c-\epsilon, c+\epsilon]\right)} \leq \epsilon.
\]
\end{proposition}

\begin{proof}
We have 
\[
    {\text{Pr}\left( X \in [c-\epsilon, c+\epsilon]\right) = \int_{c-\epsilon}^{c+\epsilon} \frac{1}{\sqrt{2\pi}}e^{-\frac{x^2}{2}}dx  \leq \int_{c-\epsilon}^{c+\epsilon} \frac{1}{\sqrt{2\pi}} dx} \le \epsilon. \qedhere
\]
\end{proof}
We also need the following concentration inequality for 1-dimensional Gaussian. 
\begin{proposition}
\label{prop:gaussian-concentration} \cite{vershynin2018high}
Suppose that $X\sim \mathcal{N}(0,\sigma^2)$ has Gaussian distribution with variance $\sigma^2$. Then, for every $t \geq 0$, 
\[
\text{Pr}\left( X\geq t\right) \leq e^{-\frac{t^2}{2\sigma^2}}.
\]
\end{proposition}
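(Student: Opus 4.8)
The plan is to reduce to the standard normal and bound the tail integral directly, mirroring the style of the proof of~\Cref{prop:gaussian-anticoncentration}. First I would substitute $Y = X/\sigma$, which has distribution $\mathcal{N}(0,1)$; then the claim is equivalent to showing $\Pr(Y \geq s) \leq e^{-s^2/2}$ for $s := t/\sigma \geq 0$. Writing the tail as
\[
\Pr(Y \geq s) = \int_s^\infty \frac{1}{\sqrt{2\pi}}\, e^{-y^2/2}\, dy,
\]
I would multiply through by $e^{s^2/2}$ and substitute $y = s+u$ with $u \geq 0$, using the elementary identity $\tfrac{s^2}{2} - \tfrac{y^2}{2} = -su - \tfrac{u^2}{2}$. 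Since $s \geq 0$ and $u \geq 0$ we have $-su - \tfrac{u^2}{2} \leq -\tfrac{u^2}{2}$, hence
\[
e^{s^2/2}\,\Pr(Y \geq s) = \int_0^\infty \frac{1}{\sqrt{2\pi}}\, e^{-su - u^2/2}\, du \;\leq\; \int_0^\infty \frac{1}{\sqrt{2\pi}}\, e^{-u^2/2}\, du = \frac{1}{2} \leq 1,
\]
and rearranging gives $\Pr(Y \geq s) \leq e^{-s^2/2}$, as desired.

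An equivalent route is the standard Chernoff/Laplace-transform argument: for any $\theta > 0$, Markov's inequality applied to $e^{\theta X}$ gives $\Pr(X \geq t) \leq e^{-\theta t}\,\E[e^{\theta X}] = e^{-\theta t + \theta^2\sigma^2/2}$, where the closed form $\E[e^{\theta X}] = e^{\theta^2\sigma^2/2}$ follows by completing the square in the Gaussian density integral; optimizing by setting $\theta = t/\sigma^2$ yields exactly the exponent $-t^2/(2\sigma^2)$. Either presentation is two lines.

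There is essentially no real obstacle. The only points requiring care are: (i) the inequality $-su - u^2/2 \leq -u^2/2$ genuinely uses the hypothesis $t \geq 0$ (equivalently $s \geq 0$), so this should be stated explicitly; and (ii) if the moment-generating-function route is used instead, one should justify $\E[e^{\theta X}] = e^{\theta^2\sigma^2/2}$ rather than merely quote it. I would present the direct integral version since it is self-contained and matches the elementary style of the surrounding propositions.
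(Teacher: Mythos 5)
Your argument is correct; both routes you sketch (the shifted-integral bound and the Chernoff/MGF bound with $\theta = t/\sigma^2$) are complete and standard, and the paper itself states \Cref{prop:gaussian-concentration} without proof, so there is nothing to diverge from. Your first route in fact yields the slightly stronger bound $\Pr(X \geq t) \leq \tfrac{1}{2} e^{-t^2/(2\sigma^2)}$, which of course implies the stated inequality, and your two points of care (using $t \geq 0$, and justifying the MGF identity if that route is chosen) are exactly the right ones.
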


{
We will need the following properties of the multivariate Gaussian distribution:

\begin{proposition}
\label{prop:gaussian-multivariate}
    Let $\zeta \sim \mathcal{N}(\textbf{0}, \Sigma)$ be an $n$-dimensional multivariate Gaussian random vector.
    \begin{enumerate}
        \item For any fixed vector $\mathbf{v} \in \mathbb{R}^n$, the random variable $\langle \zeta, \mathbf{v} \rangle$ follows $\mathcal{N}(0, \mathbf{v}^T\Sigma \mathbf{v})$.
        \item For any two fixed vectors $\mathbf{v}, \mathbf{w} \in \mathbb{R}^n$, the covariance of the random variables $\langle \zeta, \mathbf{v} \rangle$ and $\langle \zeta, \mathbf{w} \rangle$ is equal to $\mathbf{v}^T \Sigma \mathbf{w}$.
    \end{enumerate}
\end{proposition}

\begin{proof}
    The random variable \( r = \langle \zeta, \mathbf{v} \rangle = \sum_{i \in [n]} v_i \zeta_i \) is a linear combination of Gaussian random variables, and therefore, follows a Gaussian distribution. We calculate its mean and variance as follows:
    \[
    \mathbb{E}[r] = \mathbb{E}\left[\sum_{i \in [n]} v_i \zeta_i\right] = \sum_{i \in [n]} v_i \mathbb{E}[\zeta_i] = 0.
    \]

    The variance of \( r \) is:
    \[
    \mathrm{Var}(r) = \mathbb{E}[r^2] = \mathbb{E}\left[\left(\sum_{i \in [n]} v_i \zeta_i\right)^2\right] = \sum_{i,j \in [n]} v_i v_j \mathbb{E}[\zeta_i \zeta_j].
    \]
    Since \( \mathbb{E}[\langle \zeta, \zeta^T] = \Sigma \), 
    \[
    \mathrm{Var}(r) = \sum_{i,j \in [n]} v_i v_j \Sigma_{ij} = \mathbf{v}^T \Sigma \mathbf{v}.
    \]

    Therefore, \( r \sim \mathcal{N}(0, \mathbf{v}^T \Sigma \mathbf{v}) \), which proves the first part.

    Let \( r' = \langle \zeta, \mathbf{w} \rangle = \sum_{i \in [n]} w_i \zeta_i \). The covariance of \( r \) and \( r' \) is:
    \[
    \mathrm{Cov}(r, r') = \mathbb{E}[rr'] = \mathbb{E}\left[\left(\sum_{i \in [n]} v_i \zeta_i\right)\cdot \left(\sum_{j \in [n]} w_j \zeta_j\right)\right].
    \]
    Expanding the product, we have:
    \[
    \mathbb{E}[rr'] = \sum_{i,j \in [n]} v_i w_j \mathbb{E}[\zeta_i \zeta_j] = \sum_{i,j \in [n]} v_i w_j \Sigma_{ij}.
    \]
    This simplifies to:
    \[
    \mathrm{Cov}(r, r') = \mathbf{v}^T \Sigma \mathbf{w}. \qedhere
    \] 
\end{proof}
}
\section{Overview of {T}echniques}
\label{sec:overview}

{Moving forward, unless otherwise specified, all promise templates $\Gamma$ are Boolean, and we consider the folded, idempotent variant of promise constraint satisfaction, i.e., $\fiPCSP(\Gamma)$.}

\subsection{Robust algorithm for MAJ polymorphisms}
\label{sec:maj-overview}
We obtain our robust algorithms by first solving the basic SDP relaxation and then \textit{rounding} the vectors. We first illustrate the SDP rounding idea with a warm-up algorithm (originally appeared in~\cite{Zwick98}) to solve the decision version of $2$-SAT CSP predicate $P=\{(-1,+1),(+1,-1),(+1,+1)\}$. Consider an instance $\Phi$ of {2-SAT (i.e., $\fiPCSP(P,P)$)} over a set of $n$ variables and $m$ constraints. We first solve the basic SDP relaxation of $\Phi$. If the basic SDP relaxation has a strictly positive error, then the instance is clearly not satisfiable. {For the moment}, suppose that the relaxation has zero error and we found a set of vectors $\textbf{v}_0, \textbf{v}_1, \ldots,\textbf{v}_n$, and the local probability variables $\lambda_j(f)$ for all $j \in [m], f: {S}_j \rightarrow {\B}$ that satisfy all the constraints in the basic SDP relaxation with $\eps_j=0$ for every $j \in [m]$. Consider an arbitrary constraint $C_j$ using the literals $x_1, x_2$. We abuse the notation and let $\textbf{v}_1, \textbf{v}_2$ denote the vectors assigned by the SDP to $x_1$ and $x_2$ respectively {(i.e., $\bv(x_1)$ and $\bv(x_2)$)}.
Using the first moment and second moment properties satisfied by the local probabilities $\lambda_j(f)$, we get the following. 
\begin{enumerate}
\item $\langle \textbf{v}_1, \textbf{v}_0 \rangle + \langle \textbf{v}_2, \textbf{v}_0 \rangle \geq 0.$
\item If $\langle \textbf{v}_1, \textbf{v}_0 \rangle + \langle \textbf{v}_2,\textbf{v}_0 \rangle=0$, $\textbf{v}_1+\textbf{v}_2 =0$.
\end{enumerate}

These properties motivate the following simple rounding algorithm that outputs the assignment $\sigma: V \rightarrow {\B}$. We sample a random vector $\zeta \sim \mathcal{N}(0,\textbf{I})$, and set
\[
\sigma(u_i)=\begin{cases}
+1, \text{ if }\langle \textbf{v}_i , \textbf{v}_0 \rangle > 0. \\ 
-1, \text{ if }\langle \textbf{v}_i , \textbf{v}_0 \rangle < 0. \\ 
+1, \text{ if }\langle \textbf{v}_i , \textbf{v}_0 \rangle = 0 \text{ and } \langle \zeta, \textbf{v}_i \rangle >0.\\ 
-1, \text{ if }\langle \textbf{v}_i , \textbf{v}_0 \rangle = 0 \text{ and } \langle \zeta, \textbf{v}_i \rangle <0.
\end{cases}
\]

This ensures that if a constraint uses the literals $x_1, x_2$, at least one of $x_1$ or $x_2$ is rounded to $+1$.

While this algorithm finds a satisfying solution when the underlying instance $\Phi$ has a solution satisfying all the constraints, it does not give any non-trivial guarantees when the instance is only promised to have a solution satisfying $1-\epsilon$ fraction of the constraints.
{By setting $\delta = \epsilon$ in \Cref{thm:sdp}, we get that the average error of the basic SDP relaxation is at most O($\epsilon$) on any instance that is promised to satisfy $1-\epsilon$ fraction of the constraints.}
Zwick~\cite{Zwick98} gave a robust algorithm {based on rounding the basic SDP relaxation} for $2$-SAT which was later improved by Charikar, Makarychev, and Makarychev~\cite{CharikarMM09}. They sample a random vector $\zeta \sim \mathcal{N}(0,\textbf{I})$, and set 
\[
\sigma(u_i) = \begin{cases}
+1, \text{ if }\langle \textbf{v}_i, \zeta \rangle \geq - \frac{\langle \textbf{v}_i, \textbf{v}_0 \rangle}{\sqrt{\epsilon}}. \\ 
-1, \text{ otherwise. }
\end{cases}
\]
One can view their algorithm as a smoothed version of the earlier discussed algorithm: { Using~\Cref{prop:gaussian-multivariate}, the random variable $r_i = \langle \textbf{v}_i, \zeta \rangle$ follows $ \mathcal{N}(0,1)$ for all $i \in [n]$. Thus, if $|\langle \textbf{v}_i, \textbf{v}_0 \rangle| > \lambda\sqrt{\epsilon}$ 
for a positive real number $\lambda$, then $\sigma(u_i)$ is equal to the sign of $\langle \textbf{v}_i, \textbf{v}_0 \rangle $ if $r_i \leq \lambda$, which occurs with probability at least $1 - e^{-\frac{\lambda^2}{2}}$(~\Cref{prop:gaussian-concentration}). }On the other hand, if $\langle \textbf{v}_i, \textbf{v}_0 \rangle =0 $, we set $\sigma(u_i)=+1$ if and only if $\langle \zeta, \textbf{v}_i \rangle \geq 0$.

We use this algorithm in our proof of~\Cref{thm:main-algorithm} but we give a completely different analysis. 
Consider a constraint ${C_j}$ in the $2$-SAT instance and let $\textbf{v}_1, \textbf{v}_2$ denote the vectors assigned by the basic SDP to the literals in a constraint.
Let ${r_1=\langle \textbf{v}_1, \zeta \rangle}$ and ${r_2=\langle \textbf{v}_2, \zeta \rangle}$. {From~\Cref{prop:gaussian-multivariate}, both the random variables $r_1$ and $r_2$} are standard Gaussian variables with covariance ${\text{Cov}(r_1,r_2)}=\langle \textbf{v}_1, \textbf{v}_2 \rangle$. To upper bound the probability that the output assignment $\sigma$ violates the constraint ${C_j}$,~\cite{CharikarMM09} calculate the probability that ${r_1} < - \frac{\langle \textbf{v}_i, \textbf{v}_0 \rangle}{\sqrt{\epsilon}}$ and ${r_2} < - \frac{\langle \textbf{v}_i, \textbf{v}_0 \rangle}{\sqrt{\epsilon}}.$ They do so by computing the probability that a pair of Gaussian random variables with known covariance lie in a given intersection of two half-spaces. While this analysis works for $2$-SAT, these calculations turn out to be significantly harder when there are more than two Gaussian random variables.

Instead, we take a conceptually different, and arguably simpler route. As a concrete example, consider the {promise template} $\Gamma=(P, Q)$ where $P=
\Ham_4\{2,3,4\}$, $Q=\Ham_4\{1,2,3,4\}$, i.e.,
\[
P=\bigl\{ \textbf{x} \in {\B}^4 : \sum_{i \in 4}x_i \geq 0\bigr\}, \quad Q = {\B}^k \setminus \{(-1,-1,-1,-1)\}.
\]
Consider a constraint ${C_j}$ of an instance $\Phi$ of ${\fiPCSP(\Gamma)}$. As the average error of SDP over all the constraints is at most $\epsilon$, for at least $1-\epsilon^{1/4}$ fraction of the constraints, the error is at most $\epsilon^{3/4}$. We restrict ourselves to these constraints and let $\textbf{v}_1,\ldots, \textbf{v}_4$ denote the vectors assigned by the SDP to the literals in the constraint. We have that $\sum_{i \in [4]}\langle \textbf{v}_i, \textbf{v}_0 \rangle\geq -\epsilon^{3/4}$. {Recall that if for some $i \in [4]$, $\langle \textbf{v}_i, \textbf{v}_0 \rangle  \geq C\sqrt{\eps}$ for a positive real number $C$, the corresponding literal is rounded to $+1$ with probability at least $1 - e^{-\frac{C^2}{2}}$. On the other hand, if for some $i \in [4]$, $\langle \textbf{v}_i, \textbf{v}_0 \rangle  \leq -4C\sqrt{\eps}$, there is some $i'\in [4]$ with $\langle \textbf{v}_{i'}, \textbf{v}_0 \rangle  \geq C\sqrt{\eps}$, which again ensures that there is at least one literal that is rounded to $+1$ with probability at least $1 - e^{-\frac{C^2}{2}}$. Hence, if there is at least one $i\in[4]$ with $|\langle \textbf{v}_{i}, \textbf{v}_0 \rangle|  \geq 4C\sqrt{\epsilon}$, at least one literal is rounded to $+1$ with probability at least $1 - e^{-\frac{C^2}{2}}$. }

Thus the interesting case is when $|\langle \textbf{v}_i,\textbf{v}_0 \rangle | \leq O(\sqrt{\epsilon})$ for every $i\in [4]$. In this case, using the first and second moment properties satisfied by these vectors, we get that $\norm{\sum_{i \in [4]} \textbf{v}_i}_2$ is at most $O(\epsilon^{1/4})$. 
The output assignment $\sigma$ violates $Q$ on this constraint only if ${\langle  \textbf{v}_i, \zeta \rangle} \leq -\frac{\langle \textbf{v}_i, \textbf{v}_0 \rangle}{\sqrt{\epsilon}}$ for every $i \in [4]$, or equivalently, ${\langle  \textbf{v}_i, \zeta \rangle} +\frac{\langle \textbf{v}_i, \textbf{v}_0 \rangle}{\sqrt{\epsilon}} \leq 0$ for every $i \in [4]$.
However, using $\norm{\sum_{i \in [4]} \textbf{v}_i}_2 \leq O(\epsilon^{1/4})$ and $\sum_{i \in [4]}\langle \textbf{v}_i, \textbf{v}_0 \rangle\geq -\epsilon^{3/4}$, we get that
\[
\sum_{i \in [4]} \left({\langle  \textbf{v}_i, \zeta \rangle} +\frac{\langle \textbf{v}_i, \textbf{v}_0 \rangle}{\sqrt{\epsilon}}\right) \geq -O(\epsilon^{1/4}).
\]
If $\sigma$ does not satisfy $Q$, then for some $i \in [4]$, we have that 
\[
{\langle  \textbf{v}_i, \zeta \rangle} +\frac{\langle \textbf{v}_i, \textbf{v}_0 \rangle}{\sqrt{\epsilon}} \in \left[ -O(\epsilon^{1/4}),0\right]
\]
Finally, we can upper bound the probability that this occurs to be at most $\tilde{O}\left(\epsilon^{1/4}\right)$ using anti-concentration of the Gaussian ${\langle  \textbf{v}_i, \zeta \rangle} \sim \mathcal{N}(0,1)$ {(\Cref{prop:gaussian-anticoncentration})}. Thus, we can infer that the assignment $\sigma$ satisfies at least $1-\tilde{O}(\epsilon^{1/4})$ fraction of the constraints in expectation. A careful analysis of the parameters gives a guarantee of $1-\tilde{O}(\epsilon^{1/3})$.

For an arbitrary {Boolean promise template} $\Gamma$ with $\MAJ\subseteq \Pol(\Gamma)$, we obtain a robust algorithm {for $\fiPCSP(\Gamma)$} by first reducing to the case when all the predicate pairs are of the form $(P,Q)$ with $Q={\B}^k \setminus \{(-1,-1,\ldots,-1)\}$, generalizing the above two examples of {the} $2$-SAT {template} {(i.e., $P=Q=\{(-1,+1),(1,-1),(1,1)\}$)} and $(\Ham_{4}\{2,3,4\}$, $\Ham_4\{1,2,3,4\})$. Then, we find a weight vector $\textbf{w}$ which satisfies that ${ \langle \textbf{w}, \textbf{x}\rangle }\geq 0$ for all $\textbf{x}\in P$ ($\textbf{w}=(1,1,\ldots,1)$ suffices for the previous two examples). We prove the existence of such a vector $\textbf{w}$ by using a Linear Programming relaxation, and we crucially use the fact that ${\MAJ \subseteq {\Pol}(P,Q)}$ in the analysis of this LP relaxation.
Once we find the vector $\textbf{w}$, the above analysis of $(\Ham_{4}\{2,3,4\}, \Ham_4\{1,2,3,4\})$ can be generalized, the main change being that we study the properties of the weighted sum of the $\textbf{v}_i$s with weights being given by the vector $\textbf{w}$.  

\subsection{Robust algorithm for AT polymorphisms}
\label{sec:at-overview}

For the Alternating-Threshold (AT) case, we combine these ideas with a random geometric sampling trick. 
As a concrete example, consider the {templates for $1$-in-$3$-SAT and NAE-SAT: $P = \{\bx \in \B^3 : x_1+x_2+x_3 = -1\}$ and $Q = \{\bx \in \B^3 : x_1+x_2+x_3 \in \{-1,1\}\}$, respectively. We have that $\AT \subseteq \Pol(P,Q)$~\cite{BrakensiekG21}.}
{Assume that an instance $(V,\mathcal C)$ of $\fiPCSP(P,Q)$ is perfectly satisfiable.} For {this} case, we can {efficiently find an exact solution} using the basic SDP relaxation via random hyperplane rounding as follows. Consider an arbitrary constraint ${C_j}$ and let $\textbf{v}_1, \textbf{v}_2, \textbf{v}_3$ denote the vectors assigned by the basic SDP to the literals in ${C_j}$. Using the fact that these vectors satisfy the first and second moment constraints of the basic SDP relaxation with zero error, we can infer that their sum $\textbf{v}_s = \textbf{v}_1 + \textbf{v}_2 + \textbf{v}_3$ is equal to $-\textbf{v}_0$ for every constraint $C$.  Let $\textbf{v}_i \perp \textbf{v}_0 = \textbf{v}_i - \langle \textbf{v}_i, \textbf{v}_0 \rangle \textbf{v}_0$ for $i \in [3]$. Note that $\sum_{i \in [3]} \textbf{v}_i \perp \textbf{v}_0 =0$. Using this observation, we can design a rounding scheme.
We first sample $\zeta \sim \mathcal{N}(0,\textbf{I})$, and set $\sigma(u_i)=+1$ if $\langle \textbf{v}_i \perp \textbf{v}_0 , \zeta \rangle > 0$, and $-1$ if $\langle \textbf{v}_i \perp \textbf{v}_0 , \zeta \rangle < 0$. {Each variable $u_i$ is assigned a value with probability $1$ unless $\bv_i \in \{\bv_0, -\bv_0\}$ since $\bv_i$ is a unit vector. In that undetermined case, we set $\sigma(u_i) = +1$ if $\bv_i = \bv_0$ and $\sigma(u_i) = -1$ otherwise.\footnote{{Observe this choice of assignment is consistent with the default rounding of the constant literals $\pm 1$.}}} As $\textbf{v}_1+\textbf{v}_2+\textbf{v}_3=-\textbf{v}_0$, the rounding scheme  ensures that at least one literal associated with these vectors is set to $+1$, and at least one literal is set to $-1$.

For the robust setting where we are only guaranteed that there is {an assignment strongly} satisfying $1-\epsilon$ fraction of the constraints. {By setting $\delta = \epsilon$ in \Cref{thm:sdp},} we get that the average SDP error is at most {O($\epsilon$)}. By Markov's inequality, we are guaranteed that for at least $1-\sqrt{\epsilon}$ fraction of the constraints, the SDP error is at most {O($\sqrt{\epsilon}$)}. For these constraints, we get that the sum vector $\textbf{v}_s$'s component orthogonal to $\textbf{v}_0$ has $\ell_2$ norm at most $O(\eps^{1/4})$, i.e., $\norm{\sum_{i \in 3}\textbf{v}_i \perp \textbf{v}_0}_2 \leq O(\epsilon^{1/4})$.
Using this, we design a rounding scheme that is similar to the above, with the addition that when $\norm{\textbf{v}_i \perp \textbf{v}_0}_2$ is very small, we want to round it to $+1$ or $-1$ depending on its component along $\textbf{v}_0$, similar to how we were rounding $\textbf{v}_0$ to $+1$ and $-\textbf{v}_0$ to $-1$ in the exact algorithm earlier. We have the following compact algorithm based on this idea.
 \[
        \sigma(u_i) = 
        \begin{cases}
        +1, \text{ if  }\langle \textbf{v}_i \perp \textbf{v}_0, \zeta \rangle \ge -\delta \langle \textbf{v}_i, \textbf{v}_0 \rangle. \\ 
        -1, \text{ otherwise.}
        \end{cases}
        \]

{
We choose \( \delta \) to be \( \epsilon^\kappa \), where \( \kappa \) is a constant such that \( \kappa < \frac{1}{4} \), to be set later. Note that when \( |\langle \textbf{v}_i \perp \textbf{v}_0, \zeta \rangle| > \delta \), our new rounding scheme is equivalent to the earlier exact algorithm, i.e., \( \sigma(u_i) = +1 \) if \( \langle \textbf{v}_i \perp \textbf{v}_0, \zeta \rangle > 0 \), and \( -1 \) otherwise. 

If \( |\langle \textbf{v}_i \perp \textbf{v}_0, \zeta \rangle| > 3\delta \) for some \( i \in [3] \), the condition \( \norm{\sum_{i \in [3]} \textbf{v}_i \perp \textbf{v}_0}_2 \leq O(\epsilon^{1/4}) < \delta \) implies that there exist distinct \( i, i' \in [3] \) such that \( \langle \textbf{v}_i \perp \textbf{v}_0, \zeta \rangle > \delta \) and \( \langle \textbf{v}_{i'} \perp \textbf{v}_0, \zeta \rangle < -\delta \). In this case, \( \sigma(u_i) = +1 \) and \( \sigma(u_{i'}) = -1 \), ensuring that the output assignment satisfies the NAE-\( 3 \)-SAT constraint. On the other hand, when \( |\langle \textbf{v}_i \perp \textbf{v}_0, \zeta \rangle| \) is much smaller than \( \delta \) for every \( i \in [3] \), our rounding function sets \( \sigma(u_i) = +1 \) if \( \langle \textbf{v}_i, \textbf{v}_0 \rangle > 0 \) and \( -1 \) otherwise. Since \( \sum_{i \in [3]} \textbf{v}_i \) is close to \( -\textbf{v}_0 \), the output satisfies the NAE-\( 3 \)-SAT constraint even in this case.

Finally, we employ a geometric sampling trick where \( \delta \) is sampled uniformly at random from a geometric series (by choosing \( \kappa \) from an arithmetic series) to ensure that, with high probability, either \( |\langle \textbf{v}_i \perp \textbf{v}_0, \zeta \rangle| > 3\delta \) for some \( i \in [3] \), or \( |\langle \textbf{v}_i \perp \textbf{v}_0, \zeta \rangle| \) is much smaller than \( \delta \) for every \( i \in [3] \).

}

For an arbitrary Boolean {promise template} $\Gamma$ with $\AT \subseteq \Pol(\Gamma)$, we first show that {$\fiPCSP(\Gamma)$ reduces to $\fiPCSP(P,Q)$ where} 
\[
P=\{ \textbf{v} \in {\B}^k : \langle \textbf{v}, \textbf{w}\rangle = b \}, \quad Q = {\B}^k \setminus\{\textbf{x},-\textbf{x}\}
\]
for some $\textbf{w} \in \R^k, b \in \R$ with $w_i \neq 0\,~\forall i \in [k]$, where we set $\textbf{x}$ as $x_i = +1$ if $w_i>0$, and $-1$ otherwise.
This generalizes {our previous $1$-in-$3$-SAT versus NAE-$3$-SAT template} which corresponds to the case when $\textbf{w}=(1,1,1),b=-1$. We show that the algorithm above works for this general predicate pair as well, thereby obtaining {for every promise template $\Gamma$ with $\AT \subseteq \Pol(\Gamma)$} a robust algorithm for {$\fiPCSP(\Gamma)$.}

\subsection{UG-hardness of robust algorithms}
\label{sec:sdpgap-overview}

As mentioned earlier, our Unique Games hardness for {$\fiPCSP(\Gamma)$} (\Cref{thm:main-hardness}) is based on an integrality gap for the basic SDP relaxation, i.e., we need to show that there is a finite instance $\Phi$ of ${\fiPCSP(\Gamma)}$ that has SDP error of zero, yet there is no assignment that weakly satisfies all the constraints.  
In pursuit of this goal, we develop a general recipe for showing integrality gaps with respect to basic SDP for Promise CSPs via colorings of the $n$-dimensional unit sphere $\S^n$. 

{We {focus on Boolean promise templates which are symmetric.} We say that a predicate $P \subseteq {\B}^k$ is symmetric if for every $\textbf{x}, \textbf{y}\in {\B}^k$ such that $\textsf{hw}(\textbf{x})=\textsf{hw}(\textbf{y})$, we have $\textbf{x} \in P$ if and only if $\textbf{y} \in P$}.

We first start by showing an integrality gap instance for $3$-LIN. Recall that the $3$-LIN CSP has the predicate $P$ with 
\begin{align}
P=\{ \textbf{x} \in {\B}^3 : x_1 + x_2 + x_3 = -1 \text{ or }x_1 + x_2 + x_3=+3\}.\label{eq:3-lin}
\end{align}
Consider the instance $\Phi$ {of $\fiPCSP(P,P)$} that uses three variables and uses two constraints $C_1, C_2$ with $C_1=\{(x_1,x_2,x_3)\},C_2=\{ (\overline{x_1}, \overline{x_2},\overline{x_3})\}$. The instance has no assignment that satisfies both the constraints, and we now show that the basic SDP solution has zero error on $\Phi$. We first describe the local probability variables: {for each $j \in [2]$ and each assignment $f : S_j \to \B$, where $S_j = \{u_1,u_2,u_3\}$,} we set 
\[
\lambda_j(f)={\begin{cases}
1/4 & f(C_j) \in P\\
0 & \text{otherwise.}
\end{cases}}
\]
That is, we set each local distribution to be the uniform distribution over {satisfying assignments}. Substituting these in the first and second moment constraints, we get the following requirements that the vectors $\textbf{v}_1, \textbf{v}_2, \textbf{v}_3$ need to satisfy. 
\begin{align}
\label{eq:3lin-1}     { \langle \textbf{v}_i , \textbf{v}_0 \rangle} &= 0 \quad \forall i \in [3] \\ 
    \label{eq:3lin-2} { \langle \textbf{v}_i , \textbf{v}_{i'} \rangle } &= 0\quad \forall i \neq i' \in [3]
\end{align}
We can find such three vectors by picking three orthogonal vectors that are all orthogonal to $\textbf{v}_0$. This shows that there is a solution to the basic SDP relaxation of $\Phi$ that has zero error, thus finishing the proof of the existence of an integrality gap for $\Phi$.

While the simple example gives an integrality gap for {3-LIN}, it is a challenging task to find such explicit integrality gap instances for general predicates. 
We develop a non-explicit approach where {for a given promise template $\Gamma$} we first construct an infinite integrality gap instance $\mathcal{I}^n(\Gamma)$ {for $\fiPCSP(\Gamma)$ and} then use it to show the existence of a finite integrality gap instance.
The variable set $V$ of $\mathcal{I}^n(\Gamma)$ to be the set of unit vectors in $\R^{n+1}$: $V=\{ u_{\textbf{v}} : \textbf{v} \in \S^{n} \}$. {We fix an arbitrary vector to be assigned $\textbf{v}_0$. We let $\mathcal L = V \cup \overline{V} \cup \B$ be our set of literals, and let $\hat{\bv} : \mathcal L \to 
\S^n$ be the map}
\[
    {\hat{\bv}(x) = \begin{cases}\bv & x = u_{\bv}\\
    -\bv & x = \overline{u_{\bv}}\\
    x \cdot \bv_0 & x \in \B.\end{cases}}
\]

{We now describe the clauses $\mathcal C$ of our integrality gap instance. Consider $(P,Q) \in \Gamma$ of arity $k$. For a potential clause $C = (x_1, \hdots, x_k) \in \mathcal L^k$, let $S$ be the set of variables making up $C$. We add $C$ to $\mathcal C$ if there exists a probability distribution $\lambda$ over assignments $f : S(C) \to \B$ satisfying the following constraints.}
{\begin{align*}
 \sum_{f:{S} \rightarrow {\B}, f(C)\in P} \lambda_j(f) &= 1\\
\text{ (First moments.)  } \qquad { \langle \hat{\textbf{v}}(x) , \textbf{v}_0 \rangle }&= \sum_{f:{S} \rightarrow {\B}}\lambda(f) f(x) & ~x \in C\\
\text{(Second moments.)  } \quad { \langle \hat{\textbf{v}}(x) , \hat{\textbf{v}}(x') \rangle } &= \sum_{f:{S} \rightarrow {\B}}\lambda(f) f(x)f(x') & x,x' \in C.
\end{align*}} 
We refer to {$\hat{\bv}(C) := (\hat{\bv}(x_1), \hdots, \hat{\bv}(x_k))$ satisfying the above as a \emph{$P$-configuration}} with respect to $\textbf{v}_0$. 
The way we have added the constraints ensures that setting $\textbf{a}$ to the variable $u_{\textbf{a}}$ satisfies all the basic SDP constraints with zero error. We then show that the instance $\mathcal{I}^n(\Gamma)$ does not have any assignment $\sigma$ that weakly satisfies all the constraints to obtain the integrality gap. Towards this, we study the sphere colorings $f_n : \S^{n} \rightarrow \B$ that weakly satisfy all the constraints of $\mathcal{I}^n(\Gamma)$. 

{We now return to examining $3$-LIN} in terms of $\mathcal{I}^n(P)$ {for $P$ as in (\ref{eq:3-lin})}. As mentioned earlier, a set of three vectors $\textbf{v}_1, \textbf{v}_2, \textbf{v}_3$ are a $P$-configuration with respect to $\textbf{v}_0$ if they satisfy~\Cref{eq:3lin-1} and~\Cref{eq:3lin-2}. 
Thus, to show that the instance $\mathcal{I}^n(P)$ does not have any assignment satisfying all the constraints, it suffices to show that for some positive integer $n$, there is no function $f:\mathbb{S}^n \rightarrow {\B}$ that satisfies the following condition:
For all vectors $\textbf{v}_1, \textbf{v}_2, \textbf{v}_3 \in \mathbb{S}^n$ are mutually orthogonal and are orthogonal to $\textbf{v}_0$, we have 
\[
f(\textbf{v}_1)+f(\textbf{v}_2)+f(\textbf{v}_3) \in \{-1,+3\} \ . 
\]
{Since we allow negations of variables in our instance}, we also require such a function $f$ to be folded, i.e., $f(-\textbf{v})=-f(\textbf{v})$ for every $\textbf{v}\in \mathbb{S}^{{n}}$. Such a coloring $f$ trivially does not exist: consider a set of three mutually orthogonal vectors that are all orthogonal to $\textbf{v}_0$, $V=(\textbf{v}_1, \textbf{v}_2, \textbf{v}_3)$ and their negations, $V'=(-\textbf{v}_1, -\textbf{v}_2, -\textbf{v}_3)$. Such a set of vectors is guaranteed to exist if $n \geq 4$.
Note that both these are valid $P$-configurations, but at least one of $f(\textbf{v}_1)+f(\textbf{v}_2)+ f(\textbf{v}_3)$, $f(-\textbf{v}_1)+ f(-\textbf{v}_2)+f(-\textbf{v}_3)$ does not belong to $\{-1,+3\}$, thus completing the proof that there is no assignment satisfying all the constraints of $\mathcal{I}^n(P)$. Hence, $\mathcal{I}^n(P)$ is an integrality gap instance for 3-LIN, i.e., ${\fiPCSP(P,P)}$, and this implies the existence of a finite integrality gap instance as well. While there is a direct integrality gap instance for the $3$-LIN, for an arbitrary {promise template} $\Gamma$, to show the existence of an integrality gap for the basic SDP relaxation, it is more convenient and practical to show the absence of any assignment $f:\S^{n}\rightarrow {\B}$ satisfying all the constraints of $\mathcal{I}^n(\Gamma)$ for some $n$ which then implies the existence of a finite integrality gap instance by a compactness argument.

While the $P$-configurations in the above proof for $3$-LIN are easy to study, in general, proving the absence of sphere coloring is challenging. For example, consider the {promise template} $\Gamma=(P,Q)$ where $P=\Ham_5 \{2,5\}$, $Q=\Ham_5 \{1,2,3,4,5\}$:
\[
P = \bigl\{ \textbf{x} \in {\B}^5 : |\{ i \in [5] : x_i = +1\} | \in \{2,5\} \bigr\}, \quad Q =\bigl\{ \textbf{x} \in  {\B}^5 : \sum_{i=1}^5 x_i {\neq}-5\bigr\}. 
\]
Here, a set of $P$-configurations with respect to a vector $\textbf{v}_0$ are five unit vectors $\{\textbf{v}_1, \textbf{v}_2, \ldots, \textbf{v}_5\}$
such that every two distinct vectors have an inner product equal to ${-\frac{1}{5}}$, and ${ \langle \textbf{v}_i , \textbf{v}_0 \rangle} = 0$ for every $i \in [5]$. The sphere coloring problem is then to show that there exists $n$ such that for any folded $f:\mathbb{S}^n \rightarrow {\B}$, there exists a set of five vectors in $\mathbb{S}^n$ with every pair of them having inner product equal to ${-\frac{1}{5}}$ that are all colored $-1$. 

Such problems where the goal is to find a monochromatic structure in sphere colorings are studied in a topic called \emph{sphere Ramsey theory.} In a striking result using tools from combinatorics, linear algebra, and Banach space theory, Matoušek and Rödl~\cite{matouvsek1995ramsey} proved that every set of affinely independent vectors $V$ whose circumradius is smaller than $1$ is sphere Ramsey---i.e., for every $r$, there exists $n$ large enough such that every $r$-coloring of $\mathbb{S}^n$ must have a monochromatic set $U$ that is congruent to $V$. This directly answers the above question regarding sphere coloring of {$\mathcal{I}^n(P,Q)$} where $P=\Ham_5 \{2,5\}$, $Q=\Ham_5 \{1,2,3,4,5\}$.

For an arbitrary Boolean symmetric {promise template} $\Gamma=(P,Q)$, to prove~\Cref{thm:main-hardness}, we first reduce the problem into a fixed number of templates using the properties of $\AT$ and $\MAJ$ polymorphisms, following~\cite{BrakensiekG21}. For some of these templates, the result of Matoušek and Rödl~\cite{matouvsek1995ramsey} directly answers the sphere coloring problem associated with them. For others, we need extra work built on the sphere Ramsey result. We highlight one such {promise} template $\Gamma =  (P,Q)$ where $P=\Ham_k \{1,k\}$, $Q=\Ham_k \{0,1,\ldots,k\}\setminus\{b\}$ for positive integers $k, b :0\leq b \leq k$. In the sphere coloring problem associated with this template, we need to show that for any real $\alpha \in [0,1]$, in any folded $f:\S^n \rightarrow {\B}$, there are $k$ vectors all of whose pairwise inner products are equal to $\alpha$, and exactly $b$ of them are assigned $+1$ according to $f$. We refer to such a set of vectors $S=\{\textbf{v}_1,\ldots,\textbf{v}_k\}\subseteq \S^n$ as $\alpha$-configuration if ${ \langle \textbf{v}_i , \textbf{v}_j \rangle }=\alpha$ for all $i \neq j \in [k]$.

The sphere Ramsey result shows that there is an $\alpha$-configuration with $b$ vectors being colored $+1$ only when $b=0$ or $b=k$. To extend to general $b$, we prove the following connectivity lemma: Between any two arbitrary $\alpha$-configurations $S, T$, there is a path $U_1, U_2,\ldots, U_L$ of $\alpha$-configurations with length $L:=L(\alpha,k,n)$ where $U_1=S$, $U_L=T$, and any two consecutive configurations $U_i$, $U_{i+1}$ differ in at most one element. The sphere Ramsey result shows that there is an $\alpha$-configuration with all the vectors being colored $+1$, and by negating these vectors, we get an $\alpha$-configuration with all the vectors being colored $-1$. Finally, using the connectivity lemma between these two configurations, we get that for every $b \in \{0,1,\ldots,k\}$, there is an $\alpha$-configuration where in exactly $b$ vectors are colored $+1$, thereby finishing the proof of the integrality gap. We also remark that while our integrality proofs are non-constructive in general, we get an explicit integrality gap instance for ${\fiPCSP(\Gamma)}$ inspired by the connectivity lemma.

\section{Robust Algorithms}
\label{sec:alg}

\subsection{CMM is a robust algorithm when {MAJ} is a polymorphism}

We restate~\Cref{thm:main-algorithm} for the case of $\MAJ$ polymorphisms. 
\begin{theorem}
\label{thm:maj-alg}
Let $\Gamma=\{(P_1,Q_1), (P_2, Q_2), \ldots, (P_l, Q_l)\}$ be a Boolean {promise template} with $\MAJ \subseteq \Pol(\Gamma)$. For every $\epsilon>0$, there is a randomized polynomial time algorithm that given an instance $\Phi$ of ${\fiPCSP}(\Gamma)$ that is promised to have an assignment satisfying $1-\epsilon$ fraction of the constraints, finds an assignment to $\Phi$ that satisfies $1-\tilde{O}_{\Gamma}(\epsilon^{\frac{1}{3}})$ fraction\footnote{We use $O_\Gamma$ to denote a hidden constant which depends on the specific template $\Gamma$.} of the constraints in expectation.
\end{theorem}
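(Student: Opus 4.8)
The plan is to follow the two-phase strategy sketched in \Cref{sec:maj-overview}: first a structural/ppp-definability reduction using the $\MAJ$ polymorphisms, and then an SDP-rounding argument via the CMM rounding function. For the first phase, I would show that if $\MAJ \subseteq \Pol(\Gamma)$, then each predicate pair $(P_i,Q_i)$ can be ``sandwiched'' so that robustly solving $\Gamma$ reduces (via \Cref{prop:ppp-defn}) to robustly solving a collection of predicate pairs of the form $(P,Q)$ with $Q = \{-1,+1\}^k \setminus \{(-1,\ldots,-1)\}$, i.e.\ a $k$-SAT-type weak predicate (possibly after folding/negating coordinates). The key point here is that $\MAJ_L \in \Pol(P_i,Q_i)$ for all odd $L$ forces $Q_i$ to contain the ``majority closure'' of $P_i$; combined with the fact that a $\MAJ$ polymorphism is a threshold function, this lets me produce, for each such pair, a nonnegative weight vector $\bw$ with $\bw \cdot \bx \ge 0$ for all $\bx \in P$ and with the all-$(-1)$ point excluded from $Q$. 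I would establish existence of $\bw$ via LP duality: if no such $\bw$ existed, a Farkas-type certificate would give a convex combination of points of $P$ and of the negation of the bad point, which I can then massage (using that $\MAJ_L$ of points in $P$ stays in $Q$ for large $L$) into a contradiction. This is essentially Lemma-level bookkeeping modeled on \cite{BrakensiekG21}.

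For the second phase, assume WLOG each weak predicate is $k$-SAT-type with an associated nonnegative weight vector $\bw$. Solve the Boolean folded basic SDP to get vectors $\bv_0,\bv_1,\dots,\bv_n$ with total error $\le \epsilon m$ (up to an arbitrarily small additive constant, which I absorb). Run the CMM rounding: sample $\zeta \sim \mathcal N(0,\bI)$ and set $\sigma(u_i) = +1$ iff $\la \bv_i,\zeta\ra \ge -\la \bv_i,\bv_0\ra/\sqrt{\epsilon}$ (with $\bv(x)=\pm\bv_i$ for literals). The analysis proceeds constraint-by-constraint. By Markov, all but an $\epsilon^{1/3}$ fraction of constraints have SDP error $\le \epsilon^{2/3}$; restrict to these. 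Fix such a constraint with literal-vectors $\bv_1,\dots,\bv_k$ and weights $w_1,\dots,w_k \ge 0$. The first-moment constraint gives $\sum_i w_i \la \bv_i,\bv_0\ra \ge -O(\epsilon^{2/3})$ (using boundedness of $w$), and the second-moment constraints give $\bigl\|\sum_i w_i \bv_i\bigr\|_2^2 = \sum_{i,i'} w_i w_{i'} \la \bv_i,\bv_{i'}\ra$, which I bound by $O(\epsilon^{2/3})$ on the ``interesting'' sub-case where $|\la \bv_i,\bv_0\ra| \le O(\sqrt{\epsilon})$ for all $i$ (the complementary case where some $|\la\bv_i,\bv_0\ra|$ is large forces some literal to round to $+1$ deterministically, satisfying the constraint). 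So $\|\sum_i w_i \bv_i\|_2 \le O(\epsilon^{1/3})$. Now the constraint is violated only if $\la \bv_i,\zeta\ra + \la\bv_i,\bv_0\ra/\sqrt{\epsilon} < 0$ for all $i$; taking the $w$-weighted sum, $\sum_i w_i(\la\bv_i,\zeta\ra + \la\bv_i,\bv_0\ra/\sqrt{\epsilon}) = \la \sum_i w_i \bv_i, \zeta\ra + \tfrac{1}{\sqrt\epsilon}\sum_i w_i\la\bv_i,\bv_0\ra \ge -O(\epsilon^{1/3})\cdot|\zeta\text{-projection}| - O(\epsilon^{1/6})$, so with high probability over $\zeta$ (the projection $\la\sum w_i\bv_i,\zeta\ra$ is a Gaussian of stddev $O(\epsilon^{1/3})$, hence $O(\epsilon^{1/3}\log^{1/2}(1/\epsilon))$ in magnitude except with tiny probability by \Cref{prop:gaussian-concentration}) this sum is $\ge -\tilde O(\epsilon^{1/6})$, which forces some term $\la\bv_i,\zeta\ra + \la\bv_i,\bv_0\ra/\sqrt\epsilon \in [-\tilde O(\epsilon^{1/6}),0]$; since $\la\bv_i,\zeta\ra \sim \mathcal N(0,1)$, \Cref{prop:gaussian-anticoncentration} bounds this by $\tilde O(\epsilon^{1/6})$ per term, hence $\tilde O(\epsilon^{1/6})$ total. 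Balancing the two error sources ($\epsilon^{1/3}$ from the Markov step, $\tilde O(\epsilon^{1/6})$ from each good constraint — or rather, reoptimizing the Markov threshold and the cutoff $\sqrt\epsilon$ in the rounding) yields an expected violation probability of $\tilde O(\epsilon^{1/3})$ overall. Linearity of expectation then gives an assignment satisfying $1 - \tilde O_\Gamma(\epsilon^{1/3})$ of the constraints.

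The main obstacle I anticipate is the first phase — justifying that $\MAJ \subseteq \Pol(\Gamma)$ cleanly reduces to the weighted-$k$-SAT templates with a single nonnegative weight vector, and in particular ensuring the ppp-definability and the LP/Farkas argument go through uniformly. The SDP rounding in phase two is conceptually the ``simple route'' the authors advertise, and the only delicate part there is getting the exponents to come out to $\epsilon^{1/3}$ rather than $\epsilon^{1/4}$, which requires carefully choosing the Markov cutoff and the $\sqrt\epsilon$ scale in the rounding threshold (and possibly a slightly cleverer split into sub-cases for literals with intermediate $|\la\bv_i,\bv_0\ra|$). A secondary technical point is handling folding and negated literals consistently throughout, and making sure the constants hidden in $\tilde O_\Gamma$ depend only on $\max_i k_i$ and the weight vectors $\bw$, which are determined by $\Gamma$.
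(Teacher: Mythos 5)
Your overall route is the same as the paper's: a reduction (after discarding pairs with trivial weak predicate) to templates whose weak predicate is $\{-1,+1\}^k\setminus\{(-1,\dots,-1)\}$ using foldedness of the $\MAJ$ polymorphisms, a nonnegative separating weight vector $\bw$ with $\bw\cdot\bx\ge 0$ for all $\bx\in P$ obtained by LP duality from $(-1,\dots,-1)\notin O_{\MAJ}(P)$ (this is exactly \Cref{lem:maj-transform} and \Cref{lem:separating-hyperplane}), and then CMM rounding analyzed through the $\bw$-weighted first and second moments with Gaussian concentration and anti-concentration, which is the argument of \Cref{thm:maj}.

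The genuine gap is quantitative, and it sits exactly where you flagged it but did not resolve it: with the original CMM scale (threshold $-\la\bv_i,\bv_0\ra/\sqrt{\eps}$) and a Markov cutoff at error $\eps^{2/3}$, your own accounting does not give $\tilde O(\eps^{1/3})$. In the sub-case $|\la\bv_i,\bv_0\ra|\le O(\sqrt\eps)$ for all $i$, the second-moment bound $\bw^T\Sigma\bw\le \bw^T\mu+2c$ only yields $\norm{\sum_i w_i\bv_i}_2\le O(\eps^{1/4})$, not $O(\eps^{1/3})$ as you claim; and the term $\frac{1}{\sqrt\eps}\sum_i w_i\la\bv_i,\bv_0\ra\ge -O(\eps^{1/6})$ forces an anti-concentration window of width $\tilde O(\eps^{1/6})$, so your per-constraint violation probability is $\tilde O(\eps^{1/6})$, which dominates and falls short of the stated $1-\tilde O_\Gamma(\eps^{1/3})$. (Also, a literal with large positive $\mu_i$ is rounded to $+1$ only with probability $1-\sqrt\eps$, not deterministically; the case split must be at $\bw^T\mu\ge\gamma\sqrt{\log(1/\eps)}$, not at $\sqrt\eps$.) The paper closes this by changing the rounding scale to $\gamma=\eps^{2/3}$ and by dispensing with the Markov step: \Cref{lem:maj} bounds the violation probability of a constraint with SDP error $c$ by $O\bigl(\sqrt{\eps}+\sqrt{(\gamma\sqrt{\log(1/\eps)}+2c)\log(1/\eps)}+c/\gamma\bigr)$, which is concave in $c$, and Jensen's inequality over the constraints (average error $\eps$) then gives $\tilde O(\eps^{1/3})$; the three terms $\sqrt{\gamma}\,\mathrm{polylog}$, $\sqrt{\eps}\,\mathrm{polylog}$ and $\eps/\gamma$ all balance at $\eps^{1/3}$ precisely because $\gamma=\eps^{2/3}$. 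So to complete your proof you must actually carry out this reparametrization (and the concavity/Jensen step or an equivalently careful Markov threshold), rather than leave it as a remark; with $\gamma=\sqrt\eps$ no choice of Markov cutoff alone recovers the $\eps^{1/3}$ exponent.
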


In the rest of this subsection, we prove~\Cref{thm:maj-alg}. Our strategy is to reduce the problem into a special case {in which} every predicate pair {of} $\Gamma$ is of the form $(P,{\B}^k \setminus \{(-1,-1,\ldots, -1)\})$, and then {solve the robust version of $\fiPCSP(\Gamma)$ using} the algorithm of Charikar, Makarychev, and Makarychev~\cite{CharikarMM09}.

For ease of notation, we use $O(\cdot)$ instead of $O_{\Gamma}(\cdot)$ when $\Gamma$ is clear from the context. 
We first {remove from $\Gamma$ all predicate pairs $(P,Q)$ for which $Q=\B^k$} for some integer $k$ since these constraints are trivially satisfied by any assignment. 
{To justify why we can remove these predicate pairs in the robust setting, let $\Phi$ be our instance of $\fiPCSP(\Gamma)$.} Suppose that there are $m$ constraints in $\Phi$, and $m'=\alpha m$ of them use predicates of the form $(P,Q)$ where $P\subseteq Q={\B}^k$. We consider the instance $\Phi'$ containing $m-m'$ constraints obtained by deleting the constraints that use predicates of the form $(P,Q)$ where $P\subseteq Q={\B}^k$. In the instance $\Phi'$, we are promised that there is a solution satisfying $m-m'-\epsilon m$ constraints, i.e., $1-\frac{\epsilon}{1-\alpha}$ fraction of the constraints. 
We use the algorithm that we will present later in the subsection on the instance $\Phi'$ to get an assignment weakly violating at most $\tilde{O}\left( \left(\frac{\epsilon}{1-\alpha}\right)^{\frac{1}{3}}\right)(m-m')$ constraints. The same assignment weakly violates at most 
\[
\tilde{O}\left( \left(\frac{\epsilon}{1-\alpha}\right)^{\frac{1}{3}}\right)(m-m') = \tilde{O}\left( \left( \frac{\epsilon}{1-\alpha}\right)^{\frac{1}{3}}\right) (1-\alpha)m \leq \tilde{O}( \epsilon^{\frac{1}{3}}) m 
\]
constraints in $\Phi$. 
Thus, it suffices to study Boolean {promise templates} where no predicate pair is of the form $(P, {\B}^k)$. 

We further transform the instance into one in which every predicate pair is of the form $(P,{Q)}$ {with $Q = {\B}^k\setminus \{(-1,-1,\ldots,-1)\})$.}  

\begin{lemma}
\label{lem:maj-transform}
Consider a Boolean {promise template} $\Gamma=\{(P_1,Q_1), \ldots, (P_l, Q_l)\}$ where $P_i\subseteq  Q_i \subsetneq {\B}^{k_i}$ for every $i \in [l]$. {There exits a promise template $\Gamma'=\{(P'_1,Q'_1), \ldots, (P'_{l'}, Q'_{l'})\}$ and a polynomial time algorithm such that given an instance $\Phi$ of $\fiPCSP(\Gamma)$ over a set of variables $V$, the algorithm} outputs an instance $\Phi'$ of {$\fiPCSP(\Gamma')$} such that the following hold. 
\begin{enumerate}
    \item (Completeness.) {For every $\epsilon>0$,} if an assignment $\sigma : V \rightarrow {\B}$ strongly satisfies $1-\epsilon$ fraction of the constraints in $\Phi$, then $\sigma$ strongly satisfies at least {$1-O_{\Gamma}(\epsilon)$} fraction of the constraints in $\Phi'$ as well.
    \item (Soundness.) {For every $\epsilon>0$, }if an assignment $\sigma : V \rightarrow {\B}$ weakly satisfies $1-\epsilon$ fraction of the constraints in $\Phi'$, then $\sigma$ weakly satisfies at least {$1-O_{\Gamma}(\epsilon)$} fraction of constraints in $\Phi$. 
    \item The resulting {promise template} $\Gamma'$ satisfies the below two properties:
    \begin{enumerate}
        \item For every $i \in [l']$, $Q'_i$ is equal to ${\B}^{k'_i}\setminus \{(-1,-1,\ldots,-1)\}$ for some positive integer $k'_i$.
        \item If $\MAJ \subseteq \Pol(\Gamma)$, then, $\MAJ \subseteq \Pol(\Gamma')$.
    \end{enumerate} 
\end{enumerate}
\end{lemma}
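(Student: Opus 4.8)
The plan is to reduce each predicate pair $(P_i,Q_i)$ with $Q_i \subsetneq \{-1,+1\}^{k_i}$ to a collection of ``$k$-SAT-shaped'' pairs by replacing $Q_i$ with the intersection of the clauses that cut it out. Concretely, since $Q_i$ is a proper subset of $\{-1,+1\}^{k_i}$, we can write $Q_i = \bigcap_{\bt \notin Q_i} C_\bt$, where for each forbidden point $\bt \in \{-1,+1\}^{k_i}\setminus Q_i$, $C_\bt := \{-1,+1\}^{k_i} \setminus \{\bt\}$ is the ``clause'' forbidding exactly $\bt$; after flipping the coordinates where $t_j = +1$ (i.e., introducing negated literals), $C_\bt$ becomes exactly $\{-1,+1\}^{k_i}\setminus\{(-1,\ldots,-1)\}$. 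So for each constraint $C_j$ of $\Phi$ using $(P_i,Q_i)$ on literals $(x_1,\ldots,x_{k_i})$, I would replace it by the family of constraints, one for each $\bt \notin Q_i$, using the literal tuple $(x_1',\ldots,x_{k_i}')$ where $x_\ell' = \bar{x}_\ell$ if $t_\ell = +1$ and $x_\ell' = x_\ell$ otherwise, each with predicate pair $(P_i', Q_i')$ where $Q_i' = \{-1,+1\}^{k_i}\setminus\{(-1,\ldots,-1)\}$ and $P_i'$ is the image of $P_i$ under the same coordinate flips. The number of new constraints per old constraint is $|\{-1,+1\}^{k_i}\setminus Q_i| \le 2^{k_i}$, a constant depending only on $\Gamma$, which is what gives the $O(\epsilon)$ blowup factors. (To keep the reduction uniform one takes $\Gamma'$ to be the finite set of all pairs $(P_i', Q_i')$ arising this way, one orbit per $(i,\bt)$.)

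The three claimed properties then follow fairly mechanically. For \emph{completeness}: if $\sigma$ strongly satisfies $C_j$ in $\Phi$, i.e. $\sigma'(C_j) \in P_i$, then after the flips $\sigma'$ applied to each new literal tuple lands in $P_i'$, so $\sigma$ strongly satisfies all $\le 2^{k_i}$ new constraints; hence each violated old constraint creates at most $\max_i 2^{k_i}$ violated new constraints and each satisfied old constraint creates none, and dividing by the new total (which is at least the old total) gives the $1 - O(\epsilon)$ bound. For \emph{soundness}: conversely, $\sigma$ strongly (or weakly) satisfies \emph{all} new constraints descended from $C_j$ iff $\sigma'(C_j)$ avoids every forbidden point $\bt$, i.e. $\sigma'(C_j)\in Q_i$; so if $\sigma$ violates $C_j$ weakly in $\Phi$ it must violate at least one descendant in $\Phi'$, and again dividing by totals (here using that the new instance has at most $\max_i 2^{k_i}$ times as many constraints, so a fraction $\epsilon$ of new constraints is at most $\max_i 2^{k_i}\cdot \epsilon$ fraction of old ones --- after accounting for which old constraints those come from) gives $1-O(\epsilon)$. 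Property 3(a) holds by construction. For 3(b), since flipping input coordinates and restricting $Q_i$ to a subset both only \emph{shrink} or symmetrically relabel the predicate, one checks directly that every polymorphism of $(P_i,Q_i)$ that is folded (so it commutes with coordinate negation) remains a polymorphism of $(P_i',Q_i')$: if $\bv_1,\ldots,\bv_{k_i}$ are rows that columnwise lie in $P_i'$, unflipping gives columns in $P_i$, apply $\MAJ_L$ (which, being folded, commutes with the negations), get a tuple in $Q_i$, reflip to get a tuple in $Q_i'$ --- and $Q_i' \supseteq$ the reflipped $Q_i$ since $Q_i' \supseteq Q_i$'s flip. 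Actually one must be slightly careful: we need the reflipped output in $Q_i' = \{-1,\ldots\}\setminus\{\text{one point}\}$, which holds because the output is in (the flip of) $Q_i$, which avoids (the flip of) \emph{every} forbidden point, in particular the one defining $Q_i'$.

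The main obstacle I expect is \emph{bookkeeping the fractions correctly} in the completeness and soundness accounting, since the new instance has a different (larger) number of constraints than the old one, and the blowup is not uniform across predicate pairs of different arities --- one has to be careful that "$1-\epsilon$ fraction of $\Phi$" translates to "$1-O(\epsilon)$ fraction of $\Phi'$" with the $O(\cdot)$ depending only on $\Gamma$ (e.g. on $\max_i 2^{k_i}$), and symmetrically in reverse. A clean way to handle this is to give each new constraint the \emph{same} total weight as its parent old constraint divided by the number of its siblings (i.e. reweight so that the descendants of $C_j$ collectively carry weight $1/m$), making the constraint-weight measures on $\Phi$ and $\Phi'$ couple exactly; then a violated old constraint of weight $1/m$ maps to violated new weight at most $1/m$ and the fractions match up to the constant $\max_i 2^{k_i}$ when one passes between weighted and unweighted counts. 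The second, minor subtlety is verifying 3(b) handles the shrinking of $Q_i$ to $Q_i'$, addressed above.
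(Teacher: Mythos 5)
Your proposal is correct and follows essentially the same route as the paper's proof: one new constraint per forbidden point of $Q_i$, with literals negated so the forbidden point becomes $(-1,\ldots,-1)$, the $2^{K}$ constant blowup for completeness/soundness, and preservation of $\MAJ$ via the fact that coordinate negation preserves folded polymorphisms. The only cosmetic difference is that the paper does this in two explicit stages ($\Gamma\to\Gamma^*\to\Gamma'$) while you merge them, and your reweighting worry is unnecessary — the plain counting you also give (using that each old constraint has at least one descendant) is exactly the paper's accounting.
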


\begin{proof}
We obtain the above transformation in two steps. First, we construct a Boolean {promise template} $\Gamma^*$ from $\Gamma$ as follows: 
\[
\Gamma^* :=  \left\{ \left(P_i, {\B}^{k_i}\setminus \{\textbf{{b}}\}\right) : i \in [l], \textbf{{b}} \in {\B}^{k_i} \setminus Q_i \right\}
\]
Note that for every predicate pair $(P,Q) \in \Gamma^*$, there is a predicate pair $(P,Q') \in \Gamma$ with $Q' \subseteq Q$, and thus, $\Pol(\Gamma) \subseteq \Pol(\Gamma^*)$. 

Given an instance $\Phi$ of ${\fiPCSP(\Gamma)}$ over a set of variables $V$, we obtain an instance $\Phi^*$ of ${\fiPCSP(\Gamma^*)}$ over the same set of variables $V$ as follows. We order the constraints of $\Phi$ as $C_1, C_2, \ldots, C_m$. Consider a constraint $C_j {= (x_{j,1}, x_{j,2}, \ldots, x_{j,k_i})}$ in $\Phi$ using the predicate pair $(P_i,Q_i)${.} 
In the instance $\Phi^*$, we add $2^{k_i}-|Q_i|$ constraints $C_{j,\textbf{{b}}}$ associated with every $\textbf{{b}} \in {\B}^{k_i} \setminus Q_i$. The constraint $C_{j,\textbf{{b}}}$ uses the same tuple of literals as $C_j$ but uses the predicate pair $(P_i, {\B}^k_i \setminus \{\textbf{{b}}\}).$ 
We analyze the completeness and soundness of this reduction. 
\begin{enumerate}
    \item (Completeness.) If $\sigma : V \rightarrow {\B}$ strongly satisfies a constraint $C_j$, then $\sigma$ strongly satisfies $C_{j,\textbf{{b}}}$ for every $\textbf{{b}} \in {\B}^{k_i}\setminus {Q_i}$. 
    If $\sigma$ strongly satisfies $(1-\epsilon)m$ constraints in $\Phi$, then the number of constraints that $\sigma$ does not strongly satisfy in $\Phi^*$ is at most $2^K\epsilon m$, where $K := \max_{i \in [l]}k_i$. 
    Thus, $\sigma$ strongly satisfies at least $m^*-2^K\epsilon m \geq (1-2^K\epsilon)m^*$ constraints, where $m^*$ is the number of constraints in $\Phi^*$.
    Here, we are using the fact that $m^* \geq m$, as we have $Q_i \neq {\B}^{k_i}$ for every $i \in [k]$.
    \item (Soundness.) Suppose that $\sigma : V \rightarrow {\B}$ weakly satisfies $1-\epsilon$ fraction of the constraints in $\Phi^*$. As $\sigma$ weakly violates at most $\epsilon m^* \leq 2^K \epsilon m$ constraints, for at least $1-2^K\epsilon$ fraction of the original constraints $C_j$, $\sigma$ weakly satisfies $C_{j,\textbf{{b}}}$ for every $\textbf{{b}} \in {\B}^{k_i}\setminus Q_i$. For these constraints $C_j$, $\sigma$ weakly satisfies $C_j$ as well, and thus, $\sigma$ weakly satisfies at least $1-2^K\epsilon$ fraction of the constraints in $\Phi$.
\end{enumerate}

Next, we transform $\Gamma^*$ to $\Gamma'$ to ensure that every weak predicate is of the form ${\B}^k\setminus\{(-1,\ldots,-1)\}$. 
We use the following entry-wise product notation: for a pair of vectors $\textbf{u},\textbf{v}\in\R^k$, we let
    \[
    \textbf{u}\odot\textbf{v} := (u_1v_1, u_2v_2, \ldots,u_kv_k).
    \]
    For a predicate $P \subseteq {\B}^k$, we let 
    \[
    P \odot \textbf{v} := \{ \textbf{u} \odot \textbf{v} : \textbf{u} \in P \}.
    \]
We let $\Gamma'$ be the following. 
\[
\Gamma' := \left\{ \left(P \odot (-\textbf{{b}}), {\B}^k \setminus \{(-1,-1,\ldots,-1)\}\right) : \left(P,{\B}^k\setminus \{\textbf{{b}}\}\right) \in \Gamma^*\right\}
\]
Given the instance $\Phi^*$ of ${\fiPCSP(\Gamma^*)}$ that is obtained from $\Phi$,  we construct an instance $\Phi'$ of ${\fiPCSP(\Gamma')}$ on the same set of variables as follows. Consider a constraint $C_{j,\textbf{{b}}}$ using the predicate pair $(P_i, {\B}^{k_i}\setminus \{\textbf{{b}}\})$. We have a constraint $C'_{j,\textbf{{b}}}$ in $\Phi'$ using the predicate pair $(P_i \odot \textbf{{b}}, {\B}^{k_i}\setminus \{(-1,-1,\ldots,-1)\}$. However, in the constraint $C'_{j,\textbf{{b}}}$, we negate the literals corresponding to the indices $p\in[k_i]$ where ${b}_p=1$. More formally, {we define the} constraint $C'_{j,\textbf{{b}}}{= (y_{j,1},y_{j,2},\ldots,y_{j,k_i})}$ {as follows.}
\[
y_{j,p} := \begin{cases}
x_{j,p} , \text{ if } {b}_p = -1. \\
\overline{x_{j,p}} , \text{ if } {b}_p = 1.
\end{cases}
\]
{For the constant literals, $\overline{x_{j,p}}$ refers to a change of sign.} An assignment $\sigma : V \rightarrow {\B}$ strongly (and {respectively} weakly) satisfies $C'_{j,\textbf{{b}}}$ if and only if $\sigma$ strongly (and {respectively} weakly) satisfies $C_{j,\textbf{{b}}}$. Thus, we get that $\Phi'$ satisfies the completeness and soundness properties of the lemma. Finally, the operation of negating a subset of coordinates preserves all polymorphisms that are folded, and thus, if $\MAJ \subseteq \Pol(\Gamma)$, $\MAJ \subseteq \Pol(\Gamma')$ as well. 
\end{proof}

Given an instance $\Phi$ of ${\fiPCSP(\Gamma)}$, we transform it to an instance $\Phi'$ of ${\fiPCSP(\Gamma')}$ using~\Cref{lem:maj-transform}. 
If $\Phi$ is promised to have an assignment strongly satisfying at least $1-\epsilon$ fraction of the constraints, then $\Phi'$ has an assignment strongly satisfying $1-O(\epsilon)$ fraction of the constraints as well. If there is a polynomial time robust algorithm that outputs an assignment weakly satisfying $1-f(\epsilon)$ fraction of the constraints, then we can use this assignment to obtain a robust algorithm for ${\fiPCSP(\Gamma)}$ as well. Thus, a polynomial time robust algorithm for ${\fiPCSP(\Gamma')}$ gives a polynomial time robust algorithm for ${\fiPCSP(\Gamma)}$ as well. 

{Given a Boolean promise template} $\Gamma$ {in which} every predicate pair is of the form $(P,$\linebreak ${\B}^{k}\setminus \{(-1,-1,\ldots,-1)\})$ with $\MAJ \subseteq \Pol(\Gamma)$, we show that the robust algorithm of Charikar, Makarychev, and Makarychev~\cite{CharikarMM09} for $2$-SAT generalizes and gives a robust algorithm for the Boolean folded PCSP $\Gamma$ as well.   
First, we state their algorithm.

\smallskip 
\noindent\fbox{%
    \parbox{\textwidth}{%
    \begin{enumerate}
        \item Given an instance $\Phi$ of ${\fiPCSP(\Gamma)}$ containing $n$ variables $u_1,u_2,\ldots,u_n$,
        solve the {Boolean folded idempotent} basic SDP and obtain a set of vectors $\textbf{v}_0, \textbf{v}_1, \ldots, \textbf{v}_n$ {and objective value $\eps > 0$}. Let $\boldsymbol \mu \in \R^n$ denote the first moments and $\Sigma \in \R^{n\times n}$ be the {Gram} matrix of these vectors. 
        \begin{align*}
            \mu_i &= { \langle \textbf{v}_i , \textbf{v}_0 \rangle } \quad \forall i \in [n]\\ 
            \Sigma_{i,j} &= { \langle \textbf{v}_i , \textbf{v}_j \rangle } \quad \forall i,j \in [n]
        \end{align*}
        \item Sample an $n$ dimensional Gaussian $\zeta \sim \mathcal N(\textbf{0}, \Sigma)$. Note that $\Sigma$ is positive semidefinite. {We can also equivalently sample $\zeta' \sim \mathcal{N}(\textbf{0}, \textbf{I})$ and use $\langle \textbf{v}_i, \zeta' \rangle$ in the place of $\zeta_i$ in step 4 below. We stick with $\zeta \sim \mathcal{N}(\textbf{0},\Sigma)$ for ease of notation in the analysis.}
        \item Set\footnotemark  $\gamma = \epsilon^{\frac{2}{3}}$.
        \item For each $i \in [n]$, round as follows
        \[
            \sigma(u_i) = \begin{cases}
            +1 & \zeta_i \ge -\mu_i/\gamma.\\
            -1 & $\text{otherwise}$.
            \end{cases}
        \]
    \end{enumerate}
            }%
}        \footnotetext{We change the parameter slightly -- in the original CMM algorithm, $\gamma$ is set to be $\sqrt{\epsilon}$.}
\smallskip 

We shall prove the following guarantee about the algorithm.

\begin{theorem}\label{thm:maj}
Let $\Gamma=\{(P_1,Q_1),(P_2,Q_2),\ldots,(P_l,Q_l)\}$ be a Boolean {promise template} such that $\MAJ \subseteq \Pol(\Gamma)$ where $P_i \subseteq Q_i = {\B}^{k_i}\setminus \{(-1,-1,\ldots,-1)\} $ for every $i \in [l]$. Let $\Phi$ be an instance of ${\fiPCSP}(\Gamma)$ over $n$ variables and using $m$ constraints for which the basic SDP relaxation has a solution with error value at most $\eps m$. Then, the assignment $\sigma$ output by the above CMM algorithm weakly satisfies $1 - \tilde{O}_{\Gamma}(\eps^{1/3})$ fraction of the constraints in expectation.
\end{theorem}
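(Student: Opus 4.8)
The plan is to fix an instance $\Phi$ with an SDP solution of error $\le \epsilon m$, and analyze the failure probability of the CMM rounding on a single constraint, then average. I would first discard the constraints on which the local SDP error $\epsilon_j$ is large: by Markov, at most $\epsilon^{1/3} m$ constraints have $\epsilon_j > \epsilon^{2/3}$, and these contribute only a $\tilde O(\epsilon^{1/3})$ loss no matter how they are rounded. So it suffices to show that for a constraint $C$ with local error $\epsilon_j \le \epsilon^{2/3}$, using predicate pair $(P, \{-1,+1\}^k\setminus\{(-1,\dots,-1)\})$, the probability that $\sigma$ assigns all $k$ literals $-1$ is $\tilde O(\epsilon^{1/3})$.

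The key structural input — and where the $\MAJ$ hypothesis enters — is the existence of a nonnegative weight vector $\bw \in \R^k$ (depending only on $P$, hence only on $\Gamma$, so its entries are bounded by a constant $W_\Gamma$) with $\bw \cdot \bx \ge 0$ for every $\bx \in P$, and moreover $\bw\cdot \bx > 0$ implies $\bw\cdot\bx \ge 1$ by integrality. I would establish this via an LP-duality / Farkas argument: if no such $\bw$ existed, there would be a distribution over $P$ whose mean $\bm$ has $\bm_i < 0$ for all $i$; feeding this through $\MAJ_L$ polymorphisms of growing arity $L$ (applied to the columns of a large matrix whose rows are sampled from $P$ according to that distribution) would force $\MAJ_L$ applied coordinatewise to land outside $Q = \{-1,+1\}^k\setminus\{-\mathbf 1\}$, i.e. at $(-1,\dots,-1)$, contradicting $\MAJ_L \in \Pol(\Gamma)$. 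Let $\bv_1,\dots,\bv_k$ be the SDP vectors (with signs folded in) for the literals of $C$, and $\mu_i = \bv_i\cdot\bv_0$. Using the first/second-moment constraints against the local distribution $\lambda_j$ (which is $\epsilon_j$-close to being supported on $P$), I would derive: (i) $\sum_i w_i \mu_i \ge -W_\Gamma k\,\epsilon_j \ge -O(\epsilon^{2/3})$ — the weighted first moments are almost nonnegative; and (ii) $\big\|\sum_i w_i \bv_i\big\|_2^2 = \sum_{i,i'} w_i w_{i'}(\bv_i\cdot\bv_{i'})$, which equals $\E_{\lambda_j}\big[(\sum_i w_i f(x_i))^2\big]$ plus error; since $(\sum_i w_i f(x_i))^2 \le 2\,\mathbf 1[\bx\notin P]\cdot(\text{const})$ when $\bw$ is "tight" (each $w_i\bx_i$ contribution is bounded and the sum is $\ge 0$ on $P$), one gets $\big\|\sum_i w_i \bv_i\big\|_2 \le O(\sqrt{\epsilon_j}) \le O(\epsilon^{1/3})$. (I expect needing $\bw$ to be chosen minimally/extremally so that the quadratic form is controlled by the $[\bx\notin P]$ indicator — this is the delicate point.)

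With these two facts, the rounding analysis mirrors the $\Ham_4\{2,3,4\}$ warm-up in Section~\ref{sec:maj-overview}. The event "all literals rounded to $-1$" is $\big\{\zeta_i < -\mu_i/\gamma \text{ for all } i\big\}$ where $\gamma = \epsilon^{2/3}$ and $\zeta_i = \zeta\cdot\bv_i$ are jointly Gaussian, each $\mathcal N(0,1)$. Consider $Y := \sum_i w_i(\zeta_i + \mu_i/\gamma)$. Its "random part" $\sum_i w_i \zeta_i = \zeta\cdot(\sum_i w_i\bv_i)$ is a centered Gaussian with standard deviation $\big\|\sum_i w_i\bv_i\big\|_2 \le O(\epsilon^{1/3})$, and its "shift" $\sum_i w_i\mu_i/\gamma \ge -O(\epsilon^{2/3})/\epsilon^{2/3} = -O(1)$ — here one has to be a little more careful and split on whether some $|\mu_i|$ is large: if $\mu_i \ge 2W_\Gamma k \sqrt\epsilon$ for some $i$ then literal $i$ is rounded $+1$ deterministically (since $\zeta_i \ge -\mu_i/\gamma$ is implied... actually one needs $\mu_i$ comparable to $\gamma$; the correct threshold bookkeeping is $|\mu_i| = O(\epsilon^{1/3})$ being the interesting regime, matching the overview), and otherwise all $|\mu_i| = O(\epsilon^{1/3})$ so the shift is $O(\epsilon^{1/3}/\gamma) = O(\epsilon^{-1/3})$... this is the point where the parameter choice $\gamma=\epsilon^{2/3}$ (rather than $\sqrt\epsilon$) is forced, to make $\sum_i w_i\mu_i/\gamma$ of size $O(\epsilon^{1/3})\to 0$, not blow up. Granting the right regime, on the failure event every summand $w_i(\zeta_i+\mu_i/\gamma)$ is $< 0$ while their sum is $\ge \sum_i w_i\zeta_i - O(\epsilon^{1/3})$, which forces each $|\zeta_i + \mu_i/\gamma|$ to be small, in particular $\zeta_i + \mu_i/\gamma \in [-O(\epsilon^{1/3}), 0]$ for the $i$ with $w_i>0$; by the Gaussian anti-concentration bound (Proposition~\ref{prop:gaussian-anticoncentration}) each such event has probability $O(\epsilon^{1/3})$, and a union bound over the $\le k$ coordinates gives failure probability $\tilde O_\Gamma(\epsilon^{1/3})$ per good constraint. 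Averaging over constraints and adding back the $\le \epsilon^{1/3}m$ bad constraints yields the claimed $1-\tilde O_\Gamma(\epsilon^{1/3})$ fraction weakly satisfied in expectation.

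The main obstacle I anticipate is step (ii): controlling $\big\|\sum_i w_i \bv_i\big\|_2$ requires that the weight vector $\bw$ be chosen not just to certify $\bw\cdot\bx\ge 0$ on $P$ but so that the quadratic $\big(\sum_i w_i x_i\big)^2$ is dominated (up to a $\Gamma$-constant) by the indicator $\mathbf 1[\bx\notin P]$ on all of $\{-1,+1\}^k$; extracting such a $\bw$ from the $\MAJ$ polymorphisms — essentially showing $P$ is cut out by a single "robust" threshold inequality — is the crux, and I expect it to need the full strength of having $\MAJ_L \in \Pol(\Gamma)$ for all odd $L$, via an LP whose feasibility is certified by a limiting argument on the polymorphisms. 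A secondary nuisance is the careful case analysis on the magnitudes of $\mu_i$ to justify the $\gamma = \epsilon^{2/3}$ scaling and keep the deterministic-rounding cases clean.
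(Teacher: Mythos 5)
Your overall skeleton matches the paper's: extract a nonnegative weight vector $\bw$ with $\norm{\bw}_1=1$ and $\bw\cdot\bx\ge 0$ on $P$ from the $\MAJ$ polymorphisms via LP duality (this is exactly Lemma~\ref{lem:separating-hyperplane}), then on the failure event look at the weighted sum of $\zeta_i+\mu_i/\gamma$, isolate a coordinate with $w_i\ge 1/k$, and apply Gaussian anti-concentration. However, the step you yourself flag as the crux is a genuine gap, and the route you propose for it cannot work. You want $\norm{\sum_i w_i\bv_i}_2\le O(\sqrt{\eps_j})$ by dominating $(\bw\cdot\bx)^2$ by a constant times the indicator of $\bx\notin P$; but since $\bw\ge 0$ and $\norm{\bw}_1=1$, there are always elements of $P$ with $\bw\cdot\bx>0$ (e.g.\ the all-$(+1)$ string, or $(+1,+1)$ in the $2$-SAT case), so no choice of $\bw$ makes the quadratic vanish on $P$, and the claimed $O(\sqrt{\eps_j})$ norm bound is false in general. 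The paper's resolution is different and simpler: since $0\le\bw^T\ba\le 1$ for $\ba\in P$, one has $(\bw^T\ba)^2\le\bw^T\ba$, whence $\bw^T\Sigma\bw\le\bw^T\mu+2\eps_j$; then one splits on the size of $\bw^T\mu$. If $\bw^T\mu\ge\kappa:=\gamma\sqrt{\log(1/\eps)}$, some single coordinate has $\mu_i\ge\kappa$ and that literal is rounded to $+1$ except with probability $\sqrt{\eps}$ (Gaussian tail), so the constraint is automatically satisfied; otherwise $\bw^T\zeta$ has variance at most $\kappa+2\eps_j=\tilde O(\eps^{2/3})$, which is the quantitative substitute for your norm bound. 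This trade-off --- a large first moment buys a near-deterministic $+1$, a small first moment buys a small variance --- is the missing idea.

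A secondary but real problem is your bookkeeping of the local error. After the Markov cut at $\eps_j\le\eps^{2/3}$, the shift $\bw^T\mu/\gamma$ is only bounded below by $-\eps_j/\gamma\ge -1$, not by $-O(\eps^{1/3})$ as your final anti-concentration step assumes, so the interval you feed into anti-concentration has length $\Theta(1)$ and the bound is vacuous for such constraints; cutting instead at $\eps_j\le\eps$ does not help because Markov then removes nothing. The paper keeps the term $\eps_j/\gamma$ inside the per-constraint failure bound (Lemma~\ref{lem:maj}) and only at the end averages it over constraints using $\frac1m\sum_j\eps_j\le\eps$ together with concavity (Jensen), so its total contribution is $\eps/\gamma=\eps^{1/3}$; this averaging, rather than a hard threshold, is what makes the choice $\gamma=\eps^{2/3}$ work. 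Relatedly, your claim that $\mu_i\ge 2W_\Gamma k\sqrt{\eps}$ forces literal $i$ to round to $+1$ is off: with $\gamma=\eps^{2/3}$ the relevant threshold is $\mu_i\gtrsim\gamma\sqrt{\log(1/\eps)}$.
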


{\Cref{thm:maj}, together with \Cref{lem:maj-transform} and \Cref{thm:sdp}, completes the proof of \Cref{thm:maj-alg}. Suppose that $\Gamma = \{(P_1, Q_1), (P_2, Q_2), \ldots, (P_l, Q_l)\}$ is a Boolean {promise template} with $\MAJ \subseteq \Pol(\Gamma)$ with $Q_i = {\B}^{k_i} \setminus \{(-1, -1, \ldots, -1)\}$ for every $i \in [l]$. Given $\epsilon > 0$ and an instance $\Phi$ of ${\fiPCSP}(\Gamma)$ that is promised to have an assignment satisfying $1 - \epsilon$ fraction of the constraints, we solve the basic SDP relaxation of the instance. Without loss of generality, we assume that $\epsilon > \frac{1}{m}$, where $m$ is the number of constraints in the instance. By setting $\delta = \epsilon$ in \Cref{thm:sdp}, we obtain that the basic SDP relaxation has an error of at most $2 \epsilon m$. Finally, \Cref{thm:maj} implies that the output assignment of the above algorithm weakly satisfies $1 - \tilde{O}_{\Gamma}(\epsilon^{1/3})$ of the constraints in expectation.}

{Towards proving~\Cref{thm:maj},} we analyze the probability that the output assignment weakly satisfies each constraint separately. Fix a constraint $C_j$ using the predicate pair $(P,Q)$ of arity $k$ with $P\subseteq Q={\B}^k \setminus \{(-1,-1,\ldots,-1)\}.$ Suppose that the basic SDP solution has error equal to $c$ on this constraint, i.e., $\eps_j=c$. 
Our goal is to upper bound the probability that the rounded solution violates the constraint $Q$ by a function of $\epsilon$ and $c$. Using the fact that the expected value of $c$ over all the constraints is at most $\epsilon$, we get our required robustness guarantee. More formally, we prove the following.

\begin{lemma}\label{lem:maj}
Fix $j \in [m]$, and suppose that the basic SDP solution has an error value equal to $c$ on $C_j$, i.e., $\eps_j=c$. Then, the probability that $\sigma$ does not weakly satisfy $C_j$ is at most
\[
O\left(\sqrt{\epsilon}+\sqrt{\left(\gamma \sqrt{\log \frac{1}{\epsilon}} + 2c\right)\log \frac{1}{\epsilon}}+\frac{c}{\gamma}\right).
\]
\end{lemma}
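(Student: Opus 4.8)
\textbf{Proof plan for \Cref{lem:maj}.}
The plan is to decompose the event ``$\sigma$ violates $C_j$'' into a few cases based on the sizes of the first moments $\mu_i = \textbf{v}_i \cdot \textbf{v}_0$, following the paradigm sketched in~\Cref{sec:maj-overview} for $(\Ham_4\{2,3,4\},\Ham_4\{1,2,3,4\})$, but now without direct access to the predicate. First I would invoke the structural consequence of $\MAJ \subseteq \Pol(\Gamma)$ to obtain a nonnegative weight vector $\textbf{w}$ with $\la \textbf{w}, \textbf{x} \ra \ge 0$ for all $\textbf{x} \in P$ (this is where the polymorphism is used as a black box; the existence of $\textbf{w}$ via an LP relaxation is promised in the overview and I would cite that lemma). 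Scaling so that $\sum_i w_i$ is a fixed constant, the first-moment SDP constraints give $\la \textbf{w}, \boldsymbol\mu \ra \ge -O(c)$ where $\boldsymbol\mu$ restricted to $C_j$, and the second-moment constraints bound the fluctuation: writing $\textbf{v}_S = \sum_{i} w_i \textbf{v}_i$ for the literal-vectors of $C_j$, one shows $\norm{\textbf{v}_S}_2^2 = \sum_{i,i'} w_i w_{i'}\la \textbf{v}_i,\textbf{v}_{i'}\ra$ is small whenever all $|\mu_i|$ are small, using $\la \textbf{v}_i,\textbf{v}_{i'}\ra = \la \be_i,\be_{i'}\ra_{\lambda_j}$ and the fact that the local distribution $\lambda_j$ is supported on $P$ except on a $c$-fraction. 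Concretely, expanding $\norm{\textbf{v}_S - (\la\textbf{w},\boldsymbol\mu\ra/\norm{\textbf{v}_0}^2)\textbf{v}_0}^2$ and using that any $\textbf{x}$ in the support of $\lambda_j \cap P$ has $\la \textbf{w},\textbf{x}\ra \ge 0$, one gets $\norm{\textbf{v}_S}_2 \le O(\sqrt{\gamma\sqrt{\log(1/\eps)}} + \sqrt{c})$-type bounds once the ``large $\mu_i$'' case has been excised.

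Next I would dispatch the easy cases. If some literal of $C_j$ has $\mu_i \ge \gamma\sqrt{\log(1/\eps)}$, then $\Pr[\zeta_i < -\mu_i/\gamma] = \Pr[\zeta_i < -\sqrt{\log(1/\eps)}] \le \eps^{1/2}$ by the Gaussian tail bound (\Cref{prop:gaussian-concentration} with $\sigma^2 = 1$), so that literal rounds to $+1$ except with probability $\le \sqrt\eps$, and since $Q = \{-1,+1\}^k \setminus\{(-1,\dots,-1)\}$ a single $+1$ satisfies the constraint. If instead some literal has $\mu_i$ very negative, the weight constraint $\la\textbf{w},\boldsymbol\mu\ra \ge -O(c)$ forces another literal to have large positive $\mu_{i'}$, reducing to the previous case. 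So the remaining case is $|\mu_i| \le \gamma\sqrt{\log(1/\eps)}$ for all $i$ in $C_j$, where the small-norm bound on $\textbf{v}_S$ holds.

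In that main case, the violation event is $\{\la \zeta,\textbf{v}_i\ra < -\mu_i/\gamma \text{ for all } i\}$, equivalently $\la \zeta,\textbf{v}_i\ra + \mu_i/\gamma < 0$ for all $i$. Taking the $\textbf{w}$-weighted sum of these quantities gives $\la \zeta,\textbf{v}_S\ra + \la\textbf{w},\boldsymbol\mu\ra/\gamma$, and $\la \zeta, \textbf{v}_S\ra$ is a mean-zero Gaussian with standard deviation $\norm{\textbf{v}_S}_2 \le O(\sqrt{\gamma\sqrt{\log(1/\eps)}+2c})$ while $\la\textbf{w},\boldsymbol\mu\ra/\gamma \ge -O(c/\gamma)$. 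Hence if the event holds, for some $i$ we have $\la\zeta,\textbf{v}_i\ra + \mu_i/\gamma$ lying in a window $[-T,0]$ where $T$ is of order $\norm{\textbf{v}_S}_2\sqrt{\log(1/\eps)} + c/\gamma$ (the $\sqrt{\log(1/\eps)}$ accounting for a union bound over the $O(1)$ literals and a truncation of the Gaussian $\la\zeta,\textbf{v}_S\ra$ at a $\sqrt{\log(1/\eps)}$-multiple of its std via \Cref{prop:gaussian-concentration}); then Gaussian anti-concentration (\Cref{prop:gaussian-anticoncentration}, since $\la\zeta,\textbf{v}_i\ra$ is standard Gaussian) bounds the probability of that window by $T$. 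Adding the $\sqrt\eps$ from the truncation step and the easy cases yields the claimed bound $O\bigl(\sqrt\eps + \sqrt{(\gamma\sqrt{\log(1/\eps)}+2c)\log(1/\eps)} + c/\gamma\bigr)$.

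The main obstacle I anticipate is not any single estimate but getting the small-norm bound $\norm{\textbf{v}_S}_2 = O(\sqrt{\gamma\sqrt{\log(1/\eps)}+2c})$ to come out with exactly the right exponents purely from the first- and second-moment SDP identities when $\lambda_j$ is only $(1-c)$-supported on $P$ — one must carefully track how the $c$-fraction of bad support and the $O(\gamma\sqrt{\log(1/\eps)})$ slack in each $\mu_i$ propagate through the quadratic form $\sum_{i,i'}w_iw_{i'}\la\textbf{v}_i,\textbf{v}_{i'}\ra$, and ensure the hidden constants depend only on $\Gamma$ (through $\textbf{w}$ and the arities). The second delicate point is the interplay between the union bound over literals and the Gaussian truncation, which is why logarithmic factors appear; balancing $\gamma = \eps^{2/3}$ afterwards is what converts \Cref{lem:maj} into the $1-\tilde O(\eps^{1/3})$ guarantee of \Cref{thm:maj}, but that optimization is downstream of this lemma.
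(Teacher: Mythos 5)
Your proposal follows essentially the same route as the paper's proof: the separating-hyperplane weight $\bw$ obtained from the $\MAJ$ polymorphisms via LP duality (\Cref{lem:separating-hyperplane}), the moment bounds $\bw^T\boldsymbol\mu \ge -c$ and $\bw^T\Sigma\bw \le \bw^T\boldsymbol\mu + 2c$ (using $0 \le \bw^T\textbf{a} \le 1$ for $\textbf{a}\in P$), a case split at $\kappa=\gamma\sqrt{\log\frac{1}{\epsilon}}$, Gaussian concentration of $\bw^T\zeta$, and anti-concentration of a single coordinate in a short window. Two small wrinkles to iron out in the write-up: your reduction ``some very negative $\mu_i$ forces some large positive $\mu_{i'}$'' need not hold since $\bw$ may have zero coordinates, but it is also unnecessary --- the paper cases directly on $\bw^T\boldsymbol\mu \ge \kappa$ versus $\bw^T\boldsymbol\mu < \kappa$, which is all the second-moment bound requires; and the final anti-concentration step should be applied to a coordinate with $w_i \ge 1/k$ (whose violation forces a window of length $O(k T)$) rather than via a union bound over all literals, since coordinates with tiny weight need not land in a bounded window.
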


By summing over all the constraints, and using linearity of expectation, the expected number of constraints that are not weakly satisfied by $\sigma$ is at most 
\begin{equation}
\label{eq:maj-error}
O\left(\sqrt{\epsilon}m+\sum_{j \in [m]}\left(\sqrt{\left(\gamma \sqrt{\log \frac{1}{\epsilon}} + 2\eps_j\right)\log \frac{1}{\epsilon}}+\frac{\eps_j}{\gamma}\right)\right).
\end{equation}
As the basic SDP has a total error at most $\epsilon m$, the average value of $\epsilon_j$ over $j\in [m]$ is at most $\epsilon$. Also note that the expression in~\Cref{eq:maj-error} is a concave function of $\eps_j$. Thus, using Jensen's inequality, we get that the expected number of constraints that are not weakly satisfied by $\sigma$ is at most 
\[
O\left(\sqrt{\epsilon}m+m\cdot \left(\sqrt{\left(\gamma \sqrt{\log \frac{1}{\epsilon}} + 2\eps\right)\log \frac{1}{\epsilon}}+\frac{\eps}{\gamma}\right)\right) \leq \tilde{O}\left(m\eps^{1/3}\right)
\]
This completes the proof of Theorem~\ref{thm:maj}. 
In the rest of the subsection, we prove Lemma~\ref{lem:maj}.

{For our predicate $P \subseteq \B^k$, l}et $\mathcal P$ be the convex hull of $P$, where the tuples are viewed as vectors in $\R^k$. 
\[
\mathcal{P} := \left\{ \sum_{\textbf{a} \in P} \lambda_{\textbf{a}} \textbf{a} : \lambda_\textbf{a}\geq 0\,~\forall \textbf{a}\in P,~ \sum_{\textbf{a} \in P}\lambda_\textbf{a}=1\right\}.
\]
We prove the following property about $\mathcal P$ using the fact that {$\MAJ \subseteq \Pol(P, \B^k \setminus \{(-1, \hdots, -1)\}$}. We recall that $O_{\MAJ}(P)$ denotes the set $\bigcup_{\textbf{x}_1,\ldots,\textbf{x}_L\in P,L \in \mathbb{N},\text{ odd}}\MAJ_L(\textbf{x}_1,\ldots,\textbf{x}_L)$ for a predicate $P \subseteq {\B}^k$.

\begin{lemma}
\label{lem:separating-hyperplane}
Let $P \subseteq {\B}^k$ be a predicate such that $(-1,-1,\hdots, -1) \not\in O_{\MAJ}(P)$. Then, there is a hyperplane separating $\mathcal{P}$ from the origin: there exists $\textbf{w} \in \mathbb{R} ^k$, $\textbf{w} \geq 0$ and $\norm{\textbf{w}}_1 = 1$ such that for every $\textbf{a} \in \mathcal{P}$, $\langle \textbf{a}, \textbf{w} \rangle \geq 0$. 
\end{lemma}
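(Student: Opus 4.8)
The plan is to establish the contrapositive of a separation statement via LP duality, and to show that the dual infeasibility is exactly the statement $(-1,\dots,-1) \in O_{\MAJ}(P)$. First I would set up the question of whether a separating hyperplane exists as a linear feasibility problem: we want $\bw \in \R^k$ with $\bw \geq 0$, $\norm{\bw}_1 = 1$, and $\la \ba, \bw \ra \geq 0$ for all $\ba \in P$ (equivalently for all $\ba \in \mathcal{P}$, since $\mathcal{P}$ is the convex hull of $P$ and the constraint $\la \cdot, \bw\ra \geq 0$ is preserved under convex combinations). By a standard variant of Farkas' lemma / LP duality, such a $\bw$ fails to exist precisely when there is a probability distribution $\mu$ on $P$ (i.e., $\mu_\ba \geq 0$, $\sum_{\ba \in P} \mu_\ba = 1$) such that the vector $\sum_{\ba \in P} \mu_\ba \ba$ is coordinatewise strictly negative, i.e., lies in the open negative orthant $(-\infty, 0)^k$. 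Actually, because $P \subseteq \{-1,+1\}^k$ and each $\ba$ has rational (indeed half-integer-denominator-free) coordinates, a rational distribution $\mu$ with strictly negative barycenter would suffice, and by scaling we can take $\mu$ to have a common denominator $L$; moreover we can assume $L$ is odd by duplicating one atom if necessary. So the failure of separation is equivalent to: there exist an odd $L$ and a multiset $\bx_1, \dots, \bx_L \in P$ (with repetition) such that $\sum_{t=1}^L (\bx_t)_i < 0$ for every coordinate $i \in [k]$.

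Next I would observe that the condition $\sum_{t=1}^L (\bx_t)_i < 0$ for all $i$, with $L$ odd and entries in $\{-1,+1\}$, says exactly that $\MAJ_L(\bx_1, \dots, \bx_L)$ evaluated coordinatewise equals $(-1,-1,\dots,-1)$. Hence the failure of the separation conclusion is equivalent to $(-1,\dots,-1) \in O_{\MAJ}(P)$, which is precisely the negation of the hypothesis. Running this equivalence in the contrapositive direction gives the lemma. I would present the argument directly in the forward direction: assuming $(-1,\dots,-1) \notin O_{\MAJ}(P)$, I would argue that for every probability distribution $\mu$ on $P$ the barycenter $\ba_\mu = \sum_\ba \mu_\ba \ba$ has at least one nonnegative coordinate — first for rational $\mu$ by the $\MAJ_L$ reformulation above, then for all $\mu$ by a density/continuity argument (the set of $\mu$ whose barycenter is strictly negative in all coordinates is open, so if it is nonempty it contains a rational point). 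Thus $\mathcal{P}$ is disjoint from the open negative orthant, and then I would invoke the separating hyperplane theorem for the disjoint convex sets $\mathcal{P}$ and $(-\infty,0)^k$ (or equivalently apply Farkas) to extract a nonzero $\bw$ with $\la \ba, \bw \ra \geq \la \by, \bw\ra$ for all $\ba \in \mathcal{P}$, $\by \in (-\infty,0)^k$; pushing $\by$ toward $0$ along each axis forces $\bw \geq 0$ and $\la \ba, \bw\ra \geq 0$, after which normalizing to $\norm{\bw}_1 = 1$ (possible since $\bw \neq 0$ and $\bw \geq 0$) finishes.

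The step I expect to be the main obstacle — or at least the one requiring the most care — is the discretization: going from "$\mathcal{P}$ meets the open negative orthant" (a statement about arbitrary convex combinations) to "$(-1,\dots,-1) \in O_{\MAJ}(P)$" (a statement about a finite odd-length multiset with a strict inequality). The subtlety is handling the strictness and the oddness of $L$ simultaneously: a rational distribution gives an even-denominator combination a priori, and a strictly negative average of $\pm 1$ values over $L$ terms only forces the sum to be $\le -1$, so one must check that passing to a common odd denominator $L$ (padding with an extra copy of some $\bx_t$, which changes each coordinate sum by $\pm 1$ but cannot make a coordinate that was $\le -2$ become nonnegative, and a more careful argument for coordinate sums equal to $-1$) preserves strict negativity in every coordinate. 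I would handle this by noting we can first choose the rational distribution with all coordinate barycenters bounded away from $0$ by some $\eta > 0$ (possible since the bad region is open), then pick $L$ large and odd so that the rounding error $1/L < \eta$, which makes the padding harmless. The remaining pieces — convex hull manipulations, the separating hyperplane theorem, and the $\ell_1$ normalization — are routine.
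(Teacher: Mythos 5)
Your proposal is correct and follows essentially the same route as the paper: LP/Farkas duality reduces the existence of the separating $\textbf{w}$ to the non-existence of a (rational) probability distribution on $P$ whose barycenter is coordinatewise strictly negative, which is then padded to an odd number of terms so that $\MAJ_L(\textbf{x}_1,\ldots,\textbf{x}_L)=(-1,\ldots,-1)$, contradicting $(-1,\ldots,-1)\notin O_{\MAJ}(P)$. The paper writes the primal and dual LPs explicitly and handles parity by taking $2N\lambda_{\textbf{a}}$ copies of each $\textbf{a}$ plus one extra element (so each coordinate sum drops from $\le -2$ to $\le -1$), whereas you phrase the duality geometrically via separation of $\mathcal{P}$ from the open negative orthant and fix parity with an $\eta$-margin approximation; these are cosmetic differences (and your ``push $y$ toward $0$'' step should also send $y_i\to-\infty$ to force $\textbf{w}\ge 0$), not gaps.
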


\begin{proof}
Consider the following linear program,
\begin{align*}
\textbf{maximize: } & \eta\\
\textbf{subject to: } & \sum_{i=1}^k w_i = 1\\
                      & \forall a \in P, \ \eta - \sum_{i=1}^k a_i w_i \le 0 \\ 
                      & \bw \ge 0
\end{align*}

It suffices to prove that the objective of this linear program is non-negative. To do this, we consider the dual program on variables $\nu \in \R$ and $\lambda_a \in \R$ for $a \in P$:
\begin{align*}
\textbf{minimize: } & \nu \\
\textbf{subject to: } & \lambda \ge 0 \wedge \sum_{a \in P} \lambda_a = 1 \tag{dual of $\eta$}\\
    &\forall i \in [k], \nu - \sum_{a \in P} a_i \lambda_a \ge 0 \tag{dual of $\bw$}
\end{align*}

As all the coefficients used in the LP are rational, we may assume that $\nu$ and $\lambda$ are rational. Assume for sake of contradiction that there is a solution to the dual LP with $\nu < 0$. Then, we have that for all $i \in [k]$,
\[
\sum_{a \in P} a_i \lambda_a < 0.
\]
Let $N$ be the least common denominator of the $\lambda_a$s. Consider the set of satisfying assignments to $P$ where we take $2N \lambda_a$ copies of $a$ for each $a \in P$. We also add an arbitrary element $b$ of $a$ to our set. As $\sum_{a \in P}a_i \lambda_i N <0$ for every $i \in [k]$, and $\sum_{a \in P}a_i \lambda_i N$ is an integer, we get that for every $i\in [k]$, $2\sum_{a \in P}a_i \lambda_a N + b \leq 0$. In other words, when we apply $MAJ_{2N+1}$ coordinatewise to this set of assignments in $P$, we get $(-1, -1, \ldots, -1)$. As $(P, Q)$ contains Majority of all odd arities as polymorphisms, this implies that the resulting output $(-1,-1,\ldots, -1)$ is in $Q$, a contradiction. 

Thus, the objective $\eta$ of the original LP is non-negative, completing the proof. 
\end{proof}

Now we use this lemma to complete the proof of Lemma~\ref{lem:maj}. Recall that our goal is to lower bound the probability that the assignment $\sigma$ output by the above algorithm weakly satisfies the constraint $C_j$. Suppose that ${C_j=(x_{j,1},x_{j,2},\ldots,x_{j,k})}$, and uses the predicate pair $(P,Q)$ where $P\subseteq Q = {\B}^k\setminus\{(-1,-1,\ldots,-1)\}$. {Let $S_j$ denote the set of variables appearing in $C_j$.} For each $l \in [k]$, let $\hat{\bv}_l := \bv(x_{j,l})$, where the function $\bv$ is defined as in (\ref{eq:vl}). {Let \( \mu^{(j)} \) and \( \Sigma^{(j)} \) be defined as follows:
\begin{align*}
\mu^{(j)}_l &:= \langle \hat{\bv}_l , \bv_0 \rangle & \forall l \in [k],\\
\Sigma^{(j)}_{l,l'} &:= \langle \hat{\bv}_l , \hat{\bv}_{l'} \rangle & \forall l,l' \in [k].
\end{align*}}

{Since $\eps_j = c,$ we have that
\[
    \sum_{\substack{f : S_j \to \B\\f(C_j) \in P}} \lambda_j(f) = 1 - c.
\]}
Using~\Cref{lem:separating-hyperplane}, we get $\bw \in \R^k$ with $\bw \geq 0$ and $\norm{\bw}_1=1$ such that ${\langle \bw,  \textbf{a}_i \rangle} \geq 0$ for all $\textbf{a}_i \in P$.
Combining this with the above properties of the basic SDP solution, we get the following. 
\begin{enumerate}
\itemsep=-0.5ex
\vspace{-1ex}
    \item (First moment). We have 
    \begin{align*}
    {\langle \bw , \mu^{(j)} \rangle} &=  {\sum_{f : S_j \to \B} \lambda_j(f) {\langle \bw, f(C_j)\rangle}}\\
    & \geq -c \text{ (Using \Cref{lem:separating-hyperplane} and $-1 \leq {\langle \bw, \ba \rangle} \leq 1 \,\forall {\ba \in \B^k}$ )}
    \end{align*}
    \item (Second moment). We have
    \begin{align*}
    \bw ^T {\Sigma^{(j)}} \bw &= {\sum_{f : S_j \to \B} \lambda_j(f)} {\langle \bw,  f(C_j) \rangle ^2}\\
    &\leq  {\sum_{\substack{f : S_j \to \B\\f(C_j) \in P}} \lambda_j(f) (\langle \bw,  f(C_j) \rangle ^2} + c\\
    &\leq {\sum_{\substack{f : S_j \to \B\\f(C_j) \in P}} \lambda_j(f) \langle \bw,  f(C_j) \rangle } + c \leq { \langle \bw, \mu \rangle} + 2c \ . \end{align*}
\end{enumerate}
We do casework on the value of ${\langle \bw, \mu^{(j)}\rangle}$. First, consider the case that ${\langle \bw, \mu^{(j)}\rangle} \geq \kappa = \gamma \sqrt{\log \frac{1}{\epsilon}}$. As $\norm{\bw}_1$=1, and $\bw \geq 0$, there exists $i \in [k]$ such that ${\mu^{(j)}_i} \geq \kappa$. As $\zeta_i \sim \mathcal{N}(0,1)$, using~\Cref{prop:gaussian-concentration}, with probability at least $1-\sqrt{\epsilon}$, we have $\zeta_i \geq -{\frac{\mu^{(j)}_i}{\gamma}}$. Thus, with probability at least $1-\sqrt{\epsilon}$, the rounded solution satisfies $Q$.  

Henceforth, we assume ${\bw^{T}\mu^{(j)}} < \kappa $.
For notational convenience let $\bt = -{\mu^{(j)} / \gamma }$.  We have 
\begin{equation}
\label{eq:wt-atmost}
\bw^{T} \bt \le \frac{c}{\gamma}
\end{equation}
and  ${\bw^{T}\Sigma^{(j)} \bw }\le \kappa +2c$. {From ~\Cref{prop:gaussian-multivariate}}, ${\langle \bw , \zeta \rangle} \sim {\mathcal{N}(0, \bw^T \Sigma^{(j)} \bw)}$. Thus, using~\Cref{prop:gaussian-concentration},  
with probability at least $1-\sqrt{\epsilon}$, we have that 
\begin{equation}
\label{eq:wzeta-atmost}
|\bw^{T} \zeta| \le O\left( \sqrt{(\kappa + 2c)\log \frac{1}{\epsilon}}\right)
\end{equation}

Note that the rounded solution does not satisfy $Q$ only if $\bt \geq \zeta$. We now upper bound the probability that this {occurs}. Together with~\Cref{eq:wt-atmost} and~\Cref{eq:wzeta-atmost},
 $\bt \geq \zeta$ implies that  
\[0 \le \bw^{T}(\bt - \zeta) \le O\left(\sqrt{(\kappa + 2c)\log \frac{1}{\epsilon}}+\frac{c}{\gamma}\right).\]

Take some coordinate with $w_i \ge 1/k$ and note that
\[
    {t_i}- \zeta_i \in \left[0, O\left(\sqrt{(\kappa + 2c)\log \frac{1}{\epsilon}}+\frac{c}{\gamma}\right)\right].
\]
{Since $t_i$ is a constant, this occurs with probability 
\[
O\left(\sqrt{(\kappa + 2c)\log \frac{1}{\epsilon}} + \frac{c}{\gamma}\right),
\]
using~\Cref{prop:gaussian-anticoncentration} applied to \( \zeta_i \sim \mathcal{N}(0, 1) \).} Thus, the probability that the assignment $\sigma$ does not satisfy $Q$ is at most 
\[
O\left(\sqrt{\epsilon}+\sqrt{(\kappa + 2c)\log \frac{1}{\epsilon}}+\frac{c}{\gamma}\right).
\]

This completes the proof of Lemma~\ref{lem:maj}.

\subsection{Warm-up for \texorpdfstring{$\AT$}{AT}: Random threshold rounding algorithm for Dual-Horn SAT}

As a stepping-stone for our algorithm for $\AT$ presented in the next section, we present a robust algorithm for the Dual-Horn SAT CSP. {The formal template Dual-Horn SAT allows for constraints of arbitrarily large arity, but we consider a finite fragment. That is, our template} $\Gamma={\{}P_1, P_2, \ldots, P_l{\}}$ {has} $P_i = {\B}^{k_i}\setminus\{(-1,-1,\ldots,-1)\}$ {for $i \in [l]$.} {Specifically to emulate Dual-Horn SAT, we restrict ourself to instances of $\fiPCSP(\Gamma)$ such that each clause has at most one negated literal and no constant literals appear.}

A robust algorithm for Dual-Horn SAT was found previously by Zwick~\cite{Zwick98}, and a matching hardness result was obtained by Guruswami and Zhou\cite{GuruswamiZ12}. We now present a robust algorithm for Dual-Horn SAT achieving similar guarantees as~\cite{Zwick98}. The random thresholding idea that we use here serves as a good warmup for the AT algorithm presented in the next subsection.   

\smallskip 
\noindent\fbox{%
    \parbox{\textwidth}{%
    \begin{enumerate}
                \item Given an instance $\Phi$ of ${\fiPCSP(\Gamma)}$ containing $n$ variables $\{u_1,u_2,\ldots,u_n\}$,
        solve the basic SDP and obtain a set of vectors $\textbf{v}_0, \textbf{v}_1, \ldots, \textbf{v}_n$ {and objective value $\eps > 0$}. Let $\mathbf{\mu} \in \R^n$ denote the first moments.  
        \begin{align*}
            \mu_i &= { \langle \textbf{v}_i , \textbf{v}_0\rangle } \quad \forall i \in [n]
        \end{align*}        
        \item Let $T$ be a geometric progression with first term $\sqrt{\eps}$, last term $1/K$ and spacing between terms is at least $K$, where $K$ is the maximum arity of the predicates in $\Gamma$. 
        \item Sample a uniformly random threshold $t \in T$.
        \item For each $i \in [n]$, round as follows
        \[
            \sigma(x_i) = \begin{cases}
            +1 & \text{ if } {\mu_i \ge t-1.}\\
            -1 & $\text{otherwise}$.
            \end{cases}
        \]
    \end{enumerate}
            }%
}
\smallskip 

\begin{theorem}\label{thm:or}
Let $\Phi$ be an instance of {$\fiPCSP(\Gamma)$ such that each clause has at most one negated literal and no constant literals} for which there is a basic SDP solution with error at most $\eps$. Then, the assignment $\sigma$ output by the above algorithm satisfies $1 - O_{\Gamma}(1/\log(1/\eps))$ fraction of the constraints of $\Phi$ in expectation. 
\end{theorem}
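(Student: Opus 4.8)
The plan is to analyze the random-threshold rounding algorithm constraint-by-constraint, bounding the probability that a fixed Dual-Horn clause is violated in terms of the SDP error $\eps_j$ on that constraint. Since the basic SDP has total error at most $\eps m$, averaging over the $|T| = \Theta(\log(1/\eps))$ thresholds and over the constraints will give the claimed $O_\Gamma(1/\log(1/\eps))$ bound. The key structural fact I would extract from the SDP is that, for a Dual-Horn constraint $C_j$ on literals $x_{j,1},\dots,x_{j,k}$ with predicate $P = \{-1,+1\}^k\setminus\{(-1,\dots,-1)\}$, the first-moment equations force $\sum_{p} \mu(x_{j,p}) \ge k - 2 - 2k\eps_j$ or equivalently, writing $y_i = \mu(x_{j,p})$ for the literal values (so $y_i \in [-1,1]$), we have $\sum_p (1 + y_{j,p}) \ge 2 - O(\eps_j)$ — i.e.\ the $y$'s cannot all be close to $-1$ unless $\eps_j$ is large. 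Concretely, at least one literal has $y$-value $\ge 1 - O(\eps_j)/1$... more carefully, the ``mass'' $\sum_p (y_{j,p}+1) \ge 2(1-\eps_j)$ by the first-moment constraint plus $\lambda_j$ being $(1-\eps_j)$-supported on $P$.

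\textbf{Key steps, in order.} First, I would translate the SDP feasibility on $C_j$ into the inequality $\sum_{p=1}^{k}(y_{j,p}+1) \ge 2(1-\eps_j)$, using that each satisfying assignment to $P$ has at least one $+1$ coordinate and the first moments are convex combinations $\mu(x) = \sum_f \lambda_j(f) f(x)$. Second, I would observe that the rounding sets $\sigma(x_i) = +1$ iff $y_i \ge t-1$, so the clause is \emph{satisfied} as soon as \emph{some} positive literal $x_{j,p}$ has $y_{j,p} \ge t-1$; the only danger is when all positive literals fall below threshold and, additionally, the single negated literal (if present) is rounded to $-1$, i.e.\ its underlying variable is rounded to $+1$. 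Third — the heart of the argument — I would bound, over the random choice of $t$ from the geometric progression $T$, the probability that the (at most one) large coordinate $y_{j,p}$ with $1 + y_{j,p} \ge 2(1-\eps_j)/k$ lands in the ``bad window'' $[-1, t-1)$ while simultaneously the negated literal's variable (whose $y$-value is also constrained) lands in $[t-1,1]$. Because consecutive terms of $T$ differ by a factor $\ge K \ge k$ and $T$ ranges geometrically from $\sqrt\eps$ to $1/K$, a standard geometric-sampling / ``charging'' argument shows each constraint is violated for at most $O_\Gamma(1)$ out of the $\Theta(\log(1/\eps))$ choices of $t$, \emph{except} when $\eps_j$ itself is large (say $\eps_j \gtrsim 1/\mathrm{poly}$), in which case we charge the violation to $\eps_j$. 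Fourth, I would sum: the expected number of violated constraints is at most $\sum_j \big( O_\Gamma(1/\log(1/\eps)) + O(\eps_j/\sqrt\eps)\big) \le O_\Gamma(m/\log(1/\eps))$, using $\sum_j \eps_j \le \eps m$.

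\textbf{Main obstacle.} The delicate point is the Dual-Horn structure: unlike plain $k$-SAT, a clause can still be violated through its negated literal, so I cannot simply argue ``some positive literal exceeds the threshold.'' I expect the crux to be showing that the negated literal's SDP value is \emph{also} pinned down — specifically, when all positive literals have $y$-values near $-1$ (which by the first-moment inequality requires $\eps_j$ not too small), the second-moment constraints between the negated literal and the positive literals should force its variable's $y$-value to be near $-1$ as well, so that it gets rounded to $-1$ (satisfying the clause via negation) for all but a few thresholds. Making this quantitative — tracking how the ``slack'' $\eps_j$ propagates through the at-most-one-negation structure and choosing the geometric spacing $K$ large enough to absorb it — is where the real work lies; everything else is bookkeeping analogous to Zwick's original Dual-Horn analysis and the $\AT$ warm-up to follow.
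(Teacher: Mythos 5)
Your proposal is correct and follows essentially the same route as the paper: the literal-wise first-moment inequality $\sum_p(1+y_{j,p})\ge 2(1-\eps_j)$, casework on the (at most one) negated literal with a pigeonhole tying some positive literal's bias to the negated variable's bias, the geometric spacing of $T$ giving only $O_\Gamma(1)$ bad thresholds per constraint, and a Markov/charging step over the per-constraint errors $\eps_j$. The only superfluous worry is your "main obstacle": no second-moment constraints are needed, since the first-moment inequality you already wrote (which involves the negated literal's value $-\mu_1$) directly forces a positive literal to satisfy $1+\mu_i \ge \bigl((1+\mu_1)-O(\eps_j)\bigr)/(k-1)$, which is exactly the factor-$(k-1)<K$ interval used in the charging argument — this is how the paper's proof handles it.
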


\begin{proof}
As with the analysis of {the} $\MAJ$ {polymorphism}, we fix a single constraint $C_j$ using the predicate $P$ and analyze the probability that $\sigma$ satisfies $C_j$.  Since the average SDP error over the constraints is equal to $\eps$, by Markov's inequality, for at least $1 - \sqrt{\eps}$ fraction of the constraints, the SDP error value is at most $\sqrt{\epsilon}$. Henceforth, we assume that $c \leq \sqrt{\epsilon}$. 

Suppose that the constraint $C_j$ uses the {literals} $x_1, \hdots, x_k$ { with some potentially equal. As an abuse of notation, we let $\mu_i$ denote the inner product of $\bv_0$ with $\bv(x_i)$.} First, consider the case when none of the variables are negated. 
Using the fact that the local assignments in the SDP solution have weight at least $1-\sqrt{\epsilon}$ fraction on assignments in $P$, we get that 
\[
\sum_{i=1}^k \mu_i \geq (1-\sqrt{\epsilon})(-(k-2))+\sqrt{\epsilon}(-k) = -(k-2)-2\sqrt{\epsilon}
\]
By the pigeonhole principle, we must have some $i$ with 
\[
\mu_i \ge \frac{-(k-2)-2\sqrt{\epsilon}}{k} =-1+\frac{2-2\sqrt{\epsilon}}{k} \ge -1+\frac{1}{K}
\]
where in the final step, we assumed that $\epsilon$ is small enough that $2\sqrt{\epsilon}\leq 1$.
Thus, $\mu_i \ge t-1$ for all $t \in T$. In this case, $\sigma(x_i)=+1$, and the constraint is satisfied by $\sigma$.

{Next, c}onsider the case {in which exactly one} literal is a negated variable in the constraint. Without loss of generality, assume that $x_1$ is negated. 
Then, we have that 
\[
-\mu_1 + \mu_2 + \cdots + \mu_k \ge  (1-\sqrt{\epsilon})(-(k-2))+\sqrt{\epsilon}(-k) = -(k-2)-2\sqrt{\epsilon}.\]
Thus, there exists an integer $i \in \{2, 3, \hdots, k\}$ such that 
\[\mu_i \ge\frac{ \mu_1-(k-2)-2\sqrt{\epsilon}}{(k-1)} = -1+\frac{\mu_1+1-2\sqrt{\epsilon}}{k-1}\]
The algorithm rounds $x_i$ to $-1$ only if $t > \frac{\mu_1+1-2\sqrt{\epsilon}}{k-1}$. On the other hand, the algorithm rounds $x_1$ to $+1$ only if $t \leq \mu_1+1$. Since there is at most one element in $T$ in $(\frac{\mu_1+1-2\sqrt{\epsilon}}{k-1},\mu_1+1)$, there exists at most one $t \in T$ which would round $x_i$ to $-1$ but $x_1$ to $+1$. Therefore, the probability the constraint is satisfied is at least $1 - 1/|T|$ = $1 - {O_{\Gamma}(1/\log(1/\eps))}$. By linearity of expectation over all the constraints with error at most $\sqrt{\epsilon}$, we get the required claim. 
\end{proof}

{~\Cref{thm:or} implies a polynomial time robust algorithm for Dual-Horn SAT: Suppose that the template $\Gamma={\{}P_1, P_2, \ldots, P_l{\}}$ {has} $P_i = {\B}^{k_i}\setminus\{(-1,-1,\ldots,-1)\}$for $i \in [l]$. Consider an instance $\Phi$ of $\fiPCSP(\Gamma)$ such that each clause has at most one negated literal and no constant literals appear, with the property that there is an assignment that strongly satisfies $(1-\epsilon)$ fraction of the constraints. Without loss of generality, we can assume that $\epsilon \geq \frac{1}{m}$, where $m$ is the number of constraints in $\Phi$. By setting $\delta = \sqrt{\epsilon}-\epsilon$ in ~\Cref{thm:sdp}, we can obtain a solution to basic SDP relaxation of $\Phi$ with error at most $\sqrt{\epsilon}m$. ~\Cref{thm:or} then implies that the output of the above algorithm satisfies $1 - O_{\Gamma}(1/\log(1/\eps))$ fraction of the constraints of $\Phi$ in expectation.}

\begin{remark}
We remark that the above algorithm applies directly to Horn-SAT as well. Furthermore, for an arbitrary {Boolean promise template} $\Gamma=\{(P_1,Q_1),(P_2,Q_2),\ldots,(P_l,Q_l)\}$ with OR polymorphisms\footnote{$\OR(x_1,x_2,\ldots,x_L)$ is equal to $+1$ if there is $i \in [L]$ with $x_i=+1$, and $-1$ otherwise.}
$\OR \subseteq \Pol(\Gamma)$, it's been proved~\cite{BrakensiekG21} that $\OR \subseteq \Pol(\Gamma')$ where $\Gamma'$ is the CSP containing the predicates $\{Q_1,Q_2,\ldots,Q_l\}$. By Schaefer's theorem~\cite{Schaefer78}, any {template $\Gamma$} with $\OR \subseteq \Pol(\Gamma)$ is ppp-definable from $k$-Horn-SAT, and thus, our algorithm proves that there is a polynomial time robust algorithm {for $\PCSP(\Gamma)$ (i.e., negated literals are \emph{not} allowed)} for every {Boolean promise template} $\Gamma$ with $\OR \subseteq \Pol(\Gamma)$.
\end{remark}

\subsection{Algorithm for \texorpdfstring{$\AT$}{AT}}
\label{subsec:at-symmetric}
\newcommand{\sgn}{\operatorname{sgn}}
\newcommand{\Aff}{\operatorname{Aff}}

We now show how to combine the ideas in the previous two subsections to obtain a polynomial time robust algorithm for every Boolean {promise template} $\Gamma=\{(P_1,Q_1),\ldots,(P_l,Q_l)\}, P_i \subseteq Q_i \subseteq {\B}^{k_i}$ with $\AT\subseteq \Pol(\Gamma)$. 
Similar to the $\MAJ$ polymorphisms case, we can without loss of generality assume that $Q_i \neq {\B}^{k_i}$ for every $i \in [l]$.
We first reduce an arbitrary {promise template} $\Gamma$ with $\AT \subseteq \Pol(\Gamma)$ to a specific {template} that we will work with. 
For a vector $\textbf{w} \in \R^k$, define $\sgn(\textbf{w})_i$ to be $-1$ if {$w_i < 0$} and $+1$ otherwise. Define $\Gamma_{AT}$ to be the following {infinite} family of \emph{weighted hyperplane predicates}:
\begin{align*}
    \Gamma_{\AT} &:= \{(P_{\textbf{w},b} := \{{\textbf{x}} \in {\B}^k: { \langle \textbf{w} , \textbf{x} \rangle }= b\}, Q_{\textbf{w},b} := {\B}^k \setminus \{\sgn(\textbf{w}), -\sgn(\textbf{w})\}) :\\&{k \in \N,} b \in \Q, {\textbf{w} \in \Q^k\setminus \textbf{0}},\textbf{w}.\sgn(\textbf{w})>b, -\textbf{w}.\sgn(\textbf{w})<b\}
\end{align*}
We {prove} that these predicates indeed have AT of all odd arities as polymorphisms. 
\begin{claim}
$\AT \subseteq \Pol(\Gamma_{AT})$
\end{claim}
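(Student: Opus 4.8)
The plan is to verify directly that $\AT_L$ applied coordinatewise to any $L$ tuples from $P_{\textbf{w},b}$ lands in $Q_{\textbf{w},b}$, for every odd $L$ and every $(P_{\textbf{w},b},Q_{\textbf{w},b})$ in the family, and that $\AT_L$ is folded. Foldedness is immediate from the definition $\AT_L(-\textbf{x})=-\AT_L(\textbf{x})$ (since negating all arguments flips the sign of the alternating sum $x_1-x_2+\cdots+x_L$, and the inequality defining $\AT_L$ is strict). So the content is the polymorphism inclusion.

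So fix an odd $L$, let $\textbf{x}^{(1)},\dots,\textbf{x}^{(L)}\in P_{\textbf{w},b}$, i.e. $\textbf{w}\cdot\textbf{x}^{(j)}=b$ for each $j$, and let $\textbf{z}=\AT_L(\textbf{x}^{(1)},\dots,\textbf{x}^{(L)})$, meaning $z_i=\AT_L(x^{(1)}_i,\dots,x^{(L)}_i)$. I must show $\textbf{z}\notin\{\sgn(\textbf{w}),-\sgn(\textbf{w})\}$. Suppose for contradiction $\textbf{z}=\sgn(\textbf{w})$ (the case $\textbf{z}=-\sgn(\textbf{w})$ being symmetric by foldedness, replacing each $\textbf{x}^{(j)}$ by $-\textbf{x}^{(j)}$, which are still in $P_{\textbf{w},b}$ provided $-b$-membership... actually here one should be slightly careful: $-\textbf{x}^{(j)}\in P_{\textbf{w},-b}$, not $P_{\textbf{w},b}$, so it is cleaner to just handle both cases in parallel rather than reduce one to the other). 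The key computation: consider the linear functional $\textbf{w}\cdot(\,\cdot\,)$ and exploit that $\AT_L$ is a "signed majority." Concretely, for each coordinate $i$ write $x^{(j)}_i=\sgn(w_i)\cdot\sigma^{(j)}_i$ where $\sigma^{(j)}_i\in\{-1,+1\}$; then $\AT$ applied coordinatewise satisfies $z_i=\sgn(w_i)\cdot\AT_L(\sigma^{(1)}_i,\dots,\sigma^{(L)}_i)$, so $\textbf{z}=\sgn(\textbf{w})$ is equivalent to $\AT_L(\sigma^{(1)}_i,\dots,\sigma^{(L)}_i)=+1$ for all $i$, i.e. $\sigma^{(1)}_i-\sigma^{(2)}_i+\cdots+\sigma^{(L)}_i\ge 1$ for all $i$. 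Now multiply coordinatewise by $|w_i|$ and sum over $i$: since $|w_i|(\sigma^{(1)}_i-\cdots+\sigma^{(L)}_i)\ge |w_i|$ for every $i$, summing gives $\sum_j(-1)^{j-1}\sum_i|w_i|\sigma^{(j)}_i\ge\sum_i|w_i|=\textbf{w}\cdot\sgn(\textbf{w})$. But $\sum_i|w_i|\sigma^{(j)}_i=\sum_i w_i\sgn(w_i)\sigma^{(j)}_i=\sum_i w_i x^{(j)}_i=\textbf{w}\cdot\textbf{x}^{(j)}=b$. Hence the left side is $\sum_{j=1}^L(-1)^{j-1}b = b$ (since $L$ is odd, the alternating sum of $L$ copies of $b$ is $b$). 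So we get $b\ge\textbf{w}\cdot\sgn(\textbf{w})$, contradicting the defining constraint $\textbf{w}\cdot\sgn(\textbf{w})>b$. The case $\textbf{z}=-\sgn(\textbf{w})$ is identical with all inequalities reversed: it forces $\sigma^{(1)}_i-\cdots+\sigma^{(L)}_i\le -1$ for all $i$, hence $b\le -\textbf{w}\cdot\sgn(\textbf{w})$, contradicting $-\textbf{w}\cdot\sgn(\textbf{w})<b$.

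I don't anticipate a serious obstacle here — the proof is a one-line linear-algebra averaging argument once the substitution $x^{(j)}_i=\sgn(w_i)\sigma^{(j)}_i$ is in place. The one thing to be careful about is bookkeeping the two cases $\textbf{z}=\pm\sgn(\textbf{w})$ separately (rather than deriving one from the other), because the two strict inequalities $\textbf{w}\cdot\sgn(\textbf{w})>b$ and $-\textbf{w}\cdot\sgn(\textbf{w})<b$ in the definition of $\Gamma_{AT}$ are exactly what rules out each case respectively, and conflating them via foldedness shifts the hyperplane constant from $b$ to $-b$. I would write the argument in the symmetric form above so both cases fall out of the same display.

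\begin{proof}
Let $L\ge 1$ be odd and let $(P_{\textbf{w},b},Q_{\textbf{w},b})\in\Gamma_{AT}$, so $\textbf{w}\in(\Q\setminus\{0\})^k$, $b\in\Q$, and $\textbf{w}\cdot\sgn(\textbf{w})>b>-\textbf{w}\cdot\sgn(\textbf{w})$. First, $\AT_L$ is folded: for $\textbf{x}\in\{-1,+1\}^L$ we have $\AT_L(-\textbf{x})=+1$ iff $-(x_1-x_2+\cdots+x_L)>0$ iff $x_1-x_2+\cdots+x_L<0$ iff $\AT_L(\textbf{x})=-1$, so $\AT_L(-\textbf{x})=-\AT_L(\textbf{x})$.

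Now take tuples $\textbf{x}^{(1)},\dots,\textbf{x}^{(L)}\in P_{\textbf{w},b}$ and set $\textbf{z}=\AT_L(\textbf{x}^{(1)},\dots,\textbf{x}^{(L)})$, i.e. $z_i=\AT_L(x^{(1)}_i,\dots,x^{(L)}_i)$ for each $i\in[k]$. We must show $\textbf{z}\notin\{\sgn(\textbf{w}),-\sgn(\textbf{w})\}$. For each $i\in[k]$ write $x^{(j)}_i=\sgn(w_i)\,\sigma^{(j)}_i$ with $\sigma^{(j)}_i\in\{-1,+1\}$; since $\sgn(w_i)\in\{-1,+1\}$, the sign of $\sum_{j=1}^L(-1)^{j-1}x^{(j)}_i$ is $\sgn(w_i)$ times the sign of $\sum_{j=1}^L(-1)^{j-1}\sigma^{(j)}_i$, hence $z_i=\sgn(w_i)\cdot\AT_L(\sigma^{(1)}_i,\dots,\sigma^{(L)}_i)$. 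Also note $|w_i|\,\sigma^{(j)}_i=w_i\,\sgn(w_i)\,\sigma^{(j)}_i=w_i\,x^{(j)}_i$, so summing over $i$,
\begin{equation}
\label{eq:at-key}
\sum_{i=1}^k |w_i|\,\sigma^{(j)}_i=\textbf{w}\cdot\textbf{x}^{(j)}=b\qquad\text{for each }j\in[L].
\end{equation}

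Suppose first $\textbf{z}=\sgn(\textbf{w})$. Then $\AT_L(\sigma^{(1)}_i,\dots,\sigma^{(L)}_i)=+1$ for all $i$, i.e. $\sum_{j=1}^L(-1)^{j-1}\sigma^{(j)}_i\ge 1$ for all $i$ (the sum is an odd integer, being a sum of $L$ terms each $\pm1$). Multiplying by $|w_i|\ge 0$ and summing over $i$, then using \eqref{eq:at-key} and that $L$ is odd,
\[
\textbf{w}\cdot\sgn(\textbf{w})=\sum_{i=1}^k|w_i|\le\sum_{i=1}^k|w_i|\sum_{j=1}^L(-1)^{j-1}\sigma^{(j)}_i=\sum_{j=1}^L(-1)^{j-1}\sum_{i=1}^k|w_i|\sigma^{(j)}_i=\sum_{j=1}^L(-1)^{j-1}b=b,
\]
contradicting $\textbf{w}\cdot\sgn(\textbf{w})>b$. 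Symmetrically, if $\textbf{z}=-\sgn(\textbf{w})$ then $\sum_{j=1}^L(-1)^{j-1}\sigma^{(j)}_i\le -1$ for all $i$, and the same computation yields $-\textbf{w}\cdot\sgn(\textbf{w})\ge b$, contradicting $-\textbf{w}\cdot\sgn(\textbf{w})<b$. Hence $\textbf{z}\in Q_{\textbf{w},b}$, so $\AT_L\in\Pol(\Gamma_{AT})$ for every odd $L$, i.e. $\AT\subseteq\Pol(\Gamma_{AT})$.
\end{proof}
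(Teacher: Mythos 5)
Your proof is correct and is essentially the paper's argument: both rest on the observation that $\textbf{w}\cdot(\textbf{x}^{(1)}-\textbf{x}^{(2)}+\cdots+\textbf{x}^{(L)})=b$ by linearity, while a coordinate-wise bound (each entry of the alternating sum has absolute value at least $1$) forces this quantity to be at least $\textbf{w}\cdot\sgn(\textbf{w})$ or at most $-\textbf{w}\cdot\sgn(\textbf{w})$ if the output equaled $\pm\sgn(\textbf{w})$, contradicting the two strict inequalities in the definition of $\Gamma_{\AT}$. Your substitution $x^{(j)}_i=\sgn(w_i)\sigma^{(j)}_i$ is just a notational repackaging of the same computation, so no further comparison is needed.
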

\begin{proof}
Fix $b \in \Q$ and {$\textbf{w} \in \Q^k\setminus \textbf{0}$}. Let $(P_{{\bw},b}, Q_{{\bw},b})$ be the corresponding predicate for these values. It suffices to show that $\AT \subseteq \Pol(P_{{\bw},b},Q_{{\bw},b})$. Fix an odd arity $L$ and pick $\textbf{x}_1, \hdots, \textbf{x}_L \in P_{{\bw},b}$.  Observe that
\[
    \AT(\textbf{x}_1, \hdots, \textbf{x}_L) = \sgn(\textbf{x}_1 - \textbf{x}_2 + \cdots + \textbf{x}_L).
\]
Further, ${ \langle \textbf{w} , (\textbf{x}_1 - \textbf{x}_2 + \cdots + \textbf{x}_L) \rangle }= b$. 
This implies that $\sgn(\textbf{x}_1 - \textbf{x}_2 + \cdots + \textbf{x}_L) \neq \sgn(\textbf{w})$ as otherwise,
\[
    b = { \langle \textbf{w} , (\textbf{x}_1 - \textbf{x}_2 + \cdots + \textbf{x}_L) \rangle  \ge \langle \textbf{w} , \sgn(\textbf{w}) \rangle} > b.
\]
where we used the fact that the absolute value of each entry in $\textbf{x}_1-\textbf{x}_2+\ldots +\textbf{x}_L$ is at least $1$.
By a similar argument, $\sgn(\textbf{x}_1 - \textbf{x}_2 + \cdots + \textbf{x}_L) \neq - \sgn(\textbf{w})$. Thus, $\AT(\textbf{x}_1, \hdots, \textbf{x}_L) \in Q_{{\bw},b}$, as desired.
\end{proof}

\newcommand{\const}{\operatorname{const}}

Let $\Gamma_{\const}$ be the {Boolean promise template} $\{(\{-1\},\{-1\}),(\{+1\},\{+1\})\}$.
{Since $\AT$ is an idempotent family of polymorphisms, we have that $AT \subseteq \Pol(\Gamma_{\const})$.}
We show that an arbitrary Boolean {promise template} with $\AT$ polymorphisms can be reduced to {a finite fragment of} ${\Gamma_{\AT}\cup \Gamma_{const}}.$

\begin{lemma}\label{thm:at-general}
Let $\Gamma$ be any Boolean {promise template} with $\Gamma=\{(P_1,Q_1),\ldots,(P_l,Q_l)\}, P_i \subseteq Q_i \subsetneq {\B}^{k_i}$ with $\AT\subseteq  \Pol(\Gamma)$.
 Then, $\Gamma$ is ppp-definable from a {finite} Boolean {promise template} $\Gamma'$ with $\Gamma' \subseteq \Gamma_{\AT}\cup \Gamma_{const}$.
\end{lemma}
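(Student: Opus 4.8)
The plan is to show that each predicate pair $(P_i, Q_i)$ of $\Gamma$ is individually ppp-definable from $\Gamma_{\AT} \cup \Gamma_{const}$; combining these definitions over all $i \in [l]$ then gives the lemma. So fix a single pair $(P, Q)$ with $P \subseteq Q \subsetneq \{-1,+1\}^k$ and $\AT \subseteq \Pol(P,Q)$. The starting point is to understand the structure forced by the $\AT$ polymorphisms: since $\AT_L(\textbf{x}_1,\dots,\textbf{x}_L) = \sgn(\textbf{x}_1 - \textbf{x}_2 + \cdots + \textbf{x}_L)$ for $\textbf{x}_i \in P$, closure under $\AT$ of all odd arities says that the set $O_{\AT}(P) = \bigcup_{L} \AT_L(P)$ must be contained in $Q$. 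The key structural fact I would extract — in the spirit of the Majority analysis via \Cref{lem:separating-hyperplane} — is a description of the \emph{affine hull} of $P$: $\AT$-closure should force that for every point $\textbf{z}$ in (a suitable integer/sign part of) the affine span of $P$, $\sgn(\textbf{z}) \in Q$. Concretely, by a duality/LP argument analogous to the one proving \Cref{lem:separating-hyperplane}, one shows that whenever a sign pattern $\textbf{s} \in \{-1,+1\}^k$ is \emph{not} in $Q$, the affine hull $\Aff(P)$ must be separated from the orthant-like region corresponding to $\textbf{s}$; quantitatively, for each forbidden $\textbf{s} \notin Q$ there is an affine functional $\textbf{w}_{\textbf{s}} \cdot \textbf{x} = b_{\textbf{s}}$ that is constant on $P$ and satisfies $\textbf{w}_{\textbf{s}} \cdot \textbf{s} > b_{\textbf{s}}$ and $-\textbf{w}_{\textbf{s}}\cdot\textbf{s} < b_{\textbf{s}}$ (and, after a small perturbation, has all coordinates nonzero so that it defines a legitimate member of $\Gamma_{\AT}$). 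This is exactly the condition defining $P_{\textbf{w}_{\textbf{s}}, b_{\textbf{s}}}$, and $Q_{\textbf{w}_{\textbf{s}}, b_{\textbf{s}}} = \{-1,+1\}^k \setminus \{\textbf{s}, -\textbf{s}\}$.

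Granting this, the ppp-definition of $(P,Q)$ is built as follows. The instance $\Phi$ uses the $k$ variables $u_1,\dots,u_k$ (no auxiliary variables are needed if $P$ is already cut out by affine equations with nonzero coefficients; otherwise a constant number of auxiliary variables handle coordinates that are fixed on $P$, using $\Gamma_{const}$). For every $\textbf{s} \in \{-1,+1\}^k \setminus Q$, include the constraint $(P_{\textbf{w}_{\textbf{s}}, b_{\textbf{s}}}, Q_{\textbf{w}_{\textbf{s}}, b_{\textbf{s}}})$ on the tuple $(u_1,\dots,u_k)$; also, for any coordinate $i$ on which every element of $P$ agrees on a fixed value $c_i$, add the constant constraint forcing $u_i = c_i$ via $\Gamma_{const}$. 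For completeness: if $(x_1,\dots,x_k) \in P$ then each $\textbf{w}_{\textbf{s}}\cdot \textbf{x} = b_{\textbf{s}}$ holds by construction (the functional is constant on $P$), so all weighted-hyperplane constraints are strongly satisfied, and the constant constraints hold trivially. For soundness: if $(z_1,\dots,z_k)$ weakly satisfies all constraints, then for every $\textbf{s} \notin Q$ we have $(z_1,\dots,z_k) \neq \textbf{s}$ and $\neq -\textbf{s}$; folding of the PCSP lets us dispose of the $-\textbf{s}$ case, so $(z_1,\dots,z_k) \notin \{-1,+1\}^k \setminus Q$, i.e. it lies in $Q$. Since $\Pol(\Gamma') \supseteq \Pol(\Gamma_{\AT} \cup \Gamma_{const}) \supseteq \AT$-stuff, $\Gamma' \subseteq \Gamma_{\AT} \cup \Gamma_{const}$ as required.

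The main obstacle I anticipate is the structural claim about affine hulls — specifically, proving that $\AT$-closure of all odd arities forces the separating-affine-functional property for every $\textbf{s} \notin Q$, and doing so while keeping the functional's coefficients nonzero (so it genuinely lies in $\Gamma_{\AT}$ as defined). The nonzero-coefficient requirement is a real nuisance: a generic affine functional constant on $P$ may have zero coordinates, and one must argue that a small rational perturbation within $\Aff(P)$'s normal space (or a reduction of those coordinates to constants) preserves both the "constant on $P$" property and the strict inequalities $\textbf{w}\cdot\sgn(\textbf{w}) > b$, $-\textbf{w}\cdot\sgn(\textbf{w}) < b$. The cleanest route is probably: first peel off coordinates fixed on $P$ into $\Gamma_{const}$ constraints, restrict attention to the remaining coordinates where $P$ genuinely varies, run the LP-duality argument there (mirroring \Cref{lem:separating-hyperplane}, but now the "bad" object is $\AT$ applied to a multiset of elements of $P$ with signed multiplicities, evaluated via $\sgn(\sum \pm \textbf{x}_i)$), extract a rational functional, and perturb it generically to kill remaining zero coefficients. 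A secondary subtlety is bounding the number of auxiliary variables by a constant independent of the instance — this follows since everything depends only on the fixed template $(P,Q)$, so $l$ and the arities are all $O_\Gamma(1)$.
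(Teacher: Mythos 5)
Your high-level route is the same as the paper's: treat each pair $(P,Q)$ separately, absorb coordinates on which $P$ is constant into $\Gamma_{const}$, use the characterization $O_{\AT}(P)=\{\operatorname{sgn}(\textbf{x}-\textbf{y}) : \textbf{x},\textbf{y}\in\operatorname{Aff}(P),\ x_i\neq y_i\ \forall i\}$ (\Cref{lem:AT}), and then, for each sign pattern $\textbf{s}\notin Q$, place one weighted-hyperplane pair from $\Gamma_{\AT}$ on the full tuple $(u_1,\dots,u_k)$ whose strong predicate contains $P$ and whose weak predicate omits $\pm\textbf{s}$; intersecting the weak predicates over all excluded $\textbf{s}$ then lands inside $Q$ (after the harmless replacement of $Q$ by $O_{\AT}(P)$ this needs no appeal to folding: omitting $-\textbf{s}$ as well only shrinks the intersection). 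Where the paper differs is that it proves the key existence statement with no perturbation at all: after negating coordinates so that $\textbf{s}=(+1,\dots,+1)$, it separates the direction space $H$ of $\operatorname{Aff}(P)$ from the open positive orthant (\Cref{claim:hyperplane}) to produce a weight vector with strictly positive entries that is constant on $P$, whereas you leave this step as an ``anticipated obstacle'' to be handled by LP duality plus a generic perturbation.

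That step is a genuine gap in your write-up, and the perturbation plan cannot close it. The functionals constant on $P$ form the fixed subspace $H^{\perp}$ (with $b$ determined by any point of $P$), so whether some member of $H^{\perp}$ has the full strict sign pattern $\pm\textbf{s}$ is a global property of $H^{\perp}$; if none does, no perturbation creates one. Moreover the hypothesis you actually have, namely that $H$ contains no vector with sign pattern $\textbf{s}$ (after negation: no strictly positive vector), does not imply that $H^{\perp}$ contains a strictly positive vector; the relevant theorem of the alternative (Stiemke) needs the stronger hypothesis that $H$ contains no nonzero \emph{nonnegative} vector. Concretely, take $P=\{\textbf{x}\in\{-1,+1\}^4 : x_1=x_2,\ x_3=-x_4\}$, which depends nontrivially on every coordinate and satisfies $O_{\AT}(P)=P$, and take $\textbf{s}=(+1,+1,+1,+1)\notin Q$. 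Every $\textbf{w}$ constant on $P$ lies in $\mathrm{span}\{(1,-1,0,0),(0,0,1,1)\}$, which has no element whose coordinates are all nonzero and of one common sign; hence no arity-$4$ member of $\Gamma_{\AT}$ whose strong predicate contains $P$ excludes $\textbf{s}$, and your per-pattern construction cannot be completed for this $P$ by any choice or perturbation of $\textbf{w}_{\textbf{s}}$. (Also note that the conditions you list, $\textbf{w}_{\textbf{s}}\cdot\textbf{s}>b_{\textbf{s}}$ and $-\textbf{w}_{\textbf{s}}\cdot\textbf{s}<b_{\textbf{s}}$ with nonzero entries, do not by themselves make $Q_{\textbf{w}_{\textbf{s}},b_{\textbf{s}}}$ exclude $\textbf{s}$; that requires $\operatorname{sgn}(\textbf{w}_{\textbf{s}})=\pm\textbf{s}$, which is exactly what can fail.) The lemma itself survives this example: the arity-$2$ pair $(P_{(1,1),0},Q_{(1,1),0})\in\Gamma_{\AT}$ placed on the sub-tuple $(u_3,u_4)$ excludes $\textbf{s}$, so the repair is to allow hyperplane constraints on proper sub-tuples, e.g.\ after first extracting forced equalities and disequalities among coordinates. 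For what it is worth, \Cref{claim:hyperplane} as stated needs the same stronger ``no nonzero nonnegative vector in $H$'' hypothesis, so you identified precisely the right crux, but neither your perturbation sketch nor the bare separation argument settles it for all $P$.
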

We prove~\Cref{thm:at-general} in~\Cref{subsec:at-general}. Recall that if a {promise template} $\Gamma$ is ppp-definable from another {promise template} $\Gamma'$ {and} if ${\fiPCSP(\Gamma')}$ has a polynomial time robust algorithm, then ${\fiPCSP(\Gamma)}$ has a polynomial time robust algorithm as well (up to losing constant factors in the error parameter).
In the rest of this section, {we prove that for any finite} Boolean {promise template} $\Gamma$ with $\Gamma \subseteq \Gamma_{\AT}$, {$\fiPCSP(\Gamma)$ has a robust algorithm}, thereby obtaining a robust algorithm for ${\fiPCSP(\Gamma)}$ {for} every Boolean {promise template} $\Gamma$ with $\AT \subseteq\Pol(\Gamma)$. 
We state our algorithm below. \smallskip 

\noindent\fbox{%
    \parbox{\textwidth}{%
    \begin{enumerate}
        \item Given an instance $\Phi$ of ${\fiPCSP(\Gamma)}$ containing $n$ variables $V=\{u_1,u_2,\ldots,u_n\}$,
        solve the basic SDP and obtain a set of vectors $\textbf{v}_0, \textbf{v}_1, \ldots, \textbf{v}_n$ {and objective value $\epsilon > 0$}.  \item {Sample an $n$ dimensional Gaussian $\zeta \sim \mathcal N(\textbf{0}, \textbf{I})$.}
        \item Choose $\delta$ uniformly at random from $\{ p, r p, \ldots, r^{\kappa} p\}$ where $p =\epsilon^{c_1}$, and $r = \kappa = \Theta\left(\frac{\log \frac{1}{\epsilon}}{\log \log \frac{1}{\epsilon}}\right)$ such that $r^{\kappa}p=\epsilon^{c_2}$. Here, $c_1$ and $c_2$ are two arbitrary real constants satisfying $0<c_2<c_1<0.25$.
        \item For every $i \in [n]$, let $\textbf{v}_i = \alpha_i \textbf{v}_0 + \textbf{v}'_i$, where $\textbf{v}'_i$ is orthogonal to $\textbf{v}_0$. We output an assignment $\sigma$ as follows.
        \[
        \sigma(u_i) = 
        \begin{cases}
        -1, \text{ if  }\langle \zeta, \textbf{v}'_i\rangle \ge \delta\alpha_i \left| \langle \zeta,  \textbf{v}_0\rangle \right|. \\ 
        +1, \text{ otherwise.}
        \end{cases}
        \]
    \end{enumerate}
            }%
}

\smallskip 

We now analyze the algorithm.
\begin{theorem}
\label{thm:at}
Let $\Gamma$ be a {finite} Boolean {promise template} such that $\Gamma \subseteq \Gamma_{\AT}\subseteq\Gamma_{const}$. Let $\Phi$ be an instance of ${\fiPCSP}(\Gamma)$ for which there is a basic SDP solution with average error at most $\epsilon {> 0}$. Then, the assignment output by the above algorithm satisfies at least {$1-O_{\Gamma}\left(\frac{\log \log \frac{1}{\epsilon}}{\log \frac{1}{\epsilon}}\right)$} fraction of constraints of $\Phi$ in expectation. 
\end{theorem}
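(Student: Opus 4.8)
The plan is to combine the Gaussian anti-concentration bookkeeping from the $\MAJ$ analysis (\Cref{lem:maj}) with the geometric random-threshold trick from the Dual-Horn~SAT algorithm (\Cref{thm:or}), now carried out for the weighted-hyperplane templates of $\Gamma_{\AT}$. As usual it suffices to analyze one constraint: since the average SDP error is $\epsilon$, Markov's inequality leaves a $1-\sqrt\epsilon$ fraction of constraints whose error $c$ is at most $\sqrt\epsilon$, and if each such constraint is weakly satisfied with probability $1-O_\Gamma\!\big(\tfrac{\log\log(1/\epsilon)}{\log(1/\epsilon)}\big)$ then linearity of expectation finishes the proof. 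A constraint coming from $\Gamma_{const}$ is trivial: the SDP is forced to set $\alpha_i=\pm1$ and $\textbf{v}'_i=\textbf{0}$ for the relevant variable, and the rounding rule then outputs exactly that constant. So fix a constraint using a template $(P_{\textbf{w},b},Q_{\textbf{w},b})\in\Gamma_{\AT}$ with SDP vectors $\textbf{v}_1,\dots,\textbf{v}_k$; write $\textbf{v}_i=\alpha_i\textbf{v}_0+\textbf{v}'_i$ with $\textbf{v}'_i\perp\textbf{v}_0$, and set $h=\langle\zeta,\textbf{v}_0\rangle$, $g_i=\langle\zeta,\textbf{v}'_i\rangle$; since $\textbf{v}'_i\perp\textbf{v}_0$, the $g_i$ are jointly Gaussian and independent of $h$.

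From the first- and second-moment SDP identities, exactly as in the proof of \Cref{lem:maj} via \Cref{lem:separating-hyperplane} (here $\textbf{w}$ itself does the separating), the fact that the local distribution puts mass $1-c$ on assignments $f$ with $\textbf{w}\cdot f=b$ yields, writing $\beta:=\sum_i w_i\alpha_i$, both $|\beta-b|\le O_\Gamma(c)$ and $\norm{\sum_i w_i\textbf{v}'_i}\le O_\Gamma(\sqrt c)\le O_\Gamma(\epsilon^{1/4})$. Because $\Gamma$ is a finite template only finitely many pairs $(\textbf{w},b)$ occur, so there are constants $w_{\min}:=\min_i|w_i|>0$ and $\Delta_\Gamma:=\min\big(\norm{\textbf{w}}_1-|b|\big)>0$; in particular $|\beta|\le\norm{\textbf{w}}_1-\Delta_\Gamma/2$ once $\epsilon$ is small. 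The rounding rule reads $\sigma(u_i)=-\sgn\!\big(g_i-\delta\alpha_i|h|\big)$, and $Q_{\textbf{w},b}$ fails on this constraint precisely when the $k$ real numbers $w_i\big(g_i-\delta\alpha_i|h|\big)$ all share one sign (``not-all-equal with respect to $\sgn(\textbf{w})$''). If they share a sign, each is bounded in absolute value by the absolute value of their sum $\big|\langle\zeta,\sum_i w_i\textbf{v}'_i\rangle-\delta|h|\beta\big|$, whose first term is $O_\Gamma(\epsilon^{1/4}\sqrt{\log(1/\epsilon)})$ except with probability $\sqrt\epsilon$ by \Cref{prop:gaussian-concentration}.

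The core is a dichotomy governed by the random scale $\delta$, uniform on a geometric grid of $\kappa=\Theta\!\big(\tfrac{\log(1/\epsilon)}{\log\log(1/\epsilon)}\big)$ values between $\epsilon^{c_1}$ and $\epsilon^{c_2}$ ($0<c_2<c_1<1/4$). If the $w_i(g_i-\delta\alpha_i|h|)$ all share a sign, then by the previous paragraph $|g_i|=O_\Gamma\!\big(\delta|h|+\epsilon^{1/4}\sqrt{\log(1/\epsilon)}\big)$ for every $i$. On the other hand, a direct sign-check shows that whenever $|g_i|<\delta|\alpha_i||h|$ one has $\sigma(u_i)=\sgn(\alpha_i)$; so if that holds for all $i$ the induced assignment equals $(\sgn(\alpha_1),\dots,\sgn(\alpha_k))$, which cannot be $\pm\sgn(\textbf{w})$, since otherwise $\beta=\sum_i w_i\alpha_i$ would lie within $O_\Gamma(\sqrt\epsilon)$ of $\pm\norm{\textbf{w}}_1$, contradicting $|\beta|\le\norm{\textbf{w}}_1-\Delta_\Gamma/2$. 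Hence for each coordinate $i$ there is only a multiplicatively $O_\Gamma(1)$-wide window of scales $\delta$ on which neither ``$|g_i|$ dominates $\delta|h|$'' nor ``$\delta|\alpha_i||h|$ dominates $|g_i|$'' holds; since the grid is geometric with $\kappa$ steps, a uniformly random $\delta$ avoids all $k$ of these windows except with probability $O_\Gamma(k/\kappa)=O_\Gamma(1/\kappa)$. Adding the $O_\Gamma(\sqrt\epsilon)$ failure probability of the Gaussian tail/anti-concentration events (\Cref{prop:gaussian-concentration,prop:gaussian-anticoncentration}) and summing over constraints yields the stated bound.

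The step I expect to be the real obstacle is making this dichotomy airtight for coordinates $i$ with $|\alpha_i|$ very close to $1$ (but not equal): there $\textbf{v}'_i$ has tiny but nonzero norm, so $g_i$ is itself tiny and can be comparable to the small threshold $\delta|\alpha_i||h|$ even when $|h|$ is small, which is exactly where the ``bad window'' for $\delta$ could in principle fail to be $O_\Gamma(1)$-wide or could straddle the endpoints $\epsilon^{c_1},\epsilon^{c_2}$. Handling this cleanly requires partitioning the coordinates into near-integral ones (treated as in the $\Gamma_{const}$/exact case, where $\sgn(\alpha_i)$ is already the forced value) and genuinely fractional ones (where $\norm{\textbf{v}'_i}$ is bounded below so the anti-concentration of $g_i$ is effective), and carefully choosing $c_1,c_2$ and tracking the joint behaviour of the small quantities $|h|$, $\delta$, and $\sqrt{1-\alpha_i^2}$. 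Everything else is the same anti-concentration bookkeeping as in \Cref{lem:maj}.
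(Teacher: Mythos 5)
Your proposal follows the same route as the paper: Markov reduction to per-constraint error $c\le\sqrt{\epsilon}$, the first/second-moment identities giving $\beta=\sum_i w_i\alpha_i=b+O_\Gamma(\sqrt{c})$ and $\norm{\sum_i w_i\textbf{v}'_i}=O_\Gamma(\epsilon^{1/4})$, and the geometric random choice of $\delta$ to enforce a dichotomy; your contrapositive observation that a violation forces every $|g_i|$ to be $O_\Gamma(\delta|h|+\epsilon^{1/4}\sqrt{\log(1/\epsilon)})$ is a clean repackaging of the paper's ``Case 1''. However, two steps are wrong or missing as written. First, the $\Gamma_{const}$ constraints are not trivial: after Markov they can still carry SDP error up to $\sqrt{\epsilon}$, so the SDP is not ``forced to set $\alpha_i=\pm1$ and $\textbf{v}'_i=\mathbf{0}$''; one only gets $\alpha_i\ge 1-O(\sqrt{\epsilon})$ and $\norm{\textbf{v}'_i}=O(\epsilon^{1/4})$, and one must still argue that with probability $1-O(1/r)$ one has $|g_i|=O(\epsilon^{1/4}r)<\delta\alpha_i|h|$ so the constant is output — this is exactly what the paper's \Cref{lem:equal-case} does (and it also covers $P=Q\subseteq\Ham_k\{0,k\}$).

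Second, and more seriously, your ``lower branch'' is a non sequitur as stated: from ``$|g_i|<\delta|\alpha_i||h|$ for all $i$'' you correctly get $\sigma=(\operatorname{sgn}\alpha_1,\dots,\operatorname{sgn}\alpha_k)$, but the claim that this pattern cannot be $\pm\operatorname{sgn}(\textbf{w})$ ``since otherwise $\beta$ would lie within $O(\sqrt{\epsilon})$ of $\pm\norm{\textbf{w}}_1$'' requires every $|\alpha_i|$ to be close to $1$, which the hypothesis does not give — a coordinate with $|\alpha_i|=1/2$ (constant $\norm{\textbf{v}'_i}$) can sit in the lower branch while contributing $w_i\alpha_i$ far from $|w_i|$. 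The paper supplies exactly this missing ingredient by running the dichotomy in terms of the deterministic norms $\norm{\textbf{v}'_i}$ versus $\delta$: in its Case 2 all $\norm{\textbf{v}'_i}\le\delta/(2r^2)$, hence all $|\alpha_i|\ge 1-\delta^2$ and the contradiction with $\norm{\textbf{w}}_1>|b|$ goes through, while in Case 1 some $\norm{\textbf{v}'_i}\ge k\delta r^2$ and the per-coordinate bad window $\norm{\textbf{v}'_i}/(kr^2)<\delta<2r^2\norm{\textbf{v}'_i}$ is fixed independently of $\zeta$ and covers $O(1)$ grid points. Your randomized windows in $|g_i|,|h|$ can in principle be repaired (for $\norm{\textbf{v}'_i}$ of constant order the upper branch holds w.h.p.\ for every grid value of $\delta$), but that step is not in your write-up, and the window width is really $r^{O(1)}/|\alpha_i|$ rather than $O_\Gamma(1)$ — harmless when $|\alpha_i|$ is bounded below since the grid ratio is $r$, but unbounded as $|\alpha_i|\to 0$, which is the problematic end; the $|\alpha_i|\approx 1$ end you flag is comparatively benign. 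So the genuine gap is the missing link between the lower branch and near-integrality of \emph{all} the $\alpha_i$, which the paper's norm-based case split provides directly.
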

{~\Cref{thm:at} together with ~\Cref{thm:at-general} implies a polynomial time robust algorithm for Boolean promise templates with AT polymorphisms. Consider an instance $\Phi$ of $\fiPCSP(\Gamma)$ with $\Gamma \subseteq \Gamma_{AT} \cup \Gamma_{const}$ with the property that there is an assignment that strongly satisfies $(1-\epsilon)$ fraction of the constraints. Without loss of generality, we can assume that $\epsilon \geq \frac{1}{m}$, where $m$ is the number of constraints in $\Phi$. By setting $\delta = \sqrt{\epsilon}-\epsilon$ in ~\Cref{thm:sdp}, we can obtain a solution to basic SDP relaxation of $\Phi$ with error at most $\sqrt{\epsilon}m$. ~\Cref{thm:at} then implies that the output of the above algorithm satisfies $1-O_{\Gamma}\left(\frac{\log \log \frac{1}{\epsilon}}{\log \frac{1}{\epsilon}}\right)$ fraction of the constraints of $\Phi$ in expectation. Together with~\Cref{lem:maj}, this completes} the proof of~\Cref{thm:main-algorithm}.

For ease of notation, we just use $O()$ instead of $O_\Gamma()$ when $\Gamma$ is clear from the context. As before, we prove~\Cref{thm:at} by proving a lower bound on the probability that a particular constraint is satisfied. Consider the constraint $C {= (x_1, \hdots, x_k)}$ using the predicate pair $(P,Q)$ where $P \subseteq Q \subseteq {\B}^k$, and let $c$ denote the error of the SDP solution on constraint ${C_j}$. 
As the average value of $c$ over all the constraints is at most $\epsilon$, using Markov's inequality, at least $1-\sqrt{\epsilon}$ fraction of the constraints have SDP error at most $\sqrt{\epsilon}$. 
We restrict our analysis to these constraints with SDP error $c \leq \sqrt{\epsilon}$ and show that the rounded solution violates the predicate $Q$ with probability at most $O\left(\frac{\log \log \frac{1}{\epsilon}}{\log \frac{1}{\epsilon}}\right)$, thereby proving~\Cref{thm:at}. {Finally, we note that the rounding algorithm preserves the properties of idempotence and folding: specifically, $\pm \textbf{v}_0$ are rounded to $\pm 1$, respectively. Moreover, for any vector $\textbf{v}$, the algorithm ensures that $\textbf{v}$ and $-\textbf{v}$ are rounded to distinct values.}

{We break the analysis into cases based on the predicate pair $(P,Q)$.} We first consider the case when $P=Q$ and $P \subseteq \Ham_k \{0,k\}$, which is a generalization of $\Gamma_{const}$.

\begin{lemma}
\label{lem:equal-case}
Let {$\Gamma = (P, Q)$ } be a {finite} Boolean {promise template} with $\AT \subseteq \Pol(\Gamma)$ and {$P=Q$, $P\subseteq \Ham_k \{0,k\}$ for an integer $k \geq 1$}.
Let $\Phi$ be an instance of ${\fiPCSP}(\Gamma)$, and in the basic SDP solution of $\Phi$, suppose that the constraint ${C_j}$ using the predicate pair $(P,Q)$ has error at most $\sqrt{\epsilon}$. Then, the assignment $\sigma$ output by the above algorithm does not weakly satisfy ${C_j}$ with probability at most $O\left(\frac{\log \log \frac{1}{\epsilon}}{\log \frac{1}{\epsilon}} \right)$.
\end{lemma}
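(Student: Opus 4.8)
The plan is to exploit the very rigid structure of the predicate $P = Q \subseteq \Ham_k\{0,k\}$: every satisfying assignment is either all $+1$ or all $-1$. First I would translate this into the SDP. Since the local distribution $\lambda$ is supported (up to error $c$) on $\{(+1,\ldots,+1),(-1,\ldots,-1)\}$, the second moments force $\textbf{v}(x)\cdot\textbf{v}(x') \geq 1 - O(c)$ for every pair of literals $x,x'$ in the constraint $C$. Writing $\textbf{v}_i = \alpha_i \textbf{v}_0 + \textbf{v}'_i$ with $\textbf{v}'_i \perp \textbf{v}_0$, this near-alignment means all the vectors $\textbf{v}(x_1),\ldots,\textbf{v}(x_k)$ are within $\ell_2$-distance $O(\sqrt c) = O(\epsilon^{1/4})$ of each other; in particular $\|\textbf{v}'_p - \textbf{v}'_q\|_2 \le O(\epsilon^{1/4})$ and $|\alpha_p - \alpha_q| \le O(\epsilon^{1/4})$ for all $p,q$ (here I am using the literals after folding, so negations have been absorbed into the vectors). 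So effectively the rounding is being run on $k$ nearly-identical vectors.

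Next I would analyze when the rounding $\sigma$ can fail on $C$. The assignment $\sigma(u_i)$ is determined by the sign of $\langle \zeta,\textbf{v}'_i\rangle - \delta\alpha_i|\langle\zeta,\textbf{v}_0\rangle|$. The constraint $C$ is violated only if this expression has different signs for two of the literals, say indices $p$ and $q$ — because if all $k$ values agree in sign, $\sigma$ assigns all literals the same value, which lands in $P=Q$. Using the bounds from the previous paragraph, the difference between the two decision quantities for $p$ and $q$ is
\[
\bigl(\langle\zeta,\textbf{v}'_p\rangle - \langle\zeta,\textbf{v}'_q\rangle\bigr) - \delta(\alpha_p - \alpha_q)|\langle\zeta,\textbf{v}_0\rangle|,
\]
whose first term is a Gaussian of standard deviation $\|\textbf{v}'_p - \textbf{v}'_q\|_2 \le O(\epsilon^{1/4})$ and whose second term is, with overwhelming probability, at most $\delta\cdot O(\epsilon^{1/4})\cdot O(\sqrt{\log(1/\epsilon)})$. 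For $\sigma$ to disagree on $p$ and $q$, the larger of the two decision quantities must lie in an interval of length at most the magnitude of this difference, i.e. $O(\epsilon^{1/4}\sqrt{\log(1/\epsilon)})$ around $0$ for \emph{one} of the two Gaussians $\langle\zeta,\textbf{v}'_p\rangle - \delta\alpha_p|\langle\zeta,\textbf{v}_0\rangle|$. Conditioned on $\langle\zeta,\textbf{v}_0\rangle$, this is a Gaussian of variance $\|\textbf{v}'_p\|_2^2 \le 1$ shifted by a constant, so by Gaussian anti-concentration (Proposition~\ref{prop:gaussian-anticoncentration}) the probability is at most $O(\epsilon^{1/4}\sqrt{\log(1/\epsilon)})$, which is far smaller than the claimed $O(\log\log(1/\epsilon)/\log(1/\epsilon))$.

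There is, however, a subtlety which I expect to be the main obstacle: when $\|\textbf{v}'_p\|_2$ (equivalently $|\alpha_p|$ near $1$) is small, the Gaussian $\langle\zeta,\textbf{v}'_p\rangle$ is tightly concentrated near $0$, so its sign is essentially decided by the sign of $-\delta\alpha_p|\langle\zeta,\textbf{v}_0\rangle|$, i.e. by $-\sgn(\alpha_p)$; this is exactly the regime where the $\delta$-term matters and the per-coordinate anti-concentration bound above is too weak (the interval we need is not around $0$ but around $\delta\alpha_p|\langle\zeta,\textbf{v}_0\rangle|$, which can itself be comparable to the fluctuations). This is precisely why $\delta$ is chosen at random from a geometric grid $\{p, rp,\ldots,r^\kappa p\}$: I would split on whether $\max_i \|\textbf{v}'_i\|_2$ is (a) below $\delta/\mathrm{polylog}$, in which case all the small Gaussians are dominated and every literal is rounded to $-\sgn(\alpha_i)$, which is common to all $i$ since the $\alpha_i$ are within $O(\epsilon^{1/4})$ of each other and their common sign determines the output — landing in $P$; or (b) above $3\delta\sqrt{\log(1/\epsilon)}$, in which case the $\delta$-term is negligibly small compared to the Gaussian term and the argument of the previous paragraph applies cleanly; and (c) in the intermediate ``bad'' window, whose width on the logarithmic scale is $O(1)$, so a uniformly random $\delta$ on the geometric grid of $\kappa = \Theta(\log(1/\epsilon)/\log\log(1/\epsilon))$ points lands there with probability only $O(1/\kappa) = O(\log\log(1/\epsilon)/\log(1/\epsilon))$. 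Summing the failure probabilities over cases (a)–(c) gives the bound $O(\log\log(1/\epsilon)/\log(1/\epsilon))$, and combined with the $\sqrt\epsilon$ fraction of discarded high-error constraints this proves the lemma.
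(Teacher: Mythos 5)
There is a genuine gap: your reduction of ``the constraint is violated'' to ``two literals receive different rounded values'' is only valid when \emph{both} the all-$(+1)$ and the all-$(-1)$ tuples lie in $P$. The lemma's hypothesis is $P=Q\subseteq \Ham_k\{0,k\}$, which also covers the singleton cases $P=\Ham_k\{k\}$ or $P=\Ham_k\{0\}$ (in particular the arity-one constants coming from $\Gamma_{const}$, which genuinely arise after the reduction to $\Gamma_{\AT}\cup\Gamma_{const}$). For such $P$ a unanimous but \emph{wrong} rounding violates the constraint, and your disagreement analysis never bounds that event; the paper treats exactly this separately (its $k=1$/singleton argument, showing that first moments $\ge 1-2\sqrt{\epsilon}$ force the correct constant except with probability $O(1/r)$). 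Your regime (a) could be adapted to close this — for singleton $P$ the SDP forces $\norm{\textbf{v}'_i}_2\le O(\epsilon^{1/4})\ll\delta$, so one is always in the small-norm regime — but as written you claim the common value there is $-\operatorname{sgn}(\alpha_i)$, which is the wrong sign: the rule outputs $+1$ when $\langle\zeta,\textbf{v}'_i\rangle<\delta\alpha_i|\langle\zeta,\textbf{v}_0\rangle|$, so in this regime each literal is rounded to $\operatorname{sgn}(\alpha_i)$. With your sign the unanimous value would be the wrong constant and ``landing in $P$'' would fail; with the correct sign you still need to say explicitly that the common value $\operatorname{sgn}(\alpha_i)$ is the one dictated by the near-$\pm 1$ first moments, hence in $P$. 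Also, your intermediate anti-concentration step (``Gaussian of variance $\le 1$'') is not a valid use of Proposition~\ref{prop:gaussian-anticoncentration} — anti-concentration needs a \emph{lower} bound on the standard deviation — though you correctly identify and repair this via the case split, where regime (b) gives roughly $O(\epsilon^{1/4-c_1})$ after dividing by $\norm{\textbf{v}'_p}_2\gtrsim\delta$.

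For the main case $P=\Ham_k\{0,k\}$ your argument is essentially sound, but it takes a different route from the paper. The paper does a pairwise analysis over indices $i,j$, splitting on $|\alpha_i|\le 1/2$ versus $|\alpha_i|\ge 1/2$: in the first case it anti-concentrates $\langle\zeta,\textbf{v}'_j\rangle$ (which has $\Omega(1)$ standard deviation) over an interval of width $O(\delta r)$, and in the second it anti-concentrates the threshold $\delta\alpha_i|\langle\zeta,\textbf{v}_0\rangle|$ (density $O(1/\delta)$) over an interval of width $O(r\sqrt{c})$. Both bounds are $o(1/r)$ for \emph{every} fixed $\delta$ in the range $[\epsilon^{c_1},\epsilon^{c_2}]$, so the paper does not spend the randomness of $\delta$ in this lemma at all — that trick is reserved for the weighted-hyperplane predicates. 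Your version instead splits on $\max_i\norm{\textbf{v}'_i}_2$ relative to $\delta$ and pays $O(1/\kappa)=O(\log\log\frac{1}{\epsilon}/\log\frac{1}{\epsilon})$ for the intermediate window via the geometric grid; that still meets the claimed bound, so this difference is stylistic rather than an error, but the singleton-$P$ coverage and the sign of the small-norm rounding must be fixed for the proof to stand.
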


We defer the proof of~\Cref{lem:equal-case} to~\Cref{sec:missing-proofs}. 

We now consider the case when $P=P_{\textbf{w},b}$ and $Q=Q_{\textbf{w},b}$ {for some $k \in \N, b \in \Q, \textbf{w} \in \Q^k \setminus \textbf{0},\textbf{w}.\sgn(\textbf{w})>b, -\textbf{w}.\sgn(\textbf{w})<b$}.
For ease of notation, we let {$x_1, x_2, \ldots, x_k$ denote the literals assigned to ${C_j}$, with $\bv_1, \hdots, \bv_k$ the corresponding values of $\bv(x_1), \hdots, \bv(x_k).$ We let $S$ denote the set of variables in the scope of $C_j$.} We have that
\[
    {\sum_{\substack{f : S \to \B\\f(C_j) \in P}} \lambda_j(f) \ge 1 - \sqrt{\eps}}.
\]

Let $\textbf{v}_s = {\sum_{i \in [k]}\textbf{w}_i\textbf{v}_i}$, and let the component of $\textbf{v}_s$ along $\textbf{v}_0$ be $\alpha \textbf{v}_0$, and the component normal to $\textbf{v}_0$ be equal to $\textbf{v}'_s$.
\[
\textbf{v}_s = \alpha \textbf{v}_0 + \textbf{v}'_s,\langle \textbf{v}_0, \textbf{v}'_s\rangle = 0.\] 
We use the first and second moments properties of the vectors $\textbf{v}_1, \textbf{v}_2, \ldots,\textbf{v}_k$ to get the following. 

\begin{enumerate}
    \item (First moments). We have 
    \begin{align*}
            \alpha &=  {\sum_{f : S \to \B}\lambda_j(f)\langle \textbf{w}, {f(C_j)}\rangle}\\
        &= {\sum_{\substack{f : S \to \B\\f(C_j) \in P}}} {\lambda_j(f)}\langle \textbf{w}, {f(C_j)}\rangle + {\sum_{\substack{f : S \to \B\\f(C_j) \not\in P}}} {\lambda_j(f)}\langle \textbf{w}, {f(C_j)}\rangle\\ 
        &= {\sum_{\substack{f : S \to \B\\f(C_j) \in P}}} {\lambda_j(f)}b + {\sum_{\substack{f : S \to \B\\f(C_j) \not\in P}}} {\lambda_j(f)}\langle \textbf{w}, {f(C_j)}\rangle\\
        &= b + {\beta} 
        \end{align*}
    where
    \[
    {\beta} = {\sum_{\substack{f : S \to \B\\f(C_j) \not\in P}}} {\lambda_j(f)}({b} - \langle \textbf{w}, {f(C_j)}\rangle)
    \]
    We have $ {|\beta|}=O(\sqrt{\epsilon})$. 
    \item (Second moments). We have 
    $$
        \norm{\textbf{v}_s}_2^2 = 
        {\sum_{i, j \in [k]}\textbf{w}_i \textbf{w}_j} \langle \textbf{v}_i, \textbf{v}_j \rangle = {\sum_{f : S \to \B}} {\lambda_j(f)} ({ \langle {f(C_j)} , \textbf{w}\rangle})^2 = b^2 + {\beta'}
        $$
    where $|{\beta'}|=O(\sqrt{\epsilon})$.
\end{enumerate}
Thus, we get 
$\norm{\textbf{v}'_s}_2^2 = \norm{\textbf{v}_s}_2^2 - \alpha^2 
    = {(b^2+\beta')-(b+\beta)^2}$
    which is at most $O(\sqrt{\epsilon})$. 

We are now ready to analyze the algorithm. We consider two cases separately:

\smallskip \noindent \textbf{Case $1$}. Suppose that there exists $i \in [k]$ such that $\norm{\textbf{v}'_i}_2 > k \delta r^2$. We claim that in this case, the rounded solution satisfies $Q$ with probability at least $1-O(\frac{1}{r})$. 

{From ~\Cref{prop:gaussian-multivariate}, $\langle \zeta, \textbf{v}'_j \rangle \sim \mathcal{N}(0, \norm{\textbf{v}'_j}_2^2)$ for every $j \in [k]$. 
Since $\norm{\textbf{v}'_i}_2 > k \delta r^2$, using~\Cref{prop:gaussian-anticoncentration}, with probability at least $1 - O\left(\frac{1}{r}\right)$, we have that $|\langle \zeta, \textbf{v}'_i \rangle| > k \delta r$.

Recall that $\norm{\textbf{v}'_s}_2 \leq \epsilon^{0.25} \leq \delta$. Thus, $|\langle \zeta, \textbf{v}'_s \rangle| \leq r \delta$ with probability at least $O\left(\frac{1}{r}\right)$. 
As $|\langle \zeta, \textbf{v}'_i \rangle| > k \delta r$, there exists $i' \in [k], i' \neq i$, such that $|\langle \zeta, \textbf{v}'_{i'} \rangle| > \delta r$, and $\langle \zeta, \textbf{v}'_{i} \rangle$ and $\langle \zeta, \textbf{v}'_{i'} \rangle$ have opposite signs.

Furthermore, as $\langle \zeta, \textbf{v}_0 \rangle \sim \mathcal{N}(0, 1)$, using~\Cref{prop:gaussian-concentration}, we get that $|\langle \zeta, \textbf{v}_0 \rangle| \leq r$ with probability at least $1 - O(\epsilon)$. Combining this with the above, we conclude that with probability $1 - O\left(\frac{1}{r}\right)$, all the following hold:
\begin{enumerate}
    \item[(1)] $|\langle \zeta, \textbf{v}_0 \rangle| \leq r$.
    \item[(2)] $|\langle \zeta, \textbf{v}'_{j} \rangle| > \delta r$ for all $j \in \{i, i'\}$.
    \item[(3)] $\langle \zeta, \textbf{v}'_{i} \rangle$ and $\langle \zeta, \textbf{v}'_{i'} \rangle$ have opposite signs.    
\end{enumerate}

Note that for every $j \in [k]$, $|\alpha_j| = |\langle \textbf{v}_0, \textbf{v}_j \rangle| \leq 1$. Using (1), we have $|\delta \alpha_j \langle \zeta, \textbf{v}_0 \rangle| \leq r \delta$. Thus, from (2), for all $j \in \{i, i'\}$, we deduce that $x_j$ is set to $-1$ if $\langle \zeta, \textbf{v}'_j \rangle > 0$, and $+1$ otherwise.

Combining with (3), with probability at least $1 - O\left(\frac{1}{r}\right)$, $i$ and $i'$ are rounded to different values, which implies that the rounded solution satisfies $Q$.}

\smallskip \noindent \textbf{Case $2$.} Suppose that for every $i \in [k]$, we have $\norm{\textbf{v}'_i}_2 \leq \frac{\delta}{2r^2}$. 

As $\langle \zeta, \textbf{v}'_i \rangle \sim \mathcal{N}(0, \norm{\textbf{v}'_i}_2^2)$, using~\Cref{prop:gaussian-concentration},
we get that with probability at least $1-O(\frac{1}{r})$, for every $i \in [k]$, $|\langle \zeta, \textbf{v}'_i \rangle | \leq \frac{\delta}{2r}$. 
On the other hand, using~\Cref{prop:gaussian-anticoncentration}, we have that $|\langle \zeta, \textbf{v}_0 \rangle | \geq \frac{1}{r}$ with probability at least $1-\frac{1}{r}$. 
Furthermore, As $\alpha_i^2 + \norm{\textbf{v}'_i}_2^2=1$ for every $i \in [k]$, we get that $|\alpha_i| \geq 1-\delta \geq \frac{1}{2}$ for every $i \in [k]$.  
Thus, with probability at least $1-O(\frac{1}{r})$, for every $i \in [k]$, $x_i$ is set to be {$-1$} if $\alpha_i \leq 0$, and {$+1$} otherwise. Combining this with the fact that ${\sum_i \textbf{w}_i} \alpha_i = b + O(\sqrt{\epsilon})$, and that {$\textbf{w}.\sgn(\textbf{w})>b, -\textbf{w}.\sgn(\textbf{w})<b$,} for small enough $\epsilon$,
we get the rounded solution has variables assigned $+1$ and $-1$. {Here, we assumed that $\epsilon$ is at most $c := c(\Gamma)$ for some function $c$. This is without loss of generality, since when $\epsilon$ is larger than $c$, a uniformly random assignment satisfies the robust algorithm requirement.}

\smallskip \noindent \textbf{Completing the proof.}
We finish the proof by showing that with probability at least $1-O(\frac{1}{r})$, at least one of the above two cases holds. None of the above two cases hold {only} if for some $i \in [k]$, we have 
\[
\frac{\delta }{2r^2} < \norm{\textbf{v}'_i}_2 \leq k \delta r^2
\]
Or equivalently, 
\[
\frac{\norm{\textbf{v}'_i}_2}{k r^2} \leq \delta < \norm{\textbf{v}'_i}_2 2 r^2
\]
This holds with probability at most $O(\frac{1}{r})$ for every value of $\norm{\textbf{v}'_i}$ as we are picking $\delta$ from $\{ p, rp, \ldots, r^\kappa p \}$ uniformly at random with $\kappa=r$. 

\subsection{Proof of Theorem~\ref{thm:at-general}}
\label{subsec:at-general}

 For a predicate $P \subseteq {\B}^k$, we let $\Aff(P)\subseteq \R^k$ to denote the affine hull of $P$.
\[
\Aff(P):= \left\{ \sum_{\textbf{a} \in P} \lambda_{\textbf{a}} \textbf{a} : \sum_{\textbf{a} \in P}\lambda_\textbf{a}=1, \lambda_{\textbf{a}} \in \R\right\}.
\]
We recall that $O_{\AT}(P)$ denotes the set $\bigcup_{\textbf{x}_1,\ldots,\textbf{x}_L\in P,L \in \mathbb{N},\text{ odd},}\AT_L(\textbf{x}_1,\ldots,\textbf{x}_L)$ for a predicate $P \subseteq {\B}^k$.
{In order to} prove~\Cref{thm:at-general}, we need to use the following lemma implicit in~\cite{BrakensiekG21}.

\begin{lemma}\label{lem:AT}
Let $P\subseteq {\B}^k$ be a predicate such that there is non-trivial dependence in each coordinate, i.e., for every $i\in[k]$, there exist vectors $\textbf{x},\textbf{y}\in P$ with $x_i = -1$ and $y_i = +1$. Then, $O_{\AT}(P) = \{\sgn(\textbf{x} - \textbf{y}) : \textbf{x}, \textbf{y} \in \Aff(P), \forall i, x_i \neq y_i \}$.
\end{lemma}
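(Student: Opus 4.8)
The plan is to prove the two inclusions separately. For the easy direction, let $R = O_{\AT}(P)$. Every element of $R$ is of the form $\AT_L(\textbf{x}_1,\ldots,\textbf{x}_L) = \sgn(\textbf{x}_1 - \textbf{x}_2 + \cdots + \textbf{x}_L)$ for some odd $L$ and $\textbf{x}_1,\ldots,\textbf{x}_L \in P$. The vector $\textbf{x}_1 - \textbf{x}_2 + \cdots + \textbf{x}_L$ can be written as $\textbf{x} - \textbf{y}$ where $\textbf{x} = \textbf{x}_1 + \textbf{x}_3 + \cdots + \textbf{x}_L$ (sum of $(L+1)/2$ terms) and $\textbf{y} = \textbf{x}_2 + \textbf{x}_4 + \cdots + \textbf{x}_{L-1}$ (sum of $(L-1)/2$ terms); after dividing by the appropriate counts these become genuine convex (hence affine) combinations — more precisely, $\frac{1}{m}\textbf{x}$ and $\frac{1}{m}\textbf{y}$ are not balanced, so instead I would write $\textbf{x}_1 - \textbf{x}_2 + \cdots + \textbf{x}_L = \frac{L+1}{2}\bar{\textbf{x}} - \frac{L-1}{2}\bar{\textbf{y}}$ where $\bar{\textbf{x}},\bar{\textbf{y}}\in\mathrm{conv}(P)\subseteq\Aff(P)$, and then note $\sgn$ is invariant under positive scaling, so I need points of $\Aff(P)$ whose difference is a positive multiple of this; one can take $\textbf{x}' = \bar{\textbf{x}}$ and $\textbf{y}' = \bar{\textbf{x}} - \frac{2}{L+1}(\textbf{x}_1 - \textbf{x}_2 + \cdots + \textbf{x}_L)$, both in $\Aff(P)$ since the second is an affine combination (coefficients summing to $1$). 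The coordinatewise condition $\sgn$ being well-defined and $x'_i \neq y'_i$ follows because each coordinate of $\textbf{x}_1 - \textbf{x}_2 + \cdots + \textbf{x}_L$ is a nonzero integer (odd number of $\pm 1$ terms). This shows $R \subseteq \{\sgn(\textbf{x}-\textbf{y}) : \textbf{x},\textbf{y}\in\Aff(P), \forall i\ x_i \neq y_i\}$.

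For the harder inclusion, fix $\textbf{x},\textbf{y}\in\Aff(P)$ with $x_i \neq y_i$ for all $i$, and set $\textbf{s} = \textbf{x} - \textbf{y}$, so every coordinate $s_i \neq 0$; the goal is to realize $\sgn(\textbf{s})$ as $\AT_L$ applied to elements of $P$. Write $\textbf{x} = \sum_{\textbf{a}\in P}\lambda_\textbf{a}\textbf{a}$ and $\textbf{y} = \sum_{\textbf{a}\in P}\mu_\textbf{a}\textbf{a}$ with $\sum\lambda_\textbf{a} = \sum\mu_\textbf{a} = 1$ and rational coefficients (rationality is available since $\Aff(P)$ is a rational affine subspace and we can perturb within it without changing signs, using the non-triviality hypothesis to stay inside $\Aff(P)$ while keeping each $s_i$ of the correct sign and bounded away from $0$). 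Then $\textbf{s} = \sum_\textbf{a}(\lambda_\textbf{a} - \mu_\textbf{a})\textbf{a}$ with $\sum_\textbf{a}(\lambda_\textbf{a}-\mu_\textbf{a}) = 0$. Clearing denominators, $N\textbf{s} = \sum_\textbf{a} c_\textbf{a}\textbf{a}$ with integer $c_\textbf{a}$ summing to $0$; splitting into positive and negative parts, $N\textbf{s} = \sum_{\textbf{a}}c_\textbf{a}^+\textbf{a} - \sum_\textbf{a}c_\textbf{a}^-\textbf{a}$ where $\sum c_\textbf{a}^+ = \sum c_\textbf{a}^- =: M$. This is a difference of a multiset of $M$ elements of $P$ and another multiset of $M$ elements of $P$. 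To turn this into an alternating expression $\textbf{z}_1 - \textbf{z}_2 + \textbf{z}_3 - \cdots$ of odd length with all $\textbf{z}_j \in P$, I interleave: take the $M$ "positive" copies as the odd-indexed slots and the $M$ "negative" copies as the even-indexed slots, giving length $2M$; this is even, so I append one more arbitrary copy of some fixed $\textbf{a}_0 \in P$ in an odd slot and one more copy of $\textbf{a}_0$ in an even slot — wait, that keeps it even; instead append a single extra $\textbf{a}_0$ in the final (odd) slot, making length $2M+1$, which changes the sum by $+\textbf{a}_0$. To kill that, I would instead, before clearing denominators, replace $\textbf{x}$ by $\textbf{x} + \textbf{a}_0$ scaled appropriately — equivalently, arrange $M' := M$ to already satisfy that we can pad symmetrically. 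The clean fix: add the same element $\textbf{a}_0$ to both the positive and negative multisets (this does not change $N\textbf{s}$), increasing $M$ to $M+1$; repeat until $2M+1$ is the length we want, i.e. we just need one unpaired element, so take the positive multiset to have size $M+1$ and negative size $M$ by adding one extra $\textbf{a}_0$ to positives and compensating — but that changes the sum.

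The main obstacle, and where I'd spend the most care, is exactly this parity/padding bookkeeping: producing an \emph{odd-arity} alternating sum of elements of $P$ whose sign vector is $\sgn(\textbf{s})$, from the raw representation $N\textbf{s} = (\text{sum of } M \text{ elements of }P) - (\text{sum of } M \text{ elements of }P)$. The resolution is: since adding $\textbf{a} - \textbf{a}$ (one element in an odd slot, the same element in an even slot) to the alternating sequence changes neither its length's parity-adjusted value in the wrong way nor — actually it adds $0$ and increases length by $2$, preserving evenness. So from length $2M$ I get length $2M$; I need odd. So pick any coordinate-wise argument instead: note $|s_i| $ can be assumed $\geq$ some $\eta>0$ and scale $N$ up so that $N s_i$ dominates; then $Ns_i + (\text{one extra }\pm 1\text{ from }\textbf{a}_0) $ still has the sign of $s_i$. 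Concretely: use length $2M+1$, positive multiset of size $M+1$ (the original $M$ plus one copy of a fixed $\textbf{a}_0\in P$), negative multiset of size $M$; then $\AT_{2M+1}$ of the interleaving equals $\sgn(N\textbf{s} + \textbf{a}_0)$, and by first choosing the rational representation of $\textbf{x},\textbf{y}$ fine enough (large $N$, using that $\Aff(P)$ lets us take $\textbf{x}$ close to a point strictly on the correct side in every coordinate), $N s_i + (\textbf{a}_0)_i$ has the same sign as $s_i$ for all $i$. This yields $\sgn(\textbf{s}) = \AT_{2M+1}(\cdots)\in O_{\AT}(P)$, completing the proof. I would present this last paragraph carefully since it is the crux; the non-triviality hypothesis is used precisely to guarantee that $\Aff(P)$ has enough room to choose such a rational, sign-robust representation (and, in the boundary bookkeeping, to know $P\neq\emptyset$ so $\textbf{a}_0$ exists).
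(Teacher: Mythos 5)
Your treatment of the inclusion $\{\sgn(\textbf{x}-\textbf{y}) : \textbf{x},\textbf{y}\in\Aff(P),\ \forall i\ x_i\neq y_i\}\subseteq O_{\AT}(P)$ — the direction you flag as the crux — is, despite the visible trial-and-error, essentially the paper's argument: perturb to rational affine combinations preserving the sign pattern, clear denominators to get two equal-size multisets of elements of $P$ whose difference of sums is $N\textbf{s}$ with nonzero integer coordinates, make $N$ (equivalently, the number of copies) large enough that every coordinate has absolute value at least $2$, and then append one extra element of $P$ on the positive side to make the arity odd, the amplification guaranteeing no coordinate changes sign. That direction is fine.

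The genuine gap is in the direction you call easy, $O_{\AT}(P)\subseteq\{\sgn(\textbf{x}-\textbf{y}):\textbf{x},\textbf{y}\in\Aff(P),\ \forall i\ x_i\neq y_i\}$. Your witness $\textbf{y}'=\bar{\textbf{x}}-\tfrac{2}{L+1}(\textbf{x}_1-\textbf{x}_2+\cdots+\textbf{x}_L)$ simplifies to $\tfrac{2}{L+1}(\textbf{x}_2+\textbf{x}_4+\cdots+\textbf{x}_{L-1})=\tfrac{L-1}{L+1}\bar{\textbf{y}}$, whose coefficients sum to $\tfrac{L-1}{L+1}\neq 1$; it is not in $\Aff(P)$ in general, so the parenthetical ``coefficients summing to $1$'' is false and the two-line argument collapses (for $L=1$ it would even put the zero vector in $\Aff(P)$). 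A telltale sign that this direction cannot be this cheap: it is exactly where the nontrivial-dependence hypothesis is indispensable — for the arity-$1$ predicate $P=\{(+1)\}$ one has $O_{\AT}(P)=\{+1\}$ while the right-hand side is empty, yet your argument never invokes the hypothesis and would ``prove'' the inclusion anyway. The paper's proof of this direction is correspondingly the more laborious one: it records the alternating sum as a signed multiset of elements of $P$, uses the nontrivial dependence in each coordinate (adding suitably signed vectors) together with a duplication trick (fix one vector, double all the others) to drive every coordinate of the signed sum to absolute value at least $2$, and only then appends a single negatively-signed element of $P$ so that the positively- and negatively-signed parts have equal cardinality $N$; dividing by $N$ writes the target as $\sgn$ of a difference of two convex combinations, i.e., of two points of $\Aff(P)$ differing in every coordinate. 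Note also that the scaling/padding trick from your other direction does not transfer: $\Aff(P)$ is not closed under scaling, so one cannot simply rescale the alternating sum, and some balancing argument of this kind (using the hypothesis) is genuinely required.
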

We present the proof in~\Cref{sec:missing-proofs} for the sake of completeness. We also need the following claim.

\begin{claim}\label{claim:hyperplane}
    Let $H$ be a vector space in $\R^k$ such that there is no $\textbf{y} \in H$ with $y_i > 0$ for all $i$. Then, there exists {$\mathbf{w} \neq 0$ with $w_i \geq 0$} for all $i$ and ${ \langle \mathbf{w} , \mathbf{y} \rangle} = 0$ for all $\textbf{y} \in H$. \end{claim}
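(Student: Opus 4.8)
The plan is to deduce this from a separating-hyperplane / LP-duality argument, exactly in the spirit of the proof of Lemma~\ref{lem:separating-hyperplane}. The hypothesis is that the (relatively) open positive orthant $\{\textbf{y} : y_i > 0 \ \forall i\}$ is disjoint from the subspace $H$. I want to produce a strictly positive normal vector $\textbf{w}$ that annihilates all of $H$, i.e. $\textbf{w} \in H^{\perp}$ with $w_i > 0$ for every $i$. Note it suffices to find, for each fixed coordinate $i$, a vector $\textbf{w}^{(i)} \in H^\perp$ with $\textbf{w}^{(i)} \ge 0$ and $w^{(i)}_i > 0$; then $\textbf{w} := \sum_{i=1}^k \textbf{w}^{(i)}$ lies in $H^\perp$, is nonnegative, and is strictly positive in every coordinate. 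So the whole claim reduces to the following one-coordinate statement: \emph{for each $i\in[k]$, there is $\textbf{w}\in H^\perp$ with $\textbf{w}\ge 0$ and $w_i>0$.}

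To prove the one-coordinate statement I would set up the LP feasibility question "does there exist $\textbf{w}$ with $\textbf{w}\in H^\perp$, $\textbf{w}\ge 0$, $w_i \ge 1$?'' and show it is feasible. Parametrize $H^\perp$ by a basis, so membership in $H^\perp$ becomes linear equality constraints; the system is then $\{\textbf{w}\ge 0,\ w_i\ge 1,\ \langle \textbf{h}, \textbf{w}\rangle = 0 \text{ for all } \textbf{h}\in H\}$ (finitely many $\textbf{h}$ suffice, taking a basis of $H$). By LP duality / Farkas' lemma, infeasibility of this system would yield a vector $\textbf{y}$ in the span of $H$ (a linear combination of the basis vectors $\textbf{h}$, since those appear as equalities) together with a nonnegative combination of the inequalities $\textbf{w}\ge 0$ certifying $y_j \le 0$ is impossible to reconcile with $w_i\ge 1$ — concretely, Farkas gives a $\textbf{y}\in H$ with $y_j \ge 0$ for all $j\ne i$, $y_i > 0$, hence $\textbf{y}$ has all coordinates $\ge 0$ and $y_i>0$. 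Perturbing slightly (or rather: using that $H$ is a subspace, so for small $\eta>0$ and any fixed strictly positive $\textbf{z}$ we may consider $\textbf{y} + \eta \textbf{z}$ — this does not stay in $H$, so instead I argue directly) — more cleanly: the Farkas witness $\textbf{y}\in H$ has $y_j\ge0$ for $j \ne i$ and $y_i>0$. This $\textbf{y}$ need not be strictly positive in the other coordinates. To fix this, repeat the argument with the roles arranged so that one obtains, for each $i$, a vector $\textbf{y}^{(i)}\in H$ with $\textbf{y}^{(i)}\ge 0$ and $y^{(i)}_i>0$; summing gives $\textbf{y}:=\sum_i \textbf{y}^{(i)}\in H$ with every coordinate strictly positive, contradicting the hypothesis. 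Hence the LP is feasible and we get the desired $\textbf{w}^{(i)}$.

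The main obstacle — and the part that needs care rather than cleverness — is handling the strictness correctly: Farkas/LP duality naturally produces \emph{non-strict} inequalities, whereas the hypothesis ("no $\textbf{y}\in H$ with \emph{all} $y_i>0$'') and the conclusion ("$\textbf{w}$ with \emph{all} $w_i>0$'') are both strict. The device above — proving the weak per-coordinate statements ($\exists \textbf{w}\in H^\perp,\ \textbf{w}\ge0,\ w_i>0$, resp. the contradiction hypothesis gives $\exists\textbf{y}\in H,\ \textbf{y}\ge0,\ y_i>0$) and then summing over $i\in[k]$ to promote "$\ge 0$ everywhere and $>0$ somewhere'' into "$>0$ everywhere'' — is exactly what converts the strict statements into weak ones that duality can handle. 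One should also double-check the degenerate case $H=\{0\}$ (take $\textbf{w}=(1,\ldots,1)$) and the case $H=\R^k$ (then the hypothesis fails, so there is nothing to prove), though these fall out of the general argument. I would present this as a short self-contained LP-duality argument, referencing the analogous reasoning already used in Lemma~\ref{lem:separating-hyperplane}.
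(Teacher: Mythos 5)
Your Farkas set-up is fine, but the strictness-promotion step has a genuine quantifier gap. For a fixed coordinate $i$, infeasibility of your LP yields \emph{one} witness $\textbf{y}^{(i)}\in H$ with $\textbf{y}^{(i)}\ge 0$ and $y^{(i)}_i>0$; to reach a contradiction by summing you need such witnesses for \emph{every} coordinate simultaneously, i.e.\ you need all $k$ LPs to be infeasible at once. But the negation of what you are trying to prove ("LP$_i$ is feasible for every $i$") only hands you a single bad coordinate, so the argument as written establishes only that \emph{some} coordinate admits a nonnegative $\textbf{w}^{(i)}\in H^\perp$ with $w^{(i)}_i>0$ --- not all of them --- and the final summation producing a strictly positive $\textbf{w}$ collapses. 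Moreover this cannot be patched, because the claim as stated is false: take $k=2$ and $H=\{(t,0):t\in\R\}$. Then $H$ contains no strictly positive vector, yet every $\textbf{w}$ orthogonal to $H$ has $w_1=0$. The correct theorem of the alternative (Stiemke/Tucker) says $H^\perp$ contains a strictly positive vector if and only if $H$ contains no nonzero \emph{nonnegative} vector; under the weaker hypothesis of the claim one can only guarantee a nonzero nonnegative $\textbf{w}\in H^\perp$.

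For comparison, the paper's own proof takes a different, one-shot route: it separates the two disjoint convex sets $H$ and the open positive orthant by a hyperplane, uses $0\in H$ and the subspace structure to conclude the separating functional $\textbf{v}$ vanishes on $H$ and is positive on the open orthant, and then asserts that all coordinates of $\textbf{v}$ are positive. That last assertion suffers from exactly the same strictness defect as your step --- positivity on the open orthant only gives $\textbf{v}\ge 0$, $\textbf{v}\ne 0$ (same example, $H=\mathrm{span}\{(1,0)\}$). So you have not overlooked an available trick; both your LP-duality route and the paper's separation route stall precisely where the statement is too strong. The claim becomes true if the hypothesis is strengthened to "$H$ contains no nonzero nonnegative vector'' (then your per-coordinate Farkas argument plus summation goes through verbatim, since the bad alternative is excluded for every $i$), or if the conclusion is weakened to a nonzero nonnegative $\textbf{w}\in H^\perp$; note, though, that the application in Lemma~\ref{thm:at-general} only supplies the weaker hypothesis (e.g.\ $P=\{x\in\{-1,+1\}^3: x_1+x_2=0\}$ gives an $H$ containing $e_3$), so any repair would have to be reflected there as well.
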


\newcommand{\by}{\textbf{y}}

\begin{proof}
Since $H$ and the positive orthant $\R_{+}^k:= \{ \textbf{y} \in \R^k : y_i >0\,\,\,\forall i \in [k]\}$  are {disjoint} convex bodies, {by the separating hyperplane theorem (e.g., \cite{boyd2023Convex}),} there exists {nonzero} ${\bw} \in \R^k$ and $b \in \R$ such that for all ${\bv}\in \R_{+}^k$, ${ \langle \bw , \bv \rangle \ge b}$ and for all $\textbf{y} \in H$, ${ \langle \by , \bw \rangle}\le b$. {Since $\bw$ is nonzero and $\R_{+}^k$ is open, we must more strongly have that ${ \langle \bw , \bv \rangle > b}$ and for all ${\bv}\in \R_{+}^k$.}

Taking the limit as ${\bv} \to 0$ {of ${ \langle \by , \bw \rangle}\le b$ for all $\textbf{y} \in H$}, we have that $b \le 0$. Further, since $(0,0,\ldots,0)\in H$, we must have that $b = 0$ and since $H$ is a vector space, ${ \langle \by , \bw \rangle } = 0$ for all $\by \in H$. Thus, ${\bw}$ is normal to $H$. Note that ${\bw}$ {is nonzero and} has all coordinates {non-negative} as ${ \langle \bv , \bw \rangle} >0$ for all ${\bv}$ in the positive orthant. 
\end{proof}

\begin{proof}[Proof of Lemma \ref{thm:at-general}.]
Fix a pair of predicates $(P,Q) \in \Gamma$. It suffices to show that  $(P,Q)$ is  ppp-definable from $\Gamma_{\AT} \cup \Gamma_{\const}$. If $P$ has coordinates of fixed value, we can use a gadget reduction from $\Gamma_{\const}$ to simulate these values. Thus, we assume that $P$ has non-trivial dependence in each coordinate, and thus we apply Lemma~\ref{lem:AT} to get that $Q \supseteq O_{\AT}(P) = \{\sgn(\textbf{x} - \textbf{y}): \textbf{x}, \textbf{y} \in \Aff(P), \forall i, x_i \neq y_i \}.$ We may without loss of generality assume that $Q = \{\sgn(\textbf{x} - \textbf{y}) : \textbf{x}, \textbf{y} \in \Aff(P), \forall i, x_i \neq y_i \}.$ 

For every $\textbf{x} \in {\B}^k\setminus Q$, we find $\textbf{w}, b$ such that 
\[
P_{\textbf{w},b} := \{\textbf{v} \in {\B}^k: { \langle \textbf{w} ,  \textbf{v} \rangle }= b\},
\quad Q_{\textbf{w},b} := {\B}^k \setminus \{\sgn(\textbf{w}), -\sgn(\textbf{w}\})
\]
satisfy $P \subseteq P_{\textbf{w},b}, {Q} \subseteq Q_{\textbf{w},b}$ and $\sgn(\textbf{w})=\textbf{x}$. By applying this for every $\textbf{x} \in {\B}\setminus Q$, we get a set of predicate pairs $(P_1, Q_1), (P_2, Q_2), \ldots, (P_L,Q_L)$ with $L\leq 2^k$ such that 
\begin{enumerate}
\item $P \subseteq P_i $ for every $i \in [L]$. 
\item $(P_i, Q_i) \in \Gamma_{AT}$ for every $i \in [L]$. 
\item $\bigcap_{i \in [L]} Q_i = Q$. 
\end{enumerate}
This shows that $(P,Q)$ is ppp-definable from $\{(P_1, Q_1), (P_2, Q_2), \ldots, (P_L,Q_L)\}\subseteq \Gamma_{AT}$.

Henceforth, our goal is to show that for every $\textbf{x} \in {\B}^k\setminus Q$, we can find $\textbf{w}, b$ such that $P_{\textbf{w},b} := \{x \in {\B}^k: { \langle \textbf{w} , x \rangle }= b\}$
satisfies $P \subseteq P_{\textbf{w},b}$, ${ \langle \textbf{w},  \sgn(\textbf{w})\rangle } >b, { \langle \textbf{w}, \sgn(\textbf{w})\rangle} >-b$ and $\sgn(\textbf{w})=\textbf{x}$. Without loss of generality, we can assume that $\textbf{x}=(+1,+1,\ldots,+1)$. Fix an arbitrary vector $\overline{\textbf{x}} \in P$ such that $\overline{\textbf{x}}\notin \{(-1,-1,\ldots,-1), (+1,+1,\ldots,+1)\}$. Such a vector is guaranteed to exist as $P$ does not contain {${\textbf{x}}$} and has non-trivial dependence on each coordinate. 
Let $H$ be a subspace of $\mathbb{R}^k$ defined as follows:
\[
H := \{ \textbf{y}-\overline{\textbf{x}} : \textbf{y} \in \Aff(P)\}
\]
As $\textbf{x} \notin O_{AT}(P)$, using~\Cref{lem:AT}, we get that for every $\textbf{z} \in H$, $\sgn(\textbf{z})\neq \textbf{x}$, or in other words, there is no $\textbf{z} \in H$ with $z_i >0$ for all $i \in [k]$.
Using~\Cref{claim:hyperplane}, we can obtain {nonzero $\textbf{w}\geq \textbf{0}$} such that ${ \langle \textbf{w}, \textbf{y}\rangle}=0$ for all $\textbf{y} \in H$. This shows that ${ \langle \textbf{w}, \textbf{y}\rangle} = b$ for every $\textbf{y} \in P$, where ${b := \langle \textbf{w}, \overline{\textbf{x}}\rangle}$ satisfies $\sum_i w_i >b$, $\sum_i w_i >-b$. 
\end{proof}

\section{Unique Games based Hardness}
\label{sec:ug-hardness}

In this section, we prove~\Cref{thm:main-hardness}. First, we use the analysis of $\AT$ and $\MAJ$ polymorphisms for symmetric {promise templates} with folding and idempotence in~\cite{BrakensiekG21} to show that we can relax $\Gamma$ into one of five candidate PCSP types. 

\begin{lemma}
\label{lem:reduction}
Let $\Gamma=(P,Q)$ be a Boolean {promise template} such that $
\MAJ_{L_1}, \AT_{L_2} \notin \Pol(\Gamma)$ for some odd integers $L_1, L_2$. Then, there exists a Boolean {template} $\Gamma'=(P,Q)$ that is {fi}ppp-definable from $\Gamma$ that is equal to {one} of the following:
\begin{enumerate}
    \item $k$ is even, and $\Gamma_1 = (P,Q), P=\Ham_{k}\{\frac{k}{2}\}, Q = \Ham_{k} \{0,1,\ldots,k\} \setminus \{b\}$ where $b \in \{1,k-1\}$. 
    \item $k$ is odd, $\Gamma_2 = (P,Q), P=\Ham_k \{l,\frac{k+1}{2}\}$, $Q=\Ham_k \{0,1,2,\ldots,k-1\}$, where $l \leq \frac{k-1}{2}$.
    \item $\Gamma_3 = (P,Q), P=\Ham_k \{l,k\}$, $Q=\Ham_k \{1,2,\ldots,k\}$, where $l\neq 0, l \leq \frac{k-1}{2}$. 
    \item $\Gamma_4 = (P,Q),P=\Ham_k \{l\}, Q=\Ham_k \{0,1,\ldots,k\}\setminus \{0,k-1\}$ where $l \in \{1,2,\ldots,k-1\}, l \leq \frac{k-1}{2}$. 
    \item $\Gamma_5=(P,Q), P=\Ham_k \{1,k\}, Q=\Ham_k \{0,1,\ldots,k\}\setminus\{b\}$ for arbitrary $b \in \{0,1,\ldots,k\}$.
\end{enumerate}
\end{lemma}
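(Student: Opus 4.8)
The plan is to follow the classification of symmetric Boolean folded idempotent PCSPs carried out in \cite{BrakensiekG21}, and to track exactly which polymorphism obstructions force which template. First I would recall the structure of a symmetric predicate $P = \Ham_k S$: it is determined by a subset $S \subseteq \{0,1,\ldots,k\}$ of allowed Hamming weights, and folding plus symmetry mean that whether $\MAJ_L$ or $\AT_L$ is a polymorphism of $(P,Q)$ can be read off from arithmetic conditions on $S$ and on $T$ (where $Q = \Ham_k T$). Concretely, $\MAJ_L \in \Pol(P,Q)$ iff for every choice of $L$ weights in $S$ the (appropriately normalized) coordinatewise majority lands in $T$; since $P$ is symmetric this reduces to: for all $s_1,\ldots,s_L \in S$ and all ways of ``interleaving'' them, the resulting weight is in $T$. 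A convenient way to package this is via the convex-hull / affine-hull criteria already used in the paper: $O_{\MAJ}(P)$ (resp.\ $O_{\AT}(P)$) being contained in $Q$ is governed by whether the origin (resp.\ certain sign patterns) can be separated from $\mathrm{conv}(P)$ (resp.\ $\Aff(P)$), exactly as in \Cref{lem:separating-hyperplane} and \Cref{lem:AT}. So the first block of the argument is: translate ``$\MAJ_{L_1}, \AT_{L_2} \notin \Pol(\Gamma)$ for some odd $L_1,L_2$'' into: the weak predicate $Q$ fails to contain $O_{\MAJ}(P)$ and fails to contain $O_{\AT}(P)$.

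Next I would case-split on the shape of $S$, the set of allowed weights for the strong predicate. Because $\MAJ_L \in \Pol$ for all odd $L$ whenever $\mathrm{conv}(P)$ is separated from $0$ (Lemma~\ref{lem:separating-hyperplane} direction), the failure of $\MAJ$ means $0 \in \mathrm{conv}(P \text{ shifted to } \{-1,+1\})$ in every coordinate-signed sense — equivalently, the ``center of mass'' condition $\sum_{s\in S}\mu_s (2s-k) $ can be made zero, which for symmetric $P$ essentially forces $S$ to be symmetric about $k/2$ or to contain $k$ together with a small weight. Similarly, the failure of $\AT$ (via Lemma~\ref{lem:AT}) means $\Aff(P)$ contains a vector all of whose coordinates are positive (or all negative), which constrains $\dim \Aff(P)$ and hence $|S|$. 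Combining these two failures, and using idempotence (so $k \in S$ is allowed but $0$ forces triviality), I expect $|S| \le 2$, and enumerating the possibilities $S = \{s\}$, $S = \{l, k\}$, $S=\{l, \tfrac{k+1}{2}\}$, $S = \{\tfrac k2\}$, $S=\{1,k\}$ gives the five families in the statement; the precise form of $Q$ in each case (which single weight $b$ must be excluded, or which pair) comes from computing $O_{\MAJ}(P)$ and $O_{\AT}(P)$ and taking $Q$ to be their common refinement minus one witness weight. At each step I would note that passing to the ppp-definable relaxation is harmless: replacing $Q$ by a larger symmetric predicate that still excludes a monochromatic/near-monochromatic witness is a ppp-definition over $\Gamma$ (add the constraint on the same tuple with the weaker right-hand relation), and adding constants is allowed by idempotence.

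The main obstacle I anticipate is the bookkeeping in the case analysis: showing that the joint failure of $\MAJ$ and $\AT$ really does collapse $S$ to at most two elements and pinning down, in each surviving case, the exact weak predicate $Q$ up to which weights are excluded — in particular distinguishing $\Gamma_1$ with $b\in\{1,k-1\}$ from $\Gamma_5$ with arbitrary $b$, and checking the parity side-conditions ($k$ even vs.\ odd, $l \le \tfrac{k-1}{2}$). This is exactly the place where one must be careful that the template one relaxes to is still ppp-definable from $\Gamma$ and still has no $\MAJ$ or $\AT$ polymorphism, i.e., that the relaxation did not accidentally create symmetry. I would handle this by invoking the corresponding classification statement of \cite{BrakensiekG21} as a black box for the enumeration of maximal symmetric PCSPs without $\MAJ$/$\AT$, and then only verifying directly that each of $\Gamma_1,\ldots,\Gamma_5$ (a) is ppp-definable from any $\Gamma$ in its class and (b) genuinely lacks $\MAJ_{L}$ and $\AT_{L}$ for the relevant arities, the latter via the separation criteria above. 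The remaining details are routine arithmetic on Hamming weights.
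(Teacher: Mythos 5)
Your sketch has the right peripheral ingredients (the $O_{\MAJ}/O_{\AT}$ criteria, and the observation that shrinking $P$ and enlarging $Q$ is a harmless ppp-definition), but there is a genuine gap at its core. The structural claim you lean on --- that the joint failure of $\MAJ$ and $\AT$ forces the weight set $S$ of the strong predicate to have at most two elements --- is not true for the original template: the hypotheses only say $O_{\MAJ}(P)\nsubseteq Q$ and $O_{\AT}(P)\nsubseteq Q$, which place no upper bound on $|S|$; the lemma's conclusion is about a ppp-definable relaxation, not about $P$ itself. What one can do is pass to a sub-predicate $\Ham_k\{l\}$ or $\Ham_k\{l,l'\}$ of $P$ and a weakened $Q=\Ham_k\{0,1,\ldots,k\}\setminus\{b\}$, but at that point the parameters $(k,l,l',b)$ are still essentially arbitrary, and the entire difficulty of the lemma is to normalize them to the five listed templates with their precise side conditions ($b\in\{1,k-1\}$ for $\Gamma_1$, $l\le\frac{k-1}{2}$, the parity of $k$, etc.). Your plan to obtain this normalization by invoking ``the classification of \cite{BrakensiekG21} as a black box for the enumeration of maximal symmetric PCSPs without $\MAJ$/$\AT$'' does not work: no such enumeration exists there. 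That paper classifies the decision version, and its hardness side requires the absence of Parity polymorphisms as well, whereas the present lemma must cover templates (e.g.\ $\Gamma_1$, $\Gamma_5$) that do admit Parity-type polymorphisms; citing that classification would amount to assuming exactly the enumeration the lemma asserts.

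The paper's proof supplies this missing normalization by an explicit descent. It maintains the invariant that $P=\Ham_k\{l\}$ (or $\Ham_k\{l,l'\}$), $Q=\Ham_k\{0,1,\ldots,k\}\setminus\{b\}$, and $\MAJ,\AT\nsubseteq\Pol(P,Q)$, and repeatedly applies the two arity-reducing ppp-gadgets of \cite{BrakensiekG21} (Claims 4.2 and 4.4, which pin a coordinate and send $(k,l,b)$ to $(k-1,l,b)$ or $(k-1,l-1,b-1)$). The five templates, with their exact parameter constraints, are read off from the termination conditions of this descent (for instance $b=1$ together with $b=2l-k+1$ forces $l=k/2$, yielding $\Gamma_1$), using along the way the explicit computation $O_{\MAJ}(\Ham_k\{l\})=\Ham_k\bigl(\{2l-k+1,\ldots,2l-1\}\cap\{0,\ldots,k\}\bigr)$ and $O_{\AT}(\Ham_k\{l\})=\Ham_k\{1,\ldots,k-1\}$ to identify which weights $b$ can be excluded from $Q$ while keeping both polymorphism families absent. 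This descent-and-termination analysis is precisely the content your proposal defers to as ``routine arithmetic on Hamming weights,'' and it is the part that actually constitutes the proof.
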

We defer the proof of~\Cref{lem:reduction} to~\Cref{sec:missing-proofs}.

Recall that if a {promise template} $\Gamma'$ is {fi}ppp-definable from another {promise template} $\Gamma$, if ${\fiPCSP(\Gamma)}$ has a polynomial time robust algorithm, then ${\fiPCSP(\Gamma')}$ has a polynomial time robust algorithm as well (\Cref{prop:fippp-defn}). Thus, to show~\Cref{thm:main-hardness}, it suffices to show Unique Games based hardness of obtaining robust algorithms for ${\fiPCSP(\Gamma_{i})}$ {for $i \in [5]$ from Lemma~\ref{lem:reduction}}.
We achieve {these hardness results} by showing integrality gaps for the basic SDP relaxation of {each ${\fiPCSP(\Gamma_{i})}$}. 
Raghavendra's result for CSPs~\cite{Raghavendra08} shows that integrality gaps for the basic SDP relaxation can be translated to Unique Games Conjecture (UGC)~\cite{Khot02} based inapproximability results. In fact, his result is verbatim applicable to Promise CSPs as well. 
\begin{theorem}[Special case of ~\cite{Raghavendra08} for Boolean {fiPCSPs} when the SDP is feasible]
\label{thm:raghavendra}
Suppose that for a Boolean {promise template} $\Gamma$, there is a finite integrality gap for the basic SDP relaxation {$\fiPCSP(\Gamma)$}, i.e., there is a finite instance $I$ of ${\fiPCSP}(\Gamma)$ on which the basic SDP relaxation is feasible but there is no assignment that weakly satisfies $I$. Then, there exists a constant $s<1$ that is a function of $\Gamma, I$ such that the following decision problem is NP-hard for sufficiently small $\epsilon,\delta >0$, assuming UGC. Given an instance $\Phi$ of ${\fiPCSP(\Gamma)}$, distinguish between the two cases:
\begin{enumerate}
    \item (Completeness.) There exists an assignment that strongly satisfies $1-\epsilon$ fraction of the constraints in $\Phi$.
    \item (Soundness.) No assignment weakly satisfies $s+\delta$ fraction of the constraints in $\Phi$.
\end{enumerate}
\end{theorem}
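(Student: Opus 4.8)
The plan is to carry Raghavendra's Unique-Games-to-CSP reduction \cite{Raghavendra08} over to the promise setting essentially verbatim, with the finite integrality-gap instance $I$ playing the role of the dictatorship-test gadget. Fix the basic SDP solution witnessing feasibility of $I$: local distributions $\lambda_C$ for each constraint $C$ of $I$, all supported on the corresponding \emph{strong} predicate $P$, together with the unit vectors $\textbf{v}_0$ and $\{\textbf{v}_w\}$ over the variables $w$ of $I$. Let $s = \mathrm{opt}_Q(I) < 1$ be the largest fraction of constraints of $I$ that any $\{-1,+1\}$-assignment can \emph{weakly} satisfy; $s<1$ holds precisely because $I$ is an integrality gap instance, and $s$ depends only on $\Gamma$ and $I$. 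Starting from a Unique Games instance $\mathcal U = (V, E, [R], \{\pi_e\})$ of the form required by the reduction, I would build a Boolean folded instance $\Phi = \Phi(\mathcal U, I)$ whose variables are ``long-code'' blocks: one folded function $f_v : \{-1,+1\}^R \to \{-1,+1\}$ per vertex $v$. Its constraints are produced by the usual composition — sample a constraint $C$ of $I$ on variables $w_1,\dots,w_k$, sample UG edges $e_1,\dots,e_k$ out of a common random vertex, draw $k$ correlated points of $\{-1,+1\}^R$ whose pairwise biases along matching coordinates are governed by the Gram matrix of $\textbf{v}_{w_1},\dots,\textbf{v}_{w_k}$ read through the permutations $\pi_{e_j}$, apply a small noise $\rho$, and impose the predicate \emph{pair} $(P,Q)$ of $C$ on the resulting long-code entries.

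For completeness, given a labeling $\ell$ of $\mathcal U$ satisfying a $1-\eta$ fraction of edges, I would set each $f_v$ to the dictator on coordinate $\ell(v)$. Whenever all $k$ sampled edges are satisfied by $\ell$, the queried bits are distributed exactly as a draw from $\lambda_C$ up to the noise $\rho$, hence lie in $P$ with probability $1-O(\rho)$; summing over constraints and discarding the $O(\eta)$ fraction touching an unsatisfied edge, the dictator assignment \emph{strongly} satisfies a $1-\epsilon$ fraction of $\Phi$, with $\epsilon\to 0$ as $\eta,\rho\to 0$. For soundness I would argue the contrapositive: if some $\{f_v\}$ \emph{weakly} satisfies more than an $s+\delta$ fraction of $\Phi$, invoke the influence-decoding dichotomy. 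Either a non-negligible fraction of the $f_v$ have a coordinate of low-degree influence above a threshold $\tau(\delta)$, in which case decoding each such $v$ to a uniformly random high-influence coordinate satisfies at least a constant $\gamma_0 = \gamma_0(\delta) > 0$ fraction of UG edges; or all low-degree influences are small, and Mossel's multidimensional invariance principle shows the weak-acceptance probability is, up to $\delta/2$, the value obtained by rounding the SDP vectors of $I$ (via correlated Gaussians and thresholds) to an integral assignment of $I$ tested against the \emph{weak} predicates $Q$, which is at most $\mathrm{opt}_Q(I) = s$ — contradicting the assumed $> s+\delta$. Hence if $\mathcal U$ has no labeling satisfying a $\gamma_0(\delta)$ fraction of edges, no assignment weakly satisfies more than an $s+\delta$ fraction of $\Phi$.

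Combining the two directions, $\mathcal U \mapsto \Phi$ reduces the promised Unique Games problem — distinguish ``$\ge 1-\eta$ satisfiable'' from ``$< \gamma_0(\delta)$ satisfiable'', NP-hard under UGC for any constants $\eta, \gamma_0(\delta) > 0$ once $R$ is large enough — to distinguishing ``strongly satisfiable to $\ge 1-\epsilon$'' from ``not weakly satisfiable to $s+\delta$''. Choosing $\eta$ and $\rho$ small makes $\epsilon$ as small as desired, and $\delta$ is a free parameter, which is exactly the claimed statement.

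The main obstacle is the soundness analysis in the ``all influences small'' case: one must verify that the Gaussian object produced by the invariance principle really is a genuine integral $\{-1,+1\}$-assignment of $I$ measured against $Q$ (so that $s = \mathrm{opt}_Q(I)$ applies), rather than some fractional relaxation that could beat $s$. This is precisely where the promise structure enters — completeness is driven entirely by the strong predicate $P$ (via SDP feasibility, $\lambda_C$ is supported on $P$, so dictators land in $P$), while soundness is driven entirely by the weak predicate $Q$ (via $\mathrm{opt}_Q(I)<1$) — and it is why a perfect-completeness SDP integrality gap is exactly the right hypothesis. Everything else is a transcription of \cite{Raghavendra08}: thread the pair $(P,Q)$ through the reduction, using $P$ in the completeness step and $Q$ in the soundness step, and keep all functions and predicates folded throughout (harmless, since the long code is taken over folded functions, matching the folding assumption on $\Gamma$). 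Only the hardness half of Raghavendra's theorem is needed; the matching SDP-rounding algorithm plays no role here.
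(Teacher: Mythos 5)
Your proposal is correct and is exactly the argument the paper has in mind: the paper gives no self-contained proof of this theorem, simply invoking \cite{Raghavendra08} with the remark that his reduction applies verbatim to promise CSPs, and your transcription (dictatorship test from the gap instance $I$, completeness analyzed against the strong predicate $P$ via the SDP local distributions, soundness against the weak predicate $Q$ via influence decoding, invariance, and multilinearity bounding the value by $\mathrm{opt}_Q(I)=s$) is precisely that verbatim adaptation. No genuine gap; the only caveat is the standard smoothing of the local distributions needed for the invariance principle, which only perturbs completeness within the allowed $1-\epsilon$.
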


{
\begin{remark}\label{rem:Ragh}
It is worth noting that Raghavendra does not work directly with the basic SDP relaxation of $\fiPCSP(\Gamma)$ but rather with the basic SDP relaxation of $\PCSP(\Gamma')$ for a suitable promise template $\Gamma'$. Therefore, we consider for each relation $(P, Q) \in \Gamma$ of arity $k$ the variable set $V = \{x_1, \hdots, x_k\}$. For every possible clause $C = (y_1, \hdots, y_k) \in (V \cup \bar{V} \cup \B)^k$ using variables $S \subseteq V$, we construct a corresponding relation pair $(P_C, Q_C)$ of arity $S$ such that
\begin{align*}
    P_C &= \{z \in \B^{S} : C(z) \in P\}\\
    Q_C &= \{z \in \B^{S} : C(z) \in Q\}.
\end{align*}
That is, $\Gamma'$ consists of every $(P_C, Q_C)$ which can be defined from some $(P, Q) \in \Gamma$. The key observation is that every instance $\Psi$ of $\fiPCSP(\Gamma)$ can then be replaced by a suitable instance $\Psi'$ of $\PCSP(\Gamma')$ by replacing each use of $(P,Q)$ with the corresponding $(P_C, Q_C)$. One can verify that each strong and weak assignment to $\Psi$ has the same value as the corresponding assignment to $\Psi'$ and that the SDP relaxations have the same value. Thus, Raghavendra's theorem does indeed apply.
\end{remark}
}

Thus, to show~\Cref{thm:main-hardness}, our goal is to show the existence of finite integrality gaps for the basic SDP relaxations of the Boolean {promise templates} in~\Cref{lem:reduction}.
To obtain such an integrality gap for the basic SDP relaxation of a PCSP, we study colorings of the $n$ dimensional sphere $\mathbb{S}^n$ that satisfy certain properties. 
We start by defining a few notations that we need. 

\begin{definition}
\label{def:p-conf}
Fix a predicate $P \subseteq {\B}^k$. We say that a tuple of vectors $V=(\textbf{v}_1, \textbf{v}_2, \ldots, \textbf{v}_k), \textbf{v}_i \in \S^n\,\,\forall i \in [k]$ is a $P$-configuration with respect to another vector $\textbf{v}_0\in \S^n$ if the tuple of vectors can be assigned to a set of literals in a constraint of the basic SDP relaxation of an instance of ${\fiPCSP(P,Q)}$ with zero error, for some $Q \supseteq P$. 
In other words, there exists a probability distribution $\{\lambda(\textbf{a}): \textbf{a} \in P\}$ supported on $P$ that satisfies the following properties. 
\begin{enumerate}
    \item $0 \leq \lambda(\textbf{a})\leq 1$ for all $\textbf{a} \in P$, and 
    \[
    \sum_{\textbf{a} \in P}\lambda(\textbf{a})=1.
    \]
    \item First moments: 
    \[
    { \langle \textbf{v}_i , \textbf{v}_0 \rangle }= \sum_{ \textbf{a} \in P}\lambda(\textbf{a}) a_i\quad \forall i \in [k].
    \]
    \item Second moments: 
     \[{ \langle \textbf{v}_{i} , \textbf{v}_{i'} \rangle }= \sum_{\textbf{a} \in P}\lambda(\textbf{a}) a_i a_{i'} \quad \forall i,i' \in [k]. \]
\end{enumerate}
\end{definition}

We now define the notion of a function respecting a Boolean ${\fiPCSP}$. We refer to functions $f:\S^n \rightarrow {\B}$ as colorings of the sphere. 

\begin{definition}
Fix a vector $\textbf{v}_0 {\in \S^n.}$ {We} say that a coloring of the sphere $f:\mathbb{S}^n \rightarrow {\B}$ {is folded if} $f(-\bv) = -f(\bv)$ for every $\bv \in \S^n${.} {We say that $f$ is idempotent if $f(\bv_0) = +1$ and $f(-\bv_0) = -1$.} {We say that a coloring $f$ \emph{respects} the promise template} $(P,Q)$ with respect to a vector $\textbf{v}_0\in\S^n$ if the following condition holds. For every $P$-configuration $V=(\textbf{v}_1, \textbf{v}_2, \ldots, \textbf{v}_k)$ with respect to $\textbf{v}_0$, we have that the colors of the vectors satisfy $Q$, i.e.,
\[
 (f(\textbf{v}_1), f(\textbf{v}_2), \ldots, f(\textbf{v}_k)) \in Q
\]
More generally, we say that a coloring $f:\mathbb{S}^n \rightarrow {\B}$ respects a Boolean {promise template} $\Gamma$ with respect to a vector $\textbf{v}_0$ if it respects every predicate pair in ${(P,Q)\in \Gamma}$ with respect to $\textbf{v}_0$.
\end{definition}

Our key observation is that the absence of such a {folded} sphere coloring respecting $\Gamma$ for some finite $n$ gives an integrality gap for the basic SDP relaxation of ${\fiPCSP(\Gamma)}$. 
\begin{lemma}
\label{lem:sphere-coloring}
For every Boolean {promise template} $\Gamma=\{(P_1,Q_1),\ldots,(P_l,Q_l)\}$, the Basic SDP decides ${\fiPCSP(\Gamma)}$ if and only if for every integer $n \geq 1$, there exists a {folded, idempotent} coloring $f^{(n)}:\mathbb{S}^n \rightarrow {\B}$ that respects $\Gamma$ with respect to a vector $\textbf{v}_0^{(n)}$.
\end{lemma}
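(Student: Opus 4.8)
The plan is to prove both directions of the equivalence between ``Basic SDP decides $\PCSP(\Gamma)$'' and ``for every $n$ there is a folded sphere coloring respecting $\Gamma$'', by relating sphere colorings to integrality gap instances.

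For the \emph{``only if''} direction (assume the SDP decides $\Gamma$, produce the colorings), I would argue by contradiction: suppose for some $n$ there is \emph{no} folded coloring $f^{(n)}:\S^n\to\{-1,+1\}$ respecting $\Gamma$ with respect to \emph{any} reference vector $\textbf{v}_0$. Fix an arbitrary $\textbf{v}_0\in\S^n$ and form the infinite instance $\mathcal I^n(\Gamma)$ from the overview: variables $u_{\textbf{v}}$ for $\textbf{v}\in\S^n$, and for each predicate pair $(P_i,Q_i)\in\Gamma$ and each $P_i$-configuration $(\textbf{v}_1,\dots,\textbf{v}_{k_i})$ with respect to $\textbf{v}_0$, a constraint on $(u_{\textbf{v}_1},\dots,u_{\textbf{v}_{k_i}})$ using $(P_i,Q_i)$. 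By the very definition of $P_i$-configuration (\Cref{def:p-conf}), the SDP assignment $\textbf{v}_i\mapsto u_{\textbf{v}_i}$ together with the associated local distributions $\lambda(\textbf{a})$ satisfies all first- and second-moment constraints with zero error, so the Basic SDP is feasible on $\mathcal I^n(\Gamma)$. However, a weak assignment to $\mathcal I^n(\Gamma)$ is exactly a map $f:\S^n\to\{-1,+1\}$ with $(f(\textbf{v}_1),\dots,f(\textbf{v}_{k_i}))\in Q_i$ for every $P_i$-configuration; since folding is built into the PCSP (each variable $u_{\textbf{v}}$ and $u_{-\textbf{v}}$ are forced to be negations via the equality-to-constant structure, or more directly because the constraint set is closed under negating configurations), any weak assignment yields a folded coloring respecting $\Gamma$, contradicting our assumption. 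So $\mathcal I^n(\Gamma)$ is an infinite integrality gap instance. Finally, a compactness argument (König's lemma / the finite intersection property on $\{-1,+1\}^{V}$ with the product topology) extracts a \emph{finite} sub-instance $I\subseteq\mathcal I^n(\Gamma)$ that still has no weak satisfying assignment while remaining SDP-feasible; this contradicts the hypothesis that the Basic SDP decides $\Gamma$.

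For the \emph{``if''} direction (assume the colorings exist, show the SDP decides $\Gamma$), take any finite instance $\Phi=(V,\mathcal C)$ on which the Basic SDP is feasible; we must produce a weak assignment. Feasibility gives vectors $\textbf{v}_0,\textbf{v}_1,\dots,\textbf{v}_n\in\R^{N}$ (for some finite $N$) and local distributions $\lambda_j$ with zero error, so that for every constraint $C_j$ using $(P^{(j)},Q^{(j)})$ over literals with vectors $\textbf{v}(x_{j,1}),\dots,\textbf{v}(x_{j,l_j})$, these vectors form a $P^{(j)}$-configuration with respect to $\textbf{v}_0$ (after rescaling the ambient space so that everything lies in $\S^N\subseteq\S^n$ for $n\ge N$, embedding isometrically). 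Apply the hypothesis with this $n$ to obtain a folded coloring $f^{(n)}:\S^n\to\{-1,+1\}$ respecting $\Gamma$ with respect to some $\textbf{w}_0$; compose with an orthogonal transformation taking $\textbf{w}_0$ to $\textbf{v}_0$ so the coloring respects $\Gamma$ with respect to $\textbf{v}_0$ itself. Then define $\sigma(u_i):=f^{(n)}(\textbf{v}_i)$. For each constraint $C_j$, the literal vectors are a $P^{(j)}$-configuration with respect to $\textbf{v}_0$, so $(f^{(n)}(\textbf{v}(x_{j,1})),\dots)\in Q^{(j)}$; foldedness of $f^{(n)}$ means $f^{(n)}(-\textbf{v}_i)=-f^{(n)}(\textbf{v}_i)$, which is exactly what is needed to handle negated literals $\overline{u_i}$ consistently. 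Hence $\sigma$ weakly satisfies every constraint, so the Basic SDP decides $\Gamma$.

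The main obstacle I anticipate is the compactness step in the ``only if'' direction: one must be careful that the failure of a sphere coloring for a \emph{single} $n$ really translates into a finite integrality gap rather than merely an infinite one. The argument needs the constraint hypergraph of $\mathcal I^n(\Gamma)$ to be such that ``no global weak assignment'' is witnessed by some finite obstruction — this is standard via the compactness theorem for propositional logic (treat each variable as a Boolean/two-valued variable and each constraint as a finite clause set), but it requires noting that $Q_i$ being a fixed finite relation makes each constraint a finite condition. A secondary subtlety is bookkeeping around folding and reference vectors: one should check that respecting $\Gamma$ with respect to \emph{one} $\textbf{v}_0$ suffices (since $P$-configurations with respect to different unit vectors are related by orthogonal maps, and the sphere is homogeneous), so that the quantifier ``with respect to a vector $\textbf{v}_0^{(n)}$'' in the statement is genuinely an existential that can be aligned with the $\textbf{v}_0$ coming from any given SDP solution. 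Neither of these is deep, but they are where the proof must be written carefully.
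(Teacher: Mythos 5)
Your proposal is correct and follows essentially the same route as the paper: both directions rest on the sphere instance whose zero-error SDP solution is the identity map $\textbf{v}\mapsto\textbf{v}$, a De Bruijn--Erd\H{o}s-style compactness argument, and a rotation to align the coloring's reference vector with the $\textbf{v}_0$ of a given SDP solution. The only cosmetic difference is that you run the ``only if'' direction in contrapositive form (extracting a finite integrality-gap sub-instance from the unsatisfiable infinite instance), whereas the paper applies the same compactness principle directly to glue weak assignments of finite sub-instances $\Phi_S$ into a global coloring; these are equivalent.
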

\begin{proof}
We slightly abuse the notation and say that a {``partial''} folded function $f: S\rightarrow {\B}, S \subseteq \S^n$ respects a Boolean {promise template} $(P,Q)$ {with respect to} $\textbf{v}_0$ if and only if for every $P$-configuration of vectors $V=(\textbf{v}_1,\ldots,\textbf{v}_k)$ {with respect to} $\textbf{v}_0$ with $\textbf{v}_i \in S\,\forall i \in [k]$, $f(V) \in Q$. More generally, for a Boolean {promise template} $\Gamma$, $f:S \rightarrow {\B}$ respects $\Gamma$ if and only if it respects every predicate pair in $\Gamma$.
Via a compactness\footnote{We assume the axiom of choice.} argument (e.g., like the De Brujin-Erdos theorem \cite{bruijn1951colour}, for more details see Remark 7.13 of \cite{BBKO21} or \cite{ciardo2022clap}), we can infer that there is a {folded} coloring $f:\mathbb{S}^n \rightarrow {\B}$ respecting $\Gamma$ {with respect to} $\textbf{v}_0$ if and only if for every finite subset $S \subset \mathbb{S}^n$, there exists a {folded} coloring $f_S:S\rightarrow {\B}$ that respects $\Gamma$ {with respect to} $\textbf{v}_0$. 

First, assume that the Basic SDP decides ${\fiPCSP}(\Gamma)$. Fix an arbitrary set of vectors $\textbf{v}_0^{(n)} {\in \mathbb{S}^n}, n \in \mathbb{Z}^{+}$. {Also pick a subset $T \subset \mathbb{S}^n$ such that for every $\bv \in \mathbb{S}^n$, exactly one of $\bv$ or $-\bv$ is in $T$.} For any finite subset $S \subset \mathbb{S}^n$, we construct an instance $\Phi_S$ of $\Gamma$ as follows. {By increasing the size of $S$, we may assume that $S$ is closed with respect to negation.} The variable set {of $\Phi_S$ is $\{x_{\textbf{v}} : \textbf{v}\in S \cap T\}$.} {We define the constraints as follows. For every $i \in [l]$ and every $P_i$-configuration $(\textbf{v}_1, \textbf{v}_2, \ldots, \textbf{v}_{k_i}) \in S^{k_i}$ with respect to $\bv_0^{(n)}$, we} add a constraint over $(P_i,Q_i)$ using the {literals $\ell_{\bv_{1}}, \ldots, \ell_{\bv_{k_i}}$, where
\[
    \ell_{\bv_j} = \begin{cases}
    +1 & \bv_j = \bv_0^{(n)}\\
    -1 & \bv_j = -\bv_0^{(n)}\\
    x_{\bv_j} & \bv_j \in T\\
    \bar{x}_{-\bv_j} & -\bv_j \in T.
    \end{cases}
\]
}
We have that $x_{\bv} \mapsto \bv$ is a basic SDP solution with zero error. Thus, there exists an assignment {$f_{S \cap T} : S \cap T \to \mathcal B$} to the variables that weakly satisfies all the constraints in $\Phi_S$, or equivalently, there exists {folded, idempotent} $f_S: S\rightarrow {\B}$ that respects the {promise template} $\Gamma$ {with respect to} $\textbf{v}_0^{(n)}${, where
\[
    f_S(\bv) = \begin{cases}
    f_{S \cap T}(\bv) & \bv \in S \cap T\\
    -f_{S\cap T}(\bv) & \text{otherwise.}
    \end{cases}
\]}
{Hence,} there exists a {folded, idempotent} coloring $f:\mathbb{S}^n \rightarrow {\B}$ that respects $\Gamma$ {with respect to} $\textbf{v}_0^{(n)}$ for every positive integer $n$.

Second, suppose that for every integer $n \geq 1$, there exists a {folded, idempotent} coloring $f^{(n)}:\mathbb{S}^n \rightarrow {\B}$ that respects $\Gamma$ {with respect to} some vector $\textbf{v}_0^{(n)}$. We seek to show that the Basic SDP decides ${\fiPCSP}(\Gamma)$. Take an arbitrary instance $\Phi$ of ${\fiPCSP}(\Gamma)$ such that there is a solution to Basic SDP with zero error. We solve the SDP relaxation of $\Phi$ and obtain a set of vectors $\textbf{v}_0$ and $ \textbf{v}_1,\ldots, \textbf{v}_n \in \S^n$ corresponding to the variables in $\Phi$ that satisfies all the constraints in $\Phi$ with zero error. {By \Cref{prop:gram-ortho}, there exists an orthogonal matrix $Q \in \R^{(n+1)\times(n+1)}$ for which $Q\bv_0 = \bv_{0}^{(n)}$. Since orthogonal matrices preserve the inner products of vectors, we have that multiplication on the left by $Q$ is an automorphism of $\mathbb{S}^n$. Define $f'^{(n)} : \S^n \to \B$ by $f'^{(n)}(\bv) = f^{(n)}(Q\bv)$. Thus, we have that $f'^{(n)}$ respects $\Gamma$ with respect to $\bv_0$ if and only if $f^{(n)}$ respects $\Gamma$ with respect to $\bv^{(n)}_0$. Therefore, the} assignment $f'^{(n)}(\textbf{v}_i)$ to the variable $u_i$ weakly satisfies all the constraints in $\Phi$. Thus, for every instance $\Phi$ of ${\fiPCSP}(\Gamma)$ with zero error on the basic SDP relaxation, there is an assignment that weakly satisfies all the constraints in $\Phi$, or equivalently, the basic SDP decides ${\fiPCSP(\Gamma)}$. \end{proof}

As a corollary, we get the following. 
\begin{corollary}
\label{cor:basic-bool}
Let $\Gamma$ be a Boolean {promise template}. Then, there is a finite integrality gap for the basic SDP relaxation of ${\fiPCSP(\Gamma)}$ if and only if for some positive integer $n$, there exists no folded{, idempotent} coloring $f: \S^n \rightarrow {\B}$ that respects $\Gamma$.
\end{corollary}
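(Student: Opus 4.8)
The plan is to obtain the corollary as an essentially definitional consequence of \Cref{lem:sphere-coloring}, using only one auxiliary observation about rotational invariance. The point is that \Cref{lem:sphere-coloring} already contains all the real content (the translation between SDP solutions and sphere colorings, and the compactness/De~Bruijn--Erd\H{o}s reduction from the infinite sphere instance to a finite one), so the remaining work is bookkeeping.

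First I would record the identification underlying the statement: for a Boolean folded PCSP $\Gamma$, the assertion ``the basic SDP decides $\PCSP(\Gamma)$'' is precisely the negation of ``there is a finite integrality gap for the basic SDP relaxation of $\Gamma$''. Indeed, every instance of a PCSP is finite by definition, and a finite integrality gap instance is exactly a (finite) instance $I$ on which the basic SDP is feasible (error zero) yet no assignment weakly satisfies all constraints --- that is, a witness that the basic SDP fails to decide $\PCSP(\Gamma)$ --- while conversely any witness of such failure is such an instance. Hence ``a finite integrality gap for $\Gamma$ exists'' $\iff$ ``$\neg(\text{basic SDP decides }\PCSP(\Gamma))$''.

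Second I would note the rotational-invariance observation: for any two unit vectors $\textbf{v}_0,\textbf{v}_0'\in\S^n$, composing a coloring with a rotation of $\S^n$ taking $\textbf{v}_0$ to $\textbf{v}_0'$ shows that a folded coloring $f:\S^n\to\{-1,+1\}$ respecting $\Gamma$ with respect to $\textbf{v}_0$ exists if and only if one exists with respect to $\textbf{v}_0'$; moreover foldedness is preserved since rotations commute with antipodality. Consequently, ``there exists a folded coloring $f:\S^n\to\{-1,+1\}$ respecting $\Gamma$'' (with the reference vector left implicit) is equivalent to the existentially quantified form ``there exists $\textbf{v}_0^{(n)}$ and a folded coloring respecting $\Gamma$ with respect to $\textbf{v}_0^{(n)}$'' appearing in \Cref{lem:sphere-coloring}. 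This lets us suppress $\textbf{v}_0$ in the corollary's phrasing.

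Finally I would invoke \Cref{lem:sphere-coloring} in contrapositive form: the basic SDP does not decide $\PCSP(\Gamma)$ if and only if it fails that for every integer $n\ge 1$ there is a coloring of $\S^n$ respecting $\Gamma$, i.e.\ if and only if for some positive integer $n$ there is no folded coloring $f:\S^n\to\{-1,+1\}$ respecting $\Gamma$. Chaining this with the two observations above yields exactly the claimed equivalence. There is no genuine obstacle here; the only points needing a sentence of care are (i) that the instance witnessing non-decidability is automatically finite (PCSP instances are finite), which is what makes ``finite integrality gap'' the right notion, and (ii) the rotational-invariance remark used to drop $\textbf{v}_0$ from the statement.
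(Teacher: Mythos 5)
Your proposal is correct and matches the paper's (implicit) reasoning: the corollary is stated there as an immediate consequence of \Cref{lem:sphere-coloring}, obtained exactly by taking the contrapositive, noting that PCSP instances are finite so non-decidability is the same as a finite integrality gap, and using rotational invariance to suppress the reference vector $\textbf{v}_0$ (the same rotation trick already appears inside the lemma's proof). No further comment is needed.
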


~\Cref{thm:raghavendra} together with~\Cref{cor:basic-bool} shows that if for a Boolean {promise template} $\Gamma$ does not admit a sphere coloring $f: \mathbb{S}^n \rightarrow {\B}$ that respects $\Gamma$ for some positive integer $n$, then, ${\fiPCSP(\Gamma)}$ does not admit a polynomial time robust algorithm, assuming the Unique Games Conjecture. Thus, our goal is to show that the {promise templates} mentioned in~\Cref{lem:reduction} do not admit sphere coloring that respects them, and use~\Cref{cor:basic-bool} to prove~\Cref{thm:main-hardness}.

In the rest of this section, we first prove a couple of lemmas regarding sphere Ramsey theory. Then, we show that the earlier mentioned {promise templates} $\Gamma_{1\text{---}5}$ do not have folded sphere coloring respecting them using the sphere Ramsey results. 
\subsection{Sphere Ramsey theory}

We start with {some more notation}. For a tuple of vectors $S=(\textbf{v}_1, \textbf{v}_2,\ldots,\textbf{v}_k)$ with $\textbf{v}_i \in \S^{d}$, we use $\rho(S)$ to denote the {the smallest radius of a sphere} that contains $S$ as a subset. 
\[
\rho(S):=\min\{r: \exists\, \textbf{c}\in \R^d, \norm{\textbf{c}-\textbf{v}_i}_2=r\,\forall i \in [k]\}.
\]
Let $S_1 = (\textbf{u}_1, \textbf{u}_2,\ldots,\textbf{u}_k)$, $S_2 = (\textbf{v}_1, \textbf{v}_2,\ldots,\textbf{v}_k)$ with $\textbf{u}_i \in \S^{d_1}$, $\textbf{v}_i \in \S^{d_2}$ be two tuples with the same arity. We say that $S_1$ and $S_2$ are congruent if they have the same pairwise inner products, i.e., ${ \langle \textbf{u}_i , \textbf{u}_{j} \rangle =\langle \textbf{v}_i , \textbf{v}_{j} \rangle }$ for all $i,j\in [k]$.
Matoušek and Rödl~\cite{matouvsek1995ramsey} proved the following:

\begin{theorem}[\cite{matouvsek1995ramsey}]
\label{thm:sphere-ramsey}
Let $S=(\textbf{u}_1, \textbf{u}_2,\ldots,\textbf{u}_k)$ be a tuple of affinely independent vectors with $\rho(S)<1$. Then, for every positive integer $r \geq 2$, there exists an integer $n_0 := n_0(S,r)$ such that for every $n \geq n_0$, for every {coloring} $f:\mathbb{S}^n \rightarrow [r]$, there exists a tuple of vectors $S'=(\textbf{v}_1, \textbf{v}_2,\ldots,\textbf{v}_k), \textbf{v}_i\in \mathbb{S}^n\,\forall i \in [k]$ that is congruent to $S$, and is monochromatic, i.e., $f(\textbf{v}_i)=f(\textbf{v}_j)$ for every $i,j \in [k]$.
\end{theorem}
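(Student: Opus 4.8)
This is a theorem of Matoušek and Rödl, so the plan is to reconstruct their argument, which adapts the Frankl--Rödl method for \emph{exact} Euclidean Ramsey of simplices to the sphere (the paper only invokes the case $r=2$, but the same proof handles general $r$). \emph{Step 1 (normalization).} Given an affinely independent tuple $S=(\textbf{u}_1,\ldots,\textbf{u}_k)$ of unit vectors with circumcenter $\textbf{c}$ and circumradius $\rho:=\rho(S)<1$, I would first replace it by a canonically placed congruent copy: translate so that $\textbf{c}=\textbf{0}$, so that $\|\textbf{u}_i\|=\rho$ for all $i$ in the translated picture, put $\hat{\textbf{t}}_i:=\textbf{u}_i/\rho$, and in one fresh dimension set $\textbf{U}_i:=\sqrt{1-\rho^2}\,\textbf{e}_0+\rho\,\hat{\textbf{t}}_i$ with $\textbf{e}_0$ orthogonal to all $\hat{\textbf{t}}_i$. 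Then $\|\textbf{U}_i\|=1$ and $\textbf{U}_i-\textbf{U}_j=\rho(\hat{\textbf{t}}_i-\hat{\textbf{t}}_j)$, so $(\textbf{U}_1,\ldots,\textbf{U}_k)$ is congruent to $S$, lies on a unit sphere, and has circumcenter $\sqrt{1-\rho^2}\,\textbf{e}_0$ of norm bounded away from $0$ --- and this last point is exactly where the hypothesis $\rho<1$ is used, giving slack in the $\textbf{e}_0$ direction. It thus suffices to show: for $n$ large, every $r$-colouring of $\mathbb{S}^n$ contains monochromatic unit vectors $\textbf{w}_1,\ldots,\textbf{w}_k$ with $\textbf{w}_i\cdot\textbf{w}_j=(1-\rho^2)+\rho^2\,(\hat{\textbf{t}}_i\cdot\hat{\textbf{t}}_j)$ for all $i\neq j$.

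\emph{Step 2 (combinatorial core).} The engine is a Hales--Jewett / product-Ramsey argument. As a first approximation, index a finite point set of a high-dimensional unit sphere by strings $x\in[k]^N$ via
\[
\Phi(x)=\Big(\sqrt{1-\rho'^2}\,\textbf{e}_0,\ \tfrac{\rho'}{\sqrt N}\big(\hat{\textbf{t}}_{x_1},\ldots,\hat{\textbf{t}}_{x_N}\big)\Big)
\]
for a parameter $\rho'\in(\rho,1)$; since $\|\hat{\textbf{t}}_i\|=1$, every $\Phi(x)$ is a unit vector. A short computation shows that for a combinatorial line with $w$ live coordinates, $\Phi(z_a)\cdot\Phi(z_b)=1-\tfrac{w\rho'^2}{N}(1-\hat{\textbf{t}}_a\cdot\hat{\textbf{t}}_b)$, which matches the target inner product $1-\rho^2(1-\hat{\textbf{t}}_a\cdot\hat{\textbf{t}}_b)$ exactly when $w\rho'^2/N=\rho^2$. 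Hales--Jewett applied to the induced colouring of $[k]^N$ yields a monochromatic combinatorial line, so the remaining --- and genuinely delicate --- task is to force the scale $w\rho'^2/N$ to equal $\rho^2$.

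\emph{Step 3 (the obstacle).} The crux is that we need an \emph{exact} congruent copy, not a homothetic one as in Gallai's theorem (where an uncontrolled ratio is fine and plain Hales--Jewett closes everything): the number $w$ of live coordinates handed to us is not under our control, and $\rho^2$ may be irrational, so $w\rho'^2/N=\rho^2$ cannot simply be arranged in the naive construction. Following Frankl--Rödl, I would instead index the points not by all of $[k]^N$ but by a family with a fixed combinatorial ``profile'' (e.g.\ $t$-subsets of $[N]$ for a fixed $t$, or strings with a prescribed number of live coordinates in the relevant parameter sets), invoke the Graham--Rothschild parameter-set theorem rather than bare Hales--Jewett, possibly pass to an auxiliary tensor power and average over a continuum of embeddings indexed by $\rho'$, and finish using the (elementary) fact that the target Gram matrix $(1-\rho^2)J+\rho^2(\hat{\textbf{t}}_i\cdot\hat{\textbf{t}}_j)_{ij}$ is positive semidefinite of rank $\le k$ once $\rho\le 1$, hence realizable inside $\mathbb{S}^n$. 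I expect this scale-control step, especially the irrational case, to be the main obstacle and the place where the proof genuinely leaves elementary Ramsey theory --- Matoušek and Rödl import ideas from the geometry of Banach spaces precisely here.
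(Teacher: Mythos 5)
First, a point of scope: the paper does not prove this statement at all --- it is Theorem~4.6, imported verbatim from Matou\v{s}ek--R\"odl and used as a black box (the only Ramsey-type statements the paper actually proves are the spread lemma via hypergraph Ramsey and its corollaries). So there is no in-paper proof to match; your attempt has to be judged as a free-standing reconstruction of the Matou\v{s}ek--R\"odl argument, and as such it has a genuine gap. Your Step~1 normalization is fine (the paper's $\rho(S)$ requires all points equidistant from the center, so after translating the circumcenter to the origin the lift $\textbf{U}_i=\sqrt{1-\rho^2}\,\textbf{e}_0+\rho\,\hat{\textbf{t}}_i$ is a congruent unit-sphere copy), and your Step~2 computation is correct: a monochromatic combinatorial line produced by Hales--Jewett gives $k$ points whose pairwise distances are those of $S$ scaled by $\sqrt{w\rho'^2/(N\rho^2)}$, where $w$ is the number of live coordinates. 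But that is exactly where the proof stops being a proof. Hales--Jewett gives you no control over $w$ (it can be as small as $1$), $\rho'$ must be fixed before the colouring of $[k]^N$ is induced, and so the construction only yields a monochromatic \emph{homothet} of $S$ --- a Gallai-type conclusion, not the congruent copy the theorem asserts. Step~3 names this obstacle accurately but does not resolve it: ``invoke Graham--Rothschild,'' ``pass to an auxiliary tensor power and average over a continuum of embeddings'' are declarations of intent with no argument attached, and the closing observation that the target Gram matrix is PSD of rank at most $k$ is irrelevant --- realizability of the configuration in $\mathbb{S}^n$ was never in doubt; monochromaticity at the exact scale is the whole content.

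Two further cautions. First, you frame the sketch as ``reconstructing their argument,'' but the published Matou\v{s}ek--R\"odl proof is not a Hales--Jewett/parameter-set argument over $[k]^N$: it runs through Ramsey's theorem for hypergraphs applied to spread-type configurations (precisely the mechanism this paper reproduces as its Lemma on $\operatorname{Spread}_N(\textbf{a},I)$), combined with linear-algebraic and Banach-space-theoretic lemmas that exploit the slack $1-\rho>0$ to pass from what spreads can realize to an exact congruent copy; so even as an attribution the outline is off. Second, the missing scale-control step is not a routine technicality one can wave at: it is the entire difficulty separating sphere-Ramsey (and exact Euclidean Ramsey of simplices) from the easy homothety statement, and any write-up would need either the spread-based route or a genuinely worked-out replacement for it. As it stands, the proposal identifies the hard step and then defers it back to the authors being cited, so it cannot be accepted as a proof of the theorem.
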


We will use this to show the following lemma regarding sphere colorings.
\begin{lemma}
\label{lem:sphere-coloring-1}
Fix an integer $k \geq 3$ and $r \geq 2$. There exists $n_0 := n_0(k{, r})$ such that for every $n \geq n_0$ and coloring $f: \mathbb{S}^n \rightarrow [r]$ and $\gamma \in \R$ with $ \frac{-1}{k-1} <\gamma <1$, there exists a monochromatic set of vectors $V = \{ \textbf{v}_1, \textbf{v}_2, \ldots, \textbf{v}_k \} \subseteq \mathbb{S}^n$ such that ${ \langle \textbf{v}_i , \textbf{v}_j \rangle} = \gamma$ for every $i \neq j$. 
\end{lemma}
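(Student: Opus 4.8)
The plan is to deduce Lemma~\ref{lem:sphere-coloring-1} from the Matou\v{s}ek--R\"odl sphere Ramsey theorem (\Cref{thm:sphere-ramsey}) by exhibiting, for each $\gamma$ in the stated range, a suitable tuple $S$ of affinely independent vectors whose pairwise inner products are all $\gamma$ and whose circumradius $\rho(S)$ is strictly less than $1$. Once such an $S$ exists inside some $\mathbb{S}^m$, applying \Cref{thm:sphere-ramsey} with $r$ colors gives an integer $n_0(S,r)$ so that every $r$-coloring of $\mathbb{S}^n$ (for $n \ge n_0$) contains a monochromatic congruent copy of $S$; since congruence preserves pairwise inner products, that copy is exactly the desired set $V = \{\textbf{v}_1,\dots,\textbf{v}_k\}$ with $\textbf{v}_i\cdot\textbf{v}_j = \gamma$ for all $i\ne j$. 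We then set $n_0(k) := \max$ over the finitely many cases we split into (or, more cleanly, construct a single family $S = S(\gamma,k)$ and note $n_0(S,r)$ depends only on the inner-product data $\gamma$ and $k$, so we may take $n_0(k,r) := \sup$ over $\gamma$ — but since $\gamma$ ranges over an interval we should instead argue $n_0(S,r)$ is bounded uniformly, see below).

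The concrete construction: for $k$ vectors in $\mathbb{R}^{k}$ with all pairwise inner products equal to $\gamma$ and all norms $1$, the Gram matrix is $(1-\gamma)I_k + \gamma J_k$, where $J_k$ is the all-ones matrix. This matrix is positive semidefinite exactly when $1-\gamma \ge 0$ and $1-\gamma + k\gamma \ge 0$, i.e. $\gamma \ge -\tfrac{1}{k-1}$; it is positive definite (hence the vectors are linearly, so affinely, independent) precisely when $\gamma > -\tfrac{1}{k-1}$ and $\gamma < 1$, which is exactly our hypothesis. So such a tuple $S$ exists and is affinely independent. It remains to compute $\rho(S)$. By symmetry the circumcenter is the centroid $\textbf{c} = \tfrac{1}{k}\sum_i \textbf{v}_i$, and a direct computation gives $\|\textbf{c}\|_2^2 = \tfrac{1}{k^2}\big(k + k(k-1)\gamma\big) = \tfrac{1 + (k-1)\gamma}{k}$, whence
\[
\rho(S)^2 = \|\textbf{v}_1 - \textbf{c}\|_2^2 = 1 - \|\textbf{c}\|_2^2 = 1 - \frac{1+(k-1)\gamma}{k} = \frac{(k-1)(1-\gamma)}{k}.
\]
For $\gamma > -\tfrac{1}{k-1}$ we get $\rho(S)^2 < \frac{(k-1)\cdot(1 + \frac{1}{k-1})}{k} = \frac{(k-1) + 1}{k} = 1$, so indeed $\rho(S) < 1$, as required by \Cref{thm:sphere-ramsey}.

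The one subtlety — and the part I expect to be the only real obstacle — is uniformity of $n_0$ over the interval of $\gamma$: \Cref{thm:sphere-ramsey} gives $n_0(S,r)$ for a \emph{fixed} tuple, but here $\gamma$ (hence $S$) varies continuously, and a priori $n_0(S,r)$ could blow up as $\gamma \to -\tfrac{1}{k-1}$ (where $\rho(S) \to 1$). There are two ways to handle this. One option is to note that the statement as phrased only needs, for each coloring $f$ and each $\gamma$, \emph{some} $n \ge n_0(k)$ to work; re-reading the quantifier order, if $n_0$ must be chosen before $\gamma$, we do need uniformity. The clean fix is to inspect the Matou\v{s}ek--R\"odl proof: their bound on $n_0(S,r)$ depends on $S$ only through $r$, $k$, and a lower bound on $1-\rho(S)^2$, i.e. on how far the circumradius is below $1$; but we do \emph{not} have such a uniform lower bound here. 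The better fix is therefore to observe that the conclusion we want is monotone in a useful sense, or — simplest of all — to prove the lemma only needs to be invoked later for inner products bounded away from the degenerate endpoint, OR to cover the interval $(-\tfrac{1}{k-1}, 1)$ by noting it suffices to handle $\gamma$ in any compact sub-interval $[-\tfrac{1}{k-1}+\tau, 1-\tau]$ and that the applications in Sections~\ref{sec:ug-hardness} fix a single $\gamma = \alpha$ up front. I would phrase the final lemma with the quantifier order ``for every $\gamma$, there exists $n_0(k,\gamma)$'' if the downstream use permits, and otherwise extract the explicit dependence of $n_0$ on $\rho(S)$ from \cite{matouvsek1995ramsey}; in either case the geometric heart of the argument — the Gram matrix positivity and the circumradius computation above — is unchanged.
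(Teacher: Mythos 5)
Your proof is correct and follows essentially the same route as the paper: exhibit an equiangular $k$-tuple (affinely independent, with Gram matrix $(1-\gamma)I_k+\gamma J_k$), check that its circumradius is strictly below $1$, and invoke the Matou\v{s}ek--R\"odl sphere Ramsey theorem; your centroid-based formula $\rho(S)^2=\frac{(k-1)(1-\gamma)}{k}$ is just a sharper form of the paper's computation with center $\alpha\sum_i \textbf{u}_i$, $0<\alpha<\frac{2}{k}$. The uniformity-in-$\gamma$ subtlety you flag is not handled in the paper's proof either (its $n_0$ also comes from applying the Ramsey theorem to a $\gamma$-dependent configuration), and it is harmless downstream since the lemma is only ever invoked for a single $\gamma$ determined by the template, so the quantifier order ``for every $\gamma$ there exists $n_0(k,\gamma,r)$'' suffices.
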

\begin{proof}
Consider an arbitrary set $S = \{ \textbf{u}_1, \textbf{u}_2, \ldots, \textbf{u}_k \}$ of $k$ unit vectors in $\mathbb{S}^n$ such that ${ \langle \textbf{u}_i , \textbf{u}_j \rangle }= \gamma$ for every $i \neq j$. Such a set $S$ is guaranteed to exist when $n$ is large enough. We show that the vectors are affinely independent: suppose for contradiction that there exists reals $c_1, c_2, \ldots, c_k$ not all zero, $\sum_i c_i = 0$ and $\sum_i c_i \textbf{u}_i = 0$.
We have 
\[
0={ \left\langle \textbf{u}_1 , \left( \sum_i c_i \textbf{u}_i \right) \right\rangle }  = c_1 + \gamma ( c_2 + \ldots + c_k) = c_1 + \gamma ( - c_1) 
\]
implying that $c_1=0$. The same argument shows that $c_i=0$ for all $i \in [k]$, a contradiction. 

The set of vectors can be embedded on a sphere of radius strictly smaller than $1$. Let $\alpha \in \R$ be such that $0<\alpha <\frac{2}{k}$, and let $\textbf{u}_s = \sum_{i\in [k]}\textbf{u}_i$, $\textbf{c} = \alpha \textbf{u}_s$. We have 
\[
\norm{\textbf{u}_s}_2^2 = \sum_i \norm{\textbf{u}_i}_2^2 + 2 \sum_{i \neq j} { \langle \textbf{u}_i , \textbf{u}_j \rangle }= {k +k(k-1) \gamma }
\]
Note that 
\begin{align*}
\norm{ \textbf{u}_i - \textbf{c} }_2^2 &=  \norm{\textbf{u}_i}_2^2 + \norm{\textbf{c}}_2^2 - 2 { \langle \textbf{c} , \textbf{u}_i \rangle} \\ 
&= 1+\alpha^2 ({ k +k(k-1) \gamma })- 2 \alpha (1+(k-1)\gamma) \\ 
&= 1 - k(1+(k-1)\gamma)\alpha \left( \alpha - \frac{2}{k}\right) 
\end{align*}
which is strictly smaller than $1$ when $0 < \alpha <\frac{2}{k}$. Thus, all the vectors are on a sphere centered at $\textbf{c}$ and radius strictly smaller than $1$, implying that $\rho(S)<1$. Now, we can use~\Cref{thm:sphere-ramsey} on $S$ and $f$ to obtain the required set of vectors $V$. 
\end{proof}

While~\Cref{thm:sphere-ramsey} is applicable to a wide range of sets $S$, we sometimes need to apply it to sets $S$ that do not form a simplex or have $\rho(S)=1$. Towards this, we use the ``Spreads'' based idea in~\cite{matouvsek1995ramsey} to obtain a version of~\Cref{thm:sphere-ramsey} directly for certain sets $S$ where~\Cref{thm:sphere-ramsey} is not applicable. 

We use the following notion of $\Spread$ vectors from~\cite{matouvsek1995ramsey}. For an integer $n$, a vector $\textbf{a} \in \R^k$, and a set $J \subseteq [n]$ of cardinality $k$ with $J=\{ j_1, j_2, \ldots, j_k\}$ {where $j_1 < \cdots < j_k$},
we let 
\[
\SpreadSingle_n(\textbf{a}, J) = { \sum_{i=1}^k a_i \textbf{e}_{j_i} }
\]
where {$\textbf{e}_1, \textbf{e}_2, \ldots, \textbf{e}_n$} is an orthonormal basis of $\R^n$. 
For a set $I \subseteq [n]$, we let 
\[
\Spread_n(\textbf{a}, I) = \{ \SpreadSingle_n (\textbf{a}, J) : J \subseteq I, |J|=k \}
\]
{Recall the hypergraph Ramsey theorem: for any $n, k, r$ there exists $N$ such that in any $r$-coloring of the hyperedges of the complete $k$-uniform hypergraph on $N$ vertices, there exists an induced subgraph on $n$-vertices which is monochromatic.} {As a direct consequence, we get the following application to spreads.} 
\begin{lemma}(~\cite{matouvsek1995ramsey})
\label{lem:spreads-ramsey}
For every $n, k, \textbf{a} \in \R^k$, there exists $N$ such that in any coloring $f: \Spread_N(\textbf{a}, [N]) \rightarrow [r]$, there exists $I$ with $|I|=n$ such that $\Spread_N(\textbf{a},I)$ is monochromatic with respect to $f$, i.e., $\exists p \in [r]$ such that $f(\textbf{v})=p$ for all $\textbf{v} \in \Spread_N(\textbf{a},I)$. 
\end{lemma}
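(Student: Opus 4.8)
The statement to prove, Lemma~\ref{lem:spreads-ramsey}, asserts that for every vector $\textbf{a}\in\R^k$ and all $n,k$ there is $N$ such that any $r$-coloring of $\Spread_N(\textbf{a},[N])$ has a monochromatic $\Spread_N(\textbf{a},I)$ with $|I|=n$. The plan is to observe that $\Spread_N(\textbf{a},J)$ depends only on the ordered (or here, unordered, since the inner-product structure and the point itself is determined once we fix which basis vectors get which coordinates) choice of the size-$k$ subset $J$ of $[N]$, so a coloring of $\Spread_N(\textbf{a},[N])$ is essentially a coloring of the $k$-subsets of $[N]$. I would then invoke the hypergraph Ramsey theorem directly: setting $N = R_k(n;r)$, the $r$-color Ramsey number for $k$-uniform hypergraphs guaranteeing a monochromatic clique on $n$ vertices, any $r$-coloring of $\binom{[N]}{k}$ admits a set $I\subseteq[N]$ with $|I|=n$ all of whose $k$-subsets receive the same color. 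Translating back, $\Spread_N(\textbf{a},I)$ is monochromatic.

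Concretely, the key steps in order: first, fix an orthonormal basis $e_1,\dots,e_N$ of $\R^N$ and note that for $J=\{j_1<\cdots<j_k\}$, $\Spread_N(\textbf{a},J)=\sum_{i=1}^k a_i e_{j_i}$ is a well-defined point determined by $J$ (I should be mildly careful here about the convention: if $\textbf{a}$ has distinct coordinates the map $J\mapsto\Spread_N(\textbf{a},J)$ might a priori not be injective only if two subsets give the same vector, which cannot happen since the support $\{j_1,\dots,j_k\}$ is read off from the nonzero coordinates when all $a_i\neq 0$; if some $a_i=0$ there is harmless redundancy but that only makes the coloring coarser and does not obstruct the argument). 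Second, define an auxiliary coloring $g:\binom{[N]}{k}\to[r]$ by $g(J)=f(\Spread_N(\textbf{a},J))$. Third, apply the finite hypergraph Ramsey theorem to $g$ with parameters $k$ (uniformity), $n$ (target clique size), $r$ (colors) to obtain $I$ with $|I|=n$ monochromatic under $g$. Fourth, unwind: every $\textbf{v}\in\Spread_N(\textbf{a},I)$ equals $\Spread_N(\textbf{a},J)$ for some $J\in\binom{I}{k}$, and $f(\textbf{v})=g(J)$ is the constant color, so $\Spread_N(\textbf{a},I)$ is $f$-monochromatic.

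I do not anticipate a genuine obstacle here — this is essentially a restatement of hypergraph Ramsey through a change of description — but the one point that requires a sentence of care is the possible non-injectivity of $J\mapsto\Spread_N(\textbf{a},J)$ when coordinates of $\textbf{a}$ coincide or vanish. In the worst case two distinct $k$-subsets could map to the same vector; then $f$ assigns them the same color automatically, which is consistent with (and weaker than) what $g$ demands, so the reduction still goes through: $g$ is well-defined because $f$ is a function on points, and Ramsey applied to $g$ gives the conclusion. Since the lemma is quoted from~\cite{matouvsek1995ramsey} and used only as a black box in what follows, a short proof along these lines, citing the hypergraph Ramsey theorem, suffices; I would present it in three or four lines rather than belaboring the bookkeeping.
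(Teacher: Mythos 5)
Your proof is correct and matches the paper's intended argument: the paper states Lemma~\ref{lem:spreads-ramsey} as a direct application of the hypergraph Ramsey theorem, which is precisely the reduction you carry out (coloring $k$-subsets of $[N]$ via $g(J)=f(\Spread_N(\textbf{a},J))$ and taking $N$ to be the corresponding Ramsey number). Your remark on possible non-injectivity of $J\mapsto\Spread_N(\textbf{a},J)$ is a fine point of care but, as you note, does not affect the argument.
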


~\Cref{lem:spreads-ramsey} implies the following immediately. 
\begin{corollary}
\label{coroll:spreads-ramsey}
Let $U=\{\textbf{u}_1, \textbf{u}_2, \ldots, \textbf{u}_k \}$ be a set of $k$ unit vectors such that $\textbf{u}_i \in \Spread_N(\textbf{a},[N]) $ for all  $i \in [k]$ for an integer $N$, and a vector $\textbf{a} \in \R^N$ with $\norm{\textbf{a}}_2=1$. Then there exists $n_0 := n_0(U,\textbf{a},N)$ such that for every $n \geq n_0,r$, for every sphere coloring $f: \S^n \rightarrow [r]$, there exists a set of $k$ vectors $V=\{\textbf{v}_1, \ldots, \textbf{v}_k\}$ that are all colored the same, and ${ \langle \textbf{v}_i , \textbf{v}_j \rangle= \langle \textbf{u}_i , \textbf{u}_j\rangle }$ for every $i,j \in [k]$. 
\end{corollary}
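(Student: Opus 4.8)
The plan is to derive \Cref{coroll:spreads-ramsey} from \Cref{lem:spreads-ramsey}: the latter is a statement about colorings of a single ``spread set'' $\Spread_N(\textbf{a},[N])$, so the work is to realize such a spread set inside the sphere $\S^n$ and then restrict the given sphere coloring to it.

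First I would record the structure of $U$: since each $\textbf{u}_i$ lies in $\Spread_N(\textbf{a},[N])$, we may write $\textbf{u}_i = \Spread_N(\textbf{a},J_i)$ for index sets $J_i \subseteq [N]$ of size $\dim(\textbf{a})$, so that $U \subseteq \Spread_N(\textbf{a},[N])$; note that every element of $\Spread_N(\textbf{a},[N])$ is a unit vector because $\norm{\textbf{a}}_2 = 1$. Next I would invoke \Cref{lem:spreads-ramsey} with weight vector $\textbf{a}$, target size $N$, and $r$ colors, obtaining an integer $M = M(\textbf{a},N,r)$ such that every coloring $g : \Spread_M(\textbf{a},[M]) \to [r]$ admits a monochromatic subset of the form $\Spread_M(\textbf{a},I)$ with $|I| = N$. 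Set $n_0 := M - 1$.

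Now fix $n \ge n_0$ and a sphere coloring $f : \S^n \to [r]$. Choosing any $M$ orthonormal vectors $e_1,\dots,e_M$ in $\R^{n+1}$ (possible since $n+1 \ge M$) realizes $\Spread_M(\textbf{a},[M])$ as a subset of $\S^n$; restricting $f$ to it and applying the previous step yields $I \subseteq [M]$ with $|I| = N$ such that $\Spread_M(\textbf{a},I)$ is monochromatic under $f$. Let $\phi : I \to [N]$ be the order-preserving bijection. Because $\Spread_n(\textbf{a},J)$ places the $p$-th entry of $\textbf{a}$ on the $p$-th smallest index of $J$, $\phi$ extends to a linear isometry from $\operatorname{span}\{e_i : i \in I\}$ onto $\operatorname{span}\{e_1,\dots,e_N\}$ that carries $\Spread_M(\textbf{a},I)$ bijectively onto $\Spread_N(\textbf{a},[N])$ and preserves all pairwise inner products. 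Defining $\textbf{v}_i := \Spread_M(\textbf{a},\phi^{-1}(J_i))$ for $i \in [k]$ then gives vectors $\textbf{v}_i \in \Spread_M(\textbf{a},I) \subseteq \S^n$ with $\textbf{v}_i \cdot \textbf{v}_j = \textbf{u}_i \cdot \textbf{u}_j$ for all $i,j$, all receiving the same $f$-color, which is exactly the conclusion of the corollary.

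The one step that deserves care is the transfer in the last paragraph, namely checking that an order-preserving relabelling of the ``active'' coordinates is an inner-product-preserving bijection $\Spread_M(\textbf{a},I) \to \Spread_N(\textbf{a},[N])$; this is precisely what lets us pull a congruent copy of every configuration living in $\Spread_N(\textbf{a},[N])$ back into the monochromatic set. (Strictly speaking $M$, and hence $n_0$, also depends on $r$ through the hypergraph Ramsey bound underlying \Cref{lem:spreads-ramsey}; this causes no trouble since $r = 2$ is fixed in all of our applications.)
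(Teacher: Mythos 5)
Your proof is correct and is exactly the deduction the paper treats as immediate: embed a sufficiently large copy of $\Spread_M(\textbf{a},[M])$ in $\S^n$, apply \Cref{lem:spreads-ramsey} to the restricted coloring, and transfer $U$ into the monochromatic $\Spread_M(\textbf{a},I)$ via the order-preserving index bijection, which preserves all pairwise inner products. Your remark about the dependence of $n_0$ on $r$ is also right (the paper's statement elides it, and only $r\le 4$ is ever used), so nothing further is needed.
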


We use~\Cref{coroll:spreads-ramsey} to prove a lemma regarding sphere colorings. For ease of notation, we call a set of $k$ unit vectors $V=\{ \textbf{v}_1, \textbf{v}_2, \ldots, \textbf{v}_k \}$ to be \textit{$k$-regular} if ${ \langle \textbf{v}_i , \textbf{v}_j \rangle }= -\frac{1}{k-1}$ for every $i \neq j$. 

\begin{lemma}
\label{lem:sphere-coloring-2}
Fix an integer $k \geq 2$. There exists $n_0 := n_0(k)$ such that for every $n \geq n_0$ and folded coloring $f: \mathbb{S}^n \rightarrow {\B}$, there exist a $k$-regular set of vectors $V = \{ \textbf{v}_1, \textbf{v}_2, \ldots, \textbf{v}_k \} \subseteq \mathbb{S}^n$ such that exactly $k-1$ vectors in $V$ are colored $-1$. 
\end{lemma}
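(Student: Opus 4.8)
The plan is to leverage \Cref{lem:sphere-coloring-1} (with $\gamma = -\tfrac{1}{k-1}$, which satisfies $-\tfrac{1}{k-1} \le \gamma < 1$ only at the boundary, so actually we must use a $\Spread$-based argument via \Cref{coroll:spreads-ramsey}) to first produce a monochromatic $k$-regular set, and then use folding plus a ``connectivity'' or ``path'' argument to slide from an all-$+1$ configuration to an all-$-1$ configuration, catching a configuration with exactly $k-1$ vectors colored $-1$ along the way. First I would fix a concrete $k$-regular system of unit vectors realized inside $\Spread_N(\textbf{a},[N])$ for a suitable $\textbf{a} \in \R^N$ with $\|\textbf{a}\|_2 = 1$: e.g. take $\textbf{a}$ proportional to $(k-1, -1, -1, \ldots, -1) \in \R^k$ (so $k$ nonzero coordinates), and check that distinct ``spreads'' of $\textbf{a}$ on disjoint or overlapping index sets can be arranged to have pairwise inner product exactly $-\tfrac{1}{k-1}$; more robustly, one can just take $k$ vectors of the form $\textbf{u}_i = c(\,e_i \cdot (k-1) - \sum_{j \ne i} e_j\,)$ on $[k]$, normalized, which are manifestly $k$-regular, and observe these lie in the $\Spread$ orbit of a single vector. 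Then \Cref{coroll:spreads-ramsey} applied with $r = 2$ gives, for $n \ge n_0(k)$, a monochromatic $k$-regular $V^{+} = \{\textbf{v}_1,\ldots,\textbf{v}_k\}$; by negating, $V^{-} = \{-\textbf{v}_1,\ldots,-\textbf{v}_k\}$ is also $k$-regular, and since $f$ is folded it is monochromatic of the opposite color. So we may assume $f \equiv +1$ on $V^{+}$ and $f \equiv -1$ on $V^{-}$.

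Next I would set up the sliding step. The key geometric fact I need is: given a $k$-regular set $\{\textbf{w}_1,\ldots,\textbf{w}_k\}$ and an index $i$, there is a unit vector $\textbf{w}_i'$ with $\textbf{w}_i' \cdot \textbf{w}_j = -\tfrac{1}{k-1}$ for all $j \ne i$ (so replacing $\textbf{w}_i$ by $\textbf{w}_i'$ keeps the set $k$-regular), and moreover the two ``endpoints'' $\textbf{w}_i$ and $-\,$(something) can be connected through a continuous family of such valid replacements. Concretely, the set of unit vectors $\textbf{x}$ with $\textbf{x} \cdot \textbf{w}_j = -\tfrac{1}{k-1}$ for all $j \ne i$ is a sphere of dimension $n - (k-1) \ge 1$ (for $n$ large), hence connected; as we move $\textbf{w}_i$ along a path in this residual sphere from $\textbf{w}_i$ to $-\textbf{w}_i$... but wait, $-\textbf{w}_i$ need not lie in this residual sphere. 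Instead I would argue directly: I want a finite path of $k$-regular configurations $U_0 = V^{+}, U_1, \ldots, U_L = V^{-}$ where consecutive ones differ in exactly one vector, and where moreover at each single-vector swap the newly inserted vector's color is ``controlled'' by a connectedness/discrete-intermediate-value argument. Actually the cleanest route: change the $k$ vectors one at a time, $\textbf{v}_i \rightsquigarrow -\textbf{v}_i$, passing through configurations $V_j^{\mathrm{mix}} = \{-\textbf{v}_1,\ldots,-\textbf{v}_j,\textbf{v}_{j+1},\ldots,\textbf{v}_k\}$. Each such set is $k$-regular since $(-\textbf{v}_a)\cdot(-\textbf{v}_b) = \textbf{v}_a\cdot\textbf{v}_b = -\tfrac1{k-1}$ and $(-\textbf{v}_a)\cdot \textbf{v}_b = \tfrac1{k-1} \ne -\tfrac1{k-1}$ — oops, that is the obstruction: mixing signs breaks $k$-regularity. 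So the crude sign flip does not work, and one genuinely needs a continuous deformation within the residual sphere.

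The main obstacle, then, is exactly this connectivity lemma — producing a path of $k$-regular configurations from the monochromatic-$(+1)$ one to the monochromatic-$(-1)$ one along which only one vector moves at a time and each intermediate vector can be sampled/chosen so that, as it traverses a connected residual sphere $\S^{n-k+1}$, its $f$-color changes from $+1$ to $-1$ at some point, yielding a configuration with exactly one $-1$ (or by symmetry exactly one $+1$, i.e. $k-1$ vectors $-1$). More precisely I would proceed as follows: start from $V^{+}$ (all $+1$); move $\textbf{v}_1$ continuously within its residual sphere (which is connected, dimension $\ge 1$ since $n \ge k$) — but I need a target endpoint for $\textbf{v}_1$ that is colored $-1$ while keeping the other $k-1$ fixed, and such a $-1$-colored point must exist in the residual sphere, otherwise the whole residual sphere is monochromatic $+1$, and I can then apply the folding symmetry / a sub-Ramsey argument on that residual sphere to still find within it a $k$-regular system giving a contradiction with $V^{-}$. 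Getting all the quantifiers on $n_0(k)$ to line up — large enough for \Cref{coroll:spreads-ramsey}, large enough that residual spheres are positive-dimensional, large enough for any recursive application — is routine bookkeeping once the connectivity idea is in place. I would present the connectivity claim as a separate sub-lemma (an instance of the general ``connectivity lemma'' foreshadowed in \Cref{sec:sdpgap-overview}), prove it by the residual-sphere argument above, and then conclude: walk from the all-$+1$ configuration $V^+$ towards $V^-$ one coordinate at a time; the first time the color of the moving vector flips, we have a $k$-regular configuration with exactly one $-1$; negating this whole configuration (folding!) gives a $k$-regular configuration with exactly $k-1$ vectors colored $-1$, as required.
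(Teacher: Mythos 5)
There are two genuine gaps, and both are fatal to the route you chose. First, your starting point --- a monochromatic $k$-regular set $V^{+}$ --- is not available. A $k$-regular set satisfies $\norm{\sum_i \textbf{v}_i}_2^2 = k + k(k-1)\cdot\bigl(-\tfrac{1}{k-1}\bigr) = 0$, so the vectors sum to zero; hence its circumcenter is the origin and $\rho(S)=1$, so \Cref{thm:sphere-ramsey} does not apply, and \Cref{lem:sphere-coloring-1} requires $\gamma > -\tfrac{1}{k-1}$ strictly. Your attempt to place an entire $k$-regular set inside a single spread orbit also does not work: the vectors $c\bigl((k-1)e_i - \sum_{j\neq i}e_j\bigr)$ are permutations of one another, but $\Spread_N(\textbf{a},J)$ places the coordinates of $\textbf{a}$ in index order, so these are not all spreads of one fixed $\textbf{a}$, and \Cref{coroll:spreads-ramsey} cannot be invoked for them. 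Worse, for $k=2$ a $2$-regular set is an antipodal pair, which a folded coloring can never make monochromatic, so the object you want to start from provably does not exist in general. Second, the sliding step collapses because of the same sum-zero identity: if $\{\textbf{w}_1,\ldots,\textbf{w}_k\}$ is $k$-regular then $\sum_{j\neq i}\textbf{w}_j = -\textbf{w}_i$, so any unit vector $\textbf{x}$ with $\textbf{x}\cdot\textbf{w}_j = -\tfrac{1}{k-1}$ for all $j\neq i$ satisfies $\textbf{x}\cdot\textbf{w}_i = 1$, i.e.\ $\textbf{x}=\textbf{w}_i$. Your ``residual sphere'' has radius $0$, not dimension $n-(k-1)$; one cannot move a single vector of a $k$-regular configuration at all. (This rigidity is exactly why the paper's connectivity lemma, \Cref{lem:connectivity}, is stated only for $0<\alpha<1$ and is used for $\Gamma_5$, not here.)

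The paper's proof avoids both problems by never Ramsey-izing the $k$-regular set itself. It explicitly constructs, by induction on $k$, index sets $J_1,\ldots,J_k$ and a zero-sum vector $\textbf{a}$ so that $\textbf{v}_i=\Spread_N(\textbf{a},J_i)$ for $i<k$ and $\textbf{v}_k=\Spread_N(-\textbf{a},J_k)$ form a $k$-regular set; equivalently, the auxiliary family $\{\textbf{v}_1,\ldots,\textbf{v}_{k-1},-\textbf{v}_k\}$ lies entirely in one spread orbit and becomes $k$-regular after flipping the sign of its last member. Applying \Cref{coroll:spreads-ramsey} to that auxiliary family yields a congruent monochromatic copy; negating the last vector (whose color flips by foldedness) produces a $k$-regular set with exactly one vector of the opposite color, and negating the whole configuration if necessary gives exactly $k-1$ vectors colored $-1$. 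If you want to salvage your write-up, you would need to replace both the monochromatic-$k$-regular starting point and the residual-sphere connectivity argument with a construction of this ``flip-one-vector'' type.
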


\begin{proof}
We construct a set of $k$ unit vectors $V=\{\textbf{v}_1, \textbf{v}_2, \ldots, \textbf{v}_k \}$ in $\Spread_N(\textbf{a}, [N])$ such that  the set of vectors $\{\textbf{v}_1, \textbf{v}_2, \ldots, \textbf{v}_{k-1}, -\textbf{v}_k\}$ is a $k$-regular set, where $N$ and $\textbf{a}$ depend only on $k$, and $\norm{\textbf{a}}_2=1$. Using~\Cref{coroll:spreads-ramsey}, we can infer that in the coloring $f$, there exist $k$ vectors $V=\{\textbf{v}_1, \textbf{v}_2, \ldots, \textbf{v}_k \}$ that are all assigned the same color, such that $\{\textbf{v}_1, \textbf{v}_2, \ldots, \textbf{v}_{k-1}, -\textbf{v}_k\}$ is a $k$-regular set. As $f$ is folded, this implies that there is a $k$-regular set in which exactly $k-1$ vectors are assigned the color $-1$. 

Thus our goal is to construct $k$ unit vectors $V=\{\textbf{v}_1, \textbf{v}_2, \ldots, \textbf{v}_k \}$ in $\Spread_N(\textbf{a}, [N])$ such that the set of vectors $\{\textbf{v}_1, \textbf{v}_2, \ldots, \textbf{v}_{k-1}, -\textbf{v}_k\}$ is a $k$-regular set. Or equivalently, we construct the vectors $\textbf{v}_1, \textbf{v}_2, \ldots, \textbf{v}_{k-1} $ in $\Spread_N(\textbf{a},[N])$ and $\textbf{v}_k$ in $\Spread_N(-\textbf{a},[N])$ such that $\{ \textbf{v}_1, \ldots, \textbf{v}_k \}$ is a $k$-regular set. 
We set $\gamma = \frac{1}{\sqrt{2(k-1)}}$ and $\textbf{a} = (\gamma, -\gamma, \gamma, -\gamma, \ldots, -\gamma) \in \R^{2(k-1)}$.
We set $\textbf{v}_i = \Spread_n (\textbf{a}, J_i), i \in [k-1], \textbf{v}_k = \Spread(-\textbf{a}, J_k)$ where $J_1, J_2, \ldots, J_k$ such that $|J_i|=2(k-1)$ for every $i \in [k]$.  
We obtain these sets by induction on $k$. First, we consider the base case when $k=2$. In this case, we set $J_1 = J_2 = \{ 1,2\}$ and $N=2$ suffices. The vectors are the following:
\begin{align*}
    \textbf{v}_1 &= (\gamma, -\gamma) \\
    \textbf{v}_2 &= (-\gamma, \gamma)
\end{align*}
where $\gamma = \frac{1}{\sqrt{2}}$. Note that the above two vectors are a $2$-regular set, and letting $\textbf{a}=(\gamma, -\gamma)$, we have  $\textbf{v}_1 \in \Spread_2(\textbf{a},[2])$, and $ \textbf{v}_2 \in \Spread_2(-\textbf{a},[2])$.
Now, suppose that $J_1, J_2, \ldots, J_k, N$ are such that $\textbf{v}_i = \Spread_N(\textbf{a},J_i), i \in [k-1], \textbf{v}_k = \Spread_N(-\textbf{a},J_k)$ satisfy the property that $\{ \textbf{v}_1, \textbf{v}_2, \ldots, \textbf{v}_k \}$ is a $k$-regular set with $\textbf{a}=(\gamma, -\gamma, \ldots, \gamma, -\gamma) \in \R^{2(k-1)}$, $\gamma = \frac{1}{\sqrt{2(k-1)}}$. 
We construct $J'_1, \ldots, J'_{k+1}$ such that $\textbf{v}'_i = \Spread_{N'}(\textbf{a}',J'_i)$ for all $i \in [k]$, $\textbf{v}'_{k+1} = \Spread_{N'}(-\textbf{a}',J_{k+1})$ satisfy the property that $\{ \textbf{v}'_1, \textbf{v}'_2, \ldots, \textbf{v}'_{k+1} \}$ is a $(k+1)$-regular set with $\textbf{a}'=(\gamma', -\gamma', \ldots, \gamma', -\gamma') \in \R^{2k}$, $\gamma' = \frac{1}{\sqrt{2k}}$.
\begin{enumerate}
    \item For every $i \in [k-1]$, we obtain $J'_i$ from $J_i$ by adding two new elements. 
    \[
    J'_i = J_i \cup \{ N+2i, N+2i+1 \} 
    \]
    This ensures that ${ \langle \textbf{v}'_i , \textbf{v}'_j \rangle }= -(\gamma')^2$ for every $i,j \in [k-1], i \neq j$. 
    \item We obtain $J_{k+1}$ from $J_k$ by adding two new elements. 
    \[
    J_{k+1} = J_k \cup \{ N+1, N+2k\}
    \]
    This ensures that ${ \langle \textbf{v}'_i , \textbf{v}'_{k+1} \rangle }= -(\gamma')^2$ for every $i \in [k-1]$. 
    \item Finally, we set $J_k$. 
    \[
    J_k = \{ N+1, N+2, \ldots, N+2k\}
    \]
    This ensures that ${ \langle \textbf{v}'_i , \textbf{v}'_{k} \rangle }= -(\gamma')^2$ for every $i \in [k+1],i \neq k$.
\end{enumerate}
We illustrate our construction by obtaining the vectors for the case when $k=3$ and $k=4$: 
{
\begin{align*}
    \begin{alignedat}{8}
        \textbf{v}_1 &&= 
        (&&\alpha, &&-\alpha, &&0, &&\alpha, &&-\alpha, &&0) \\
        \textbf{v}_2 &&= 
        (&&0, &&0, &&\alpha, &&-\alpha, &&\alpha, &&-\alpha) \\
        \textbf{v}_3 &&= 
        (&&-\alpha, &&\alpha, &&-\alpha, &&0, &&0, &&\alpha)
    \end{alignedat}
\end{align*}
}
{
\begin{align*}
    \begin{alignedat}{14}
        \textbf{v}_1 &&= 
        (&&\beta, &&-\beta, &&0, &&\beta, &&-\beta, &&0, &&0, &&\beta, &&-\beta, &&0, &&0, &&0) \\
        \textbf{v}_2 &&= 
        (&&0, &&0, &&\beta, &&-\beta, &&\beta, &&-\beta, &&0, &&0, &&0, &&\beta, &&-\beta, &&0) \\
        \textbf{v}_3 &&= 
        (&&0, &&0, &&0, &&0, &&0, &&0, &&\beta, &&-\beta, &&\beta, &&-\beta, &&\beta, &&-\beta) \\
        \textbf{v}_4 &&= 
        (&&-\beta, &&\beta, &&-\beta, &&0, &&0, &&\beta, &&-\beta, &&0, &&0, &&0, &&0, &&\beta)
    \end{alignedat}
\end{align*}
}

where $\alpha=\frac{1}{2}$ and $\beta=\frac{1}{\sqrt{6}}$.

As the pairwise inner product of every pair in $\{\textbf{v}'_1, \textbf{v}'_2, \ldots, \textbf{v}'_{k+1}\}$ is equal to $-(\gamma')^2=-\frac{1}{k}$, we get that these set of vectors are a $(k+1)$-regular set, completing the inductive proof.
\end{proof}

\subsection{Absence of sphere coloring via sphere Ramsey theory}

First, we show the absence of sphere coloring respecting $\Gamma_1$ using~\Cref{lem:sphere-coloring-2}. 
\begin{lemma}
\label{lem:gamma1}
Fix an even integer $k \geq 4$. There exists an integer $n_0$ such that for every $n \geq n_0$, there is no folded $f:\mathbb{S}^n \rightarrow {\B}$ that respects  $\Gamma_1 = (P,Q)$, $P=\Ham_{k}\{\frac{k}{2}\}$, $Q = \Ham_{k} \{0,1,\ldots,k\} \setminus \{b\}$ where $b \in \{1,k-1\}$.
\end{lemma}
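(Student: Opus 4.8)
The plan is to identify the $P$-configurations of $\Gamma_1$ with the ``$k$-regular sets'' of Lemma~\ref{lem:sphere-coloring-2} lying in the hyperplane $\textbf{v}_0^\perp$, and then to derive a contradiction from that lemma. First I would record a one-line moment computation. Let $\lambda$ be the uniform distribution on $P = \Ham_k\{k/2\}$. Every $\textbf{a} \in P$ has $\sum_{j=1}^k a_j = 0$, so $0 = \bigl(\sum_j a_j\bigr)^2 = k + \sum_{j \neq j'} a_j a_{j'}$; taking expectations over $\lambda$ gives $\mathbb{E}_\lambda[a_i] = 0$ for all $i$ and $\mathbb{E}_\lambda[a_i a_{i'}] = -\tfrac{1}{k-1}$ for $i \neq i'$ (using that $k \geq 4$ is even, so $P \neq \emptyset$ and $k-1 > 0$). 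Consequently, any unit vectors $\textbf{v}_1, \dots, \textbf{v}_k \in \S^n$ with $\textbf{v}_i \cdot \textbf{v}_{i'} = -\tfrac{1}{k-1}$ for $i \neq i'$ and $\textbf{v}_i \cdot \textbf{v}_0 = 0$ for all $i$ form a $P$-configuration with respect to $\textbf{v}_0$, witnessed by $\lambda$; i.e.\ such a tuple is exactly a $k$-regular set (in the sense of Lemma~\ref{lem:sphere-coloring-2}) contained in $\textbf{v}_0^\perp$.

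Now fix an arbitrary folded coloring $f : \S^n \to \{-1,+1\}$ and reference vector $\textbf{v}_0 \in \S^n$, and suppose toward a contradiction that $f$ respects $\Gamma_1$ with respect to $\textbf{v}_0$. Set $\S' := \{\textbf{v} \in \S^n : \textbf{v} \cdot \textbf{v}_0 = 0\}$; this is the unit sphere of the hyperplane $\textbf{v}_0^\perp$, hence isometric to $\S^{n-1}$, and the restriction $f|_{\S'}$ is again folded. Choosing $n_0 := n_0(k) + 1$, where $n_0(k)$ is the constant furnished by Lemma~\ref{lem:sphere-coloring-2}, for every $n \geq n_0$ we may apply that lemma to $f|_{\S'}$ and obtain a $k$-regular set $V = \{\textbf{v}_1, \dots, \textbf{v}_k\} \subseteq \S'$ such that exactly $k-1$ of the values $f(\textbf{v}_i)$ equal $-1$, i.e.\ exactly one equals $+1$. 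By the first paragraph, $V$ (in any order) is a $P$-configuration with respect to $\textbf{v}_0$, and so is $-V = \{-\textbf{v}_1, \dots, -\textbf{v}_k\}$, since negation preserves the pairwise inner products and orthogonality to $\textbf{v}_0$. Because $f$ is folded, the color tuple of $-V$ has exactly $k-1$ entries equal to $+1$.

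Therefore $V$ has a color tuple of Hamming weight $1$ and $-V$ has one of Hamming weight $k-1$. Since $b \in \{1, k-1\}$, one of these two $P$-configurations is colored with Hamming weight exactly $b$, so its color tuple lies outside $Q = \Ham_k\{0,1,\dots,k\}\setminus\{b\}$, contradicting that $f$ respects $(P,Q)$ with respect to $\textbf{v}_0$. As $f$ and $\textbf{v}_0$ were arbitrary, this shows that for $n \geq n_0$ no folded coloring of $\S^n$ respects $\Gamma_1$. I do not expect a genuine obstacle here; the only points that need care are placing the $k$-regular set inside $\textbf{v}_0^\perp$ (done by passing to the subsphere $\S'$) and observing that foldedness simultaneously supplies Hamming weights $1$ and $k-1$, which is exactly what lets a single application of Lemma~\ref{lem:sphere-coloring-2} cover both admissible values of $b$.
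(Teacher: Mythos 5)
Your proof is correct and follows essentially the same route as the paper: compute the moments of the uniform distribution on $P=\Ham_k\{k/2\}$ to identify $P$-configurations in $\textbf{v}_0^\perp$ with $k$-regular sets, invoke Lemma~\ref{lem:sphere-coloring-2} to get such a set with exactly one vector colored $+1$, and use foldedness of $f$ on the negated configuration to cover both $b=1$ and $b=k-1$. No substantive differences from the paper's argument.
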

\begin{proof}
Consider a large integer $n$ and suppose for the sake of contradiction that there is a folded function $f: \S^n \rightarrow {\B}$ that respects $\Gamma_1$ with respect to a vector $\textbf{v}_0 \in \mathbb{S}^n$. 
We get the $P$-configuration of vectors $\textbf{v}_1, \textbf{v}_2, \ldots , \textbf{v}_k$ where we set $\lambda(\textbf{a})=\frac{1}{|P|}$ for every $\textbf{a}\in P$ in~\Cref{def:p-conf}.
The vectors satisfy the following properties. 
\begin{enumerate}
    \item (First moments.) ${ \langle \textbf{v}_i , \textbf{v}_0 \rangle }= 0$ for every $i \in [k]$. 
    \item (Second moments.) ${ \langle \textbf{v}_i , \textbf{v}_j \rangle }= \frac{2 \binom{{k/2}}{2} - \frac{k^2}{4}}{\binom{k}{2}}  = \frac{-1}{k-1}$ {for every $i \neq j \in [k]$}.
\end{enumerate}
Our goal is to show that there is a $P$-configuration of such vectors such that exactly $b$ of them are colored $+1$ according to $f$. 
Consider the set of vectors 
\[
\textbf{v}_0^{\perp}:= \{ \textbf{u} \in \S^n : { \langle \textbf{u}, \textbf{v}_0 \rangle }= 0\}
\]
Using~\Cref{lem:sphere-coloring-2}, we can obtain a set of $k$ vectors $\textbf{u}_1, \textbf{u}_2, \ldots, \textbf{u}_{k} \in \textbf{v}_0^{\perp}$ such that ${ \langle \textbf{u}_i , \textbf{u}_j \rangle }= \frac{-1}{k-1}$ and exactly $k-1$ of $\{ \textbf{u}_1, \textbf{u}_2, \ldots, \textbf{u}_k \}$ are colored $-1$. 
\[
\left|\{i:f(\textbf{u}_i)=-1\}\right|=k-1.
\]
Note that both $\{\textbf{u}_1, \textbf{u}_2, \ldots, \textbf{u}_k \}$ and $\{-\textbf{u}_1, -\textbf{u}_2, \ldots, -\textbf{u}_k \}$ are $P$-configurations with respect to $\textbf{v}_0$. Since $f$ is folded, in at least one of these two $P$-configurations, 
there are exactly $b$ vectors that are colored $+1$, a contradiction.
\end{proof}

We show the absence of sphere coloring respecting $\Gamma_2$, $\Gamma_3$, and $\Gamma_4$ using~\Cref{lem:sphere-coloring-1}.

\begin{lemma}
\label{lem:gamma2}
Fix an odd integer $k \geq 3$ and integer $l : 0 \leq l \leq \frac{k-1}{2}$. There exists an integer $n_0$ such that for every $n \geq n_0$, there is no folded $f:\mathbb{S}^n \rightarrow{\B}$ that respects $\Gamma_2 =  (P,Q)$, $P=\Ham_k \{l,\frac{k+1}{2}\}$, $Q=\Ham_k \{0,1,2,\ldots,k-1\}$.
\end{lemma}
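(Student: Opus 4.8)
The plan is the following. Suppose for contradiction that some folded $f:\mathbb{S}^n\to\{-1,+1\}$ respects $\Gamma_2$ with respect to a vector $\textbf{v}_0\in\mathbb{S}^n$. Since $Q=\Ham_k\{0,1,\ldots,k-1\}$ is precisely $\{-1,+1\}^k$ minus the all-$(+1)$ tuple, it suffices to exhibit a $P$-configuration with respect to $\textbf{v}_0$ all of whose $k$ vectors are colored $+1$. I will produce such a configuration from the sphere-Ramsey lemma (\Cref{lem:sphere-coloring-1}) after arranging that it lives in the equatorial subsphere $\textbf{v}_0^{\perp}\cap\mathbb{S}^n$.

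Here is the design of the configuration. Write $w:=\tfrac{k+1}{2}$, which is an integer since $k$ is odd and satisfies $l<w\le k$. Let $\lambda$ be the distribution on $P=\Ham_k\{l,w\}$ that is uniform with total mass $p$ on the weight-$l$ tuples and uniform with total mass $1-p$ on the weight-$w$ tuples. A symmetry computation shows that for the uniform distribution over weight-$m$ tuples each coordinate has mean $\tfrac{2m-k}{k}$ and each pair of distinct coordinates has correlation $\tfrac{(2m-k)^2-k}{k(k-1)}$. Choosing $p=\tfrac{1}{k+1-2l}$, which lies in $(0,\tfrac12]$ because $0\le l\le\tfrac{k-1}{2}$ (and when $l=0$ the weight-$0$ shell is the single vector $(-1,\ldots,-1)\in P$, so $\lambda$ is a genuine distribution), makes every first moment $\sum_{\textbf{a}\in P}\lambda(\textbf{a})a_i$ equal to $0$. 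The common pairwise second moment is then $\gamma:=p\cdot\tfrac{(2l-k)^2-k}{k(k-1)}+(1-p)\cdot\tfrac{-1}{k}$. Using $(2l-k)^2\ge 1$ one gets $\gamma\ge-\tfrac1k$, and using $p\le\tfrac12$ and $(2l-k)^2\le k^2$ one gets $\gamma<1$; hence $\gamma\in(\tfrac{-1}{k-1},1)$.

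Now I would identify $\textbf{v}_0^{\perp}\cap\mathbb{S}^n$ with $\mathbb{S}^{n-1}$ and apply \Cref{lem:sphere-coloring-1} (with $r=2$) to $f$ restricted to this subsphere: provided $n-1$ exceeds the threshold $n_0(k)$ of that lemma, there is a monochromatic set $\{\textbf{v}_1,\ldots,\textbf{v}_k\}\subseteq\textbf{v}_0^{\perp}$ with $\textbf{v}_i\cdot\textbf{v}_j=\gamma$ for all $i\neq j$. Since these are unit vectors orthogonal to $\textbf{v}_0$ with pairwise inner product $\gamma$, they match the first and second moments of $\lambda$, so by \Cref{def:p-conf} the tuple $(\textbf{v}_1,\ldots,\textbf{v}_k)$ is a $P$-configuration with respect to $\textbf{v}_0$. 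If its common color is $+1$, we already have the desired contradiction. If its common color is $-1$, then because the first moments of $\lambda$ are all $0$, the negated tuple $(-\textbf{v}_1,\ldots,-\textbf{v}_k)$ has exactly the same first moments ($0$) and second moments ($\gamma$), hence is again a $P$-configuration with respect to $\textbf{v}_0$ realized by the same $\lambda$; by foldedness all its vectors are colored $+1$, again a contradiction. Taking $n_0:=n_0(k)+1$ finishes the proof.

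The computations above are routine; the one step that genuinely needs care is verifying that $\gamma$ stays strictly above $\tfrac{-1}{k-1}$ for every admissible $l$. This is exactly the hinge of the argument: if $\gamma$ equalled $\tfrac{-1}{k-1}$, the $k$ vectors would form a regular simplex of circumradius $1$, which is precisely the boundary case where the sphere Ramsey theorem of Matoušek and Rödl (\Cref{thm:sphere-ramsey}), and hence \Cref{lem:sphere-coloring-1}, does not apply. The bound $\gamma\ge-\tfrac1k>\tfrac{-1}{k-1}$ noted above is what keeps us inside the admissible range.
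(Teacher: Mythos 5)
Your proof is correct and follows essentially the same route as the paper's: the mixture you choose (mass $\tfrac{1}{k+1-2l}$ on the weight-$l$ shell) is exactly the paper's distribution, and the paper likewise restricts to the equator $\textbf{v}_0^{\perp}$, invokes \Cref{lem:sphere-coloring-1} to get a monochromatic $\gamma$-configuration with $\gamma>\tfrac{-1}{k-1}$, and negates via foldedness to force an all-$(+1)$ $P$-configuration. Your write-up just makes the second-moment bound and the $l=0$ edge case slightly more explicit.
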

\begin{proof}
Consider a large integer $n$ and suppose for contradiction that there is a folded function $f : \S^n \rightarrow {\B}$ that respects $\Gamma_2$ with respect to a vector $\textbf{v}_0 \in \mathbb{S}^n$.  
The $P$-configuration that we consider is a set of vectors $\textbf{v}_1, \textbf{v}_2, \ldots, \textbf{v}_k$ that are obtained by setting $\lambda(\textbf{a})$ in~\Cref{def:p-conf} as follows. We first sample an integer $t \in \{l, \frac{k+1}{2}\}$ as below.
\[
t = \begin{cases}
l,\text{ with probability}\frac{1}{1-s}. \\ 
\frac{k+1}{2},\text{ with probability}\frac{-s}{1-s}.
\end{cases}
\]
where $s = l-(k-l)<0$.
The probability distribution $\lambda$ is obtained by sampling a uniform element of $\Ham_k\{t\}$. In other words, we have 
\[
\lambda(\textbf{a}):= \begin{cases} 
\dfrac{1}{(1-s)\dbinom{k}{l}}, \text{ if }\textbf{a} \in \Ham_k \{l\}.\\
\dfrac{-s}{(1-s)\dbinom{k}{\frac{k+1}{2}}}, \text{ else if }\textbf{a} \in \Ham_k \{\frac{k+1}{2}\}.\\
0, \text{ otherwise.}
\end{cases}
\]
We obtain the following properties: 
\begin{enumerate}
    \item (First moments).  { By symmetry, $\langle \textbf{v}_i, \textbf{v}_0 \rangle = \langle \textbf{v}_j, \textbf{v}_0 \rangle $ for all $i, j \in [k]$.}
    
We have     
    ${ \langle \textbf{v}_i , \textbf{v}_0 \rangle = \frac{1}{k} \langle (\sum_{i \in [k]} \textbf{v}_i , \textbf{v}_0 \rangle = 
    \frac{1}{k} \left(\frac{1}{1-s}(l-(k-l))+\frac{-s}{1-s}1\right)=0}$ for every $i \in [k]$. 
    \item (Second moments). By symmetry of variables, we get that ${ \langle \textbf{v}_i , \textbf{v}_j \rangle}= \gamma$ for every $i \neq j$, for some $\gamma:=\gamma(k,l)$. 
    We have 
    \begin{align*}
        k+k(k-1)\gamma=\norm{\sum_i \textbf{v}_i}^2 = {\sum_{\textbf{a}\in P}\lambda(\textbf{a}) \left(\sum_{i \in [k]}a_i\right)^2}>0.
    \end{align*}
    Thus, we get that ${\frac{-1}{k-1}} < \gamma < 1$. 
\end{enumerate}
Now, restricting ourselves to the vectors in $\mathbb{S}^n$ that are orthogonal to $\textbf{v}_0$, and using~\Cref{lem:sphere-coloring-1}, we get that there exists a $P$-configuration of vectors that are all colored the same. By taking the negation of these vectors if needed, we get our required claim.
\end{proof}

\begin{lemma}
\label{lem:gamma3}
Fix integers $k, l $ such that $0 < l \leq \frac{k-1}{2}$. Then, there exists an integer $n_0$ such that for every $n \geq n_0$, there is no folded $f:\mathbb{S}^n \rightarrow {\B}$ that respects $\Gamma_3 = (P,Q)$, $P=\Ham_k \{l,k\}$, $Q=\Ham_k \{1,2,\ldots,k\}$, where $l\neq 0, l \leq \frac{k-1}{2}$. 
\end{lemma}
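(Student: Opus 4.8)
The plan is to mirror the proof of \Cref{lem:gamma2}: I will exhibit one symmetric family of $P$-configurations whose vectors all have a common pairwise inner product $\gamma$, verify that $\gamma$ falls in the window $\left(\tfrac{-1}{k-1},1\right)$ where \Cref{lem:sphere-coloring-1} applies, and then use foldedness to force a monochromatic $(-1,\dots,-1)$ pattern, which violates $Q=\Ham_k\{1,\dots,k\}$ (the only forbidden weight being $0$). Note first that $l\ge 1$ together with $l\le\frac{k-1}{2}$ forces $k\ge 3$, so \Cref{lem:sphere-coloring-1} is indeed available.

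The configuration I would use is built from the distribution $\lambda$ on $P=\Ham_k\{l,k\}$ that places mass $p=\frac{k-2l}{2(k-l)}$ on the all-ones tuple and the remaining mass $1-p=\frac{k}{2(k-l)}$ uniformly over $\Ham_k\{l\}$; since $0<l<\tfrac{k}{2}$, both $p$ and $1-p$ lie in $(0,1)$. Plugging into the first-moment equations of \Cref{def:p-conf}, the contribution is $p+(1-p)\tfrac{2l-k}{k}$, and the choice of $p$ makes this exactly $0$, so $\textbf{v}_i\cdot\textbf{v}_0=0$ for all $i$. By symmetry of $\lambda$ under coordinate permutations the second moments give $\textbf{v}_i\cdot\textbf{v}_j=\gamma$ for all $i\ne j$, and from the identity $k+k(k-1)\gamma=\norm{\sum_i\textbf{v}_i}_2^2=\sum_{\textbf{a}\in P}\lambda(\textbf{a})\bigl(\sum_i a_i\bigr)^2=pk^2+(1-p)(k-2l)^2=k(k-2l)$ one obtains $\gamma=\frac{k-2l-1}{k-1}$. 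Since $2l\le k-1$ we get $\gamma\ge 0>\frac{-1}{k-1}$, and since $l\ge 1$ we get $\gamma<1$, so $\gamma$ lies strictly between $\frac{-1}{k-1}$ and $1$.

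Now fix $n$ large and suppose for contradiction that a folded $f:\S^n\to\{-1,+1\}$ respects $\Gamma_3$ with respect to some $\textbf{v}_0$. Restricting $f$ to the subsphere $\textbf{v}_0^{\perp}\cap\S^n$ (a copy of $\S^{n-1}$) and applying \Cref{lem:sphere-coloring-1} with $r=2$ yields, provided $n-1\ge n_0(k)$, a set $V=\{\textbf{v}_1,\dots,\textbf{v}_k\}\subseteq\textbf{v}_0^{\perp}$ with $\textbf{v}_i\cdot\textbf{v}_j=\gamma$ for all $i\ne j$ on which $f$ is constant. Each $\textbf{v}_i$ is a unit vector orthogonal to $\textbf{v}_0$ with the prescribed pairwise inner products, so $V$ is a $P$-configuration with respect to $\textbf{v}_0$ realized by $\lambda$; and because $\lambda$ is symmetric with vanishing first moments, the reflected set $-V=\{-\textbf{v}_1,\dots,-\textbf{v}_k\}$ is also a $P$-configuration with respect to $\textbf{v}_0$ realized by the \emph{same} $\lambda$ (the first- and second-moment equations for $-V$ are identical to those for $V$). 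Since $f$ is folded, it is constant on $-V$ with the opposite value, so one of $V,-V$ is a $P$-configuration all of whose vectors receive color $-1$. That forces $(f(\textbf{v}_1),\dots,f(\textbf{v}_k))=(-1,\dots,-1)\notin Q$, contradicting that $f$ respects $\Gamma_3$. Taking $n_0:=n_0(k)+1$, with $n_0(k)$ the constant from \Cref{lem:sphere-coloring-1}, finishes the proof.

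The routine part is the arithmetic for $p$, for $\gamma$, and the norm identity; the only point requiring any care is the observation that the reflected configuration $-V$ is again a legitimate $P$-configuration (which is exactly why I insist on the first moments being zero) and that passing to $\textbf{v}_0^{\perp}$ loses only one dimension. I do not anticipate a genuine obstacle: structurally this lemma is the same as \Cref{lem:gamma2}, with the forbidden weight being $0$, so the Matoušek--Rödl machinery packaged in \Cref{lem:sphere-coloring-1} suffices directly, without needing the connectivity-lemma argument reserved for $\Gamma_5$.
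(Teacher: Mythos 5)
Your proposal is correct and follows essentially the same route as the paper's proof: the mixture on $\Ham_k\{l\}$ and the all-ones tuple you pick (mass $p=\frac{k-2l}{2(k-l)}$ on $\textbf{1}$) is exactly the paper's distribution written with $s=2l-k$, giving zero first moments and a common pairwise inner product $\gamma\in\left(\frac{-1}{k-1},1\right)$, after which \Cref{lem:sphere-coloring-1} on $\textbf{v}_0^{\perp}$ plus foldedness (negating the monochromatic configuration if needed) yields the forbidden all-$(-1)$ pattern. Your explicit evaluation $\gamma=\frac{k-2l-1}{k-1}\ge 0$ and the check that $-V$ is realized by the same $\lambda$ are just slightly more detailed versions of the steps the paper leaves implicit.
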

\begin{proof}
Consider a large integer $n$ and suppose for contradiction that there is a folded function $f : \S^n \rightarrow {\B}$ that respects $\Gamma_3$ with respect to a vector $\textbf{v}_0 \in \mathbb{S}^n$.  
 We pick the $P$-configuration along the same lines as in~\Cref{lem:gamma2}. We sample $t \in \{l,k\}$ with 
\[
t = \begin{cases}
l \text{ with probability }\frac{k}{k-s}\\ 
k \text{ with probability }{\frac{-s}{k-s}}
\end{cases}
\]
where $s = l-(k-l)<0$. As before, the probability distribution $\lambda$ is obtained by sampling a uniform element of $\Ham_k\{t\}$. We get 
\begin{enumerate}
    \item (First moments). ${ \langle \textbf{v}_i , \textbf{v}_0 \rangle = \frac{1}{k}\left\{\left( \frac{k}{k-s} \right) \frac{s}{k} + \left( \frac{-s}{k-s} \right) 1 \right\} }=0$ for every $i \in [k]$. 
    \item (Second moments). As in~\Cref{lem:gamma2}, we have 
    ${ \langle \textbf{v}_i , \textbf{v}_j \rangle }= \gamma$ for every $i \neq j$, for some $\gamma:=\gamma(k,l)$ with
    \begin{align*}
        k+k(k-1)\gamma=\norm{\sum_i \textbf{v}_i}^2 = {\sum_{\textbf{a}\in P}\lambda(\textbf{a}) (\sum_{i \in [k]}a_i)^2}>0.
    \end{align*}
    Thus, we get that ${\frac{-1}{k-1}} < \gamma < 1$.
\end{enumerate}
    We restrict ourselves to vectors in $\mathbb{S}^n$ that are orthogonal to $\textbf{v}_0$, and applying~\Cref{lem:sphere-coloring-1}, we get that for any coloring $f:\mathbb{S}^n \rightarrow {\B}$, there is a monochromatic $P$-configuration that we described. By negating the vectors if needed, we get our required proof. 
\end{proof}

For the proof of the next case $\Gamma_4$ we will need~\Cref{lem:sphere-coloring-1} applied to $4$-colorings of the sphere. 
\begin{lemma}
\label{lem:gamma4}
Fix integers $k \geq 3, l \in \{1,\ldots,k-1\}, l \leq \frac{k-1}{2}$. 
There exists integer $n_0$ such that for every $n \geq n_0$, there does not exist coloring $f:\mathbb{S}^n \rightarrow \{0,1\}$ that is folded and respects the {promise template} $\Gamma_4 = (P,Q),P=\Ham_k \{l\}, Q=\Ham_k \{0,1,\ldots,k\}\setminus \{0,k-1\}$.
\end{lemma}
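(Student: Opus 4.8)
The plan is to mirror the structure of the proofs of Lemmas~\ref{lem:gamma2} and~\ref{lem:gamma3}, but applying the four-color version of~\Cref{lem:sphere-coloring-1} rather than the Boolean one, because the weak predicate $Q = \Ham_k\{0,1,\ldots,k\}\setminus\{0,k-1\}$ forbids \emph{two} Hamming weights rather than one. First I would set up the $P$-configuration: since $P = \Ham_k\{l\}$ is a single Hamming weight with $1 \le l \le k-1$, the natural choice is $\lambda(\textbf{a}) = 1/\binom{k}{l}$ uniformly over $\Ham_k\{l\}$. Computing first and second moments via~\Cref{def:p-conf}: every $\textbf{v}_i \cdot \textbf{v}_0 = \frac{l - (k-l)}{k} = \frac{2l-k}{k}$, which is \emph{nonzero} in general (unlike the previous lemmas, since $l \le \frac{k-1}{2}$ forces $2l - k < 0$). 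So here we do not restrict to $\textbf{v}_0^\perp$; instead all configuration vectors make a fixed acute/obtuse angle with $\textbf{v}_0$. By symmetry the pairwise inner products $\textbf{v}_i \cdot \textbf{v}_j = \gamma$ are all equal for $i \ne j$, and $\gamma = \gamma(k,l)$ can be read off from $k + k(k-1)\gamma = \|\sum_i \textbf{v}_i\|^2 = \sum_{\textbf{a}\in P}\lambda(\textbf{a})\|\textbf{a}\|^2$; since $\|\textbf{a}\|^2 = k$ for all $\textbf{a} \in \{-1,+1\}^k$, we get $\|\sum_i \textbf{v}_i\|^2 = k$, hence $\gamma = 0$ — the configuration vectors are pairwise orthogonal and all at a common fixed angle to $\textbf{v}_0$.

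Next I would translate the problem into a sphere-coloring statement. A folded $f:\S^n\to\{-1,+1\}$ respecting $\Gamma_4$ means: for every $k$ pairwise-orthogonal unit vectors $\textbf{v}_1,\ldots,\textbf{v}_k$ each having $\textbf{v}_i\cdot\textbf{v}_0 = \frac{2l-k}{k}$, the Hamming weight of $(f(\textbf{v}_1),\ldots,f(\textbf{v}_k))$ avoids both $0$ and $k-1$. To derive a contradiction I want to produce a $P$-configuration with exactly $k-1$ of the vectors colored $+1$ (so the all-but-one pattern, weight $k-1$), or else one colored entirely $-1$ (weight $0$), and one of these two must survive after using foldedness. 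The cleanest route: build the configuration inside the affine slice $\{\textbf{u}\in\S^n : \textbf{u}\cdot\textbf{v}_0 = \tfrac{2l-k}{k}\}$, which is itself a lower-dimensional sphere; within it, the condition "$\textbf{v}_i\cdot\textbf{v}_j = 0$" translates to a fixed pairwise inner product $\gamma' > -1/(k-1)$ of the \emph{rescaled} vectors on that sphere (a short computation: if $\textbf{v}_i = \beta\textbf{v}_0 + \sqrt{1-\beta^2}\,\textbf{w}_i$ with $\beta = \tfrac{2l-k}{k}$ and $\textbf{w}_i$ unit vectors in $\textbf{v}_0^\perp$, then $\textbf{v}_i\cdot\textbf{v}_j = 0$ forces $\textbf{w}_i\cdot\textbf{w}_j = -\beta^2/(1-\beta^2)$, which exceeds $-1/(k-1)$ precisely when $k\beta^2 < 1$, i.e. $(2l-k)^2 < k$; since this may fail, I should instead not insist on orthogonality but choose $\lambda$ more cleverly as a convex combination to make $\gamma$ land in the valid range). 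Then apply~\Cref{lem:sphere-coloring-1} with $r = 4$ to the coloring of the slice-sphere obtained by refining $f$ — specifically, color each $\textbf{w}$ by the pair $(f(\textbf{w}\text{'s }\pm\text{-lift}))$, a $4$-coloring, so that a monochromatic $k$-clique in the $\gamma'$-inner-product graph gives $k$ configuration vectors whose $f$-values are all equal \emph{and} whose negations' $f$-values are all equal.

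Having a monochromatic $\gamma'$-configuration $U = \{\textbf{v}_1,\ldots,\textbf{v}_k\}$ (all colored $c\in\{-1,+1\}$) and its negation $-U$ (all colored $-c$), foldedness gives that one of $U,-U$ is colored entirely $-1$, so its Hamming weight is $0$, violating $Q$. Wait — I need to double-check that both $U$ and $-U$ are genuine $P$-configurations with respect to the \emph{same} $\textbf{v}_0$: negating all vectors negates the first moments $\textbf{v}_i\cdot\textbf{v}_0$, which corresponds to replacing the distribution $\lambda$ on $\Ham_k\{l\}$ by its pushforward under global sign flip, landing on $\Ham_k\{k-l\}$; but $\Ham_k\{k-l\} \not\subseteq P$ when $l \ne k-l$. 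So instead I should aim directly for the weight-$(k-1)$ obstruction using the $k$-regular/$\Spread$ machinery of~\Cref{lem:sphere-coloring-2} (which already handles the "exactly $k-1$ colored $-1$" pattern and builds the configuration so that $\{\textbf{v}_1,\ldots,\textbf{v}_{k-1},-\textbf{v}_k\}$ is regular), adapting it to the nonzero-first-moment setting by working on the slice-sphere. Concretely: find a distribution $\lambda$ supported on $\Ham_k\{l\}$ whose induced $\gamma$ makes the configuration embeddable as a $k$-regular set on the slice-sphere orthogonal-complement direction, invoke the $\Spread$-based~\Cref{lem:sphere-coloring-2} (or its $4$-coloring analogue) to extract a $k$-regular monochromatic-up-to-one set with exactly $k-1$ vectors colored $-1$, and observe this forces Hamming weight $1$ or $k-1$ of the $f$-vector — landing on the forbidden weight $k-1$ after possibly negating. \textbf{The main obstacle} I anticipate is exactly this bookkeeping: $\Gamma_4$ forbids two weights ($0$ and $k-1$) symmetric under neither a single Hamming value nor the global flip in a clean way, so I must carefully choose the auxiliary coloring's arity ($r=4$ seems right, encoding $(f(\textbf{v}), f(-\textbf{v}))$) and verify that the Matoušek–Rödl configuration, once negated appropriately, is guaranteed to realize one of the two bad weights; getting the sign-flip/minor-of-distribution compatibility to line up with "$P$-configuration with respect to the same $\textbf{v}_0$" is the delicate point, and I expect to resolve it by centering the configuration symmetrically (taking $\lambda$ to also be invariant, or pairing $U$ with a separate $(k-l)$-weight configuration) so that foldedness genuinely applies.
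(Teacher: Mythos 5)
There is a genuine gap here, and it is exactly the one you flag at the end: you never produce a $P$-configuration to which foldedness can legitimately be applied, and your two proposed rescues do not work. Since $l \le \frac{k-1}{2}$, the sets $\Ham_k\{l\}$ and $\Ham_k\{k-l\}$ are disjoint, so no distribution supported on $P=\Ham_k\{l\}$ is invariant under the global sign flip, and a weight-$(k-l)$ configuration is simply not a $P$-configuration; hence negating an entire configuration is never available, and the ``pair $U$ with $-U$'' template from Lemmas~\ref{lem:gamma1}--\ref{lem:gamma3} cannot be repaired in the ways you suggest. The paper's resolution is a different, asymmetric choice of $\lambda$: partition $P$ by the value of the \emph{first} coordinate and put mass $\frac12$ on the uniform distribution over each part. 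This makes $\textbf{v}_1\cdot\textbf{v}_0=0$ while $\textbf{v}_i\cdot\textbf{v}_0=\alpha:=\frac{2l-k}{k-1}$ for $i\ge 2$. Foldedness is then applied \emph{only to the single distinguished vector} $\textbf{v}_1$ (taken to be $\mp\sum_i\textbf{u}_i/\lVert\sum_i\textbf{u}_i\rVert$), and the other $k-1$ vectors are the two lifts $\alpha\textbf{v}_0\pm\sqrt{1-\alpha^2}\,\textbf{u}_i$ of $k-1$ vectors $\textbf{u}_i\perp\textbf{v}_0$ obtained from~\Cref{lem:sphere-coloring-1} applied with the four-coloring $f'(\textbf{u})=\bigl(f(\alpha\textbf{v}_0+\sqrt{1-\alpha^2}\,\textbf{u}),\,f(\alpha\textbf{v}_0-\sqrt{1-\alpha^2}\,\textbf{u})\bigr)$ — note this is \emph{not} your proposed $(f(\textbf{v}),f(-\textbf{v}))$ coloring of antipodal pairs, which does not interact correctly with the first-moment constraints. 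One then verifies by direct computation that both lifted families, together with $\pm\textbf{v}_1$ respectively, are genuine $P$-configurations with respect to the same $\textbf{v}_0$, and foldedness of $f$ at $\textbf{v}_1$ forces one of them to have exactly $0$ or $k-1$ coordinates colored $+1$. Without this asymmetric construction (or an equivalent one), your plan does not close.

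Two further technical points. First, your moment computation under the uniform distribution is wrong: for any $\lambda$ supported on $\Ham_k\{l\}$ one has $\lVert\sum_i\textbf{v}_i\rVert^2=\sum_{\textbf{a}}\lambda(\textbf{a})\bigl(\sum_i a_i\bigr)^2=(2l-k)^2$, not $\sum_{\textbf{a}}\lambda(\textbf{a})\lVert\textbf{a}\rVert^2=k$; so $\gamma=\frac{(2l-k)^2-k}{k(k-1)}$, and the vectors are pairwise orthogonal only in the special case $(2l-k)^2=k$. In fact the correct identity gives $\sum_i\textbf{v}_i=(2l-k)\textbf{v}_0$ exactly, and this rigidity is precisely what the paper exploits to pin down the distinguished vector $\textbf{v}_1$ in terms of the $\textbf{u}_i$'s. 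Second, your worry about whether the slice inner product exceeds $-1/(k-1)$ is addressed in the paper not by ``choosing $\lambda$ in a valid range'' (which, as noted, you cannot do freely on a single Hamming level) but by the identity $\gamma=\frac{1}{(1-\alpha^2)(k-1)(k-2)}-\frac{1}{k-2}>-\frac{1}{k-2}$ that comes out of the asymmetric configuration, which is what~\Cref{lem:sphere-coloring-1} (for $k-1$ vectors and $4$ colors) actually needs.
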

\begin{proof}
As before, consider a large integer $n$ and suppose for contradiction that there is a folded function $f : \S^n \rightarrow {\B}$ that respects $\Gamma_4$ with respect to a vector $\textbf{v}_0 \in \mathbb{S}^n$.  
We partition the predicate $P$ into $P_1$ and $P_{-1}$ depending on the value of the first element, i.e., 
\[
P_i = \{ \textbf{x} \in P: x_1 = i \}, i \in {\B}
\]
We pick the probability distribution $\lambda$ as follows: sample $i$ from ${\B}$ uniformly at random, then, sample a uniformly random element from $P_i$. We get 
\begin{enumerate}
    \item (First moments). By our choice of $P_i$s, we get that 
    \[
    { \langle \textbf{v}_1 , \textbf{v}_0 \rangle}= 0
    \]
    By using the symmetry of the rest of the variables and the fact that $\sum_{i=1}^k { \langle \textbf{v}_i , \textbf{v}_0\rangle }=l-(k-l)=2l-k,$ we get that 
    \[
    { \langle \textbf{v}_i , \textbf{v}_0 \rangle }= \frac{2l-k}{k-1} \forall i \in \{2,3,\ldots,k\}.
    \]
    For ease of notation, let $\alpha = \frac{2l-k}{k-1}$. 
    \item (Second moments). 
    We have 
    \begin{align*}
        { \left\langle \textbf{v}_1 , \left( \sum_{i=2}^k \textbf{v}_i\right) \right\rangle }&= \frac{1}{2} \left( 1 \cdot (2l-k-1) \right)+ \frac{1}{2}\left( (-1) \cdot (2l-k+1)\right)\\
        &=-1
    \end{align*}
    Thus, we get 
    \begin{align*}
    { \left\langle \textbf{v}_1 , \textbf{v}_i \right\rangle } &= \frac{-1}{k-1}\,\forall i \in \{2,3,\ldots,k\}.
    \end{align*}
    Note that we have 
    \[
    \norm{\sum_{i=1}^k \textbf{v}_i}_2^2 = (2l-k)^2.
    \]
    Using this, we get that 
    \begin{align*}
    { \langle \textbf{v}_i , \textbf{v}_j \rangle } &=\frac{(2l-k)^2-(k-2)}{(k-1)(k-2)} \,\forall i,j \in \{2,3,\ldots,k\}, i \neq j
    \end{align*}
    For ease of notation, let $\beta = \frac{(2l-k)^2-(k-2)}{(k-1)(k-2)}$. 
\end{enumerate}

Our goal is to show that there exists $n_0$ such that for every $n \geq n_0$, for every folded sphere coloring $f: \mathbb{S}^n \rightarrow {\B}$ and $\textbf{v}_0 \in \mathbb{S}^n$, there exists a set of $k$ vectors $V=\{ \textbf{v}_1, \textbf{v}_2, \ldots, \textbf{v}_k \}$ that satisfy the above first and second moments, and exactly $b$ vectors in $V$ are colored $+1$, where $b \in \{0,k-1\}$. 

{
Let \(\gamma = \frac{\beta - \alpha^2}{1 - \alpha^2}\). We expand this as follows:
\begin{align*}
\gamma &= \frac{\frac{(2l-k)^2 - (k-2)}{(k-1)(k-2)} - \frac{(2l-k)^2}{(k-1)^2}}{1 - \alpha^2} \\[8pt]
&= \frac{\frac{(2l-k)^2}{(k-1)^2 (k-2)} - \frac{1}{k-1}}{1 - \alpha^2} \\[8pt]
&= \frac{\frac{\alpha^2}{k-2} - \frac{1}{k-1}}{1 - \alpha^2} = \frac{\frac{1 - (1 - \alpha^2)}{k-2} - \frac{1}{k-1}}{1 - \alpha^2}.
\end{align*}

Simplifying further, we obtain:
\begin{equation}
\label{eq:gamma-bound-1}
\gamma = \frac{1}{(1 - \alpha^2)(k-1)(k-2)} - \frac{1}{k-2} > -\frac{1}{k-2}.
\end{equation}

}
We apply~\Cref{lem:sphere-coloring-1} on the following coloring of the sphere. For a vector $\textbf{u} \in \S^n$ such that ${ \langle \textbf{u}, \textbf{v}_0\rangle }=0$, 
let $f':\S^{n-1} \rightarrow {\B}^2$ be defined as 
\[
f'(\textbf{u})=\left(f\left(\alpha \textbf{v}_0 + \sqrt{1-\alpha^2} \textbf{u}\right), f\left(\alpha \textbf{v}_0 - \sqrt{1-\alpha^2} \textbf{u}\right)\right)
\]
Using~\Cref{lem:sphere-coloring-1} on $f'$ combined with the fact that $\gamma > \frac{-1}{k-2}$ obtained from~\Cref{eq:gamma-bound-1}, we can infer that there exist $k-1$ unit vectors $\textbf{u}_1, \textbf{u}_2, \ldots, \textbf{u}_{k-1} \in \S^{n-1}$ such that ${\langle \textbf{u}_i , \textbf{v}_0 \rangle }=0$ for all $i$, ${\langle \textbf{u}_i , \textbf{u}_j \rangle }=\gamma$ for all $i \neq j$ and $f'(\textbf{u}_i)=f'(\textbf{u}_j)$ for all $i \neq j, i,j \in [k-1]$. 

We define $\textbf{v}^{(1)}_1, \textbf{v}^{(1)}_2, \ldots, \textbf{v}^{(1)}_k, \textbf{v}^{(2)}_1, \ldots, \textbf{v}^{(2)}_k$ as follows. 
For $i \in \{2,3,\ldots, k\}$, we let 
\begin{align*}
    \textbf{v}^{(1)}_i = \alpha \textbf{v}_0 + \sqrt{1-\alpha^2}\textbf{u}_{i-1}\\
    \textbf{v}^{(2)}_i = \alpha \textbf{v}_0 - \sqrt{1-\alpha^2}\textbf{u}_{i-1}\\
\end{align*}
We let 
\[
\textbf{v}^{(1)}_1 = -\frac{\sum_{i=1}^{k-1} \textbf{u}_i}{\norm{\sum_{i=1}^{k-1} \textbf{u}_i}}
\]
and $\textbf{v}^{(2)}_1=-\textbf{v}^{(1)}_1$. We now prove that the set of vectors $\textbf{v}^{(1)}_1, \ldots, \textbf{v}^{(1)}_k$ and the set of vectors $\textbf{v}^{(2)}_1, \ldots, \textbf{v}^{(2)}_k$ are a $P$-configuration with first and second moments as computed earlier, where we sampled $i$ from ${\B}$ uniformly at random and set the probability distribution $\lambda$ as the uniform distribution over $P_i$.
\begin{enumerate}
    \item (First moments). As ${\langle \textbf{u}_i , \textbf{v}_0 \rangle }= 0$ for all $i \in [k-1]$, we get that 
    \[
    {\langle \textbf{v}^{(1)}_1 , \textbf{v}_0 \rangle }= {\langle \textbf{v}^{(2)}_1 , \textbf{v}_0 \rangle }= 0
    \]
    and 
    \[
    {\langle \textbf{v}^{(1)}_i , \textbf{v}_0 \rangle }= {\langle \textbf{v}^{(2)}_i , \textbf{v}_0 \rangle }= \alpha \,\,\, \forall i \in \{2,\ldots,k\}
    \]
    \item (Second moments). We have $\forall i \in \{2,3,\ldots,k\}$,
    \begin{align*}
        { \langle \textbf{v}^{(1)}_1 , \textbf{v}^{(1)}_i \rangle }&= - \frac{{ \langle (\sum_{j=1}^{k-1}\textbf{u}_j) , (\alpha \textbf{v}_0 + \sqrt{1-\alpha^2}\textbf{u}_{i-1})\rangle }}{\norm{\sum_{j=1}^{k-1} \textbf{u}_j}}  \\ 
        &= \frac{\sqrt{1-\alpha^2}}{k-1} -\frac{{ \langle (\sum_{j=1}^{k-1}\textbf{u}_j) , (\sum_{j=1}^{k-1}\textbf{u}_j) \rangle }}{\norm{\sum_{j=1}^{k-1} \textbf{u}_j}} \\ 
        &= -\frac{\sqrt{1-\alpha^2}}{k-1}\norm{\sum_{j=1}^{k-1} \textbf{u}_j} \\ 
        &= -\frac{\sqrt{1-\alpha^2}}{k-1} \sqrt{ k-1+ 2 \frac{(k-1)(k-2)}{2}\gamma} \\ 
        &= -\frac{\sqrt{1-\alpha^2}}{k-1} \sqrt{ \frac{1}{1-\alpha^2}}  \text{                Using Equation}~(\ref{eq:gamma-bound-1}) \\
        &=\frac{-1}{k-1}  \ . 
    \end{align*}
    Furthermore, $\forall i,j \in \{2,3,\ldots,k\}$, $i \neq j$, 
    \begin{align*}
        { \langle \textbf{v}^{(1)}_i , \textbf{v}^{(1)}_j \rangle } &= { \langle ( \alpha \textbf{v}_0 + \sqrt{1-\alpha^2}\textbf{v}_i), ( \alpha \textbf{v}_0 + \sqrt{1-\alpha^2}\textbf{v}_j) \rangle }\\ &= \alpha^2 + (1-\alpha^2)\gamma \\ 
        &= \beta  \ .
    \end{align*}
    Similarly, we have  
    \begin{align*}
        { \langle \textbf{v}^{(2)}_1 , \textbf{v}^{(2)}_i \rangle } &=  \frac{{ \langle (\sum_{j=1}^{k-1}\textbf{u}_j) , (\alpha \textbf{v}_0 - \sqrt{1-\alpha^2}\textbf{u}_{i-1})\rangle }}{\norm{\sum_{j=1}^{k-1} \textbf{u}_j}}  = \frac{-1}{k-1}  \, \, \forall i \in \{2,3,\ldots,k\}.\\
        { \langle \textbf{v}^{(2)}_i , \textbf{v}^{(2)}_j \rangle } &= { \langle ( \alpha \textbf{v}_0 - \sqrt{1-\alpha^2}\textbf{v}_i), ( \alpha \textbf{v}_0 - \sqrt{1-\alpha^2}\textbf{v}_j)\rangle } \\ &= \alpha^2 + (1-\alpha^2)\gamma  = \alpha^2 + (1-\alpha^2)\gamma = \beta \, \, \forall i,j \in \{2,3,\ldots,k\}, i \neq j
    \end{align*}
\end{enumerate}
Thus, both the set of vectors $\textbf{v}^{(1)}_1, \textbf{v}^{(1)}_2, \ldots, \textbf{v}^{(1)}_k$ and the set of vectors $\textbf{v}^{(2)}_1, \textbf{v}^{(2)}_2, \ldots, \textbf{v}^{(2)}_k$ are $P$-configurations with respect to $\textbf{v}_0$. As $f'(\textbf{u}_1)=f'(\textbf{u}_2)=\ldots=f'(\textbf{u}_{k-1})$, we can infer that $f(\textbf{v}^{(1)}_2)=f(\textbf{v}^{(1)}_3)=\ldots=f(\textbf{v}^{(1)}_{k})$ and $f(\textbf{v}^{(2)}_2)=f(\textbf{v}^{(2)}_3)=\ldots=f(\textbf{v}^{(2)}_{k})$. Furthermore, as $\textbf{v}^{(1)}_1=-\textbf{v}^{(2)}_1$ and $f$ is folded, we can infer that $f(\textbf{v}^{(1)}_1)=-f(\textbf{v}^{(2)}_1)$. Thus, there exists $p \in \{1,2\}$ such that $f(\textbf{v}^{(p)}_1)=-1$. Thus, there are either $0$ or $k-1$ vectors among $\textbf{v}^{(p)}_1, \textbf{v}^{(p)}_2, \ldots, \textbf{v}^{(p)}_k$ that are colored $+1$ according to $f$, contradicting the fact that $f$ respects the PCSP $(P,Q)$. 
\end{proof}

\subsection{Absence of sphere coloring via connectivity of configurations}

Finally, we show the absence of sphere coloring for $\Gamma_5$ using a connectivity lemma. 
\begin{lemma}
\label{lem:gamma5}
\label{thm:sphere-coloring-gamma}
Fix integers $k \geq 3, b \in \{0,1,\ldots,k\}\setminus\{1,k\}$. 
There exists an integer $n_0$ such that for every $n \geq n_0$, there does not exist coloring $f:\mathbb{S}^n \rightarrow \{0,1\}$ that is folded and respects the {promise template} $\Gamma_5 =  (P,Q)$, $P=\Ham_k \{1,k\}$, $Q=\Ham_k \{0,1,\ldots,k\}\setminus\{b\}$.
\end{lemma}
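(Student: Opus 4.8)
The plan is to show that any folded $f\colon\S^n\to\{-1,+1\}$ respecting $\Gamma_5$ (for $n$ large) must admit a $P$-configuration with exactly $b$ vectors colored $+1$, contradicting $(f(\textbf v_1),\dots,f(\textbf v_k))\in Q=\Ham_k\{0,\dots,k\}\setminus\{b\}$; invoking \Cref{cor:basic-bool} then yields the finite integrality gap. To see which configurations are available, fix $\gamma:=\tfrac{k-3}{k-1}\in[0,1)$ and let $\lambda$ be the distribution on $P=\Ham_k\{1,k\}$ assigning weight $\tfrac{k-2}{2(k-1)}$ to $(+1,\dots,+1)$ and weight $\tfrac1{2(k-1)}$ to each Hamming-weight-$1$ assignment. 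A short computation gives $\sum_{\textbf a}\lambda(\textbf a)a_i=0$ and $\sum_{\textbf a}\lambda(\textbf a)a_ia_j=\gamma$ for $i\ne j$, so \emph{every} set of $k$ unit vectors lying in the equator $\S^n\cap\textbf v_0^{\perp}$ with all pairwise inner products equal to $\gamma$ is a $P$-configuration with respect to $\textbf v_0$; I will call such a set a \emph{$\gamma$-configuration in the equator}. Thus it suffices to produce a $\gamma$-configuration in the equator with exactly $b$ vectors colored $+1$.

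For the endpoints: the equator $\S^n\cap\textbf v_0^{\perp}$ is isometric to $\S^{n-1}$ and $\gamma\in(-\tfrac1{k-1},1)$, so for $n$ large \Cref{lem:sphere-coloring-1} (with $r=2$) yields a monochromatic $\gamma$-configuration in the equator; negating it — which preserves being a $\gamma$-configuration in the equator — and using that $f$ is folded, I obtain a $\gamma$-configuration $V^{+}$ in the equator with all $k$ vectors colored $+1$ and $V^{-}=-V^{+}$ with all $k$ colored $-1$. If $b=0$ the configuration $V^{-}$ already gives the contradiction, so assume $b\in\{2,\dots,k-1\}$.

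The remaining ingredient is a connectivity lemma: for $n$ large, any two $\gamma$-configurations in the equator can be joined by a sequence of $\gamma$-configurations in which consecutive ones differ in a single vector. Granting this, along a path from $V^{+}$ to $V^{-}$ the number of $+1$-colored vectors changes by at most $1$ per step while running from $k$ down to $0$, so some configuration on the path has exactly $b$ vectors colored $+1$ — the desired contradiction. I would prove the connectivity lemma by induction on $k$: for $k=1$ the move graph on $\S^{n-1}$ is complete; for $k\ge2$, to join $(\textbf v_1,\dots,\textbf v_k)$ to $(\textbf w_1,\dots,\textbf w_k)$, first apply single-vector moves on the last $k-1$ coordinates so that each of $\textbf v_2,\dots,\textbf v_k$ acquires inner product $\gamma$ with $\textbf w_1$, then replace $\textbf v_1$ by $\textbf w_1$ (now a legal move), and finally observe that the last $k-1$ vectors and $\textbf w_2,\dots,\textbf w_k$ both lie on the slice $\{\textbf x\in\S^{n-1}:\textbf x\cdot\textbf w_1=\gamma\}$, a sphere of dimension $n-2$ on which they form $\gamma'$-configurations with $\gamma'=\tfrac{\gamma}{1+\gamma}\in(-\tfrac1{k-2},1)$, so the inductive hypothesis connects them by moves that lift to legal moves fixing $\textbf w_1$.

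The step I expect to be the main obstacle is verifying that the single-vector moves in this chain actually exist: when a vector is moved it must acquire inner product $\gamma$ with a fixed $\gamma$-configuration of $k-1$ vectors \emph{and} with the extra vector $\textbf w_1$, so one must show the corresponding affine subspace meets the unit sphere. This reduces to a norm estimate — the nearest point of the affine subspace to the origin must have norm below $1$ — governed by the factorization $((k-1)\gamma+1)(\gamma-1)<0$, which holds since $\gamma\in(-\tfrac1{k-1},1)$; making it go through uniformly seems to require first moving the configuration into general position relative to the target (each valid-replacement set is a positive-dimensional sphere for $n$ large, so generic choices exist). An alternative that sidesteps the estimate is to use the distribution on $P$ placing weight $\tfrac1{(k-1)^2}$ on $(+1,\dots,+1)$: its $P$-configurations project to \emph{orthonormal} $k$-tuples inside a fixed spherical cap around $\textbf v_0$, for which single-vector moves are immediate, and one transfers back to $f$ via the four-coloring device used in the proof of \Cref{lem:gamma4}; for $k=3$ one can also finish directly using the $\Spread$-based construction of \Cref{lem:sphere-coloring-2}. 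Finally, the explicit path produced by the connectivity lemma also yields an explicit gap instance.
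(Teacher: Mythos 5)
Your scaffolding is the same as the paper's: you use the identical distribution on $P=\Ham_k\{1,k\}$ (weight $\tfrac{k-2}{2(k-1)}$ on the all-ones string, $\tfrac{1}{2(k-1)}$ on each weight-one string), correctly compute first moments $0$ and pairwise second moments $\alpha=\tfrac{k-3}{k-1}$, get monochromatic all-$(+1)$ and all-$(-1)$ equicorrelated configurations in the equator from sphere Ramsey plus foldedness, and finish by an intermediate-value argument along a path of configurations that change one vector at a time. All of that matches the paper. The problem is that the entire weight of the proof rests on the connectivity lemma, and your sketch of it has a genuine gap at exactly the step you flag. In your inductive step you first move $\textbf{v}_2,\dots,\textbf{v}_k$ one at a time so that each acquires inner product $\gamma$ with $\textbf{w}_1$ while keeping inner product $\gamma$ with the other current configuration vectors. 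Whether such a replacement unit vector exists is governed by the Gram matrix of the $k$ constraint vectors, and $\textbf{w}_1$ sits in arbitrary position relative to the current configuration: the clean estimate $((k-1)\gamma+1)(\gamma-1)<0$ only controls the case of prescribed inner products with a genuine $\gamma$-configuration, not with a $\gamma$-configuration augmented by an arbitrary extra vector. In degenerate positions the move simply does not exist — e.g.\ if $\textbf{w}_1=-\textbf{v}_1$ (with $\gamma>0$), no vector can have inner product $\gamma$ with both $\textbf{v}_1$ and $\textbf{w}_1$, so your first phase is impossible as stated. The appeal to "first moving into general position" is circular in the sense that it requires moves of the very kind whose existence is in question, and the proposed alternatives (the weight-$\tfrac{1}{(k-1)^2}$ distribution giving orthonormal caps, plus "the four-coloring device") are themselves incomplete: for that distribution the negation of a cap configuration is no longer a $P$-configuration, so obtaining both monochromatic endpoints inside the connected family needs a separate argument you have not supplied.

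The paper closes this gap differently: instead of steering one endpoint directly toward the other, it first picks a fresh vector $\textbf{w}$ orthogonal to \emph{all} vectors of both configurations (possible since $n$ is large), and proves \Cref{lem:vectors-connected}, which connects any $\alpha$-configuration to one containing $\textbf{w}$. That lemma is proved by a step-by-step positive-semidefinite interpolation of Gram matrices $M(\gamma,\epsilon,d)$ with step size $\epsilon\le\tfrac{1-\alpha}{k}$, together with \Cref{cla:psd-change} showing that changing one entry of a PSD Gram matrix can be realized by changing a single vector; the explicit quadratic-form bound there is what replaces your missing norm estimate, and starting from orthogonality is what makes every intermediate matrix verifiably PSD. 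The induction in \Cref{lem:connectivity} then recurses on the slice orthogonal to $\textbf{w}$ (your $\gamma'=\gamma/(1+\gamma)$ computation is the normalized version of the paper's $\alpha-\alpha^2$). So your plan is recoverable, but only after you either prove an analogue of \Cref{lem:vectors-connected} (some gradual interpolation with a PSD certificate at every step) or otherwise rule out the degenerate relative positions; as written, the connectivity lemma — the heart of the argument — is asserted rather than proved.
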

We dedicate the rest of the section to proving~\Cref{lem:gamma5}. 
We pick the configuration of vectors along the same lines as in~\Cref{lem:gamma2}. Fix $\textbf{v}_0 \in \mathbb{S}^n$. The $P$-configuration that we study is a set of vectors $\textbf{v}_1, \textbf{v}_2, \ldots, \textbf{v}_k$ that is obtained by first sampling $t \in \{1,k\}$ such that 
\[
t = \begin{cases}
1\text{ with probability}\frac{k}{2k-2} \\ 
k\text{ with probability}\frac{k-2}{2k-2}
\end{cases}
\]
Then, we sample a uniform element from $\Ham_k \{t\}$. We get the following properties: 
\begin{enumerate}
    \item (First moments). ${ \langle \textbf{v}_i , \textbf{v}_0 \rangle }= \left( \frac{k}{2k-2} \right) \frac{2-k}{k} + \left( \frac{k-2}{2k-2} \right) 1 =0$ for every $i \in [k]$. 
    \item (Second moments). For every $i \neq j \in [k]$, we get 
    \begin{align*}
        { \langle \textbf{v}_i , \textbf{v}_j \rangle }&= \left( \frac{k}{2k-2} \right) \frac{\binom{k-1}{2}-(k-1)}{\binom{k}{2}} + \left( \frac{k-2}{2k-2} \right)1 = \frac{k-3}{k-1}
    \end{align*}
\end{enumerate}

{
For ease of notation, let \(\alpha = \frac{k-3}{k-1}\). We restrict our attention to vectors in the unit sphere \(\mathbb{S}^n\) that are orthogonal to \(\mathbf{v}_0\). This allows us to focus on \(P\)-configurations which are sets of \(k\) unit vectors with pairwise inner products equal to \(\alpha\). More formally, we define such a set of vectors as an $\alpha$-configuration.

\begin{definition}
A set of vectors \(V = \{\mathbf{v}_1, \mathbf{v}_2, \dots, \mathbf{v}_k\} \subset \mathbb{S}^n\) is called an \(\alpha\)-configuration if the vectors \(\mathbf{v}_i\) satisfy:
\[
\langle \mathbf{v}_i, \mathbf{v}_j \rangle = \alpha \quad \text{for all } i \neq j.
\]
\end{definition}

Given a folded sphere coloring \(f\), our objective is to prove that there exists an \(\alpha\)-configuration \(V\) such that exactly \(b\) of the vectors in \(V\) are assigned the value \(+1\).}

Unlike the earlier studied {promise template}, here, the setting when $b=0$ is relatively straightforward, simply because $\alpha \geq 0$. $\alpha \geq 0$ implies that there are an arbitrarily large number of unit vectors (as we can pick $n$ to be large enough) all of whose pairwise inner product is equal to $\alpha$. Specifically, we pick a set of $2k-1$ unit vectors all of whose pairwise inner product is equal to $\alpha$. Among those, $k$ of them are colored the same according to $f$. By taking the negation of these if needed, we can infer that there are $\alpha$-configurations that are all colored $+1$, and also $\alpha$-configurations that are all colored $-1$. 

Before delving further, we handle the case when $\alpha =0$, i.e., when $k=3$. In this case, we just pick a set of $k$ unit vectors that are all orthogonal to each other and their negations. Note that these are $2k$ pairwise orthogonal vectors where exactly $k$ of them are colored $+1$ according to $f$. Thus, we can pick $k$ pairwise orthogonal vectors from this set where exactly $b$ of them are colored $+1$ according to $f$. Henceforth, we assume that $\alpha >0$. 

To show that there are $\alpha$-configurations that have exactly $b$ vectors that are colored $+1$, we show a connectivity lemma (\Cref{lem:connectivity}) where we prove that between any two $\alpha$-configurations, there exists a path using $O_{k,\alpha}(1)$ $\alpha$-configurations where we change a single vector at each step in the path. As there is an $\alpha$-configuration where all are $k$ vectors are colored $+1$, and the $\alpha$-configuration obtained by negating these vectors where all the vectors are colored $-1$, the connectivity lemma then shows that for every $b \in \{0,1,\ldots, k\}$, there exists an $\alpha$-configuration that has exactly $b$ vectors that are colored $+1$. 
 
We first prove a simplified version of the connectivity lemma that we use to prove~\Cref{lem:connectivity}. 

\begin{lemma}
\label{lem:vectors-connected}
Given an $\alpha$-configuration $U=\{\textbf{u}_1,\textbf{u}_2, \ldots, \textbf{u}_k\}\subseteq \mathbb{S}^n$, and a unit vector $\textbf{w} \in \mathbb{S}^n$ that is orthogonal to each vector in $U$, there exists $L := L(k, \alpha)$ and a set of $\alpha$-configurations $V_1, V_2, \ldots, V_L$ such that 
\begin{enumerate}
    \item The consecutive configurations differ in a single vector i.e.,$ |V_i \cap V_{i+1}|=k-1$ for every $i \in [L-1]$. 
    \item Final configuration contains $\textbf{w}$, i.e., $\textbf{w} \in V_L$, and the initial configuration $V_1$ is equal to $U$. 
\end{enumerate}
\end{lemma}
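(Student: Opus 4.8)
The plan is to move from the configuration $U$ to a configuration containing $\textbf{w}$ by swapping out one vector of $U$ at a time, replacing $\textbf{u}_i$ (for $i=k, k-1, \dots, 1$) with a vector that is "closer to lying in the span involving $\textbf{w}$", while always maintaining the property that the $k$ vectors have pairwise inner product exactly $\alpha$. The key geometric fact I would isolate first is the following: given $k-1$ unit vectors $\textbf{y}_1, \dots, \textbf{y}_{k-1}$ with all pairwise inner products equal to $\alpha$, the set of unit vectors $\textbf{z}$ with $\langle \textbf{z}, \textbf{y}_j\rangle = \alpha$ for all $j$ is nonempty provided $n$ is large enough and $\alpha$ lies in the admissible range (which it does, since such configurations exist and $-\tfrac{1}{k-1} < \alpha < 1$); moreover this solution set is a sphere of positive dimension, so it is path-connected and in particular contains vectors with various prescribed relationships to any fixed auxiliary vector. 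This is just linear algebra: the Gram matrix of $\textbf{y}_1,\dots,\textbf{y}_{k-1},\textbf{z}$ must be the $k\times k$ all-$\alpha$-off-diagonal, all-$1$-diagonal matrix, which is positive definite for $\alpha$ in the stated range, hence realizable, and the fibre over $(\textbf{y}_1,\dots,\textbf{y}_{k-1})$ has dimension $n - (k-1) \ge 1$.

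With that fact in hand, the induction is on the number of vectors of $U$ still "in the way". Concretely, I would first produce a single auxiliary configuration: take $\textbf{u}_1, \dots, \textbf{u}_{k-1}$ fixed and look for a unit vector $\textbf{z}$ in their common $\alpha$-fibre such that $\textbf{z}$ lies in $\mathrm{span}(\textbf{u}_1, \dots, \textbf{u}_{k-1}, \textbf{w})$ and has a nonzero $\textbf{w}$-component; this replaces $\textbf{u}_k$ by $\textbf{z}$ in one step, giving $V_2 = \{\textbf{u}_1, \dots, \textbf{u}_{k-1}, \textbf{z}\}$. Then, since the whole fibre is path-connected and $\textbf{w}$ is orthogonal to all of $U$, I can continue: replace $\textbf{u}_{k-1}$ by a suitable vector in the $\alpha$-fibre of $\{\textbf{u}_1,\dots,\textbf{u}_{k-2},\textbf{z}\}$ that is "rotated toward $\textbf{w}$", and so on. The bookkeeping I would actually do is to show that after swapping out $\textbf{u}_j$ for $j = k, k-1, \dots, 2$ in turn (each a single-vector change, so $|V_i \cap V_{i+1}| = k-1$), one arrives at a configuration all of whose vectors lie in a bounded-dimensional subspace spanned by $\textbf{w}$ and a few of the original vectors, and finally one more controlled swap installs $\textbf{w}$ itself. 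The number of steps $L$ depends only on $k$ (a bounded number of single-vector swaps, each possibly realized through a bounded-length discretization of a continuous rotation) and on $\alpha$ (which controls how many intermediate steps the rotation needs — but this too can be taken to be a constant depending only on $k$ and $\alpha$), so $L = L(k,\alpha)$ as required.

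The main obstacle I anticipate is making the "rotate toward $\textbf{w}$" step precise while keeping each intermediate set a genuine $\alpha$-configuration: naively rotating one vector breaks its inner products with the others, so I cannot simply interpolate. The right fix is probably to never rotate a vector in isolation but instead to always replace one vector by a new vector chosen directly from the common $\alpha$-fibre of the other $k-1$, and to argue that by choosing these fibres cleverly one can "transport" the configuration into any desired position — in particular one that contains $\textbf{w}$. Proving that this transport terminates in a bounded number of steps, rather than requiring infinitely many infinitesimal adjustments, is the delicate point; I would handle it by an explicit construction (as in the inductive spread-style constructions used earlier in the paper) giving an exact sequence of fibres, rather than by a soft connectivity argument. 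A secondary technical point is ensuring $n$ is large enough throughout that every fibre invoked is nonempty and of positive dimension; since $k$ and the number of swaps are bounded, a single lower bound $n \ge n_0(k)$ suffices.
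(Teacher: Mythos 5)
Your high-level strategy (replace one vector at a time, drawing the new vector from the set of unit vectors at prescribed inner products with the remaining ones, and control everything through Gram matrices) is in the same spirit as the paper's proof, but the concrete schedule you propose has a gap at exactly the point the lemma is about. Your bookkeeping asserts that one pass of $k-1$ swaps (replacing $\textbf{u}_k,\dots,\textbf{u}_2$) followed by a single final swap installing $\textbf{w}$ suffices. In such a schedule each inserted vector is touched only once, so for the final swap to be legal every inserted vector must already, at its insertion, have inner product $\alpha$ with $\textbf{w}$. Already the first insertion requires a unit vector $\textbf{z}$ with $\langle \textbf{z},\textbf{u}_i\rangle=\alpha$ for all $i\le k-1$ and $\langle \textbf{z},\textbf{w}\rangle=\alpha$, i.e.\ the Gram matrix of $(\textbf{u}_1,\dots,\textbf{u}_{k-1},\textbf{w},\textbf{z})$ must be PSD, which by a Schur-complement computation forces $\frac{\alpha^2(k-1)}{1+\alpha(k-2)}+\alpha^2\le 1$. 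This fails once $\alpha$ is close to $1$ — in particular for the value $\alpha=\frac{k-3}{k-1}$ arising in the application already for $k\ge 7$ — so no single-pass schedule can work: each swap can move a vector's inner product with $\textbf{w}$ only by a bounded increment, and one must cycle through the vectors many times. You partially acknowledge this ("bounded-length discretization of a continuous rotation"), but proving that the transport terminates after $L(k,\alpha)$ discrete single-vector swaps is precisely the delicate point you defer; path-connectedness of the $\alpha$-fibres gives continuous motions, not a bounded number of exact single-vector exchanges.

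The paper closes this gap with two ingredients your proposal would still need. First, a realization step (\Cref{cla:psd-change}): if a PSD Gram matrix is changed in a single off-diagonal entry and remains PSD, the new matrix is realized by altering only the one corresponding vector, all others (including $\textbf{w}$) staying fixed; this is what converts "change one entry of $I(V,\textbf{w})$" into "change one vector of the configuration". Second, an explicit PSD path of $(k+1)\times(k+1)$ matrices $M(\gamma,\epsilon,d)$ in which the entries of the $\textbf{w}$-column are raised from $0$ to $\alpha$ one coordinate at a time in increments $\epsilon\le\frac{1-\alpha}{k}$, with positive semidefiniteness verified by a direct quadratic-form estimate. This yields on the order of $\frac{k^2}{1-\alpha}$ swaps, which is exactly where the dependence of $L$ on both $k$ and $\alpha$ comes from; your proposal names the right target but does not supply either of these steps, and the one explicit schedule it does give would fail for the relevant values of $\alpha$.
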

\begin{proof}
We prove the lemma by studying the inner product of $\textbf{w}$ with an $\alpha$-configuration $V$, which is equal to all zeroes initially when $V=U$, and changing $V$ one vector at a time such that the inner product of $V$ with $\textbf{w}$ eventually reaches all $\alpha$s. Towards this end, for an $\alpha$-configuration $V$, we define { $I(V,\textbf{w})$ as the Gram matrix of $\{\textbf{v}_1, \textbf{v}_2, \ldots, \textbf{v}_k, \textbf{w}\}$.}
\[
I(V,\textbf{w})_{i,j} =\begin{cases}
 \langle \textbf{v}_i, \textbf{v}_j \rangle \text{ if }1 \leq i,j \leq k. \\ 
 \langle \textbf{v}_i, \textbf{w} \rangle \text{ if }i = k+1, 1 \leq j \leq k. \\ 
 \langle \textbf{w}, \textbf{v}_j \rangle \text{ if } j = k+1, 1 \leq i \leq k. \\ 
 \langle \textbf{w}, \textbf{w} \rangle = 1, \text{ if }i=j=k+1.
\end{cases}
\]

Starting with $I(V,\textbf{w})$ where $V=U$, our goal is to change one vector in $V$ at a time so that we eventually reach a configuration where the last column in $I(V,\textbf{w})$ is equal to $(\alpha, \alpha, \ldots, \alpha, 1)$. Note that changing one vector in $V$ corresponds to changing a single value in the last column (and the corresponding value in the last row) in $I(V,\textbf{w})$. We show that the opposite direction also holds, i.e., by changing a single value in the last column (and the corresponding value in the last row) of $I(V,\textbf{w})$, we obtain a new matrix that is $I(V',\textbf{w})$ with $V'$ being different from $V$ only in a single vector, as long as the new matrix is positive semidefinite.

\begin{claim}
\label{cla:psd-change}
Suppose that $A$ is a $m \times m$ real symmetric positive semidefinite matrix with $A=U^TU$ with $U=[\textbf{u}_1 \textbf{u}_2 \ldots \textbf{u}_m]$ where $\textbf{u}_i \in \mathbb{R}^n$ with $n\geq m$, and $A'$ is another real symmetric positive semidefinite matrix such that $A'$ and $A$ differ only in {the entries $(1,2)$ and $(2,1)$}. Then, there exists {$\bu'_1 \in \R^n$ such that for} $U' {:}=[\textbf{u}'_1 \textbf{u}_2 \ldots \textbf{u}_m]$ {we have} that $A'=(U')^TU'$.
\end{claim}
\begin{proof}
As $A'$ is a positive semidefinite matrix, there exist $\textbf{v}_1, \textbf{v}_2, \ldots, \textbf{v}_m \in \mathbb{R}^n$ such that $A'=V^TV$ where $V=[\textbf{v}_1 \textbf{v}_2 \ldots \textbf{v}_m]$. {Observe that $\langle \bu_i, \bu_j\rangle = \langle \bv_i, \bv_j\rangle$ for all $i,j \in \{2, \hdots, n\}$. Therefore, by Proposition~\ref{prop:gram-ortho}, there exists an orthogonal matrix $Q \in \R^{n \times n}$ such that $\bv_i = Q\bu_i$ for all $i \in \{2, \hdots, n\}$. Thus, $A' = (V Q^{T})^T (Q^{T} V)$, where the last $n-1$ columns of $Q^T V$ are $\bu_2, \hdots, \bu_n$. We set $\bu'_1$ to be the first column of $Q^T V$.}
\end{proof}
Thus, our goal is to obtain a series of $(k+1) \times (k+1)$ real symmetric positive semidefinite matrices $M_1, M_2, \ldots, M_L$ such that 
\begin{enumerate}
    \item $M_1 = I(U, \textbf{w})$. 
    \item The diagonal entries of $M_i$ for every $i \in [L]$ are all equal to $1$. 
    \item All the off-diagonal entries of $M_L$ are equal to $\alpha$.
    \item For every $i \in [L-1]$, $M_i$ and $M_{i+1}$ differ only in one element in the last column (and the corresponding element in the last row). 
\end{enumerate}
Towards this end, for $\epsilon \geq 0$, $0 \leq \gamma \leq \alpha$, and $d \in [k]$, we define the $(k+1)\times (k+1)$ matrix $M(\gamma, \epsilon, d)$ as follows: 
\[
M(\gamma, \epsilon, d)_{i,j} = \begin{cases} 
1, \text { if }i=j.\\
\alpha, \text { if }1\leq i,j\leq k.\\
\gamma + \epsilon, \text{ if }i=k+1, 1\leq j \leq d \text{ or } j=k+1, 1 \leq i\leq d.\\
\gamma, \text{ if }i=k+1, d+1\leq j \leq k \text{ or } j=k+1, d+1 \leq i\leq k.
\end{cases}
\]

Note that $I_{U,\textbf{w}}=M(0,0,k)$, and our goal $M_L$ is equal to $M(\alpha, 0,k)$. We consider the below sequence of positive semidefinite matrices $M(0,0,k), M(0,\epsilon,1), M(0,\epsilon,2), \ldots, M(0,\epsilon,k), M(\epsilon, \epsilon,1), M(\epsilon, \epsilon, 2),\ldots,$ $M(\epsilon, \epsilon,k)$, $M(2\epsilon,\epsilon,1), \ldots, M(\alpha-\epsilon, \epsilon, k)$. At each step, we change a single element in the last column (and the corresponding element in the last row). 

The final step is to show that when we set $\epsilon\leq \frac{1-\alpha}{k}$, $M(\gamma, \epsilon,d)$ is positive semidefinite for every $d\in [k], 0\leq \gamma\leq \alpha$. This follows from a simple calculation. 

\begin{align*}
    x^TM(\gamma, \epsilon, d)x &= \gamma ( \sum_{i=1}^{k+1}x_i)^2 + (\alpha - \gamma) ( \sum_{i=1}^{k}x_i)^2 + (1-\alpha) \sum_{i=1}^kx_i^2 + (1-\gamma)x_{k+1}^2 + { 2\epsilon} (x_{k+1})(\sum_{i=1}^d x_i) \\ 
    &\geq (1-\alpha) \sum_{i=1}^{k+1} x_i^2 + {2\epsilon } (x_{k+1})(\sum_{i=1}^d x_i) \\
    &\geq {(1-\alpha) \sum_{i=1}^{k+1} x_i^2 -2\epsilon \left|(x_{k+1})(\sum_{i=1}^d x_i)\right|}\\
    &\geq {(1-\alpha) \sum_{i=1}^{k+1} x_i^2 -2\left(\frac{1-\alpha}{k}\right) \left|(x_{k+1})(\sum_{i=1}^d x_i)\right|}\\
    &\geq \left( \frac{1-\alpha}{k} \right) \left(k \sum_{i=1}^{k+1} x_i^2 - {2}\left|(x_{k+1})(\sum_{i=1}^d x_i)\right| \right) \\ 
    &\geq {\left( \frac{1-\alpha}{k} \right) \left(k \sum_{i=1}^{k+1} x_i^2 - \sum_{i=1}^d (x_i^2 + x_{k+1}^2) \right)}\\ 
    &\geq 0. \qedhere
\end{align*}
\end{proof}

Now, we prove the connectivity lemma. 
\begin{lemma}
\label{lem:connectivity}
Fix an integer $k\geq 2$ and $0 < \alpha < 1$. Suppose that $U$ and $V$ are two $\alpha$-configurations in $\mathbb{S}^n$ { with cardinality $k$ each}. Then, there exists $n_0 := n_0(k,\alpha)$, and $L := L(k, \alpha)$ such that as long as $n \geq n_0$, there exist $\alpha$-configurations $V_1, V_2, \ldots, V_L$ {each with cardinality $k$ }such that 
\begin{enumerate}
    \item The endpoints are $U$ and $V$ i.e., $U=V_1, V=V_L$. 
    \item Any two consecutive configurations differ in exactly one vector i.e., $| V_i \cap V_{i+1}|=k-1$ for every $i \in [L-1]$. 
\end{enumerate}

\end{lemma}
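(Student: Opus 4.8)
The plan is to prove Lemma~\ref{lem:connectivity} by reducing it to the single-vector version, Lemma~\ref{lem:vectors-connected}, via a divide-and-conquer argument that replaces the vectors of $U$ one at a time with the vectors of $V$. The key conceptual tool is that Lemma~\ref{lem:vectors-connected} lets us "move" an $\alpha$-configuration so that it contains a prescribed target vector $\bw$, provided $\bw$ is orthogonal to the current configuration; iterating this should let us go from $U$ to $V$. To make the orthogonality hypothesis available I will work in a sufficiently high dimension: I would take $n_0$ large enough (say $n_0 = n_0(k,\alpha)$ at least $2k$ plus whatever Lemma~\ref{lem:vectors-connected} needs) so that there is a "fresh" subspace $W \subseteq \R^{n+1}$ orthogonal to $\mathrm{span}(U \cup V)$ of dimension at least $k$.

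\textbf{Main steps.} First I would fix $i$ and describe how to replace $\bv_i$ (the $i$-th vector of the target $V$) into a working configuration that currently agrees with $V$ on coordinates $1,\dots,i-1$ and with $U$ on coordinates $i,\dots,k$. Call the current configuration $W^{(i-1)} = \{\bv_1,\dots,\bv_{i-1},\bu_i,\dots,\bu_k\}$ — note this is an $\alpha$-configuration only if such a "mixed" configuration exists, which is the delicate point (see below). The step from $W^{(i-1)}$ to $W^{(i)}$ would be: (a) pick an auxiliary unit vector $\bw^\ast$ in the fresh subspace $W$, orthogonal to everything seen so far; (b) use Lemma~\ref{lem:vectors-connected} applied to the sub-configuration obtained by isolating $\bu_i$ (the vector to be removed) — more precisely, apply it to move one vector of $W^{(i-1)}$ into position $\bw^\ast$, keeping the other $k-1$ fixed; (c) now the working configuration contains $\bw^\ast$, which is orthogonal to $\bv_i$, so apply Lemma~\ref{lem:vectors-connected} again (in reverse, reading the path backwards) to move $\bw^\ast$ to $\bv_i$. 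Concatenating the $k$ such two-part journeys, and reversing paths where needed, gives the desired path of length $L = L(k,\alpha) = O(k \cdot L_{\ref{lem:vectors-connected}}(k,\alpha))$, a bound depending only on $k$ and $\alpha$.

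\textbf{The main obstacle} is the "mixed configuration" issue: when I swap $\bu_i$ out and $\bv_i$ in, the intermediate set $\{\bv_1,\dots,\bv_{i-1},\bu_i,\dots,\bu_k\}$ must itself be a valid $\alpha$-configuration, i.e.\ all pairwise inner products must be $\alpha$ — but there is no reason a $\bv_j$ and a $\bu_i$ have inner product $\alpha$ a priori. The fix is to route everything through a common "canonical" $\alpha$-configuration $U^\ast$ living in the fresh subspace $W$ (possible since $\dim W \ge k$ and $\alpha < 1$, so $k$ unit vectors with pairwise inner product $\alpha$ exist there): I would first show $U$ is connected to $U^\ast$, then $V$ is connected to $U^\ast$, and concatenate (reversing the second). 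So the real content reduces to: \emph{any $\alpha$-configuration $U$ is connected to the fixed configuration $U^\ast \subseteq W$.} This I prove by peeling off vectors of $U$ one at a time — at step $i$, use Lemma~\ref{lem:vectors-connected} to move the "last $U$-vector" $\bu_i$ to the target vector $\bu^\ast_i \in U^\ast$; the hypothesis of Lemma~\ref{lem:vectors-connected} (namely $\bu^\ast_i$ orthogonal to the current configuration) holds because $\bu^\ast_i \in W$ is orthogonal to $U$ and, by choosing the $\bu^\ast_j$ pairwise at inner product $\alpha$ but in a subspace still leaving room, one arranges the already-placed $\bu^\ast_1,\dots,\bu^\ast_{i-1}$ to be compatible. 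Here one must be slightly careful: after placing $\bu^\ast_1,\dots,\bu^\ast_{i-1}$, the vector $\bu^\ast_i$ is \emph{not} orthogonal to those; so instead I would apply Lemma~\ref{lem:vectors-connected} in the form "move one vector to any target orthogonal to the \emph{other} $k-1$ vectors," and process the vectors of $U^\ast$ in an order where each new target is orthogonal to the $k-1$ vectors currently held — which is possible by keeping one "scratch" slot always occupied by a vector in a fresh part of $W$, orthogonal to all of $U^\ast$. Tracking these dimension bookkeeping details to guarantee $n_0$ depends only on $k,\alpha$ (and not on $n$) is the part that needs the most care, but it is routine once the scheme is fixed.
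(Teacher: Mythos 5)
There is a genuine gap, and it is exactly at the point where you lean on \Cref{lem:vectors-connected} to do more than it says. That lemma only guarantees that the \emph{terminal} configuration of the path contains the prescribed vector $\mathbf{w}$; it gives no control over the other $k-1$ vectors of that terminal configuration (in its proof, essentially every vector of the configuration changes along the way, since the Gram entries with $\mathbf{w}$ are raised one coordinate at a time via replacements). So your step (b), ``move one vector of $W^{(i-1)}$ into position $\mathbf{w}^\ast$, keeping the other $k-1$ fixed,'' is not something the lemma provides, and each invocation of it can destroy the vectors $\mathbf{v}_1,\dots,\mathbf{v}_{i-1}$ you had already placed, so the ``agree with $V$ on the first $i$ coordinates'' invariant cannot be maintained.

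Routing through a canonical configuration $U^\ast$ in a fresh orthogonal subspace does not repair this, because the mixed-configuration obstruction you yourself flag reappears: since $U^\ast\subseteq W\perp \mathrm{span}(U\cup V)$, any set mixing vectors of $U^\ast$ with leftover vectors of $U$ has pairwise inner product $0\neq\alpha$ across the two groups, hence is not an $\alpha$-configuration (recall $\alpha>0$ in the lemma). The strengthened form you then invoke --- ``move one vector to a target orthogonal to the other $k-1$ vectors,'' with those $k-1$ retained --- is in fact false: if the $k-1$ vectors are kept throughout, the terminal configuration contains them together with the target, which forces the target's inner product with each of them to be $\alpha>0$, contradicting the assumed orthogonality. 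What is missing is the device the paper uses to preserve progress: an induction on $k$. One applies \Cref{lem:vectors-connected} once to move both $U$ and $V$ to configurations $X$ and $Y$ sharing a single common vector $\mathbf{w}$ (chosen orthogonal to everything in $U\cup V$), and then recurses entirely inside $\mathbf{w}^\perp$ on the residual $(k-1)$-vector configurations (whose pairwise inner products become $\alpha-\alpha^2$ after removing the $\mathbf{w}$-components), so that $\mathbf{w}$ stays pinned for the remainder of the path; the bound $L(k,\alpha)$ then comes out of the recursion. Without some such recursive pinning mechanism, the one-at-a-time replacement scheme in your proposal does not go through.
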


\begin{proof}
We use induction on $k$. First, we consider the case when $k=2$. Let $U=\{\textbf{u}_1, \textbf{u}_2\}$, and $V=\{\textbf{v}_1, \textbf{v}_2\}$ be two $\alpha$-configurations. Consider an arbitrary vector $\textbf{w}$ that is orthogonal to all the vectors in $U$ and $V$. Such a $\textbf{w}$ is guaranteed to exist when $n$ is large enough. 
Now, using~\Cref{lem:vectors-connected}, we can infer that there exists a configuration $W=\{\textbf{w}, \textbf{w}'\}$ such that there is a path of length $O_{\alpha}(1)$ from $U$ to $W$, and from $V$ to $W$. Thus, there exists a path of length $O_{\alpha}(1)$ from $U$ to $V$. 

Assume that the proof holds for $k-1$, and we are given the configurations $U$ and $V$ consisting of $k$ vectors each. We choose a vector $\textbf{w}$ that is orthogonal to each of the vectors in $U$ and $V$. Using~\Cref{lem:vectors-connected}, there are {$\alpha$-}configurations $X=\{\mathbf{w}, \mathbf{x}_1, \mathbf{x}_2, \ldots, \mathbf{x}_{k-1}\}$ and $Y=\{\mathbf{w}, \mathbf{y}_1, \mathbf{y}_2, \ldots, \mathbf{y}_{k-1}\}$ such that there is a path of length $O_{\alpha, k}(1)$ from $U$ to $X$ and from $V$ to $Y$.
Now, our goal is to show that there is a path from $X$ to $Y$ of length $O_{\alpha, k}(1)$. We achieve this by restricting ourselves to components orthogonal to $\mathbf{w}$ of the $(k-1)$-sized configurations $X'=\{\mathbf{x'}_1, \mathbf{x'}_2, \ldots, \mathbf{x'}_{k-1}\}$ and $Y'=\{\mathbf{y'}_1, \mathbf{y'}_2, \ldots, \mathbf{y'}_{k-1}\}$, where {$\mathbf{x'}_i = \frac
{\mathbf{x}_i - \langle \mathbf{x}_i, \mathbf{w}\rangle \mathbf{w}}{\norm{\mathbf{x}_i - \langle \mathbf{x}_i, \mathbf{w}\rangle \mathbf{w}}}$} for each $i \in [k-1]$ (and similarly for $\mathbf{y'}_i$). 
{
Note that \( X' \) and \( Y' \) are \(\frac{\alpha}{1+\alpha}\)-configurations, 
each of which is a subset of 
\[
\{\mathbf{v} \in \mathbb{S}^n : \langle \mathbf{v}, \mathbf{w} \rangle = 0 \} \cong \mathbb{S}^{n-1}.
\]
By the induction hypothesis, there exists a path from \( X' \) to \( Y' \) 
using \(\frac{\alpha}{1+\alpha}\)-configurations in 
$\{\mathbf{v} \in \mathbb{S}^n : \langle \mathbf{v}, \mathbf{w} \rangle = 0 \}$.
}
Adding the component along $\mathbf{w}$, we get a path from $X$ to $Y$ of length $O_{\alpha,k}(1)$, finishing the proof. 
\end{proof}

We are now ready to prove~\Cref{thm:sphere-coloring-gamma}. 
\begin{proof}
Suppose for contradiction that there exists a coloring $f:\mathbb{S}^n \rightarrow \{0,1\}$ that is folded and respects the PCSP $\Gamma_5$. Consider an arbitrary set of vectors $\{ \mathbf{v}_1, \mathbf{v}_2, \ldots, \mathbf{v}_{2k-1} \}$ that are all orthogonal to $\textbf{v}_0$, and have pairwise inner product $\alpha$. Such a set is guaranteed to exist as $\alpha \geq 0$. 
There exists a set of $k$ vectors among these that are all assigned the same color in $f$. Let these form the configuration $U$, and the set of negations of these vectors be the configuration $V$. Using~\Cref{lem:connectivity}, there exists a path from $U$ to $V$ where we change a single vector in each step. Note that the endpoints of the path have $0$ and $k$ vectors assigned $+1$ respectively. Since we change at most one vector at a time, there exists a configuration where we have exactly $b$ $1$s, a contradiction. 
\end{proof}

\Cref{lem:gamma1,lem:gamma2,lem:gamma3,lem:gamma4,lem:gamma5} for templates ${\Gamma_{1-5}}$, together with~\Cref{lem:reduction},~\Cref{thm:raghavendra} and~\Cref{lem:sphere-coloring} finish the proof of our main hardness result~\Cref{thm:main-hardness}. 

\medskip \noindent \textbf{Explicit Construction.} We give an explicit construction of an integrality gap instance for {$\fiPCSP(\Gamma_5)$ where} $\Gamma_5=(P, Q), P=\Ham_k \{1,k\}, Q=\Ham_k \{0,1,\ldots,k\}\setminus\{b\}$ for arbitrary $b \in \{0,1,\ldots,k\}$. 
Let $L$ be a large constant (depends on $k$, to be set later). We have $n=2k-1+\binom{2k-1}{k}L$ variables $x_i: i \in [2k-1], x_S^{(i)}: i \in [L], S \subseteq [2k-1], |S|=k$. 
{For every subset $S \subseteq [2k-1]$ with $|S|=k$ and $S=\{i_1, i_2, \ldots, i_k\}$,  we add the following $L+k+1$ constraints to the instance:} 
\begin{align*}
   \{x_{i_1}, x_{i_2}, \ldots, x_{i_k}\}, \{x_{i_2}, x_{i_3}, \ldots, x_{i_{k}}, x_S^{(1)}\},  
   \{x_{i_3}, \ldots, x_S^{(1)}, x_S^{(2)}\}, \ldots, \{x_S^{(L-k+1)}, \ldots, x_S^{(L)}\}, \\ 
   \{x_S^{(L-k+2)}, \ldots, x_S^{(L)}, \overline{x_{i_1}}\}, \{x_S^{(L-k+3)}, \ldots, \overline{x_{i_1}}, \overline{x_{i_2}}\}, \ldots, \{\overline{x_{i_1}}, \overline{x_{i_2}}, \ldots, \overline{x_{i_k}}\}.
\end{align*}
We choose $L$ to be the constant factor from~\Cref{lem:connectivity} with $\alpha =\frac{k-3}{k-1}$. 
 The idea is that when all the {literals} in the constraint $\{x_{i_1}, x_{i_2}, \ldots, x_{i_k}\}$ are {set} to be True, all the {literals} in $\{\overline{x_{i_1}}, \overline{x_{i_2}}, \ldots, \overline{x_{i_k}}\}$ {are set to be False.} {Since} there {is} a series of constraints between them where we alter a single variable, there must exist a constraint where there are exactly $b$ variables that are set to True. 

Formally, we show that {this instance lacks a weakly satisfying assignment for} $Q=\Ham_k \{0,1,\ldots, k\} \setminus \{b\}$. 
Suppose for contradiction that there is an assignment that {weakly} satisfies all the constraints. Since there are $2k-1$ variables $x_1, x_2, \ldots, x_{2k-1}$, at least $k$ of them are set to be true. If not, then at least $k$ of the negated variables are set to be true. This implies that there is a sequence of constraints where the endpoints are assigned all True and all False, and at every point, we change a single variable. This implies that there is a constraint where there are exactly $b$ variables that are set to True, a contradiction. 

We now show that the instance has a basic SDP solution with zero error. Set $\textbf{v}_1, \textbf{v}_2, \ldots, \textbf{v}_{2k-1} \in \mathbb{S}^n$ to the variables $x_1, x_2, \ldots, x_{2k-1}$ such that ${ \langle \textbf{v}_i , \textbf{v}_0 \rangle }= 0$, and ${ \langle \textbf{v}_i , \textbf{v}_j \rangle}= \alpha$ for every $i \neq j$.  {
We can prove the existence of such a set of vectors by showing that the Gram matrix of the vectors \( \{\mathbf{v}_1, \mathbf{v}_2, \ldots, \mathbf{v}_{2k-1}\} \) is positive semidefinite. 

The Gram matrix \( G \) has \( 1 \) on the diagonal entries and \( \alpha \geq 0 \) on the off-diagonal entries:
\[
G =
\begin{bmatrix}
1 & \alpha & \alpha & \cdots & \alpha \\
\alpha & 1 & \alpha & \cdots & \alpha \\
\alpha & \alpha & 1 & \cdots & \alpha \\
\vdots & \vdots & \vdots & \ddots & \vdots \\
\alpha & \alpha & \alpha & \cdots & 1
\end{bmatrix}.
\]

We can write \( G \) as:
\[
G = \alpha J_{2k-1} + (1-\alpha) I_{2k-1},
\]
where \( I_{2k-1} \) is the \( (2k-1) \times (2k-1) \) identity matrix, and \( J_{2k-1} \) is the \( (2k-1) \times (2k-1) \) matrix with all entries equal to \( 1 \).

Since both \( I_{2k-1} \) and \( J_{2k-1} \) are positive semidefinite, and \( 0 \leq \alpha \leq 1 \), the matrix \( G \) is positive semidefinite as well.}
Finally, we use~\Cref{lem:connectivity} to set the vectors $\textbf{v}_S^{(i)}$ for every $S \subseteq [2k-1], |S|=k, i \in [L]$. 

\section{The SDP minion}
\label{sec:minion}

As previously mentioned, polymorphisms are a powerful tool for understanding the computational complexity of PCSPs. However, beyond some of the simplest classes of PCSPs, individually classifying the complexity based on specific polymorphisms can be unwieldy. Instead, one often looks to higher-level structure between classes of polymorphisms, which is captured by the notion of \emph{minions} (also called clonoids) \cite{BBKO21}.

In this section, we describe the structure of the ``basic SDP minion,'' which gives a precise algebraic condition for when the basic SDP decides a Promise CSP. We note that a similar minion was concurrently and independently discovered by Ciardo-Zivny \cite{cz22-minion}. 

{We prove our minion characterization for ``ordinary'' promise constraint satisfaction problems over any domain. In particular, we do not consider instances with folding or the setting of constants.}

\subsection{Minion preliminaries}

\newcommand{\Set}{\mathsf{Set}}
\newcommand{\NeFinSet}{\mathsf{NeFinSet}}

We now give the definition of a minion. {This is most easily stated in the language of category theory, however we shall also give ``concrete'' interpretations of any category-theoretic statements we make. We let $\Set$ denote the category of sets, and $\NeFinSet$ denote the category of \emph{non-empty} finite sets.}

\begin{definition}
A minion $\mathcal M$ is {a functor from $\NeFinSet$ to $\Set$. More concretely, for every non-empty finite set $A$, we let $\cM^A$ denote the image of $A$ with respect to the functor. For every pair of non-empty finite sets $A$ and $B$ and function $\pi : A \to B$, we let the \emph{minor map} $\cM^{\pi} : \cM^{A} \to \cM^{B}$ denote the image of $\pi$ with respect to the functor. Further, for any maps $\pi : A \to B$ and $\tau : B \to C$, we have that $\cM^{\tau} \circ \cM^{\pi} = \cM^{\tau \circ \pi}$. We also assert for the identity map $\id_A : A \to A$, we have that $\cM^{\id_A} : \cM^A \to \cM^A$ is also the identity map.}
\end{definition}

To streamline the minor map notation $\cM^{\pi}$, for all $M \in \cM^A$, we let $\pi \oslash M := \cM^{\pi}(M) \in \cM^B$.

{
\begin{remark}
Some works only use \emph{minion} (or \emph{clonoid}) to refer to sets of the form $\Pol(\Gamma)$ for some (possibly) promise template $\Gamma$, and refer to all other minions as an \emph{abstract minions}. In this work, all such objects are minions.

In addition, $\cM$ is typically only defined as a functor from the subcategory of $\NeFinSet$ of sets of the form $[k]$ for some $k \in \N$. Our definition is equivalent as every non-empty finite set $A$ has a bijection with $[|A|]$. However, we consider this richer definition of a minion so that we can discuss how the minion acts on sets without an explicit enumeration. 
\end{remark}
}

\subsubsection{{Examples}}

{We now walk through a few examples.}
\paragraph{Polymorphisms.} {Given a promise template $\Gamma = \{(P_1, Q_1), \hdots, (P_l, Q_l)\}$ over the domain $(D_1, D_2)$, one can view $\Pol(\Gamma)$ as a minion. Formally, for every non-empty finite set $A$, we let $\Pol(\Gamma)^A$ denote the set of functions $f : D_1^A \to D_2$ (where $D_1^A$ is the set of functions from $A$ to $D_1$) such that for any $i \in [l]$ and $\alpha : A \to P_i$, we have that 
\begin{align}
    (f(\pi_1 \circ \alpha), \hdots, f(\pi_{k_i} \circ \alpha)) \in Q_i,\label{eq:pol-cond}
\end{align}
where $\pi_j : P_i \to D_1$ is the projection of $P_i \subseteq D^{k_i}$ onto the $j$th coordinate. Observe that if $A = \{1,\hdots, L\}$ then $\Pol(\Gamma)^A$ can be identified with the arity $L$ polymorphisms.

Likewise, for any map $\tau : A \to B$ for non-empty finite sets $A$ and $B$, we define $\Pol(\Gamma)^\tau$ such that for any $f \in \Pol(\Gamma)^A$  and $\beta : B \to D_1$, we have that
\begin{align}
    (\tau \oslash f)(\beta) = f(\beta \circ \tau).\label{eq:pol-min}
\end{align}
We It's clear that $\id_{A} \oslash f = f$. Furthermore, for any pair of maps $\tau : A \to B$, $\tau' : B \to C$, and $\gamma : C \to D_1$, we have that
\[
  (\tau' \oslash (\tau \oslash f))(\gamma) = (\tau \oslash f)(\gamma \circ \tau') = f((\gamma \circ \tau') \circ \tau)) = f(\gamma \circ (\tau' \circ \tau)) = ((\tau' \circ \tau) \oslash f)(\gamma),  
\]
so associativity also holds.
}

{
\paragraph{Trivial minion.} The simplest example of a minion $\cM_{\operatorname{triv}}$ maps every non-empty finite set to the same set $\{e\}$. All functions $\tau : A \to B$ map to the same trivial map from $e$ to istelf.
}

{
\paragraph{Identity minion.} {The \emph{identity minion} $\cM_{\id}$ is the identity functor from $\NeFinSet$ to itself (when viewed as a subcategory of $\Set$). In other words, for every non-empty finite set $A$ we have that $\cM_{\id}^A = A$, and for every map $\tau : A \to B$, we have that $\cM_{\id}^\tau = \tau$.}
}

{
\paragraph{Convex combinations.} For every non-empty finite set $A$, let $\Delta_A$ be the set of probability distributions over $A$. That is, $\Delta_A = \{f : A \to [0,1] : \sum_{a\in A} f(a) = 1\}.$ We let $\cQ_{\conv}$ denote the minion which maps $A$ to $\Delta_A$~\cite{BBKO21}. Given a map $\tau : A \to B$ and a probability distribution $f \in \Delta_A$, we let $\tau \oslash f \in \Delta_B$ be defined by 
\[
    (\tau \oslash f)(b) = \sum_{a \in \tau^{-1}(b)} f(a),
\]
where the empty sum is equal to $0$. The main motivation for studying $\cQ_{\conv}$ is that it is closely related to the use of basic LP \cite{BBKO21}.
}

\subsubsection{{Minion homomorphisms}}

{
A key approach of \cite{BBKO21} is that to better understand $\PCSP(\Gamma)$, we can ask how $\Pol(\Gamma)$ relates to other (abstract) minions. The most natural way to do that is via a \emph{minion homomorphism}, which in category-theoretic language is a \emph{natural transformation} between two minions.}

\begin{definition}
A \emph{minion homomorphism} $\psi: \mathcal M \to \mathcal N$ between two minions {is a natural transformation between the respective functors. More concretely, for every non-empty finite set $A$, we have a map $\psi_A : \cM^A \to \cN^A$ such that for every map $\tau : A \to B$ and $M \in \cM^A$, we have that}
\[
    {\tau \oslash \psi_A(M) = \psi_B(\tau \oslash M).}
\]
In other words, the following diagram commutes:
\begin{center}
\begin{tikzcd}
    \cM^A \arrow[r,"\psi_A"] \arrow[d,"\cM^{\tau}"] &\cN^A \arrow[d,"\cN^{\tau}"]\\
    \cM^B \arrow[r,"\psi_B"] &\cN^B
\end{tikzcd}
\end{center}
\end{definition}

{To assert the existence of a minion homomorphism from $\mathcal M$ and $\cN$, we write $\cM \to \cN$. A key result of \cite{BBKO21} is that minion homomorphisms govern the computational complexity of PCSPs.}

{\begin{theorem}[\cite{BBKO21}]
Let $\Gamma_1, \Gamma_2$ be promise templates. If $\Pol(\Gamma_1) \to \Pol(\Gamma_2)$ then decision version of $\PCSP(\Gamma_2)$ is polynomial-time reducible to $\PCSP(\Gamma_1)$.
\end{theorem}}

{Another result of \cite{BBKO21} is that for any promise template $\Gamma$, $\cQ_{\conv} \to \Pol(\Gamma)$ if and only if the basic LP solves $\PCSP(\Gamma)$. Our goal in the remainder of this section is to prove an analogous result for the basic SDP.} 

\subsubsection{The free structure}

{Let $\Gamma$ be a promise template.} One of the most important tools for understanding minion {homomorphisms of} the form $\mathcal M \to \Pol({\Gamma})$ {is} the \emph{free structure} \cite{BBKO21}. {Since in this paper we parameterize our promise constraint satisfaction problems using promise templates rather than a pair of relational structures, we restate the definition of a free structure in our language, which we call a \emph{free template}.}  {Given a relation $R \subseteq D^k$, we define the \emph{free relation} $\F_{\cM}(R)$ to be a predicate of arity $k$ over $\cM^D$ with the following property, we have that $(M_1, \hdots, M_k) \in (\cM^D)^k$ is an element of $R$ if and only if there exists $N \in \cM^{R}$ such that for all $i \in [k]$, we have that 
\[
    \pi_i \oslash N = M_i,
\]
where $\pi_i : R \to D$ is the projection of $R$ onto the $i$th coordinate.

Given a promise template $\Gamma = \{(P_1, Q_1), \hdots, (P_l, Q_l)\}$, we define the \emph{free template} 
\[
\F_{\cM}(\Gamma) := \{(\F_{\cM}(P_1), Q_1), \hdots, (\F_{\cM}(P_l), Q_l)\}.
\]
Recall that for a promise template $\Gamma$ over domain $(D_1, D_2)$, there exists a map $h : D_1 \to D_2$ such for every $i \in [\ell]$ and $\bx \in P$, we have that $h(\bx) \in Q$. It is not obvious (nor necessarily true) that for $\F_{\cM}(\Gamma)$ an analogous map $h^{\cM} : \cM^{D_1} \to D_2$ necessarily exists. In fact, constructing $h^{\cM}$ asserts the existence of a minion homomorphism. 
}
{This is t}he fundamental property of the free structure{.}

{
\begin{lemma}[Lemma~4.4 of \cite{BBKO21}]\label{lem:free-structure}
$\mathcal M \to \Pol(\Gamma)$ if and only if $\F_{\cM}(\Gamma)$ is a promise template.
\end{lemma}
}

{In the language of category theory, the free structure can be thought of as an adjoint functor. See the discussion preceding Remark 4.5 in \cite{BBKO21}.}

\subsection{SDP {m}inion {d}efinition}

{
We now define the minion corresponding to the behavior of the basic SDP. We call this minion $\cM_{\SDP}.$ Recall that for $\cQ_{\conv}$, for any non-empty finite set $A$, we have that $\cQ_{\conv}^A = \Delta_A$, where $\Delta_A$ is the set of probability distributions over $A$. We define $\cM_{\SDP}^A$ in an analogous manner which captures orthogonal configurations of vectors index by $A$.}

{S}ince the SDP minion is a universal object, we need to be able to represent vectors of arbitrary large dimensions. We achieve this using infinite-dimensional vectors that are eventually zero. Similar techniques have been used in other minion constructions \cite{ciardo2022clap}.

Let $\R^{\omega}$ be infinite sequences of real numbers {indexed by $\N$} which are eventually $0${.} {We note} that $\R^{\omega}$ is an inner product space {with the bilinear form}
\[
    {\langle \bx, \by \rangle := \sum_{i=1}^{\infty} x_i y_i,}
\]
{which is well-defined as only finitely many terms in the summation are nonzero.}

{We now formally define $\cM_{\SDP}$. For each non-empty finite set $A$, we define $\cM_{\SDP}^{A}$ to be the set of functions $\bw : A \to \R^{\omega}$ with the following properties.}

\begin{enumerate}
\item For all {distinct $a, a' \in A$, $\langle \bw(a), \bw(a')\rangle = 0$.}
\item $\sum_{{a \in A}} \|{\bw(a)}\|^2_2 = 1$.
\end{enumerate}
Observe that the {first condition implies that the} second condition is equivalent to $\|\sum_{{a \in A}} {\bw(a)}\|_2 = 1$.

The minors of $\cM_{\SDP}$ are {defined analogously to $\cQ_{\conv}$. Let $A$ and $B$ be non-empty finite sets and consider $\bw \in \cM_{\SDP}^A$ and $\tau : A \to B$. We define $\tau \oslash \bw : B \to \R^{\omega}$ to be for all $b \in B$,}
\[
    {(\tau \oslash \bw)(b) := \sum_{a \in \tau^{-1}(b)} \bw(a),}
\]
{where the empty sum equals $\mathbf{0} \in \R^{\omega}$. We now finish the remaining details which prove that $\cM_{\SDP}$ is a minion.}

{\begin{lemma}
$\cM_{\SDP}$ is a minion.
\end{lemma}}
\begin{proof}
First, for each {$\bw \in \cM_{\SDP}^{A}$} and {$\tau : A \to B$} we verify that ${\bw' := \tau \oslash \bw \in \cM_{\SDP}^{B}}$. First, fix {distinct $b, b' \in B$}. We have that
\begin{equation*}
    { \langle \bw'(b) , \bw'(b') \rangle }= { \left\langle \Bigl(\sum_{a \in \tau^{-1}(b)} \bw(a)\Bigr), \Bigl(\sum_{a' \in \tau^{-1}(b')} \bw(a')\Bigr)\right\rangle}
    = \sum_{{\substack{a \in \tau^{-1}(b)\\a' \in \tau^{-1}(b')}}} { \langle \bw(a) , \bw(a')\rangle}
    = 0{,}
\end{equation*}
{since $b$ and $b'$ are distinct and $\tau$ is a function.} Further,
\begin{align*}
{ \left\langle \Bigl( \sum_{b \in B} \bw'(b)\Bigr), \Bigl(\sum_{b \in B} \bw'(b)\Bigr) \right\rangle} &= { \left\langle \Bigl(\sum_{a\in A} \bw(a)\Bigr), \Bigl(\sum_{a\in A} \bw(a)\Bigr) \right\rangle}= 1.
\end{align*}
Thus, ${\bw' \in \cM_{\SDP}^{B}}.$ It is trivial to observe that {$\cM^{\id_A}$ is the identity map.}

The only remaining condition to check is that the minors commute. Consider {$\pi : A \to B$} and {$\tau : B \to C$. Consider $\bu \in \cM_{\SDP}^A.$}  We seek to verify that $\tau \oslash (\pi \oslash \bu) = (\tau \circ \pi) \oslash \bu$. For all {$c \in C$}, we have that
{
\begin{align*}
(\tau \oslash (\pi \oslash \bu))(c) = \sum_{b \in \tau^{-1}(c)} (\pi \oslash \bu)(b) = \sum_{b \in \tau^{-1}(c)} \sum_{a \in \pi^{-1}(b)} \bu(a) = \sum_{a \in (\tau \circ \pi)^{-1}(c)} \bu(a) = ((\tau \circ \pi) \oslash \bu)(a),
\end{align*}
as desired.
}
\end{proof}

The goal of the rest of this section is to prove the following theorem.

\begin{theorem}\label{thm:basic-SDP}
{For any promise template $\Gamma$, t}he basic SDP decides $\PCSP({\Gamma})$ if and only if $\cM_{\SDP} \to \Pol({\Gamma})${.}
\end{theorem}

\subsection{On the basic SDP}

{As we are considering the decision version of the basic SDP, we assume that $\eps_j = 0$ for each clause index $j \in [m]$. For completeness, we state the simplified SDP below. Note that we omit an objective since we are concerned only with feasibility. Any undefined notation is the same as in Section~\ref{sec:prelims}. Given a clause index $j \in [m]$, we let $\cF_j$ denote the set of assignments $f : S_j \to D$ for which $f(C_j) \in P^{(j)}$.}
{
\begin{align*}
\text{(a1)} && \lambda_j(f) &\geq 0  & \forall j\in [m], ~ f \in \cF_j \\ 
\text{(a2)} && \sum_{f \in \cF_j}\lambda_j(f)&=1  & \forall j \in [m] \\
\text{(a3)} && \langle \bv_0, \bv_0\rangle &= 1\\
\text{(a4)} && { \langle \textbf{v}_{i,a},   \textbf{v}_0 \rangle } &= \sum_{\substack{f \in \cF_j\\f(x_i) = a}}\lambda_j(f)&  \forall j\in [m],~x_i \in C_j,~ a \in D \\ 
\text{(a5)} && { \langle \textbf{v}_{i,a} , \textbf{v}_{i',a'} \rangle } &= \sum_{\substack{f \in \cF_j\\f(x_i) = a, f(x_{i'}) = a'}}\lambda_j(f)  & \forall j\in [m],~x_i,x_{i'} \in C_j,~ a,a' \in D 
\end{align*}
}

{We call this formulation of the basic SDP the ``traditional'' basic SDP, in contrast to the ``alternative'' basic SDP we soon present.}

\begin{remark}
Although we specify here that the vectors can have an arbitrarily large dimension, it is known that an SDP with $n$ vector or scalar variables is feasible if and only if it is feasible in $n$ dimensions. Thus, nothing is lost from Section~\ref{sec:prelims} by making the vectors have an arbitrarily large (but finite) dimension.
\end{remark}

{\begin{remark}
Some formulations of the basic SDP also impose that for every pair of vectors appearing in the basic SDP has a nonnegative pairwise dot product. Raghavendra's theorem does not use this in their basic SDP formulation, so we do not include it in our formulation.
\end{remark}}

\subsubsection{An alternative basic SDP}

{In order to prove Theorem~\ref{thm:basic-SDP}, w}e now present a modified basic SDP which is more convenient to work with. {Instead of having an LP variable $\lambda_j(f)$ for each potential clause assignment $f : S_j \to D$, we have a vector $\bw_{j,f} \in \R^{\omega}.$ For each $i \in [n]$, let $\bv_i : D \to \R^{\omega}$ denote the function $\bv_i(a) = \bv_{i,a}$. Likewise, for every $j \in [m]$, let $\bw_j : \cF_j \to R^{\omega}$ be the function $\bw_j(f) = \bw_{j,f}$. With this notation, we can succinctly impose our SDP conditions as follows.}

{
\begin{align*}
\text{(b1)} &&  \bv_i &\in \cM_{\SDP}^{D} & \forall i \in [n] \\ 
\text{(b2)} && \bw_j &\in \cM_{\SDP}^{\cF_j}   & \forall j \in [m] \\ 
\text{(b3)} && \bv_{i,a} &= \sum_{\substack{f \in \cF_j\\f(x_i) = a}} \bw_{j,f} & \forall j \in [m], i \in S_j, a \in D,
\end{align*}
}
where we use the fact that membership in $\cM_{\SDP}^S$ can be expressed in $O(|S|^2)$ SDP constraints.

\begin{remark}
{Note that the special ``truth'' vector $\bv_0$ is omitted from this formulation. As we shall soon see, this does not change the power of the SDP relaxation, but it does make the analysis more straightforward.}
\end{remark}

{
\begin{remark}\label{rem:b3}
Note that (b3) can also be rewritten as $\bv_i = \tau_{i,j} \oslash \bw_j$, where $\tau_{i,j} : \cF_j \to D$ is the map $\tau_{i,j}(f) = f(x_i)$. As such, $\cM_{\SDP}$ can be swapped with \emph{any} minion $\cM$ in conditions (b1--b3). However, such a relaxation is not algorithmically tractable in general.
\end{remark}
}

\subsubsection{Equivalence with traditional basic SDP}

We now prove that these two formulations of the basic SDP are equivalent. 
That is, a solution to either SDP can be transferred to the other. 

{\begin{lemma}\label{lem:SDP-equiv}
For any promise template $\Gamma$ and any instance of $\PCSP(\Gamma)$, the traditional basic SDP for the instance is feasible if and only if the alternative basic SDP for the instance is feasible.
\end{lemma}}

{Before we begin, we note that we often apply linear maps to the vectors in order to transform vectors from one SDP solution to another, extending the ideas in Section~\ref{subsec:gram-ortho}. We let $\R^{\omega \times \omega}$ denote the set of matrices such that each row and column has finite support. These finite support conditions are such that for any $M \in \R^{\omega \times \omega}$ and $v \in \R^{\omega}$ we have that $Mv \in \R^{\omega}$ and that $\R^{\omega \times \omega}$ is closed under the transpose operator.}

{The identity matrix $I_{\omega} \in \R^{\omega \times \omega}$ has $(I_{\omega})_{i,j} = 1$ if $i = j$ and $0$ otherwise. We say that $Q \in \R^{\omega \times \omega}$ is \emph{orthogonal} if $Q Q^{T} = Q^T Q = I_{\omega}.$ We now extend Proposition~\ref{prop:gram-ortho} to our setting.}

{\begin{proposition}\label{prop:ortho}
Let $n$ be a positive integer and $\bu_1, \hdots, \bu_n, \bv_1, \hdots, \bv_n \in \R^{\omega}$ be vectors. We have that $\langle \bu_i, \bu_j \rangle = \langle \bv_i, \bv_j\rangle$ for all $i,j \in [n]$ if and only if there exists an orthogonal matrix $Q \in \R^{\omega \times \omega}$ such that $\bv_i = Q \bu_i$ for all $i \in [n]$. 
\end{proposition}
\begin{proof}
Note that for any $\bu, \bu' \in \R^{\omega}$ and $Q \in \R^{\omega \times \omega}$ orthogonal, we have that $\langle \bu, \bu'\rangle = \langle Q\bu, Q\bu' \rangle$. This proves the ``if'' direction. We now focus on proving the converse.

Assume that $\langle\bu_i, \bu_j\rangle = \langle \bv_i, \bv_j\rangle$ for all $i,j \in [n]$. Let $N$ be a positive integer such that $(\bu_i)_j = 0$ and $(\bv_i)_j = 0$ for all $i \in [n]$ and $j > N$. We can thus assume that $\bu_1, \hdots, \bu_n, \bv_1, \hdots, \bv_n \in \R^N$. By Proposition~\ref{prop:gram-ortho} there exists an orthogonal matrix $Q \in \R^{N \times N}$ such that $Q\bu_i = \bv_i$ for all $i \in [n]$. We can extend this $Q$ to an orthogonal $Q' \in \R^{\omega \times \omega}$ in the following straightforward manner.
\[
    Q'_{i,j} = \begin{cases}
    Q_{i,j} & i, j \le N\\
    1 & i = j > N\\
    0 & \text{otherwise}
    \end{cases}.\qedhere
\]
\end{proof}
}

\medskip\textbf{Alternative to traditional.} First, we show that a solution to the alternative basic SDP is a solution to the traditional basic SDP. For each ${i \in [n]}$, define ${\bv_{i,D} = \sum_{a \in D} \bv_{i,a}}$. By (b1), we know that each ${\bv_{i,D}}$ is a unit vector. Further by (b3), we can deduce that ${\bv_{i,D} = \bv_{i',D}}$ whenever ${x_i}$ and ${x_{i'}}$ appear in a common constraint ${C_j}$. Thus, ${\bv_{i,D} = \bv_{i',D}}$ whenever ${x_i}$ and ${x_{i'}}$ are in the same connected component of the {hypergraph with vertices $\{x_1, \hdots, x_n\}$ and hyperedges $\{C_1, \hdots, C_m\}$}. {Let $I \subseteq [n]$ and $J \subseteq [m]$ be the indices of the variables and constraints appearing in some connected component. For any orthogonal matrix $Q \in \R^{\omega \times \omega}$,} each of the constraints (b1), (b2), and (b3) are preserved when we {multiply each vector $\bv_{i,a}$ for $i \in I$ and $\bw_{j,f}$ for $j \in J$ on the left by $Q$.} Thus, by {multiplying by suitable orthogonal matrices (i.e., apply Proposition~\ref{prop:ortho} for $n=1$)}, we can achieve a solution to (b1-3) such that {each $\bv_{i,D}$} is the same unit vector for all ${i \in [n]}$, which we shall call ${\bv_0}$. {We call this an \emph{aligned} solution to the alternative basic SDP.}

{We now transform this aligned solution into a solution to the traditional basic SDP. Each $\bv_{x,a}$ in the traditional basic SDP is actually the same as it appeared in the aligned solution to the alternative basic SDP. We set $\bv_0$ to the newly-defined $\bv_0$. Finally, we set $\lambda_j(f) = \|\bv_{j,f}\|_2^2$ for all $f \in \cF_j$.}

To see that the traditional basic SDP is satisfied, note that condition (a1) follows by inspection, and condition (a2) follows from (b2). {Conditions (a3) and (a4) follow} by substituting ${\bv_{i,D}}$ to ${\bv_0}$ and applying (b3). Condition {(a5)} follows {by} combining (b2) and (b3).

\medskip\textbf{Traditional to alternative.} We now show that any solution to the traditional basic SDP is a solution to the alternative basic SDP. Assume we have a traditional basic SDP solution. Recall that ${\bv_{i,a}} \in \R^\omega$ {for each $i \in [n]$ and $a \in D$}. We shall use the same vectors ${\bv_{i,a}}$ in {the solution to the alternative basic SDP}. Note that (b1) then follows from {(a5)} and (a2). 

Since there are finitely many vectors, there exists some $N \in \N$ such that each ${\bv_{i,a}}$ {(and $\bv_0$)} has its support within the first $N$ coordinates. {For each $j \in [m]$ and $f \in \cF_j$, p}rovisionally assign ${\hat{\bw}_{j,f}}$ to be $\sqrt{\lambda_{{j}}({f})} \cdot e_i$, where {$e_i \in \R^{\omega}$ is the $i$th standard unit vector for some $i > N$} chosen uniquely for each constraint-assignment pair. As there are finitely many constraint-assignment pairs, all of these {new vectors} are in $\R^{\omega}$.

However, as written the ${\hat{\bw}_{j,f}}$'s are not compatible with the ${\bv_{i,a}}$'s. {To rectify this, for each $j \in [m], i \in S_j,$ and $a \in D$ we define}
\[{\hat{\bv}_{j,i,a}} := \sum_{\substack{{j \in \cF}\\{f(x_i)} = a}} {\hat{\bw}_{j,f}}\]
{Using (a5), we can readily verify that for any $j \in [m]$, $i,i' \in S_j$, and $a,a' \in D$, we have that $\langle \bv_{i,a}, \bv_{i,a'}\rangle = \langle \hat{\bv}_{j,i,a}, \hat{\bv}_{j,i',a'}\rangle$. Thus, by Proposition~\ref{prop:ortho}, for each $j \in [m]$, there exists an orthogonal matrix $Q_j \in \R^{\omega \times \omega}$ such that $Q_j \hat{\bv}_{j,i,a} = \bv_{i,a}$ for all $i \in S_j$ and $a \in D$.}

{For each $j \in [m]$ and $f \in \cF_j$, define $\bw_{j,f} := Q_j \hat{\bw}_{j,f}$. Observe that for each $j \in [m]$, $i \in S_j,$ and $a \in D$, we have that
\[
    \bv_{i,a} = Q_j \hat{\bv}_{j,i,a} = \sum_{\substack{{j \in \cF}\\{f(x_i)} = a}} Q_j {\hat{\bw}_{j,f}} = \sum_{\substack{{j \in \cF}\\{f(x_i)} = a}} \bw_{j,f}, 
\]}
as desired. Thus, conditions (b2) and (b3) hold. {Therefore,} the two SDP formulations are equivalent{, proving Lemma~\ref{lem:SDP-equiv}.}

\subsection{From minion homomorphism to SDP rounding algorithm}

Recall that we are trying to prove that {for any promise template $\Gamma$,} the basic SDP decides $\PCSP({\Gamma}) $ if and only if $\cM_{\SDP} \to \Pol({\Gamma})$.  We begin by showing the {easier} direction that the minion homomorphism implies that the basic SDP decides $\PCSP({\Gamma})$. {In other words, for any instance of $\PCSP(\Gamma)$, the existence of a vector solution to alternative basic SDP of the instance implies that there exists a weakly satisfying assignment to the instance.}

\begin{theorem}\label{thm:hom_to_alg}
If $\cM_{\SDP} \to \Pol({\Gamma})$, then the basic SDP decides $\PCSP({\Gamma})$.
\end{theorem}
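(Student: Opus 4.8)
### Proof proposal for Theorem~\ref{thm:hom_to_alg}

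The plan is to use the free-structure machinery of \cite{BBKO21} (recalled in Theorem~\ref{thm:free-structure}) together with the equivalence between the traditional and the alternative basic SDP proved in the previous subsection. Suppose $\cM_{\SDP} \to \Pol(\bA,\bB)$; by Theorem~\ref{thm:free-structure} this is equivalent to the existence of a homomorphism $h \colon \F_{\cM_{\SDP}}(\bA) \mapsto \bB$. I will show that, given any instance $\bX$ of $\PCSP(\bA,\bB)$ on which the basic SDP is feasible, composing a feasible SDP solution with $h$ produces a homomorphism $\bX \mapsto \bB$, which is exactly the statement that the basic SDP decides $\PCSP(\bA,\bB)$.

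First I would take a feasible basic SDP solution for $\bX$ and, using the ``traditional to alternative'' translation established above, convert it into an \emph{alternative} basic SDP solution: vectors $\bu_{x,a} \in \R^\omega$ for $x \in X$, $a \in A$, and vectors $\bv_{\bx,\ba} \in \R^\omega$ for each constraint $\bx \in R^{\bX}$ and $\ba \in s^R(\bx)$, satisfying (b1)--(b3). By (b1), for each variable $x$ the tuple $U_x := (\bu_{x,a} : a \in A)$ is an element of $\cM_{\SDP}^{(A)}$, i.e.\ a point of the domain of the free structure $\F_{\cM_{\SDP}}(\bA)$. Define $\sigma \colon X \to B$ by $\sigma(x) = h(U_x)$. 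The key step is to check that $\sigma$ is a homomorphism $\bX \mapsto \bB$: fix $R \in \tau$ and a constraint $\bx = (x_1,\dots,x_k) \in R^{\bX}$; I must verify $(\sigma(x_1),\dots,\sigma(x_k)) \in R^{\bB}$. For this, recall the definition of the relation $R^{\F_{\cM_{\SDP}}(\bA)}$: the tuple $(U_{x_1},\dots,U_{x_k})$ lies in it provided there is an element $M \in \cM_{\SDP}^{(R^A)}$ whose minors along the coordinate projections $\pi_i \colon R^A \to A$ equal $U_{x_i}$. Here I would take $M := V_{\bx} = (\bv_{\bx,\ba} : \ba \in s^R(\bx))$, which is in $\cM_{\SDP}^{(A^{?})}$ — more precisely in $\cM_{\SDP}$ indexed by the set $s^R(\bx) = R^{\bA}$ of valid assignments — and (b2) guarantees it is a legitimate minion object. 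Condition (b3) says precisely that $\bu_{x_i,a} = \sum_{\ba \in s^R(\bx),\, \ba_i = a} \bv_{\bx,\ba}$, which is the statement $M^{/\pi_i} = U_{x_i}$ where $\pi_i$ sends an assignment $\ba$ to its $i$-th value $a$. Hence $(U_{x_1},\dots,U_{x_k}) \in R^{\F_{\cM_{\SDP}}(\bA)}$, and since $h$ is a homomorphism, $(h(U_{x_1}),\dots,h(U_{x_k})) \in R^{\bB}$, i.e.\ $\sigma$ weakly satisfies the constraint $\bx$. As $\bx$ was arbitrary, $\sigma$ is a homomorphism $\bX \mapsto \bB$.

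The main obstacle I anticipate is bookkeeping the indexing of the free structure: the domain of $\F_{\cM_{\SDP}}(\bA)$ is $\cM_{\SDP}^{(|A|)}$ while the ``witness'' object $M$ for a relation of arity $k$ must live in $\cM_{\SDP}^{(|R^{\bA}|)}$, and I need the coordinate-projection maps $\pi_i \colon R^{\bA} \to A$ to line up exactly with the summation in (b3). The content is genuinely just matching the free-structure definition from \cite{BBKO21} to the alternative-SDP conditions (b1)--(b3) — there is no analytic estimate — but one must be careful that $V_{\bx}$ is indexed by the valid assignments $s^R(\bx)$ and that its defining conditions (orthogonality of the $\bv_{\bx,\ba}$ and $\sum_{\ba}\|\bv_{\bx,\ba}\|^2 = 1$) are exactly what (b2) provides. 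Once this identification is in place the proof is immediate; I would state it as the short argument above and cite Theorem~\ref{thm:free-structure} for the reduction to exhibiting $h$.
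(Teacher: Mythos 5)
Your proof is correct, and it reaches the same mathematical core as the paper's argument — namely that condition (b3) of the alternative basic SDP says exactly that $U_{x_i} = (V_{\bx})^{/\pi_i}$, so the constraint object $V_{\bx}$ serves as the witness tying the variable objects together — but you package it differently. The paper applies the minion homomorphism $\psi$ directly: it defines $f(x) = \psi(U_x)(a : a \in A)$, notes that $\bb := \psi(V_{\bx})(\ba : \ba \in s^R(\bx)) \in R^{\bB}$ because $\psi(V_{\bx})$ is a polymorphism evaluated on valid assignments, and then uses commutation of $\psi$ with minors to show $f(\bx_i) = \bb_i$. You instead invoke Theorem~\ref{thm:free-structure} to convert $\psi$ into a homomorphism $h : \F_{\cM_{\SDP}}(\bA) \to \bB$, observe that $x \mapsto U_x$ is itself a homomorphism $\bX \to \F_{\cM_{\SDP}}(\bA)$ (again via (b1)--(b3)), and compose. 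The two routes are essentially interchangeable: the direction of Theorem~\ref{thm:free-structure} you use is the easy one (one simply sets $h(U) = \psi(U)(a: a \in A)$), so no extra machinery is hidden, and your phrasing has the aesthetic advantage of matching the free-structure language the paper uses for the converse direction. One small point to handle explicitly: the free-structure relation asks for a witness in $\cM_{\SDP}^{(|R^{\bA}|)}$ indexed by all of $R^{\bA}$, whereas $V_{\bx}$ is indexed by $s^R(\bx)$, which can be a proper subset of $R^{\bA}$ when $\bx$ has repeated variables; this is harmless because you may pad $V_{\bx}$ with zero vectors (orthogonality and $\sum \norm{\bv}_2^2 = 1$ are preserved, and the minors are unchanged), but since you flagged it only as ``bookkeeping'' you should say this one line — the paper's own proof glosses over the same issue.
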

{
\begin{proof}
Assume that $(D_1, D_2)$ is the domain of $\Gamma$ with homomorphism $h : D_1 \to D_2$. Consider an instance of $\PCSP(\Gamma)$ on variables $V = \{x_1, \hdots, x_n\}$ and clauses $\{C_1, \hdots, C_m\}$ such that for all $j \in [m]$, the clause $C_j$ corresponds to $(P^{(j)}, Q^{(j)}) \in \Gamma$ of arity $k_j$. Let $S_j$ be the set of variables appearing at least once in $C_j$. We also define $\cF_j$ as the set of assignments $f : S_j \to D_1$ for which $f(C_j) \in P^{(j)}$ and $\cG_j$ as the set of assignments $g : S_j \to D_2$ for which $g(C_j) \in Q^{(j)}$.

By Lemma~\ref{lem:SDP-equiv}, we may without loss of generality assume that there exist vectors $\{\bv_{i,a} : i \in [n], a \in D_1\}$ and $\{\bw_{j,f} : j \in [m], f \in S_j\}$ such that (b1), (b2), and (b3) hold, where we define $\bv_i(a) = \bv_{i,a}$ and $\bw_j(f) = \bw_{j,f}$.
If no such vectors exist, we know that our instance is not strongly satisfiable by the contrapositive of Proposition~\ref{prop:basic-SDP-completeness}.

Let $\psi : \cM_{\SDP} \to \Pol(\Gamma)$ be the guaranteed minion homomorphism. Let $\id_{D_1} : D_1 \to D_1$ be the identity map. We define our weak assignment $\sigma : V \to D_2$ as follows. For all $i \in [n]$, we set
\[
    \sigma(x_i) := \psi(\bv_i)(\id_{D_1}).
\]
Unpacking this, $\psi(\bv_i) \in \Pol(A, B)^{D_1}$, so $\psi(\bv_i)$ has type $\psi(\bv_i) : D_1^{D_1} \to D_2$. The ``canonical'' element of $D_1^{D_1}$ is the identity map, so we plug that into the function to get a value in $D_2$.

We now seek to show that for every $j \in [m]$, that $\sigma|_{S_j} \in \cG_j$. First, we need to eliminate any repetition of variables. To do this, observe that $(\cF_j, \cG_j)$ can be viewed as a promise relation of arity $S_j$ since it is obtained from $(P^{(j)}, Q^{(j)})$ by setting some tuples in the relation equal. As such, we have that $\Pol({\Gamma}) \subseteq \Pol(P^{(j)}, Q^{(j)}) \subseteq \Pol(\cF_j, \cG_j)$. Therefore, we can think of $\psi$ as a minion homomorphism from $\cM_{\SDP}$ to $\Pol(\cF_j, \cG_j)$. Thus, since $\bw_j \in \cM_{\SDP}^{\cF_j}$, we have that $\psi(\bw_j) \in \Pol(\cF_j, \cG_j)^{\cF_j}$. Thus, $\psi(\bw_j)$ has type $\cF_j^{D_1} \to D_2$.

For $x_i \in S_j$, let $\pi_i : \cF_j \to D_1$ be the evaluation map $\pi_i(f) = f(x_i)$. We check that for all $x_i \in S_j$ and $a \in D_1$, we have that
\begin{align}
(\pi_i \oslash \bw_{j})(a) = \sum_{\substack{f \in \cF_j\\f(x_i) = a}} \bw_{j,f} = \bv_{i}(a),\label{eq:pi-map}
\end{align}
by definition of the alternative basic SDP, so $\pi \oslash \bw_j = \bv_i$. Thus, for all $x_i \in S_j$ we have that
\begin{align*}
    \sigma(x_i) &= \psi(\bv_i)(\id_{D_1}) & \text{(definition of $\sigma$)}\\
    &= \psi(\pi_i \oslash \bw_j)(\id_{D_1})  & \text{(\ref{eq:pi-map})}\\
    &= (\pi_i \oslash \psi(\bw_j))(\id_{D_1}) & \text{($\psi$ is minion homomorphism)}\\
    &= \psi(\bw_j)(\id_{D_1} \circ \pi_i) & \text{(\ref{eq:pol-min})}\\
    &= \psi(\bw_j)(\pi_i \circ \id_{\cF_j}),
\end{align*}
where for the last line $\id_{D_1} \circ \pi_i = \pi_i = \pi_i \circ \id_{\cF_j}$.

Apply (\ref{eq:pol-cond}) with $A = \cF_i$ and $(P_i, Q_i) := (\cF_j, \cG_j)$. Since $\psi(\bw_j) \in \Pol(\cF_j, \cG_j)^{\cF_j}$ and $\id_{\cF_j} : \cF_j \to \cF_j$, we have that
\[
    (\sigma(x_i) : x_i \in S) = (\psi(\bw_j)(\pi_i \circ \id_{\cF_j}) : x_i \in S) \in \cG_j,
\]
as desired. Thus, $\sigma$ is indeed a satisfying assignment to the original instance. Therefore, the basic SDP decides $\PCSP(\Gamma)$.
\end{proof}
}

\subsection{From SDP rounding algorithm to minion homomorphism}
We now prove the converse. {To do this, we adapt} the proof techniques of {Theorem 4.11 of }\cite{ciardo2022clap}. 

\begin{theorem}
If the basic SDP decides $\PCSP({\Gamma})$, then $\cM_{\SDP} \to \Pol({\Gamma})$.
\end{theorem}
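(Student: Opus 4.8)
The plan is to use the free-structure machinery of \cite{BBKO21} (Theorem~\ref{thm:free-structure}) to reduce the claim to producing a single homomorphism $\F_{\cM_{\SDP}}(\bA) \mapsto \bB$, and then to exhibit that homomorphism by feeding the free structure into the hypothesized basic-SDP decision procedure. Concretely, consider the free structure $\bX := \F_{\cM_{\SDP}}(\bA)$. Its domain is $(\cM_{\SDP})_{|A|}$, and for each $R \in \tau$ the relation $R^{\bX}$ consists of tuples $(M_1, \hdots, M_{\ar_R})$ for which there is a common $M \in (\cM_{\SDP})_{|R^{\bA}|}$ with $M^{/\pi_i} = M_i$, where $\pi_i : R^{\bA} \to A$ projects onto the $i$th coordinate. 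The key observation is that this free structure \emph{is itself} an alternative basic SDP solution to the instance $\bX$ (viewing $\bX$ as an instance of $\PCSP(\bA,\bB)$): each domain element $M = (\bw_a : a \in A)$ directly gives the family of vectors $\bu_{M,a} := \bw_a$ satisfying (b1), and the witnessing $M$ for each constraint tuple gives the constraint vectors satisfying (b2) and (b3) by the very definition of the free-structure relation (the projection condition $M^{/\pi_i} = M_i$ unpacks to exactly (b3)).

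Thus the first step is to verify carefully that $\bX = \F_{\cM_{\SDP}}(\bA)$ carries a valid alternative basic SDP solution with zero error — that is, the vectors are well-defined (one needs the ambient space $\R^\omega$ to accommodate all the witnessing vectors simultaneously, which is fine since each is supported on finitely many coordinates, but there are infinitely many domain elements, so one must be mildly careful: work coordinate-by-coordinate or restrict to finite substructures and invoke compactness as in the proof of Lemma~\ref{lem:sphere-coloring}). The second step is: since the basic SDP decides $\PCSP(\bA,\bB)$ and this SDP solution has zero error on $\bX$, there is an assignment $\sigma : X \to B$ weakly satisfying all constraints of $\bX$ — equivalently, a homomorphism $\F_{\cM_{\SDP}}(\bA) \mapsto \bB$. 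The third and final step is to apply Theorem~\ref{thm:free-structure} in the reverse direction to conclude $\cM_{\SDP} \to \Pol(\bA,\bB)$.

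The main obstacle I anticipate is the infinitary/compactness bookkeeping in the first step. The free structure has an infinite domain and infinitely many constraints, whereas ``the basic SDP decides $\PCSP(\bA,\bB)$'' is, as stated in Section~\ref{sec:prelims}, a statement about instances (which are typically taken finite). Two routes handle this: either (i) observe that the definition of ``decides'' extends to infinite instances via the same compactness argument used for sphere colorings (every finite subinstance has a weakly-satisfying assignment because it inherits a finite-dimensional zero-error SDP solution, and then De Bruijn–Erdős / Tychonoff glues these together), or (ii) phrase everything through finite substructures $\F_{\cM_{\SDP}}(\bA)\!\restriction_S$ for finite $S$, get weakly-satisfying assignments for each, and glue. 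A secondary subtlety is checking that the ``alternative basic SDP'' conditions (b1)–(b3) are literally what the free-structure construction produces; this requires matching the projection map $\pi_i$ in the free structure to the sum-of-vectors minor operation of $\cM_{\SDP}$, but since the minor $M^{/\pi_i}$ is defined by summing the vectors $\bw_{\ba}$ over all $\ba$ with $\ba_i = a$, this is exactly condition (b3), so the match is immediate. Everything else is routine.
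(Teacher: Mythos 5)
Your proposal is correct and follows essentially the same route as the paper: the paper also observes that (finite pieces of) the free structure $\F_{\cM_{\SDP}}(\bA)$ carry a zero-error alternative basic SDP solution with (b1)--(b3) witnessed verbatim by the minion elements, applies the decision hypothesis to get homomorphisms to $\bB$, glues them by a De Bruijn--Erd\H{o}s compactness argument, and invokes Theorem~\ref{thm:free-structure}. Your route (ii) is precisely the paper's proof, which sidesteps the infinite-instance issue by working with finite subsets of $\cM_{\SDP}$ from the start.
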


\begin{proof}
{Assume that $(D_1,D_2)$ is the domain of $\Gamma = \{(P_1,Q_1), \hdots, (P_l, Q_l)\}$ with a homomorphism $h : D_1 \to D_2$ such that $h(P_i) \subseteq Q_i$ for all $i \in [l]$. To simplify the proof, we make the following technical assumption similar to the reduction in Remark~\ref{rem:Ragh}. If a relation $(P,Q) \in \Gamma$ of arity $k$, then every possible relation which can be created by repeating variables is also in $\Gamma$. For example, if $P, Q$ have arity $3$, then $(P',Q') \in \Gamma$ where $P' = \{(x,y) \in D_1^2 : (x,y,x) \in P\}$ and $Q' = \{(x,y) \in D_2^2 : (x,y,x) \in Q\}$. Note that $\Pol(P,Q) \subseteq \Pol(P',Q')$, so adding $(P',Q')$ to $\Gamma$ does not make $\cM_{\SDP} \to \Pol({\Gamma})$ harder to verify. Likewise, since repeating variables is allowed in PCSP instances, whether the basic SDP decides $\PCSP({\Gamma})$ is unaffected by these extra relations.

Assume now that the (alternative) basic SDP decides $\PCSP(\Gamma)$. By Lemma~\ref{lem:free-structure}, it suffices to prove that $\F_{\cM_{\SDP}}(\Gamma)$ is a promise structure. In other words, we must construct a map $\hat{h} : \cM_{\SDP}^{D_1} \to D_2$ such that $\hat{h}(\F_{\cM_{\SDP}}(P_i)) \subseteq Q_i$ for all $i \in [l]$. }

{
By the De Bruijn-Erd{\H o}s Theorem\footnote{{See \cite{ciardo2022clap} or Remark 7.13 of \cite{BBKO21} for additional context.}}~\cite{bruijn1951colour}, it suffices to prove that any finite subsets $F_i \subseteq \F_{\cM_{\SDP}}(P_i)$ for $i \in [l]$, we have that there exists $\hat{h} : \cM_{\SDP}^{D_1} \to D_2$ such that $\hat{h}(F_i) \subseteq Q_i$ for all $i \in [l]$.

Note that each $F_i \subseteq (\cM_{\SDP}^{D_1})^{k_i}$, let $V$ be the subset of $\cM_{\SDP}^{D_1}$ which appears in some tuple of $F_i$ for some $i \in [l].$ It suffices to construct $\hat{h} : V \to D_2$ rather than all of $\cM_{\SDP}^{D_1}$. For each $t \in F := F_1 \cup \cdots F_l$, let $(P_t, Q_t)$ be promise pair corresponding to $t$, and let $S_t \subseteq V$ be the set of distinct elements which appear in $t$. Since $\Gamma$ is closed under identification of variables, we can impose on $t$ a clause $C_t$ corresponding to $(P'_t, Q'_t)$ such that each variable of $S_t$ appears exactly once.

We view $\Phi := (V, \{C_t : t \in F\})$ as an instance of $\PCSP(\Gamma)$. If we can show this instance is accepted by the basic SDP, then we know there is an assignment $\hat{h} : V \to D_2$ which weakly satisfies every clause of $F_1 \cup \cdots \cup F_l$ since replacing $(P_t, Q_t)$ with $(P'_t, Q'_t)$ does not change whether the clause $\hat{h}|_{S_t}(t) \in Q_t$ if and only if $\hat{h}|_{S_t}(S_t) \in Q'_t$. Thus, $\hat{h}$ has precisely the property we want.

Thus, we conclude the proof by constructing an alternative basic SDP solution for $\Phi$. For convenience, let $V = \{x_1, \hdots, x_n\}$ and for each $t \in F$, let $\cF_t$ be the set of possible satisfying assignments $f : S_t \to D_1$ to $C_t$.

For each $i \in [n]$, we let $\bv_i \in \cM_{\SDP}^{D_1}$ be $x_i \in \cM_{\SDP}^{D_1}$ itself. This ensures that (b1) is satisfied. For each $t \in F$, let $\cF_t$ be the set of satisfying assignments to $f : S_t \to D_1$ to $P'_t$. By definition of $\F_{\cM_{\SDP}}(P'_t)$, there exists $\bw_t \in \cM_{\SDP}^{\cF_t}$ such that $\pi_i \oslash \bw_t = x_{i} = \bv_{i}$ whenever $x_i \in S_t$, where $\pi_i : \cF_t \to D_1$ is the evaluation map $\pi_i(f) = f(x_i)$. The condition $\bw_t \in \cM_{\SDP}^{\cF_j}$ is precisely (b2) and the equation $\pi_i \oslash \bw_t = \bv_i$ is precisely (b3) (see Remark~\ref{rem:b3}). Thus, $\Phi$ is indeed accepted by the alternative basic SDP, completing the proof.}
\end{proof}

This completes the proof of Theorem~\ref{thm:basic-SDP}, giving a minion homomorphism characterization of when the basic SDP works for deciding a PCSP.

\section{Conclusion}
\label{sec:conclusion}

We studied the robust satisfiability of promise CSPs, and specifically the power of SDPs in this context, revealing a number of new phenomena on both the algorithmic and integrality gap fronts. Our work brings to the fore a number of intriguing questions and directions. We list a few below.

\begin{itemize}
\item Can we get a robust {$\fiPCSP$} dichotomy result for all Boolean symmetric {promise templates}? We conjecture that every Boolean symmetric folded idempotent PCSP without $\MAJ$ or $\AT$ polymorphisms does not admit a robust algorithm. In our hardness result~\Cref{thm:main-hardness}, we showed the same when there is a single predicate pair. Extending our result to multiple predicate pairs is an interesting challenge. 

As a concrete example, consider the {promise template} $\Gamma=\{(P_1,Q_1),(P_2,Q_2)\}$ where $P_1 = \Ham_k \{\frac{k+1}{2}\},$ $Q_1=\Ham_k \{0,1,\ldots,k-1\},P_2=\Ham_t \{1\},Q_2=\Ham_t\{0,1,\ldots,t-2,t\}$ for odd integers $k,t\geq 3$. We have $\AT \subseteq \Pol(P_1,Q_1), \MAJ \subseteq \Pol(P_2,Q_2)$ while both $\AT,\MAJ \nsubseteq \Pol(\Gamma)$. To obtain the integrality gap for {the basic SDP relaxation of $\fiPCSP(\Gamma)$}, we need to show that there is no folded sphere coloring that respects $\Gamma$. There is a {folded} sphere coloring $f_1 : \S^n \rightarrow {\B}$ respecting $(P_1,Q_1)$, and a {folded} sphere coloring $f_2 : \S^n \rightarrow {\B}$ respecting $(P_2,Q_2)$ for every positive integer $n$. The challenge lies in showing that there is an integer $n$ for which there is no single {folded} coloring $f:\S^n \rightarrow {\B}$ that respects both $(P_1,Q_1)$ and $(P_2,Q_2)$ simultaneously. We are able to prove this for small values of $k,t$ under an assumption that ~\Cref{thm:sphere-ramsey} extends to the density setting: for any constant $0<\sigma<1$, and any tuple $S$ of vectors in $\S^n$ satisfying the conditions in~\Cref{thm:sphere-ramsey}, in any subset $T$ of $\S^n$ with spherical measure at least $\sigma$, there is a tuple of vectors $S'\subseteq T$ that is congruent to $S$, as long as $n \geq n_0:= n_0(S,\sigma)$. Both proving the density sphere Ramsey theorem, and showing that $\Gamma$ does not admit a sphere coloring respecting it are interesting open problems, and the former problem could have applications elsewhere as well. Very recently, density sphere Ramsey theorems for certain configurations were established by Guruswami and Li~\cite{guruswami2025density}.

On the other hand, extending the {robust $\fiPCSP$ classification to $\PCSP$ (i.e., no idempotence or folding) for all Boolean symmetric promise templates would require} a better understanding of polymorphisms of arbitrary Boolean symmetric {promise templates}. We remark that for the decision version of {PCSPs for} Boolean symmetric {promise tempaltes}, a dichotomy was first proved for the folded case~\cite{BrakensiekG21}, and later, the restriction was removed~\cite{FicakKOS19}, where the authors showed that the decision version of a Boolean symmetric PCSP can be solved in polynomial time if and only if it has Threshold, $\AT$ or Parity polymorphisms. Our algorithm for the $\MAJ$ polymorphisms extends to the setting when there are threshold polymorphisms, similar to the algorithmic result of~\cite{FicakKOS19}.
Combining the polymorphic ideas in~\cite{FicakKOS19} with our sphere coloring results is a potential venue to generalize our hardness results to general Boolean symmetric PCSPs. 

Finally, can we get a robust {PCSP} dichotomy result for general Boolean {promise templates}? We believe new algorithmic techniques are needed to understand what polymorphic families lead to robust algorithms. For the symmetric folded case, $\MAJ$ and $\AT$ polymorphisms resulted in robust algorithms while Parity polymorphisms resulted in just the decision version being solved in polynomial time. This suggests that the existence of a suitable notion of noise stable family of polymorphisms could be the key to robust algorithms for PCSPs. 

\item Can we improve the quantitative aspects of our robust approximation algorithm based on the Majority polymorphism to satisfy a $1-O(\sqrt{\eps})$ fraction of the constraints (which would be optimal~\cite{KKMO07} under the UGC)? Even for {robust} Max-Cut, such an algorithm that makes black-box use of the Majority polymorphism is not known. More generally, we do not know how to translate a polymorphism for the CSP into an approximation or robust satisfaction algorithm for it. For a suitable notion of  ``approximate" polymorphisms, there is such a connection~\cite{BR16}. {Of note, in the case of CSP templates of arity $k$ (over an arbitrary domain) with a ``near-unanimity operator'' (of which $\MAJ_3$ is a special case), a robustness algorithm of $1-\eps$ versus $1-O(\eps^{1/k})$ is known~\cite{dalmau2019Robust}.}

\item For the case of AT polymorphisms, our algorithm incurs an exponential loss, and only satisfies $1-O\left(\frac{\log \log \frac{1}{\epsilon}}{\log \frac{1}{\epsilon}}\right)$ fraction of the constraints in a $(1-\eps)$-satisfiable instance. Is this inherent (which is known to be the case for Horn-SAT~\cite{GuruswamiZ12})? Can one show (Unique Games) hardness of how robustly one can approximate concrete {promise templates} like $1$-in-$3$-SAT vs. NAE-$3$-SAT?

\item For CSPs it is known that if Sherali Adams LP correctly ascertains exact satisfiability, then the CSP has a robust satisfaction algorithm (albeit not LP based). Does this connection remain true for PCSPs (similar to~\Cref{conj:sdp-robust})? 

\item We have minion characterizations for the solvability of PCSPs by SDPs (this work) and the Sherali Adams hierarchy~\cite{ciardo2022sherali}. What is the relationship between these minions?  Do there exist PCSPs whose exact satisfiability is decided by the basic SDP but which cannot be decided by $O(1)$ levels of the Sherali Adams hierarchy? Note that for CSPs, both these classes are the same and coincide with the class of bounded width CSPs.
        
\item As mentioned earlier, our algorithm for Majority polymorphism can be generalized for arbitrary threshold polymorphisms. Together with the hardness results in \cite{BGS21}, this gives a conditional dichotomy w.r.t robust {PCSP} satisfiability for the class \emph{ordered} Boolean {promise templates} (whose polymorphisms are monotone functions). We leave establishing even more ambitious dichotomies as an intriguing open question.
\end{itemize}

\appendix %
\section*{Appendix}
\appendix
{
\section{Efficient SDP Solution}\label{app:efficient-sdp}

\newcommand{\fcirc}{\bullet}
\newcommand{\hull}{\operatorname{hull}}

To begin, state a reformulated basic SDP for which it is easier to reason about approximate solutions. Let $A \fcirc X$ denote the entry-wise product of matrices. If $\cF$ is a collection of matrices, we let $\hull(\cF)$ denote the convex hull of these matrices--the set of all convex combinations. If $\cF$ is a finite set, then $\hull(\cF)$ can be described by a finite family of inequalities (and thus with bounded coefficients).

Given a clause $C_j$ with variable set $S_j$, we let $\mathcal F_j$ be the set of square matrices with $|S_j|+2$ rows and columns with the following properties. For every $\sigma : S_j \to \B$, we have a matrix $M_\sigma \in \cF_j$ corresponding to the outer product of the vector $(1, \sigma, \eta_{\sigma})$, where $\eta_{\sigma} = 0$ if $\sigma$ is a \emph{satisfying} assignment to $C_j$ and $\eta_{\sigma}=1$ otherwise. We also define $\mathcal G_j$ to be a set of square matrices with $|S_j|+1$ rows corresponding to the outer products of the form $(1, \sigma)$ where $\sigma : S_j \to \B$.

Let $X$ be a symmetric matrix with $n+m+1$ rows/columns---the last $m$ columns correspond to the ``error'' in each of the $m$ clauses. Given a set $S$ of indices, we let $X|_{S}$ denote the square submatrix whose rows and columns come from $S$.  We consider the following semidefinite program similar to one considered by Karloff and Zwick for MAX 3-SAT~\cite{KZ}.
\begin{align*}
    \textbf{minimize: } &\sum_{j=1}^m X_{j+n,j+n}\\
    \textbf{subject to: } &X|_{S_j \cup \{0, j+n\}} \in \hull(\mathcal F_j), \forall j \in [m]\\
        &X \succeq 0.
\end{align*}

By Theorem 5.1.1 of \cite{gartner2012approximation}, for any $\delta = (n/\eps)^{-O(1)}$ we can efficiently find a realization $M$ of $X$ such that $M$ has rational entries, $M$ is PSD, the objective is optimized up to an additive $\delta$, and every linear constraint is satisfied up to an additive $\delta$. For the last condition, this means we ``only'' have that
\[
M|_{S_j \cup \{0, j+n\}} + E_j \in \hull(\mathcal F_j),
\]
where $E_j$ is an ``error'' matrix in which all entries have absolute value at most $\delta$. We only know that $M_{i,i} \in [1-\delta, 1+\delta]$ for all $i \in \{0,1,\hdots, n\}$. Let $M'$ be a matrix which is the same as $M$ except $M'_{i,i} = 1+\delta$ for all $ i \in \{0, 1, \hdots, n\}$. By definition, $D' := M' - M$ is a diagonal matrix with nonnegative entries, so $M' \succeq 0$. Let $N = \frac{1-\lambda}{1+\delta} \cdot M'|_{\{0,1,\hdots, n\}} + \lambda I$ for some $\lambda \approx \Theta(\delta)$. Note that $N \succeq 0$ as well and that $N_{i,i} = 1$ for all $i \in \{0,1, \hdots, n\}$.

We claim that for all $j \in [m]$ that $N|_{\{0\} \cup S_j} \in \hull(\cG_j)$. If not, there exist a matrix $A$ and number $b$ such that such that $A \fcirc X \le b$ for all $X \in \hull(\cG_j)$ but $A \fcirc N > b$. Since $N_{i,i} = 1$ for all $i \in \{0,1,\hdots, n\}$ and $X_{i,i} = 1$ for all $X \in \hull(\cG_j)$, we may assume without loss of generality that $A_{i,i} = 0$ for all $i \in \{0,1,\hdots, n\}$. Likewise, since all matrices in $\hull(\cG_j)$ are symmetric, we may assume that $A$ is symmetric.  Thus, since $I \in \hull(\cG_j)$, we may assume that $b \ge 0$. In fact, we now show for nonzero $A$ we need $b > 0$.

\begin{claim}
Let $A$ be a symmetric matrix with an all-zero diagonal such that $A \fcirc X \le 0$ for all $X \in \hull(\cG_j)$. Then, $A$ is the zero matrix.
\end{claim}
\begin{proof}
Consider indices $i < j$ and let $\cG'_j \subseteq \cG_j$ be the matrices $Y \in \cG_j$ for which $Y_{i,j} = 1$. Let $Z$ be the (uniform) average of all matrices in $\cG'_j$. It is straightforward to verify that $Z_{i,j} = Z_{j,i} = 1$, but all other off-diagonal entries are zero. Since $Z, -Z \in \hull(\cG_j)$, we have that
\begin{align*}
    A_{i,j} &= A \fcirc Z \le 0,\\
    -A_{i,j} &= A \fcirc (-Z) \le 0.
\end{align*}
Therefore, $A_{i,j} = 0$ for all $i < j$. Thus, $A$ is the zero matrix.
\end{proof}

Since $\hull(\cG_j)$ is a projection of $\hull(\cF_j)$, we have that this inequality is approximately satisfied by $M$: $A \fcirc M|_{\{0\} \cup S_j} \le b + O(\delta)$. Thus, 
\begin{align*}
b &< A \fcirc N\\
&=  A \fcirc (\frac{1-\lambda}{1+\delta} \cdot (M + D')|_{\{0\} \cup S_j} + \lambda I)\\
&= \frac{1-\lambda}{1+\delta} \cdot (A \fcirc M|_{\{0\} \cup S_j} + \frac{1-\lambda}{1+\delta} \cdot (A \fcirc D'|_{\{0\} \cup S_j}+ \lambda (A \fcirc I)\\
&\le \frac{1-\lambda}{1+\delta} b + O(\delta),
\end{align*}
where we use the fact that every entry of $D'$ is bounded in absolute value by $\delta$ and $A \fcirc I = 0$. 
Thus, for $\lambda = \Omega(\delta)$, we get a contradiction. This proves that $N|_{\{0\} \cup S_j} \in \hull(\cG_j)$ for all $j \in [m]$.

Using Gaussian elimination (see O'Donnell~\cite{odonnell2016SOS} and references therein, including~\cite{lovasz2003Semidefinite}), we can efficiently compute rational square matrices $L$ and $D$, where $L$ is lower triangular and $D$ is a diagonal, such that $N = LDL^{T}$. Then, we can compute vectors $\bv_0, \hdots, \bv_n$ whose coordinates are square roots of rational numbers such that $N_{i,j} = \langle \bv_i,\bv_j\rangle$. Since $N|_{\{0\} \cup S_j} \in \hull(\cG_j)$ for all $j \in [m]$, by solving a suitable linear program, we can efficiently find a probability distribution $\lambda_j$ over assignments $S_j \to D$ which is consistent with the first and second moment conditions of the fiPCSP basic SDP, where the objective value is to minimize the total weight $\epsilon_j$ on non-assignments to $C_j$.

To finish, it suffices to prove that $\epsilon_j \le M_{j+n,j+n} + O(\delta)$ for all $j \in [m]$. Recall that $M|_{S_j \cup \{0, j+n\}} + E_j \in \hull(\mathcal F_j)$. Thus, $N' := (M|_{S_j \cup \{0, j+n\}} + E_j)|_{\{0\}\cup S_j} \in \hull(\mathcal G_j)$. Furthermore, if we solve the same ``error'' LP for $N'$, we will get an error $\epsilon'_j$ which must be at most $M_{j+n,j+n}+\delta.$ One can see that every entry in $N$ and $N'$ differ by at most $O(\delta)$. Let $\epsilon : \hull(\mathcal G_j) \to [0,1]$ be the value of the LP on all possible input matrices. It is clear that $\epsilon$ is a bounded convex function, so it must be Lipschitz. Thus, $\epsilon_j - \epsilon'_j \le O(\delta)$, so $\epsilon_j \le M_{j+n,j+n} + O(\delta)$, as desired. 
}

\section{Missing Proofs}
\label{sec:missing-proofs}

\subsection{{Missing proofs from Section 4}}

\begin{proof}[Proof of Lemma~\ref{lem:equal-case}]
We first consider the case when $k=1$. Without loss of generality, let $P=Q=\{+1\}$, and we use $\textbf{v}_1 = \alpha \textbf{v}_0 + \textbf{v}'_1$ to denote the SDP vector corresponding to the variable used in the constraint. 
As the basic SDP has error at most $\sqrt{\epsilon}$, we get that 
\[
\alpha_1 \geq 1-\sqrt{\epsilon}
\]
As $\alpha_1^2 + \norm{v'_1}_2^2 = 1$, $\norm{v'_1}_2 \leq O(\epsilon^{0.25})$. Thus, using~\Cref{prop:gaussian-concentration}, we get that $\langle \zeta, \textbf{v}'_1 \rangle \leq O(\epsilon^{0.25}r)$ with probability at least $1-e^{\frac{-r^2}{2}} \geq 1-\sqrt{\epsilon}$. On the other hand, using~\Cref{prop:gaussian-anticoncentration}, we get that $|\langle \zeta, \textbf{v}_0 \rangle | \geq \frac{1}{r}$ with probability at least $1-\frac{1}{r}$. This implies that 
\[
\delta \alpha_1 | \langle \zeta, \textbf{v}_0 \rangle | \geq \frac{\delta }{2r}
\]
Thus, with probablity at least $1-O(\frac{1}{r})$, we have 
\[
\langle \zeta, \textbf{v}'_1 \rangle \leq O(\epsilon^{0.25}r) < \frac{\delta}{2r} \leq \delta \alpha_1 | \langle \zeta, \textbf{v}_0 \rangle |
\]
Hence, with probability at least $1-O(\frac{1}{r})$, we round the variable to $+1$. 

We now consider the general case when $k \geq 2$. Note that the above proof for $k=1$ holds when $P = \Ham_k \{0\}$ or when $P=\Ham_k \{k\}$. We are left with the setting when $P=\Ham_k \{0,k\}$.
In order to show that our algorithm is a robust algorithm for this {fi}PCSP, it suffices to show that all the elements in the predicates are rounded to the same value with high probability. Consider $i,j \in [k]$. We show that the probability that the variables $x_i$ and $x_j$ get rounded to different values is at most $O\left( \frac{1}{r}\right)$. Using the union bound over all the $\binom{k}{2}$  pairs of indices, we get our required claim. 

We first collect useful properties using the fact that the basic SDP is supported with a probability of at least $1-c$ on $P$. 
\begin{enumerate}
    \item (First moment.) We have 
    \[
    | \mu_i - \mu_j | \leq 2c. 
    \]
    \item (Second moment.) We have 
    \[
    \langle \textbf{v}_i, \textbf{v}_j \rangle \geq 1-2c
    \]
    Using this, we get 
    \begin{align*}
        \norm{\textbf{v}'_i - \textbf{v}'_j}_2^2 &= \textbf{v}_i - \textbf{v}_j + (\alpha_j - \alpha_i)\textbf{v}_0 \\ 
        &\leq \norm{\textbf{v}_i - \textbf{v}_j}_2^2 + (\alpha_i - \alpha_j)^2 \\ 
        &\leq O(c). 
    \end{align*}
    Thus, $\norm{\textbf{v}'_i - \textbf{v}'_j} \leq O(\sqrt{c})$.
\end{enumerate}
As earlier, we assume that $c$ is at most $\sqrt{\epsilon}$.

Recall that our goal is to upper bound the probability that $x_i$ and $x_j$ are rounded to different values. Without loss of generality, suppose that $x_i$ is rounded to $+1$, and $x_j$ is rounded to $-1$. 
We get that 
\begin{align*}
    \langle \zeta, \textbf{v}'_i \rangle &\geq \delta \alpha_i | \langle \zeta, \textbf{v}_0 \rangle | \label{eq:1}\\ 
    \langle \zeta, \textbf{v}'_j \rangle &< \delta \alpha_j | \langle \zeta, \textbf{v}_0 \rangle | 
\end{align*}
Using~\Cref{prop:gaussian-concentration}, we can infer that 
\[
| \langle \zeta, \textbf{v}'_i \rangle - \langle \zeta, \textbf{v}'_j \rangle | \leq O(r\sqrt{c})
\]
with probability at least $1-\frac{1}{r}$. 

We consider two cases: first, when $|\alpha_i| \leq \frac{1}{2}$. As $\alpha_i^2 + \norm{\textbf{v}'_i}_2^2=1$, and $|\alpha_i - \alpha_j | \leq 2c$, we get that $\norm{\textbf{v}'_j}=\Omega(1)$. 
In this case, we have 
\begin{align*}
    \langle \zeta, \textbf{v}'_j \rangle &\geq \langle \zeta, \textbf{v}'_i \rangle - O(r \sqrt{c}) \\ 
    &\geq \delta \alpha_i | \langle \zeta, \textbf{v}_0 \rangle | - O(r\sqrt{c}) \\ 
\end{align*}
Thus, we have 
\[
\langle \zeta, \textbf{v}'_j \rangle \in [\delta \alpha_i | \langle \zeta, \textbf{v}_0 \rangle | - O(r\sqrt{c}),\delta \alpha_j | \langle \zeta, \textbf{v}_0 \rangle | ]
\]
Here, $\langle \zeta, \textbf{v}'_j \rangle \in [p,q]$ where $q-p \leq O(\delta r)+O(r \sqrt{c}) \leq O(\delta r)$. However, as $\norm{\textbf{v}'_j} \geq \Omega(1)$, this happens with probability at most $O(\delta r)$. 

Now, suppose that $|\alpha_i|\geq \frac{1}{2}$. 
We have 
\begin{align*}
    \delta \alpha_i | \langle \zeta, \textbf{v}_0 \rangle | &\geq  \delta \alpha_j | \langle \zeta, \textbf{v}_0 \rangle |- 2\delta c | \langle \zeta, \textbf{v}_0 \rangle |
\end{align*}
However, as $\langle \zeta, \textbf{v}_0 \rangle \sim \mathcal{N}(0,1)$, we have $|\langle \zeta, \textbf{v}_0 \rangle | \leq r$ with probability at least $1-\sqrt{\epsilon}$. 
Thus, with probability at least $1-\sqrt{\epsilon}$, we have 
\[
\langle \zeta, \textbf{v}'_i \rangle \geq \delta \alpha_i | \langle \zeta, \textbf{v}_0 \rangle | \geq  \langle \zeta, \textbf{v}'_j \rangle  - 2\delta c r 
\]
We have $\delta \alpha_i |\langle \zeta, \textbf{v}_0 \rangle | \in [p,q]$ where $q-p \leq O(\delta c r)+ O(r\sqrt{c})$. However, this happens with probability at most $O(\frac{r\sqrt{c}}{\delta}) \leq O(\frac{1}{r})$.
\end{proof}

\begin{proof}[Proof of Lemma~\ref{lem:AT}]
Suppose that $\textbf{a} = \sgn(\textbf{x}-\textbf{y})$ for $\textbf{x}, \textbf{y} \in \Aff(P)$, and $x_i \neq y_i$ {for all} $i \in [k]$. By modifying the affine combinations slightly, we can assume that $\textbf{x}$ and $\textbf{y}$ are rational affine combinations of $P$ while still preserving the fact that $\textbf{a}=\sgn(\textbf{x}-\textbf{y})$. In other words, there exist $p_1, p_2, \ldots, p_K, q_1, q_2, \ldots, q_K \in \mathbb{Q}$ such that $\sum_{i \in [K]}p_i = \sum_{i \in [K]}q_i = 1$, and $\textbf{x} = \sum_{i \in [K]}p_i \textbf{a}_i$, $\textbf{y}=\sum_{i \in [K]}q_i \textbf{a}_i$,  where ${\B}^k = \{\textbf{a}_1, \textbf{a}_2, \ldots, \textbf{a}_K\}$. Let $N$ be a positive integer such that we can write $p_i = \frac{p'_i}{N}, q_i = \frac{q'_i}{N}$ where $p'_i, q'_i$ are integers for every $i \in [K]$. 

\newcommand{\bs}{\mathbf{s}}

Let $S$ be a {list of elements of} of ${\B}^k$ where {for each} $i \in [K]$, {there are $|p'_i|+|q'_i|$} copies of $\textbf{a}_i${.} {Let $\bs \in \B^{|S|}$ be a $\pm 1$ vector such that for each $i \in [K]$ we have that $|p'_i|$ of the $j \in [|S|]$ with $S_j = \ba_i$ have $s_j = \sgn(p'_i)$ and the other $|q'_i|$ copies have $s_j = \sgn(q'_i)$. Since $\sum_{i \in [K]}p'_i = \sum_{i \in [K]}q'_i$, we have that $s_j$ has an equal number of each of $+1$ and $-1$.}

Let $\textbf{z}$ denote the signed sum of all vectors (including repetitions) in $S$ {according to the signs of $\bs$}. Note that $\sgn(\textbf{z})=\sgn(\textbf{x}-\textbf{y})$. As each element of $\mathbf{z}$ is an integer, we get that the absolute value of each coordinate in $\mathbf{z}$ is at least $1$. Furthermore, we can take multiple copies of $S$ to ensure that the absolute value of each coordinate in $\mathbf{z}$ is at least $2$. Now, we add an arbitrary element of $P$ with sign $+1$ to $S$. Note that we still have that the signed sum of $S$, i.e., the updated $\mathbf{z}$ satisfies $\sgn(\mathbf{z})=\sgn(\mathbf{x}-\mathbf{y})$. Furthermore, $\mathbf{z}=\textbf{x}_1 - \textbf{x}_2 + \ldots + \textbf{x}_L$ where each $\textbf{x}_i \in P$. Thus, $\textbf{a}=\sgn(\textbf{x}-\textbf{y})=\sgn(w)=\sgn(\textbf{x}_1 - \textbf{x}_2 + \ldots + \textbf{x}_L) \in O_{AT}(P)$. Thus, 
\[
\{\sgn(\textbf{x} - \textbf{y}) : \textbf{x}, \textbf{y} \in \Aff(P), \forall i, x_i \neq y_i \} \subseteq O_{AT}(P)
\]

To prove the other direction, suppose that $\textbf{a} \in O_{AT}(P)$. That is, $\textbf{a}=\sgn(\textbf{x}_1 - \textbf{x}_2 + \ldots + \textbf{x}_L)$. Let $S$ be a multiset of $\textbf{x}_1, \textbf{x}_2, \ldots, \textbf{x}_L$ with the corresponding sign as in the summation. As $P$ is non-trivial in every coordinate i.e., for every $i \in [k]$, there exist assignments $\textbf{x}$ in $P$ where $x_i=+1$, and similarly, $\textbf{x}' \in P$ where $x'_i=-1$. By adding vectors with both signs $+1$ and $-1$, we can assume that $S$ is non-trivial in every coordinate while still preserving the fact that the sign vector of the signed sum of $S$ is equal to $\sgn(\textbf{a})$. We modify $S$ while still preserving this property to ensure that the signed sum of vectors in $S$ has an absolute value of at least $2$ in every coordinate. 

As there are an odd number of vectors in $S$, the signed sum of the vectors has an absolute value of at least $1$ in every coordinate. 
Fix a vector $\textbf{x}_i \in S$. Create two copies of every other vector in $S$ (with the same sign as the original). Note that this operation does not alter the sign vector of the signed sum of the vectors in $S$. 
We can repeat this process at most $2k$ times to ensure that in the final multiset $S$, the signed sum has an absolute value of at least $2$ in every coordinate. Finally, we add an arbitrary vector with sign $-1$ to $S$, to ensure that the number of vectors with $+1$ sign and the number of vectors with $-1$ sign are equal in $S$. Overall, we get that there are $\textbf{x}_1, \textbf{x}_2, \ldots, \textbf{x}_N \in P$ and $\textbf{y}_1, \textbf{y}_2, \ldots, \textbf{y}_N \in P$ such that
\[\textbf{a}=\sgn(\textbf{x}_1+\ldots +\textbf{x}_N-\textbf{y}_1-\ldots -\textbf{y}_N) = \sgn \left( \frac{1}{N} \textbf{x}_1 + \ldots + \frac{1}{N}\textbf{x}_N-\frac{1}{N} \textbf{y}_1 - \ldots -\frac{1}{N}\textbf{y}_N\right)
\]
Thus, we get that $\textbf{a} {\in} \{\sgn(\textbf{x} - \textbf{y}) : \textbf{x}, \textbf{y} \in \Aff(P), \forall i, x_i \neq y_i \}$, completing the proof that 
\[
O_{AT}(P)\subseteq \{\sgn(\textbf{x} - \textbf{y}) : \textbf{x}, \textbf{y} \in \Aff(P), \forall i, x_i \neq y_i \}
\qedhere \]
\end{proof}

\subsection{{Missing proofs from Section 5}}

\begin{proof}[Proof of~\Cref{lem:reduction}]
We extensively use the properties of $\AT, \MAJ$ polymorphisms of Boolean symmetric folded idempotent PCSPs proved in~\cite{BrakensiekG21}.\footnote{The notion of ppp-reduction used in~\cite{BrakensiekG21} implicitly allows for equality constraints. However, the ppp-reductions we use in this proof (particularly those in Claims 4.2 and 4.4 of ~\cite{BrakensiekG21}) do not use equalities.}
We recall that $O_{\AT}(P)$ ({respectively} $O_{\MAJ}(P)$) denotes the set $\bigcup_{\textbf{x}_1,\ldots,\textbf{x}_L\in P,L \in \mathbb{N},\text{ odd},}\AT_L(\textbf{x}_1,\ldots,\textbf{x}_L)$ ({respectively} $\MAJ_L$) for a predicate $P$.

Let $k$ denote the arity of $P,Q$, i.e., $P \subseteq Q \subseteq {\B}^k$.
Note that $P \nsubseteq \{ (-1, \ldots, -1), (+1,\ldots, +1)\}$ as in that case $O_{\MAJ}(P)  =P \subseteq Q$, contradicting the fact that $\MAJ \nsubseteq \Pol(P,Q)$. Thus, there exists $l \in \{1,2,\ldots, k-1\}$ such that $\Ham_k \{l\} \subseteq P$. 

\smallskip \noindent \textbf{Case 1.} We first consider the case when $P=\Ham_k \{l\}$ for some $l \in \{1,2,\ldots, k-1\}$. As $P$ is symmetric, $O_{\MAJ}(P)$ is symmetric as well~\cite{BrakensiekG21}.
Furthermore, as $\MAJ \nsubseteq \Pol(P,Q)$, there exists $b \in \{0,1,\ldots, k\}$ such that $\Ham_k \{b\} \cap Q=\phi$ and $\Ham_k \{b\} \subseteq O_{\MAJ}(P)$. 

Suppose that $b \notin \{0,k\}$. 
Let $Q' = {\B}^k \setminus \Ham_k \{b\}$. By definition, $\MAJ \nsubseteq \Pol(P,Q')$. 
Using the fact that $O_{\AT}(\Ham_k \{l\}) = \Ham_k \{1,2,\ldots,k-1\}$, we get that $\AT \nsubseteq \Pol(P,Q')$. 
Thus, we get a {promise template} $(P,Q)$ that is {fi}ppp-definable from original {promise template} where $P=\Ham_k \{l\}$, $Q=\Ham_k \{0,1,\ldots,k\} \setminus \{b\}$ where $b \in \{1,2,\ldots, k-1\} \setminus \{l\}$. 
Note that $\MAJ, \AT \notin \Pol(P,Q)$. 
We now modify this {promise template further}, updating $P,Q,l,k,b$ while preserving the following two properties:
\begin{enumerate}
    \item At every step, $P = \Ham_k \{l\}, Q = \Ham_k \{0,1,\ldots,k\} \setminus \{b\}$ where $b \in \{1,2,\ldots,k-1\}\setminus\{l\}$. 
    \item $\MAJ, \AT \notin \Pol(P,Q)$. 
\end{enumerate}
As $O_{\MAJ}(P) = \Ham_k \{ 0,1,\ldots, k\} \cap \{2l-k+1,\cdots,2l-1\}$, and $\Ham_k \{b\} \in O_{\MAJ}(P)$,  we get that $b \in \{2l-k+1,\cdots,2l-1\} \cap \{0,\cdots,k\}$. Furthermore, as $b>0$, we get that $l>1$. Similarly, we get that $l <k-1$. This also implies that $k \geq 4$ as $l \in \{1,\ldots,k-1\}$.

We use the following two tools to repeatedly obtain a new {promise template} that is {fi}ppp-definable from the previous one. 
\begin{enumerate}
    \item Given a {promise template} $P = \Ham_k \{l\}, Q = \Ham_k \{0,1,\ldots,k\} \setminus \{b\}$ where $b \in \{1,2,\ldots,k-1\}\setminus\{l\}$, then the {promise template} $P' = \Ham_{k-1} \{l\}, Q' = \Ham_{k-1} \{0,1,\ldots,k-1\} \setminus \{b\}$ is {fi}ppp-definable from $(P,Q)$ (Claim $4.2$ of~\cite{BrakensiekG21}). As long as $b<k-1$ and $b \neq 2l-k+1$, this update preserves the above two properties. 
    \item Given a {promise template} $P = \Ham_k \{l\}, Q = \Ham_k \{0,1,\ldots,k\} \setminus \{b\}$ where $b \in \{1,2,\ldots,k-1\}\setminus\{l\}$, then the {promise template} $P' = \Ham_{k-1} \{l-1\}, Q' = \Ham_{k-1} \{0,1,\ldots,k-1\} \setminus \{b-1\}$ is {fi}ppp-definable from $(P,Q)$ (Claim $4.4$ of~\cite{BrakensiekG21}). As long as $b >1$ and $b \neq 2l-1$, this update preserves the above two properties. 
\end{enumerate}
Now, we update the {promise template} using the above two steps. As $k$ is decreasing at every step, this procedure terminates at some point. Then, either of the two conditions holds:
\begin{enumerate}
    \item $b=1, b=2l-k+1$. In this case, we get that $l = \frac{k}{2}$ and $b=1$. Thus, $(P',Q')$ is {fi}ppp-definable from $\Gamma$ where $P'=\Ham_{k}\{\frac{k}{2}\}, Q = \Ham_{k} \{0,1,\ldots,k\} \setminus \{1\}$ where $k$ is even and is at least $4$. 
    \item $b=k-1, b=2l-1$. In this case, we get that $l = \frac{k}{2}$ and $b=k-1$. Thus, $(P',Q')$ is {fi}ppp-definable from $\Gamma$ where $P'=\Ham_{k}\{\frac{k}{2}\}, Q = \Ham_{k} \{0,1,\ldots,k\} \setminus \{k-1\}$ where $k$ is even and is at least $4$. 
\end{enumerate}

Suppose that there is no $b \notin \{0,k\}$ such that $\Ham_k \{b\} \subseteq O_{\MAJ} (P) \setminus Q $. As $O_{\MAJ}(P) \nsubseteq Q$, by negating the variables if needed, we can assume that $\Ham_k \{0\} \in O_{\MAJ}(P) \setminus Q$. Furthermore, there exists $b \in \{1,2,\ldots, k-1\}$ such that $\Ham_k \nsubseteq  Q$ as $O_{\AT}(P) \nsubseteq Q$. Thus, we obtain $(P,Q)$ that is {fi}ppp-definable from the original {promise template} such that $P=\Ham_k \{l\}, Q=\Ham_k \{1,\ldots,k\}\setminus \{b\}$ where $l,b \in \{1,2,\ldots,k-1\}, b >2l-1$. By using the first type of update used above (Claim $4.2$ of~\cite{BrakensiekG21}), we obtain a new {promise template} $(P,Q)$ that is {fi}ppp-definable from the original {promise template} such that $P=\Ham_k \{l\}, Q=\Ham_k \{0,1,\ldots,k\}\setminus \{0,k-1\}$, where $l \in \{1,2,\ldots,k-1\}, l \leq \frac{k-1}{2}$. 

\smallskip \noindent \textbf{Case 2}. There exist $l \neq l'$ such that $\Ham_k \{l,l'\}\subseteq P$. Recall that $P \nsubseteq \Ham_k \{0,k\}$. This implies that $O_{\AT}(P)=\Ham_k \{0,1,\ldots,k\}$. Hence, we can get a {promise template} $(P,Q')$ that is {fi}ppp-definable from the original {promise template} such that $Q' = \Ham_k \{0,1,\ldots, k\} \setminus \{b\}$ and $\Ham_k \{b\} \nsubseteq O_{\MAJ}(P)$. 

For ease of notation, let $P=\Ham_k S$, where $S \subseteq \{0,1,\ldots,k\}$. First, consider the case when $\min S=0, \max S=k$. As mentioned earlier, we know that there exists $l \in \{1,2,\ldots,k-1\}$ such that $\Ham_k \{l\} \subseteq P$. Thus, we can obtain a new PCSP $(P,Q)$ that is {fi}ppp-definable from the original {promise template} where $P=\Ham_k \{0,l,k\}, Q = \Ham_k \{0,1,\ldots,k\}\setminus \{b\}$ and $b \notin \{0,l,k\}$. We consider three cases separately:
\begin{enumerate}
    \item Suppose that $l \leq \frac{k-1}{2}$. In this case, we have a {promise template} $(P,Q)$ that is {fi}ppp-definable from the original {promise template} where $P=\Ham_k \{l,k\}$ and $Q = \Ham_k \{0,1,\ldots,k\}\setminus \{b\}$ and $b \notin \{l,k\}$. Note that this {promise template} does not contain $\AT$ or $\MAJ$ as polymorphisms. 
    \item Suppose that $l = \frac{k}{2}$. In this case, we have a {promise template} $(P,Q)$ that is {fi}ppp-definable from the original {promise template} where $P=\Ham_k \{l\}$ and $Q = \Ham_k \{0,1,\ldots,k\}\setminus \{b\}$ where $b \notin \{0,l,k\}$. This also doesn't have $\AT$ and $\MAJ$ as polymorphisms. We have already shown that we can relax this further to the earlier mentioned three {promise templates}. 
    \item Suppose that $l \geq \frac{k+1}{2}$. In this case, we have a {promise template} $(P,Q)$ that is {fi}ppp-definable from the original PCSP where $P=\Ham_k \{0,l\}$ and $Q = \Ham_k \{0,1,\ldots,k\}\setminus \{b\}$ where $b \notin \{0,l\}$. Note that this {promise template} does not contain $\AT$ or $\MAJ$ as polymorphisms. 
\end{enumerate}
Thus, there is a {promise template} $(P,Q)$ that is {fi}ppp-definable from the original PCSP where $P=\Ham_k \{l,l'\}, Q = \Ham_k \{0,1,\ldots,k\}\setminus \{b\}$ such that $l <l',\{l, l'\} \neq \{0,k\}$, $b \in O_{\MAJ}(P)$. 
We end up with the same PCSP when $\{\min S, \max S\} \neq \{0,k\}$. 

If $\{l,l'\} = \{1,k\}$ or $\{0,k-1\}$, we get that the {promise template} $P=\Ham_k\{1,k\},$\linebreak $Q=\Ham_k \{0,1,\ldots,k\}\setminus\{b\}$ is {fi}ppp-definable from the original {promise template}, and we are done. 
If not, we update the {promise template} while maintaining the two below properties:
\begin{enumerate}
    \item $P=\Ham_k \{l,l'\}$ with $l <l'$ and $\{ l, l' \} \neq \{0,k\}$ and $Q = \Ham_k \{0,1,\ldots, k\} \setminus \{b\}$. We also assume that $\{l,l'\} \neq \{1,k\}$ and $\{l,l'\}\neq \{0,k-1\}$. 
    \item $\AT, \MAJ \nsubseteq \Pol(P,Q)$. 
\end{enumerate}
As with the earlier case, we obtain a series of new {promise templates} that are {fi}ppp-definable from the previous {promise templates} using the below tools. 
\begin{enumerate}
    \item We get $P'= \Ham_{k-1} \{l,l'\}$ and $Q=\Ham_{k-1} \{0,1,\ldots,k\} \setminus \{b\}$ using Claim $4.2$ of~\cite{BrakensiekG21}. For this to preserve the above properties, we need that $l' \neq k, b \neq k$ and $b \neq 2l-k+1$. 
    \item We get $P'=\Ham_{k-1} \{l-1,l'-1\}$ and $Q=\Ham_{k-1} \{0,1,\ldots,k\} \setminus \{b-1\}$ using Claim $4.4$ of~\cite{BrakensiekG21}. For this to preserve the above properties, we need that $l \neq 0, b \neq 0$ and $b \neq 2l'-1$. 
\end{enumerate}
As the arity of the predicates is decreasing at each step, this process terminates in finite steps. 
When we are unable to update the {promise template} using the above procedures, one of the following must be true.
\begin{enumerate}
    \item $l'=k, b=0$. In this case, we have a {promise template} $(P,Q)$ where $P=\Ham_k \{l,k\}$, $Q=\Ham_k \{1,2,\ldots,k\}$, where $l\neq 0, l \leq \frac{k-1}{2}$. 
    \item $b=k,l=0$. By negating the variables, we can observe that the above {promise template} is {fi}ppp-definable from this {promise template}. 
    \item $b=k, b=2l'-1$. We have $l'=\frac{k+1}{2}$. In this case, we have {promise template} $(P,Q)$ where $P=\Ham_k \{l,\frac{k+1}{2}\}$, $Q=\Ham_k \{0,1,2,\ldots,k-1\}$, where $l \leq \frac{k-1}{2}$. 
    \item We have $b=2l-k+1, b=0$. By negating the variables, we can observe that the above {promise template} is {fi}ppp-definable from this {promise template}. 
\end{enumerate}

Thus, we have obtained a new {promise template} that is {fi}ppp-definable from the original {promise template} and is equal to either of the following {promise templates}. 
\begin{enumerate}
    \item $k$ is even, and $P=\Ham_{k}\{\frac{k}{2}\}, Q = \Ham_{k} \{0,1,\ldots,k\} \setminus \{b\}$ where $b \in \{1,k-1\}$. 
    \item $k$ is odd, $P=\Ham_k \{l,\frac{k+1}{2}\}$, $Q=\Ham_k \{0,1,2,\ldots,k-1\}$, where $l \leq \frac{k-1}{2}$.
    \item $P=\Ham_k \{l,k\}$, $Q=\Ham_k \{1,2,\ldots,k\}$, where $l\neq 0, l \leq \frac{k-1}{2}$. 
    \item $P=\Ham_k \{l\}, Q=\Ham_k \{0,1,\ldots,k\}\setminus \{0,k-1\}$ where $l \in \{1,2,\ldots,k-1\}, l \leq \frac{k-1}{2}$. 
    \item $P=\Ham_k \{1,k\}, Q=\Ham_k \{0,1,\ldots,k\}\setminus\{b\}$ for arbitrary $b$.
\end{enumerate}

Finally, we note that if a Boolean {promise template} $\Gamma'=\{(P_1, Q_1),\ldots,(P_l, Q_l)\}$ is {fi}ppp-definable from another Boolean {promise template} $\Gamma$, then $\Gamma'$ remains {fi}ppp-definable from $\Gamma$ even when we disallow the constraints {of} $\Gamma'$ {in the fippp-definition} to use constants{.}
\end{proof}

\section*{Acknowledgments} %
We profusely thank anonymous reviewers for numerous corrections and suggestions which drastically improved the quality and accuracy of this manuscript. We thank Ryan O'Donnell and Anupam Gupta for helpful discussions on SDP solvability which led to Appendix~\ref{app:efficient-sdp}. 

\providecommand{\bysame}{\leavevmode\hbox to3em{\hrulefill}\thinspace}
\begin{dajauthors}
\begin{authorinfo}[josh]
  Joshua Brakensiek\\
  Department of Electrical Engineering and Computer Sciences\\
  University of California, Berkeley\\
  Berkeley, California\\
  josh\imagedot{}brakensiek\imageat{}berkeley\imagedot{}edu \\
  \url{https://jbrakensiek.github.io/}
\end{authorinfo}
\begin{authorinfo}[venkat]
  Venkatesan Guruswami \\
  Departments of EECS \& Mathematics \\
  Simons Institute for the Theory of Computing \\ 
  University of California, Berkeley\\
  Berkeley, California\\
  venkatg\imageat{}berkeley\imagedot{}edu\\
  \url{https://people.eecs.berkeley.edu/~venkatg/}
\end{authorinfo}
\begin{authorinfo}[sandeep]
  Sai Sandeep\\
  Department of Electrical Engineering and Computer Sciences\\
  University of California, Berkeley\\
  Berkeley, California\\
  saisandeep\imageat{}berkeley\imagedot{}edu
\end{authorinfo}
\end{dajauthors}

\end{document}